\newcommand{\Erdos}{Erd\H{o}s\xspace}
\newcommand{\Renyi}{R\'enyi\xspace}
\newcommand{\Lovasz}{Lov\'asz\xspace}
\crefname{lemma}{Lemma}{Lemmas}
\crefname{fact}{Fact}{Facts}
\newcommand{\colorconstraints}{\text{Color Constraints}}
\crefname{colorconstraints}{(color constraints)}{Color Constraints}
\crefname{indsetconstraints}{(indset constraints)}{IndSet Constraints}
\crefname{theorem}{Theorem}{Theorems}
\crefname{mtheorem}{Theorem}{Theorems}
\crefname{corollary}{Corollary}{Corollaries}
\crefname{claim}{Claim}{Claims}
\crefname{example}{Example}{Examples}
\crefname{algorithm}{Algorithm}{Algorithms}
\crefname{problem}{Problem}{Problems}
\crefname{definition}{Definition}{Definitions}
\newtheorem{theorem}{Theorem}[section]
\newtheorem*{theorem*}{Theorem}
\newtheorem{proposition}[theorem]{Proposition}
\newtheorem*{proposition*}{Proposition}
\newtheorem{lemma}[theorem]{Lemma}
\newtheorem*{lemma*}{Lemma}
\newtheorem*{conjecture*}{Conjecture}
\newtheorem{fact}[theorem]{Fact}
\newtheorem*{fact*}{Fact}
\newtheorem*{hypothesis*}{Hypothesis}
\theoremstyle{definition}
\newtheorem{definition}[theorem]{Definition}
\newtheorem*{definition*}{Definition}
\newtheorem{algorithm}[theorem]{Algorithm}
\theoremstyle{remark}
\newtheorem{claim}[theorem]{Claim}
\newtheorem*{claim*}{Claim}
\newtheorem{remark}[theorem]{Remark}
\newtheorem*{remark*}{Remark}
\newtheorem*{observation*}{Observation}
\let\mathbb\varmathbb
\newcommand{\david}[1]{\dtcolornote[David]{green}{#1}}
\newcommand{\pravesh}[1]{\dtcolornote[Pravesh]{red}{#1}}
\newcommand{\set}[1]{\{#1\}}
\newcommand{\Set}[1]{\left\{#1\right\}}
\newcommand{\R}{{\mathbb R}}
\newcommand{\N}{{\mathbb N}}
\newcommand{\norm}[1]{\lVert #1 \rVert}
\newcommand{\abs}[1]{\lvert #1 \rvert}
\let\epsilon=\varepsilon
\newcommand{\e}{\varepsilon}
\newcommand{\E}{{\mathbb E}}
\newcommand{\1}{\mathbf{1}}
\newcommand{\tr}{\mathrm{tr}}
\newcommand{\pE}{\tilde{\E}}
\newcommand{\cB}{\mathcal B}
\newcommand{\poly}{\mathrm{poly}}
\newcommand{\mper}{\,.}
\newcommand{\mcom}{\,,}
\newcommand{\Paren}[1]{\left(#1\right)}
\newcommand{\paren}[1]{(#1)}
\newcommand{\Brac}[1]{\left[#1\right]}
\newcommand{\Norm}[1]{\left\lVert#1\right\rVert}
\newcommand{\iprod}[1]{\langle#1\rangle}
\newcommand{\Iprod}[1]{\left\langle#1\right\rangle}
\newcommand{\cA}{\mathcal A}
\newcommand{\cut}{\mathsf{cut}}
\newcommand{\FK}{\mathsf{FK}}
\newcommand{\cL}{\mathcal{L}}
\begin{document}

\newcommand{\FormatAuthor}[3]{
\begin{tabular}{c}
#1 \\ {\small\texttt{#2}} \\ {\small #3}
\end{tabular}
}
\title{Algorithms approaching the threshold for semi-random planted clique\thanks{This project has received funding from the European Research Council (ERC) under the European Union’s Horizon 2020 research and innovation program (grant agreement No 815464).}}

\author{
\begin{tabular}[h!]{ccc}
  \FormatAuthor{Rares-Darius Buhai}{rares.buhai@inf.ethz.ch}{ETH Zürich}
  \FormatAuthor{Pravesh K.\ Kothari\thanks{Supported by NSF CAREER Award \#2047933, NSF \#2211971, an Alfred P. Sloan Fellowship, and a Google Research Scholar Award.}}{praveshk@cs.cmu.edu}{CMU}
  \FormatAuthor{David Steurer}{dsteurer@inf.ethz.ch}{ETH Zürich}
\end{tabular}
} %
\date{\today}

\maketitle
\thispagestyle{empty}

\begin{abstract}
  We design new polynomial-time algorithms for recovering planted cliques in the semi-random graph model introduced by Feige and Kilian~\cite{FK01}.
The previous best algorithms for this model succeed if the planted clique has size at least \(n^{2/3}\) in a graph with \(n\) vertices~\cite{MMT20,DBLP:conf/stoc/CharikarSV17}.
Our algorithms work for planted-clique sizes approaching \(n^{1/2}\)
--- the information-theoretic threshold in the semi-random model~\cite{steinhardt2017does} and a conjectured computational threshold even in the easier fully-random model.
This result comes close to resolving open questions by Feige~\cite{bwca-semirandom} and Steinhardt~\cite{steinhardt2017does}.

To generate a graph in the semi-random planted-clique model, we first 1) plant a clique of size \(k\) in an \(n\)-vertex \Erdos--\Renyi graph with edge probability \(1/2\) and then adversarially add or delete an arbitrary number edges not touching the planted clique and delete any subset of edges going out of the planted clique.
For every $\epsilon>0$, we give an $n^{O(1/\epsilon)}$-time algorithm that recovers a clique of size \(k\) in this model whenever $k \geq n^{1/2+\epsilon}$.
In fact, our algorithm computes, with high probability, a list of about \(n/k\) cliques of size \(k\) that contains the planted clique.
Our algorithms also extend to arbitrary edge probabilities $p$ and improve on the previous best guarantee whenever $p \leq 1-n^{-0.001}$.

Our algorithms rely on a new conceptual connection that translates certificates of upper bounds on biclique numbers in \emph{unbalanced} bipartite \Erdos--\Renyi random graphs into algorithms for semi-random planted clique. 
Analogous to the (conjecturally) optimal algorithms for the fully-random model, the previous best guarantees for semi-random planted clique correspond to spectral relaxations of biclique numbers based on eigenvalues of adjacency matrices.
We construct an SDP lower bound that shows that the \(n^{2/3}\) threshold in prior works is an inherent limitation of these spectral relaxations.
We go beyond this limitation by using higher-order sum-of-squares relaxations for biclique numbers.

We also provide some evidence that the information-computation trade-off of our current algorithms may be inherent by proving an average-case lower bound for unbalanced bicliques in the low-degree polynomial model.
\end{abstract}

\clearpage
\microtypesetup{protrusion=false}
\tableofcontents{}
\microtypesetup{protrusion=true}
\thispagestyle{empty}

\clearpage

\pagestyle{plain}
\setcounter{page}{1}

\section{Introduction}%
\label{sec:introduction}

Clique is one of the most intensely studied combinatorial problems in theoretical computer science, both in terms of its worst-case and its average-case complexity.
It was among the first graph problems shown to be NP-complete \cite{karp1972reducibility}.
In fact, it turns out that for every \(\e>0\), it is NP-hard to find cliques of size \(n^{\e}\) even in graphs that contain cliques of size \(n^{1-\e}\) \cite{MR1687331-Haastad99,MR2403018-Zuckerman07,MR3728489-Khot14}.

The most well-studied average-case counterpart is the \emph{planted clique} problem~\cite{DBLP:journals/rsa/Jerrum92,DBLP:journals/dam/Kucera95} where the goal is to recover a $k$-clique added to an \Erdos--\Renyi random graph \(G(n,1/2)\).
Such a clique is uniquely identifiable if $k \gg 2 \log _2 n$.
There are polynomial time algorithms based on rounding the second eigenvector of the adjacency matrix~\cite{MR1662795-Alon98} as well as basic semidefinite programming relaxations (e.g., the \Lovasz theta function)~\cite{MR1742351-Feige00,MR1969394-Feige03} to recover the planted clique with high probability whenever \(k\ge n^{1/2}\).
Closing the exponential gap between the information-theoretic threshold value of $k$ and the threshold of the best known algorithms is a tantalizing open question that has inspired a large body of research, culminating in lower bounds against restricted classes of algorithms, such as statistical query algorithms~\cite{MR3664576-Feldman17} and sum-of-squares relaxations~\cite{DBLP:conf/focs/BarakHKKMP16}, that vastly generalize the current algorithms for this problem.
These concrete lower bounds provide some rigorous evidence that current algorithms for planted clique are optimal.

\paragraph{Fragility of algorithms.} 
Unfortunately, many algorithms for the planted clique problem are \emph{fragile}: a small number of adversarial changes to the input can cause the natural algorithms to break down completely.
This includes methods based on basic statistics such as degrees of vertices or eigenvalues of the adjacency matrix that provide the strongest possible guarantees for the problem.
Such fragility can be viewed as known algorithms \emph{overfitting} to the choice of the distributional model. 

In response, a significant research effort has gone into finding algorithms resilient against even the most benign forms of adversarial modifications.
This includes a long line of work on \emph{monotone adversary} models introduced in~\cite{feige1998heuristics} for average-case formulations of clique and coloring (i.e., community detection) \cite{MR3536616-Montanari16, MR3536617-Moitra16, liu2022minimax}.
In the context of planted clique, such models correspond to starting from the standard planted clique input and allowing an adversary to \emph{delete} any subset of edges not in the planted clique.
\david{pet peeve: i don't like the word ``intuitively''.
  people use it to mean very different things.
  sometimes they use it to indicate  high level explanations.
  sometimes it is used to indicate some kind of occam-razor type reasoning.
  according to the dictionary intuition is some belief or insight that cannot be explained.
  i think that is usually not what it is meant in paper (or at least it would be very unhelpful as part of some explanation).
} \pravesh{Good point. :) I reworded it a little bit. }
Such deletions are, in principle, only helpful since the planted clique continues to be the true maximum clique in the resulting graph. And indeed, while basic statistics and spectral methods fail in the presence of monotone adversaries, natural analyses of more resilient algorithms based on semidefinite programming~\cite{MR1742351-Feige00} succeed at the same $k=O(\sqrt{n})$ threshold while tolerating \emph{monotone adversaries}. 

\david{I think in this discussion it would be good to convey that the FK semirandom model is not just the natural next step after monotone adversaries.
  instead i think we should (and can!) argue that getting the same threshold for monotone adversaries is not that surprising and that, in contrast, the semirandom model is qualitatively different.
  here we should probably explain that it is provably not true that they are only helpful (because for sbm they can move the information theoretic threshold).
  however certain kind of natural analyses automatically show robustness against monotone adversaries.
} \pravesh{This is a good point...this discussion does need to be modified to show the difference between monotone adversaries vs semirandom models.}

\paragraph{Semi-random model.}
A seminal work by Feige and Kilian~\cite{FK01}  introduced the following semi-random planted clique model following the classical work of~\cite{BLUM1995204} on semi-random coloring. 
Such semi-random models combine a distributional input with a monotone adversary and an adversarial choice at the same time. 
After the introduction of this model, similar semi-random models have been studied for a wide range of combinatorial optimization problems, including graph partitioning and constraint satisfaction problems.
We refer the interested reader to the excellent survey~\cite{bwca-semirandom}.

\begin{definition}[Feige--Kilian semi-random planted clique model, $\FK(n,k,p)$]
  For \(n,k\in \N\) with \(k\le n\) and \(p \in [0, 1]\), we let \(\FK(n,k,p)\) be the collection of distributions over graphs with vertex set \(V=[n]\) sampleable by a process of the following form:
  \david{the labels for this list sound inconsistent.
    In particular, the term monotone adversary doesn't go well with the other terms: distributional choice and worst-case choice.
    not exactly sure what's the right thing to do.
    maybe we could talk about distributional, monotone, and worst case \emph{phases}.
  another thing that doesn't quite fit here is that the monotone part is of course also worst case.}
  \pravesh{One way to do this might be the following: we slightly change the description of the model to: 1) random or distributional generation, 2) Adversarial Deletion of Cut Edges, 2) Adversarial Addition of Non-Cut Edges}
\begin{enumerate} 
\item \emph{Random Generation Phase}: Choose a uniformly random subset \(S^*\subseteq V\) of size \(k\) and add a clique on \(S^*\) to an \Erdos--\Renyi random graph \(G(n,p)\) (which includes each possible edge independently at random with probability \(p\)),
\item \emph{Adversarial Deletion Phase}: delete an arbitrary subset of edges going out of \(S^*\) adaptively (i.e., possibly depending on the previous random choices),
\item \emph{Adversarial Addition Phase}: replace the subgraph induced on \(V\setminus S^*\) by an arbitrary one, again adaptively.
\end{enumerate}
\end{definition}

Unlike planted clique with monotone adversaries, semi-random models are far from ``helpful''.
In particular, the planted clique isn't necessarily the maximum clique in the resulting graph.
And the adversarial choices in the generation process are known to result in significantly altered information-theoretic thresholds at which efficient algorithms can succeed for related problems such as community detection in the stochastic block model~\cite{MR3536617-Moitra16}.
\pravesh{I added this blurb. Not sure if this is the right point to talk about it. Also needs citation.}

If \(p=1\), the above model recovers the worst-case version of the clique problem. On the other hand, by omitting the last two steps, we recover the original planted clique model, and by omitting only the last step we recover the planted clique with ``helpful'' monotone adversaries. Importantly, the last two steps are \emph{adaptive} and can be chosen adversarially in response to the first step.
In absence of adaptivity (i.e., when the last two steps are oblivious to the distributional choices), the model becomes significantly easier algorithmically. 

We write \(G\sim \FK(n,k,p)\) to denote a graph sampled according to one of the distributions in \(\FK(n,k,p)\).
For particular choices of parameters \(k=k(n)\) and \(p=p(n)\), our goal is to develop an algorithm that succeeds with high probability for \emph{every} distribution described by \(\FK(n,k,p)\).

\paragraph{What does it mean for the algorithm to succeed?}
Since the graph induced on $V \setminus S^*$ could be a worst-case hard instance for the clique problem, it is NP-hard to find a maximum clique in \(G\).
So the goal of the algorithm is to find a clique of size \(k\) in $G$.
For the original planted clique model (and the version with helpful monotone adversaries), we could with high probability recover the planted clique \(S^*\) in \(G\).
In contrast, in the semi-random model, this task is impossible information-theoretically because the adversary could simulate multiple disjoint copies of the distributional process in \(V\setminus S^*\).
Instead, we can ask the algorithm to compute a small list of (pairwise almost-disjoint) \(k\)-cliques in $G$ that contains the planted clique $S^*$. Such a list also allows uniquely identifying \(S^*\) if, in addition, we are given a random vertex of \(S^*\) as advice.

\pravesh{Removed this paragraph in the context of the new discussion since the failure of heuristics applies to monotone adversary model too.}

In their work introducing this model, ~\cite{FK01} gave an algorithm that uses a Gaussian rounding~\cite{MR1412228-Goemans95} of the vector solution for the \Lovasz theta SDP relaxation combined with  a combinatorial cleanup step to produce a correct list.
For any $p$ such that $1-p \geq (1+\epsilon) \ln (n)/n$, their algorithm works if $k \geq \alpha n$ for some constant $\alpha>0$.
Such a guarantee is essentially optimal if $1-p= \tilde{O}(1/n)$.
The main focus of subsequent works has been in the case when $1-p$ is larger.
In particular, the case of $p=1/2$ (and more generally, any constant $<1$) is of special interest.
In this case, one can ideally expect polynomial time algorithms that succeed for $k\sim \sqrt{n}$ as in the case of average-case planted clique.
We will focus on the case of $p=1/2$ in this introduction for the sake of clarity.

\paragraph{Prior work.}
Algorithms in prior works rely on rounding carefully designed semidefinite programming (SDP) relaxations.
In the slightly easier setting that drops the adversarial deletion step from the model, Charikar, Steinhardt and Valiant~\cite{DBLP:conf/stoc/CharikarSV17} gave an algorithm based on a semidefinite programming relaxation for \emph{list-decodable mean estimation} that succeeds whenever $k \geq O(n^{2/3} \log^{1/3} (n))$.
Their guarantee was improved to  $k \geq O(n^{2/3})$ by Mehta, Mackenzie and Trevisan~\cite{MMT20}.
The algorithm of~\cite{MMT20} is based on a variant of the Lovász theta SDP (that they call ``crude'' or C-SDP) with an objective function that incentivizes ``spread-out'' vector solutions and analyzed via the Grothendieck inequality.
They suggest (though don't prove) that their SDP should fail if $k = o(n^{2/3})$.
Further heightening the intrigue, Steinhardt~\cite{steinhardt2017does} proved that if $k=o(\sqrt{n} )$, then it is information-theoretically impossible to identify an $O(n/k)$-size list, indicating an \emph{information-theoretic} (as opposed to computational) phase transition at $k\sim \sqrt{n}$.

\paragraph{Feige's open question.} Given the apparent barrier for the basic semidefinite program at $k\sim n^{2/3}$, it is natural to ask: is the semi-random variant harder than the average-case planted clique problem or could there be algorithms that succeed for $k$ approaching the $O(\sqrt{n})$ threshold?
In his survey on semi-random models~\cite{bwca-semirandom}, Feige posed (see Section 9.3.4, Page 205) this as an outstanding open question and hoped for algorithms for semi-random planted clique matching the $k\sim \sqrt{n}$ threshold for the average-case variant. 

\subsection*{Results}

In this work, we nearly resolve Feige's question and give an algorithm for the semi-random planted clique problem that works for $k$ approaching $\sqrt{n}$.
Specifically, we give a scheme of algorithms that, for any $\epsilon >0$, run in time $n^{O(1/\epsilon)}$ and succeed in solving the semi-random planted clique problem whenever $k \geq n^{1/2+\epsilon}$:

\begin{theorem}[Main result, see Theorem~\ref{thm:main-result} for a detailed version]
  For every $\epsilon>0$, there is an algorithm that, given a graph $G$ on $n$ vertices as input, computes a list of vertex subsets in time $n^{O(1/\epsilon)}$ satisfying the following guarantee:
  If $G$ is generated according to $\FK(n,k,1/2)$ for $k \geq n^{1/2 + \epsilon}$,
  then with probability at least $0.99$ the algorithm outputs a list of at most $(1+o(1))\tfrac n k$ cliques of size \(k\) such that one of them is the clique planted in $G$.
\end{theorem}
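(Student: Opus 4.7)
The plan is to attack the problem through a sum-of-squares (SoS) relaxation of degree $d = O(1/\epsilon)$, with the central insight being a reduction from recovering the planted clique to certifying that the bipartite ``signal'' graph between $S^*$ and $V\setminus S^*$ does not contain large unbalanced bicliques. The $O(1/\epsilon)$ degree is the natural source of the $n^{O(1/\epsilon)}$ runtime.

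First, I would set up the SoS hierarchy on indicator variables $x_1,\ldots,x_n\in\{0,1\}$ with the standard clique axioms: $x_i^2=x_i$, $\sum_i x_i = k$, and $x_ix_j = 0$ for every non-edge $\{i,j\}$ of $G$. A degree-$d$ pseudo-expectation $\pE$ satisfying these axioms should be thought of as a ``pseudo-distribution'' over $k$-cliques in $G$. The rounding would extract an $O(n/k)$-size list of actual $k$-cliques by iterative conditioning on pseudo-probable vertices (equivalently, spectral rounding of the second-moment matrix $M_{ij}=\pE[x_ix_j]$ followed by conditioning), peeling off one candidate clique at a time and restarting on the residual pseudo-expectation.

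The heart of the analysis is the following structural claim: for $G\sim\FK(n,k,1/2)$ with $k\ge n^{1/2+\epsilon}$, with high probability over the random-generation phase, every degree-$d$ pseudo-expectation satisfying the clique axioms assigns almost all of its mass to at most $O(n/k)$ true $k$-cliques of $G$, one of which is $S^*$. To prove this, I would argue by contradiction: if $\pE$ placed non-negligible mass on some ``fake'' clique $T$ other than $S^*$ or one of the adversarial copies, then pseudo-many vertices of $T\setminus S^*$ would be forced to jointly extend---via the surviving random edges---to a large common neighborhood inside $S^*$. This produces a large unbalanced biclique $A\times B$ in the bipartite graph between $V\setminus S^*$ and $S^*$, whose edges are distributionally a subset of an unbalanced $G(n,k,1/2)$ bipartite random graph. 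A matching upper bound on biclique numbers in such graphs would contradict this.

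The core technical obstacle---and where going beyond prior work truly bites---will be producing an \emph{SoS-friendly} certificate of the ``no large unbalanced biclique'' property. Degree-$2$ (spectral) certificates on the biadjacency matrix are exactly what break down at $k\sim n^{2/3}$, so I would move to a Kikuchi-style matrix indexed by $O(1/\epsilon)$-tuples of vertices on the small side, whose spectral norm, after appropriate row/column normalization, certifies that common-neighborhood sizes concentrate near their expectation $n/2^t$. The main technical load is a matrix concentration bound on these Kikuchi matrices at degree $d=O(1/\epsilon)$, together with the delicate step of verifying that it translates into an explicit SoS proof of the corresponding polynomial inequality rather than a merely analytic bound; this is exactly what should let us push the biclique certificate threshold from $n^{2/3}$ all the way down to $n^{1/2+\epsilon}$ and pull the bound back through the reduction to obtain the theorem.
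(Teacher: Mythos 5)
You have the right high-level architecture --- an SoS relaxation of the $k$-clique axioms, a reduction to certifying absence of unbalanced bicliques in the bipartite graph $\cut(S^*)$, and a rounding that extracts a small list --- but two of the load-bearing steps in your proposal do not go through as stated.

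First, the rounding. Your structural claim, that every degree-$d$ pseudo-expectation satisfying the clique axioms automatically places almost all its mass on $O(n/k)$ true $k$-cliques including $S^*$, is false. The subgraph on $[n]\setminus S^*$ is fully adversarial, so the adversary can make it contain a $k$-clique $T$ disjoint from $S^*$, and a pseudo-distribution supported entirely on $T$ satisfies the clique axioms at any degree while carrying zero information about $S^*$. The paper resolves this by not taking an arbitrary feasible $\pE$ but the one minimizing the coverage objective $\|\pE[w]\|_2^2$; a convexity argument (Lemma~\ref{lem:max-coverage}) then gives $\sum_{i\in S^*}\pE[w_i]\ge k^2/n$, which is what forces the pseudo-distribution to ``see'' $S^*$. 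The actual rounding is then a ``rounding by votes'' step: sample a $t$-tuple $Q$ with probability proportional to $\pE[w_Q]$, threshold the reweighted vector $\pE[w_Q w]/\pE[w_Q]$, and repeat $O((n/k)^t)$ times --- not iterative conditioning or spectral rounding of the second-moment matrix.

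Second, and more seriously, the certificate. You propose a Kikuchi-style matrix indexed by $O(1/\epsilon)$-tuples together with a matrix concentration bound. The paper explicitly analyzes exactly this class of tensored spectral certificates in its techniques overview (the ``Tensoring'' and ``Two-sided tensoring'' paragraphs) and argues that one-sided tensoring stalls at $k\sim n^{2/3}$ while even optimal two-sided tensoring appears to plateau around $k\sim n^{0.61}$; the degree-$2$ SDP lower bound (Lemma~\ref{lem:sdp-lb}) confirms that spectral/basic-SDP certificates genuinely fail below $n^{2/3}$. The paper's working certificate departs from the spectral route entirely: from an $\ell\times(k-\ell)$ biclique $(L,R)$ it forms the $\binom{\ell}{r}$ parity vectors $u_S(j)=\prod_{i\in S}H(i,j)$ for $S\subseteq L$, observes that their pairwise inner products are small by a Chernoff/union ``balancedness'' bound, and then strips away the $R$-coordinates --- which subtracts $k-\ell$ from every pairwise inner product and makes the resulting $\binom{\ell}{r}$ vectors \emph{pairwise negatively correlated}. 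The elementary bound that at most $d+1$ pairwise negatively correlated vectors can live in $\R^d$ then forces $\ell\lesssim (n/k)^{O(1/r)}$. It is this geometric argument, implementable as a degree-$O(r)$ SoS proof, rather than a Kikuchi-type spectral norm bound, that carries the threshold from $n^{2/3}$ down to $n^{1/2+\epsilon}$; without it your plan would not get past roughly $k\sim n^{0.61}$.
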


In particular, our algorithm manages to recover planted cliques of size $k$ approaching $\sim \sqrt{n}$ --- the information-theoretic threshold~\cite{steinhardt2017does} and the conjectured computational threshold even for the easier fully-random planted clique problem.
This improves on the best known prior algorithm~\cite{MMT20} that gives a polynomial time algorithm that succeeds whenever $k \geq O(n^{2/3})$.

Our approach extends to edge probabilities \(p\) beyond the choice \(p=1/2\) and yields improved guarantees even when $p = 1-o(1)$, though in that case we do not approach the information-theoretic threshold.
Our hardness results (discussed below) show that such an outcome might be inevitable.

\paragraph{Higher degree sum-of-squares vs basic SDP.}
Our algorithm relies on rounding a high constant degree sum-of-squares relaxation (that maximizes a natural ``entropy-like'' objective function) of the natural integer program for finding $k$-cliques in graphs. As we discuss below, this is likely necessary as for the natural certification problem (discussed below) that arises in algorithms for semirandom planted clique, the basic SDP (Lemma~\ref{lem:sdp-lb}) has a lower bound that precludes recovering $k \ll n^{2/3}$-size cliques. 
This is in sharp contrast to the average-case planted clique problem where no constant degree strengthening of the basic SDP allows recovering planted cliques of size $k =o(\sqrt{n})$~\cite{DBLP:conf/focs/BarakHKKMP16} (i.e., asymptotically smaller than the threshold for recovery using the basic SDP). In fact, the best known analyses can only obtain an $n^{O(t)}$ algorithm that succeeds whenever $k \geq O(\sqrt{n/2^t})$~\cite{MR1969394-Feige03}.

Indeed, few combinatorial optimization problems are known to benefit from high constant degree sum-of-squares relaxations. Some notable exceptions include approximating constraint satisfaction problems on graphs with small threshold rank \cite{MR3424199-Arora15, MR2932723-Barak11} (where the high degree corresponds to the threshold rank) and approximating the maximum bisection in a graph \cite{MR3205223-Raghavendra12} (where the high degree helps deal with the cardinality constraint). Our work adds a new example to this list that appears to be more unstructured than earlier examples.

We note that, in contrast to combinatorial optimization, in statistical estimation higher degree sum-of-squares relaxations have recently been pivotal in algorithmic applications such as robust method of moments~\cite{DBLP:conf/stoc/KothariSS18}, linear regression~\cite{DBLP:conf/colt/KlivansKM18,DBLP:conf/stoc/BakshiP21}, list-decodable learning~\cite{DBLP:conf/stoc/KothariSS18,DBLP:conf/nips/KarmalkarKK19,raghavendra2020list,DBLP:conf/colt/RaghavendraY20,DBLP:conf/soda/BakshiK21,DBLP:conf/stoc/IvkovK22}, and settling the robust clusterability and learnability of high-dimensional Gaussian mixtures~\cite{DBLP:conf/stoc/KothariSS18,DBLP:conf/stoc/Hopkins018,BK20b,DBLP:conf/focs/BakshiDHKKK20,DBLP:journals/corr/abs-2012-02119,DBLP:conf/stoc/LiuM21}. %

\paragraph{Rounding and connection to certifying bicliques.}
Our rounding algorithm is reminiscent of the ``rounding by votes'' strategy employed in several recent works on list-decodable learning~\cite{DBLP:conf/nips/KarmalkarKK19,DBLP:conf/soda/BakshiK21,DBLP:conf/stoc/IvkovK22}.
Our analysis relies on a new connection to efficient certificates: we can recover a small list which includes the planted clique if we can certify that the planted clique has small intersection with all other $k$-cliques. 
This reduces to certifying upper bounds on bipartite cliques in unbalanced bipartite \emph{random} graphs: see Definition~\ref{def:cert-problem} for a standalone definition of the problem.

Given a bipartite graph $H=(U,V,E)$ with $|U|=k$, $|V|=n$, and each bipartite edge included in $E$ with probability $p$ independently, it is easy to prove by a standard application of Chernoff and union bounds that there is no $\ell$ by $k$ bipartite clique in $H$ with $\ell \gg O(\log n/(1-p))$.
The \emph{bipartite clique certification} problem asks to find a polynomial time verifiable certificate that $H$ contains no $\ell$ by $k$ biclique for $\ell$ as small as possible.  
This is a variant of the more standard biclique certificate problem (see, e.g., ~\cite{DBLP:conf/stoc/FeldmanGRVX13}) where both the graph and the cliques we are interested in are unbalanced. 

Our main result is based on the following primitive that certifies in time $n^{O(1/\epsilon)}$ that a random $k$ by $n$ bipartite graph does not contain $\ell$ by $k$ blicliques for $\ell = n^{\epsilon}$ and any $k \geq \tilde{O}(\sqrt{n})$. Our certificates are based on $O(1/\epsilon)$-degree sum-of-squares proofs and this is necessary -- we prove that for any $\ell = o(n/k)$, there are no degree $2$ sum-of-squares (i.e., basic SDP) certificates of absence of bicliques. In particular, unlike the case of balanced bipartite random graphs, the unbalanced setting seems to naturally benefit from large constant degree sum-of-squares certificates.

\begin{theorem}[Informal, see Theorem~\ref{thm:certification-density-half} and Lemma~\ref{lem:sdp-lb}]
\label{thm:certification-intro}
For every $\epsilon>0$, there is an $n^{O(1/\epsilon)}$ time algorithm that takes input a bipartite graph $H=(U,V,E)$ with $|U| = k$, $|V|=n$ and each bipartite edge included in $E$ with probability $p=1/2$, and with probability at least $0.99$ over the draw of $H$ outputs an $n^{O(1/\epsilon)}$-time verifiable certificate that $H$ contains no $\ell$ by $k$ biclique for $\ell \leq n^{\epsilon}$ whenever $k \geq \tilde{O}(\sqrt{n})$.

Further, 1) the certificate can be expressed as an $O(1/\epsilon)$ degree sum-of-squares refutation of the biclique axioms (see \eqref{eq:biclique-system}) and 2) there does \emph{not} exist a degree $2$ certificate (equivalently, based on the ``basic SDP'') to certify a bound of $\ell = o(n/k)$.
\end{theorem}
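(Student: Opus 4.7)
The plan is to establish both the positive certification result and the matching degree-$2$ lower bound. For the positive direction, I would formulate the biclique axioms \eqref{eq:biclique-system} as a Boolean polynomial system in variables $x_u$ ($u \in U$) and $y_v$ ($v \in V$) with $\sum_u x_u = \ell$, $\sum_v y_v = k$, and the non-edge relations $x_u y_v (1 - A_{u,v}) = 0$. These axioms directly imply $\sum_{u,v} A_{u,v} x_u y_v = \ell k$, so SoS refutation amounts to upper bounding the bilinear form strictly below $\ell k$. The elementary degree-$2$ bound $\sum_{u,v} A_{u,v} x_u y_v \le \tfrac12 \ell k + \norm{A - \tfrac12 J}_{\mathrm{op}} \sqrt{\ell k}$ combined with $\norm{A - \tfrac12 J}_{\mathrm{op}} = \Theta(\sqrt n)$ for random $A$ only refutes when $\ell = \Omega(n/k)$; going below this requires lifting.

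The key step would be a Kikuchi-style lifted matrix. For $t \asymp 1/\epsilon$, define $A^{(t)} \in \mathbb R^{\binom{U}{t} \times V}$ by $A^{(t)}[S, v] = \prod_{u \in S} A_{u,v}$, the indicator that $v$ is a common neighbor of the $t$-subset $S$. An $\ell \times k$ biclique in $H$ blows up into a $\binom{\ell}{t} \times k$ all-ones submatrix of $A^{(t)}$, so a spectral estimate on the centered $M := A^{(t)} - 2^{-t} J$ would refute biclique existence. I would bound $\norm{M}_{\mathrm{op}}$ by the trace method, computing $\E\,\mathrm{tr}\bigl((M M^\top)^q\bigr)$ for $q \asymp t$; the rapid decay $\E[A^{(t)}[S,v]] = 2^{-t}$ must overwhelm the combinatorial factor from the $\binom{k}{t}$ rows. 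To translate this spectral bound into an $O(t)$-degree SoS refutation, I would package $M$ as the Gram matrix of vectors of monomials in $x$ of degree $t$, and then lift a putative pseudo-expectation $\pE$ satisfying the axioms to its $t$-wise moments $\pE\bigl[\prod_{u \in S} x_u\bigr]$. The axioms yield the identity $\pE\bigl[\prod_{u \in S} x_u \cdot y_v \cdot (1 - A^{(t)}[S,v])\bigr] = 0$; summing against $\sum_v y_v = k$ and aggregating over $t$-subsets $S$ produces a quadratic form in the lifted moments whose value, forced from below by $\sum_u x_u = \ell$, contradicts the spectral control on $M$ precisely in the target regime $\ell \ge n^\epsilon$ and $k \ge \tilde O(\sqrt n)$.

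For the degree-$2$ lower bound, I would construct an explicit feasible pseudo-expectation with marginals $\pE[x_u] = \ell/k$, $\pE[y_v] = k/n$, second-order moments $\pE[x_u x_{u'}]$ and $\pE[y_v y_{v'}]$ matching uniform $\ell$- and $k$-subsets, and cross-term $\pE[x_u y_v]$ set proportional to $A_{u,v}$ (rescaled by the row/column degrees of $A$) so as to enforce $\pE[x_u y_v(1-A_{u,v})] = 0$. Positive semidefiniteness of the resulting degree-$2$ moment matrix reduces, by Schur complement, to a spectral condition on a rescaled copy of $A - \tfrac12 J$, and a direct computation shows this condition is met precisely when $\ell = o(n/k)$, yielding the claimed lower bound.

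The hardest step will be the trace estimate on the centered lifted matrix $M$. The rows of $A^{(t)}$ are strongly correlated --- rows indexed by $S, S'$ share every factor $A_{u,\cdot}$ for $u \in S \cap S'$ --- so matrix Bernstein does not apply directly. I would expand $\E\,\mathrm{tr}\bigl((MM^\top)^q\bigr)$ as a sum over closed walks of length $2q$ in the bipartite Kikuchi graph and absorb the intersection dependencies using the polynomial decay of $\E\bigl[\prod_u A_{u,v_1} A_{u,v_2}\bigr]$. The combinatorics of these walks is exactly where both the $\epsilon$-dependent degree $t$ and the assumption $k \ge \tilde O(\sqrt n)$ must be exploited in order to obtain a bound tight enough for $\ell = n^\epsilon$.
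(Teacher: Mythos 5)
Your proposal for the positive direction hinges on a one-sided lifted matrix $A^{(t)}$ (rows indexed by $t$-subsets of $U$, columns by $V$, centered to $M = A^{(t)} - 2^{-t}J$) and a trace-method bound on $\Norm{M}_{\mathrm{op}}$. This is essentially the ``tensor the left side'' strategy that the paper analyzes in its techniques section and explicitly rejects: the paper points out that the spectral bound for a one-sided lift degrades as $t$ grows, giving roughly $\ell \lesssim k^{1-1/t}$, because once $\binom{k}{t}$ exceeds $n$ the Frobenius-norm-to-rank trade-off forces $\Norm{M}_{\mathrm{op}} \gtrsim \sqrt{2^{-t}\binom{k}{t}}$ regardless of how clever the trace calculation is. In the target regime $k = n^{1/2+\epsilon}$, $t \asymp 1/\epsilon$, this yields $\ell \lesssim n^{1/2+\epsilon/2}$ --- orders of magnitude away from the needed $\ell \leq n^\epsilon$. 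The paper states outright that its ``key idea ... is to abandon the idea of spectral certificates altogether.''

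The missing idea is the \emph{stripping} step that converts mild balancedness into strong pairwise negativity. Working with the $\pm 1$ character vectors $u_S(j) = \prod_{i\in S} H(i,j)$ and the biclique constraint $x_S u_S(j) y_j = x_S y_j$, the paper forms $u_S'(j) = u_S(j)(1-y_j)$ --- i.e., it zeroes out precisely the coordinates inside the biclique. This makes the cross terms satisfy $x_S x_T \iprod{u_S', u_T'} \leq \Delta_{2r} x_{S\cup T} - x_{S\cup T}|y|$, which is strongly \emph{negative} once $|y| \gg \Delta_{2r} \sim \sqrt{rn\log n}$. The SoS certificate is then the single inequality $0 \leq \Norm{\sum_{|S|=r} x_S u_S'}_2^2$ expanded out; the bound on the number of pairwise negatively correlated vectors in $n$ dimensions replaces any spectral norm estimate. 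Your global centering $A^{(t)} - 2^{-t}J$ subtracts only the \emph{average} contribution, which is far too weak: the biclique coordinates contribute $+(k-\ell) \approx k$ to each inner product $\iprod{u_S, u_T}$, while the remaining $n-k$ coordinates contribute only $O(\sqrt{nr\log n})$ by balancedness, and it is exactly the removal of the $+(k-\ell)$ piece (not the $2^{-t}$ mean) that drives the argument. Relatedly, the paper uses the single total-size axiom $|x|+|y|=k$ rather than $|x|=\ell$, $|y|=k$ separately --- this is what lets the right-vertex degree bound force $|y| \geq k/4$ within SoS, which is needed to close the argument.

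For the degree-$2$ lower bound, your sketch is closer in spirit (explicit moment matrix, $\pE[x_u] = \ell/k$, cross-term $\propto A_{u,v}$), but the construction of the $V\times V$ block as ``moments of a uniform $k$-subset'' is unlikely to give PSD-ness: a quick Schur-complement check against $X_{\mathrm{cross}}^\top X_{\mathrm{top}}^{-1} X_{\mathrm{cross}}$ shows the dominant eigenvalues along the all-ones direction need a much more delicate matching than a full-rank $\alpha I + \beta J$ affords. The paper instead sets the bottom block to the \emph{low-rank} matrix $\frac{k-\ell}{n(k+1)}\bigl(\sum_i a_i a_i^\top + \1\1^\top\bigr)$, whose range is forced to coincide with the span of the rows of $A$ together with $\1$; the PSD-ness is then verified by a direct charging argument that crucially uses this range restriction (so test vectors orthogonal to the range contribute nothing) together with the lower bound on the $k$-th singular value of $A$. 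You would need to replicate this range trick; the ``matching uniform subsets'' choice and a bare Schur complement are not enough.
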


The results of prior works can be obtained by a simple spectral certificate captured by the basic SDP relaxation for upper bounding such bicliques (see detailed discussion in Section~\ref{subsec:spectral-certs} of the techniques).
In contrast, in this work, we depart from the spectral certificates and rely on a certain simple \emph{geometric} certificate based on upper bounds on the size of sets of \emph{pairwise negatively correlated vectors}.
If $\epsilon=O(\log \log n/\log n)$ is chosen such that $\ell = O(\log n)$ and $k = O(\sqrt{n \log n})$, the biclique refutation above translates into an algorithm for semi-random planted clique that works for $k \geq O(\sqrt{n \log n})$ in time $n^{O(\log n / \log \log n)}$, matching Steinhardt's information-theoretic lower bound~\cite{steinhardt2017does} and the threshold for the best known efficient algorithms for planted clique up to a $\sqrt{\log n}$ factor.

\paragraph{Hardness of refuting bicliques.}
We provide some evidence that improving on our biclique certification algorithms likely requires new techniques by proving a lower bound in the low-degree polynomial model.
The low-degree polynomial model (see~\cite{KWB19} for a great exposition) is a restricted model for statistical \emph{distinguishing} problems.
More precisely, the model considers problems where we are given a single sample (an instance of an algorithmic problem, a graph in our case) with the promise that it is an independent sample from one of two possible distributions: $D_{\mathrm{null}}$ --- a distribution that does not admit solutions, usually a natural random model, and $D_{\mathrm{planted}}$ --- a closely related distribution that does admit solutions.
Informally speaking, the low-degree model restricts distinguishers to thresholds of low-degree polynomials of the input.
While low-degree polynomials might appear restricted, they capture several algorithms including power iteration, approximate message passing, and local algorithms on graphs (cf.\ \cite{DMM09,GJW20}).
Moreover, it turns out that they are enough to capture the best known spectral algorithms for several canonical problems such as planted clique, community detection, and sparse/tensor principal component analysis~\cite{BHK19,HS17,DKWB19,HKP+17}.

This model arose naturally from work on constructing sum-of-squares lower bounds for the planted clique problem~\cite{BHK19}.
It was formalized in~\cite{HKP+17} and conjectured to imply sum-of-squares lower bounds for certain average-case refutation problems.
Subsequently, starting with~\cite{HS17} (see also~\cite{Hop18}), researchers have used the low-degree polynomial method as a technique to demarcate average-case algorithmic thresholds~\cite{HKP+17,GJW20,SW20,Wei20}.

In our case, $D_{\mathrm{null}}$ will be the $B(k,n,p)$ model: bipartite graphs with left vertex set of size $k$, right vertex set of size $n$, and each bipartite edge present in the graph with probability $p$.
Notice that if we had an algorithm that certifies the absence of $\ell$ by $k-\ell$ bicliques in such a graph then we can distinguish between $D_{\mathrm{null}}$ and any $D_{\mathrm{planted}}$ supported on bipartite graphs that admit $\ell$ by $k-\ell$ bicliques. 
Thus the distinguishing problem is formally easier than the task of certification (also known as refutation).
Despite the restrictedness of the low-degree model, we observe that for average-case planted clique with $k \gg \sqrt{n}$, \emph{constant} degree polynomials suffice to distinguish between $D_{\mathrm{null}} = G(n,1/2)$ and $D_{\mathrm{planted}} = G(n, 1/2)$ + $k$-clique. 

\begin{theorem}[Low-degree polynomial heuristic for biclique certification problem]
Fix $\epsilon>0$ small enough and $n$ large enough.
Let $D_{\mathrm{null}} = B(k,n,1/2)$ be the distribution on $(k,n)$-bipartite graphs where every edge is included independently with probability $1/2$.
For $k = n^{1/2+\epsilon}$, there is a distribution $D_{\mathrm{planted}}$ on $(k,n)$-bipartite graphs containing an $\ell$ by $k-\ell$ bipartite clique for $\ell = n^{0.1}$ such that the norm of the degree-$\Omega(1/\epsilon)$ truncated likelihood ratio between $D_{\mathrm{planted}}$ and $D_{\mathrm{null}}$ is $1+o(1)$.
\end{theorem}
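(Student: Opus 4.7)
The plan is a standard low-degree second-moment computation in the style of~\cite{HKP+17,KWB19}. I would take $D_{\mathrm{planted}}$ to be the most natural planting distribution: sample a uniformly random $\ell$-subset $S\subseteq[k]$ and a uniformly random $(k-\ell)$-subset $T\subseteq[n]$, set $y_{ij}=+1$ for every $(i,j)\in S\times T$, and draw every other $y_{ij}$ uniformly and independently from $\{-1,1\}$. This plants an $\ell\times(k-\ell)$ biclique with probability one. Expanding in the Fourier basis on $\{-1,1\}^{[k]\times[n]}$, the likelihood ratio with respect to $D_{\mathrm{null}}$ equals
\[ L(y) \;=\; \E_{S,T}\!\Brac{\prod_{(i,j)\in S\times T}(1+y_{ij})} \;=\; \sum_{E\subseteq [k]\times [n]} \mathbb{P}_{S,T}\!\Brac{E\subseteq S\times T}\cdot y^E \mper \]

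Parseval's identity, combined with introducing two i.i.d.\ copies $(S_1,T_1),(S_2,T_2)$ of the planted pair to compute the second moment, gives
\[ \|L^{\le D}\|^2 - 1 \;=\; \sum_{1\le |E|\le D}\mathbb{P}_{S,T}\!\Brac{E\subseteq S\times T}^2 \;=\; \E\!\Brac{\sum_{t=1}^D \binom{XY}{t}} \mcom \]
where $X=|S_1\cap S_2|$ and $Y=|T_1\cap T_2|$ are independent hypergeometric random variables. The whole task reduces to showing this expectation is $o(1)$ for some $D=\Omega(1/\epsilon)$. The standard factorial-moment identity yields $\E[X^{\underline j}]\le (\ell^2/k)^j$ and $\E[Y^{\underline j}]\le ((k-\ell)^2/n)^j$, and for our parameters $\ell^2/k = n^{-0.3-\epsilon}$ while $(k-\ell)^2/n \approx n^{2\epsilon}$. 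Expanding $X^t$ and $Y^t$ in Stirling numbers of the second kind, the $j=1$ term dominates for $X$ (since $\ell^2/k$ is a negative power of $n$) while the $j=t$ term dominates for $Y$ (since $(k-\ell)^2/n$ is a positive power of $n$), so for every constant $t\le D$,
\[ \E[X^t]\;\lesssim\; \ell^2/k \;=\; n^{-0.3-\epsilon}\mcom\qquad \E[Y^t]\;\lesssim\; \bigl((k-\ell)^2/n\bigr)^t \;\le\; n^{2\epsilon t}\mper \]

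Using independence of $X$ and $Y$ together with the bound $\binom{XY}{t}\le (XY)^t/t!$ gives $\E\binom{XY}{t}\lesssim (\ell^2/k)\cdot n^{2\epsilon t}/t!$, and the sum over $t=1,\ldots,D$ is dominated by its top term, yielding
\[ \|L^{\le D}\|^2-1 \;\lesssim\; \frac{n^{-0.3-\epsilon+2\epsilon D}}{D!} \mper \]
Choosing $D=\lfloor 0.1/\epsilon\rfloor$ (or any sufficiently small constant multiple of $1/\epsilon$) makes the exponent of $n$ negative and bounded away from zero, so the right-hand side is $o(1)$, which proves the claim. The main delicate point I anticipate is the asymmetry between the two sides: because $Y$ has mean growing as a positive power of $n$, any crude bound such as $\E[Y^t]\le \E[Y]\cdot (k-\ell)^{t-1}$ is far too weak, and the Stirling-number expansion is precisely what lets one push $D$ all the way to $\Theta(1/\epsilon)$, matching the algorithmic guarantee in~\cref{thm:certification-intro}.
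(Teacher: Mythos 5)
Your proof is correct, and it takes a genuinely different (and simpler) route than the paper's. You plant the biclique in the most natural way (force the edges of $S\times T$ and sample the rest uniformly), while the paper uses an \emph{edge-adjusted} model $B(k,n,\ell,1/2)$ where the edge probability for a pair $u\in S,\ v\notin P$ is lowered to $\frac{n/2-(k-\ell)}{n-(k-\ell)}$, precisely so that every degree-$\alpha$ Fourier coefficient with a right vertex of degree one vanishes. The paper motivates its model with the remark that the natural plant ``can be distinguished by degree-$1$ polynomials,'' but for the parameters in this particular statement that claim does not hold: the degree-one contribution to $\|L^{\le 1}\|^2-1$ in your model equals $kn\cdot\bigl(\ell(k-\ell)/(kn)\bigr)^2\approx \ell^2 k/n = n^{-0.3+\epsilon}$, which is $o(1)$ for small $\epsilon$ --- exactly the $t=1$ term of your sum, $\E[X]\E[Y]=n^{-0.3+\epsilon}$. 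Your reduction of $\|L^{\le D}\|^2-1$ to $\E\bigl[\sum_{t\le D}\binom{XY}{t}\bigr]$ with $X,Y$ independent hypergeometric intersection sizes is clean, and the factorial-moment bounds $\E[X^{\underline j}]\lesssim(\ell^2/k)^j$, $\E[Y^{\underline j}]\lesssim((k-\ell)^2/n)^j$ together with the observation that the $(j,j')=(1,t)$ term dominates gives exactly the exponent $n^{-0.3-\epsilon+2\epsilon t}$, so $D=\lfloor 0.1/\epsilon\rfloor=\Omega(1/\epsilon)$ suffices. What the paper's model buys you --- and what your route does not readily extend to --- is pushing $\ell$ up toward $n^{1/4}$ (where the degree-one contribution of the natural plant really does start to matter in their Theorem~\ref{thm:low-deg-lb-half}) and, more importantly, the $\omega(1/\epsilon)$-degree lower bound for general $p$ in Theorem~\ref{thm:low-deg-lb-general}, which hinges on killing all Fourier coefficients with a degree-one right vertex. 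One small nit: the inequality $\E[X^{\underline j}]\le(\ell^2/k)^j$ should carry a factor $2^j$ (since $k^{\underline j}\ge(k/2)^j$ rather than $k^j$), but as you correctly note this only affects constants depending on $t$.
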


Informally speaking, the above theorem asserts that, for $k=n^{1/2+\epsilon}$, statistical tests based on computing thresholds of $\Omega(1/\epsilon)$-degree polynomials fail to distinguish between $D_{\mathrm{null}}$ that does not admit $O(\log n)$ by $k$ bicliques and $D_{\mathrm{planted}}$ that contains an $n^{0.1}$ by $k$ biclique.
It turns out that the most natural planted model (plant a random $\ell$ by $k-\ell$ clique and sample the rest of the graph independently) can be distinguished from $D_{\mathrm{null}}$ using just degree $1$ polynomials and thus does not suffice to prove the above theorem.
Instead, we use an edge-adjusted model where the probability of sampling edges outside the biclique is reduced in order to make the degree distribution of left vertices match that of $B(k,n,p)$.

For $p=1/2$, the above theorem shows that we need polynomials of degree $O(1/\epsilon)$ in order to distinguish between $D_{\mathrm{null}}$ and bipartite graphs with $n^{0.1}$ by $n^{1/2+\epsilon}$ bicliques.
Given the contrast to the planted clique problem where the corresponding distinguishing problem can be solved by constant degree polynomials, we obtain some (weak) evidence that beating the guarantees of our current certificates may require new techniques for $p=1/2$.
For general $p$, a similar lower bound suggests that the degree of the polynomial required to distinguish between $D_{\mathrm{null}}$ and $D_{\mathrm{planted}}$ is larger than any function (independent of $n$) of $1/\epsilon$, or else that $k$ needs to scale with $1/(1-p)$ instead of the information-theoretical optimal scaling of $1/\sqrt{1-p}$ --- note that this discrepancy is $\poly(n)$ when $1-p=1/\poly(n)$.

\section{Techniques} \label{sec:overview}

In this section, we provide a high-level overview of our algorithm for the semi-random planted clique problem.
For simplicity of exposition, we will focus on the important case of $p=1/2$. 

Given a graph $G$ generated according to $\FK(n,k,1/2)$, our goal is to construct a small \emph{list} of candidate $k$-cliques in $G$ such that the true planted clique $S^*$ is contained in the list (we will call such lists \emph{correct}).
Our construction will also ensure that a constant fraction of the vertices in $S^*$ do not appear in any other clique in the list.
As a result, we can also uniquely recover $S^*$ with high probability when given, in addition, a uniformly random vertex in $S^*$.

Our algorithm and its analysis rely on the \emph{proofs-to-algorithms} method (see~\cite{TCS-086,BarakS16} for more on the usage of this method). 

\paragraph{Inefficient algorithm.} 
Let's first find an algorithm, even if inefficient, to generate a $\poly(n)$ size correct list, i.e., one that contains $S^*$.
Notice that simply outputting all $k$-cliques in $G$ can lead to an exponentially large (i.e., $\sim n^k$) size list since we have no control over the subgraph induced on $[n] \setminus S^*$ (e.g., consider a clique on $[n] \setminus S^*$).
Instead, we will enumerate all $k$-cliques in $G$ that satisfy an additional property such that 1) the property is satisfied by the planted $k$-clique on $S^*$ with high probability, and 2) every graph $G$ has at most $(1+o(1))n/k$ $k$-cliques satisfying the property.
This property is quite natural and asks for the bipartite graph with the $k$-clique on the left and the rest of the vertices on the right to not contain a large \emph{unbalanced} biclique with many vertices on the left side.
Recall that an $\ell$ by $r$ biclique in a bipartite graph $H$ is a set of vertices that consists of $\ell$ left vertices and $r$ right vertices such that $H$ contains all possible bipartite edges between the two sides.

\begin{definition}[Good $k$-cliques]
Let $G$ be a graph on $n$ vertices. 
We say that a $k$-clique $S$ in $G$ is $\ell$-\emph{good} if every biclique $(L,R)$ in the bipartite graph with left vertex set $S$, right vertex set $[n] \setminus S$, and edge set $\cut_G(S)$ satisfies $|L| \leq \ell$ whenever $|R| \geq 1$ and $|L|+|R|=k$. 
\end{definition}

The planted $k$-clique on $S^*$ is $O(\log n)$-good with high probability over the draw of $\cut(S^*)$.

\begin{proposition}[Bipartite clique number of $\cut(S^*)$] \label{prop:biclique-number-overview}
Let $k,n \in \N$ and $G \sim \FK(n,k,1/2)$.
Then, for large enough $n$ and a constant $c>0$, with probability at least $0.99$ over the draw of edges in $\cut(S^*)$, for any $L \subseteq S^*$, $R \subseteq [n] \setminus S^*$ such that $(L,R)$ is a biclique in $\cut(S^*)$ satisfying $|R| \geq 1$ and $|L| + |R| = k$, we have $|L| \leq c \log_2 n$. 
\end{proposition}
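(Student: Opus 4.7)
The plan is to reduce to a first-moment argument on the \Erdos--\Renyi graph, using that the adversarial modifications can only shrink the cut. Let $G_0 \sim G(n, 1/2)$ denote the graph from Step~1 of the $\FK$ generation, before any adversarial modification. Since Step~2 only deletes edges from $\cut(S^*)$ and Step~3 does not touch $\cut(S^*)$ at all, we have $\cut_G(S^*) \subseteq \cut_{G_0}(S^*)$, so any biclique in $\cut_G(S^*)$ is also a biclique in $\cut_{G_0}(S^*)$. It therefore suffices to bound the probability, over the randomness of $G_0$ alone, that $\cut_{G_0}(S^*)$ contains a bad biclique. By symmetry, we may condition on $S^* = [k]$; the edges in $\cut_{G_0}(S^*)$ are then mutually independent Bernoulli$(1/2)$ random variables, so a fixed pair $(L, R)$ with $|L|=\ell$ and $|R|=r$ forms a biclique with probability exactly $2^{-\ell r}$.

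A union bound over all such pairs with $\ell + r = k$, $r \geq 1$, and $\ell > c\log_2 n$ gives
\[
\Pr[\text{bad biclique exists}] \;\leq\; \sum_{\ell = \lceil c\log_2 n\rceil + 1}^{k-1}\binom{k}{\ell}\binom{n-k}{\,k-\ell\,}\, 2^{-\ell(k-\ell)}.
\]
Reparameterizing with $r = k - \ell$ and using the standard estimates $\binom{k}{\ell} = \binom{k}{r} \leq (ek/r)^r$ and $\binom{n-k}{r} \leq (en/r)^r$, each summand is at most $\bigl(e^2 kn / (r^2 \cdot 2^{k-r})\bigr)^r$. The hypothesis $\ell = k - r > c\log_2 n$ gives $2^{k-r} \geq n^c$, so choosing any $c \geq 4$ (together with $k \leq n$) yields the uniform bound $e^2 kn / (r^2 \cdot 2^{k-r}) \leq e^2/(r^2 n^2)$. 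Summing the resulting geometric-like series over $r \geq 1$ then gives a total probability of $O(1/n^2)$, which is much less than $0.01$ for $n$ large enough. The case $k \leq c\log_2 n + 1$ is vacuous since the range of $\ell$ is then empty.

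No substantial obstacle arises; this is a routine first-moment calculation. The one point requiring a small amount of care is the uniformity of the bound in $r$. For $r$ near $k/2$ the two binomial factors are each of order $2^k$, but the factor $2^{-\ell r}$ of order $2^{-k^2/4}$ more than compensates; for small $r$ the binomial factors are polynomial and the decay comes directly from the $n^{-c}$ contributed by the lower bound $\ell > c\log_2 n$. Both regimes yield a bound of the form above, so the full union bound goes through.
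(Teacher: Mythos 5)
Your proof is correct and follows essentially the same route as the paper's: reduce to the underlying $G(n,1/2)$ by observing the adversary only shrinks $\cut(S^*)$, then apply a first-moment / union bound over all candidate bicliques $(L,R)$ with $|L|+|R|=k$ and $|L|>c\log_2 n$. The paper states the same sum $\sum_{\ell+r=k,\ \ell\geq c\log_2 n}\binom{n-k}{k-\ell}\binom{k}{\ell}2^{-(k-\ell)\ell}\to 0$ and simply asserts it vanishes for $c$ large, invoking Markov; you have filled in the explicit estimate (bounding each term by $(e^2kn/(r^2 2^{k-r}))^r$ and using $2^{k-r}\geq n^c$), which is a helpful elaboration but not a different argument.
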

\begin{proof}
The proof is a simple application of the first moment method.
Note that it is enough to argue the proposition in the absence of the monotone adversary as deleting any subset of edges in $\cut(S^*)$ maintains the goodness of $S^*$.

The probability that $\cut(S^*)$ contains all the edges between $L \subseteq S^*$ and $R \subseteq [n] \setminus S^*$ is at most $2^{-(k-|L|)|L|}$.
Thus, the expected number of bicliques $(L,R)$ such that $|L| \geq c \log_2 n$ is at most $\sum_{\ell +r = k, \ell \geq c \log_2 n} {n-k \choose k-\ell} {k \choose \ell} 2^{-(k-\ell)\ell} \rightarrow 0$ as $n \rightarrow \infty$ if $c$ is a large enough constant.
The proposition then follows by an application of Markov's inequality.
\end{proof}

A simple greedy argument upper bounds the number of $\ell$-good $k$-cliques if $k \geq O(\sqrt{n \log n})$.

\begin{proposition}[Number of good $k$-cliques] \label{prop:number-good-cliques}
Let $G$ be a graph on $n$ vertices.
Then, for any $\ell$, if $k > 2\sqrt{n \ell/\delta}$ for some $\delta <1$, then the number of $\ell$-good $k$-cliques in $G$ is at most $(1+\delta)n/k$.
\end{proposition}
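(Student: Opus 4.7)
The plan is to first establish that any two distinct $\ell$-good $k$-cliques intersect in at most $\ell$ vertices, and then to turn this pairwise-intersection bound into a bound on the total number of such cliques via a double-counting (inclusion-exclusion) argument.

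For the pairwise intersection bound, let $S_i, S_j$ be two distinct $\ell$-good $k$-cliques in $G$. Set $L := S_i \cap S_j$ and $R := S_j \setminus S_i$. Since $S_j$ is a clique in $G$, every vertex of $R$ is adjacent in $G$ to every vertex of $L$; moreover, each such edge has one endpoint in $S_i$ and one endpoint in $[n] \setminus S_i$, so the edge belongs to $\cut_G(S_i)$. Thus $(L,R)$ is a biclique in the bipartite graph on $(S_i, [n]\setminus S_i)$ induced by $\cut_G(S_i)$; we have $|L|+|R| = |S_j| = k$ and $|R| \geq 1$ because $S_i \neq S_j$ and $|S_i|=|S_j|$. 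Applying the $\ell$-goodness of $S_i$ yields $|S_i \cap S_j| = |L| \leq \ell$.

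Now let $S_1, \ldots, S_m$ be all $\ell$-good $k$-cliques in $G$, and for each vertex $v$ let $d(v)$ be the number of $S_i$ containing $v$. Then $\sum_v d(v) = mk$ and $\sum_v \binom{d(v)}{2} = \sum_{i<j} |S_i \cap S_j| \leq \binom{m}{2} \ell$. Combining this with the Cauchy--Schwarz inequality $\sum_v d(v)^2 \geq (\sum_v d(v))^2/n = m^2 k^2/n$, I obtain
\[
\tfrac{1}{2}(m^2 k^2/n - mk) \;\leq\; \sum_v \binom{d(v)}{2} \;\leq\; \binom{m}{2} \ell \;\leq\; \tfrac{1}{2} m(m-1)\ell\mper
\]
Dividing by $m$ and rearranging gives $m(k^2 - n\ell) \leq n(k-\ell)$, so
\[
m \;\leq\; \frac{n(k-\ell)}{k^2 - n\ell} \;\leq\; \frac{n/k}{1 - n\ell/k^2}\mper
\]
The hypothesis $k > 2\sqrt{n\ell/\delta}$ yields $n\ell/k^2 < \delta/4$, and then an elementary calculation shows $1/(1 - \delta/4) \leq 1+\delta$ for $\delta \in (0,1]$, giving the desired $m \leq (1+\delta)n/k$.

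The only potential hiccup is the denominator $k^2 - n\ell$, which must be positive for the Cauchy--Schwarz rearrangement to go through; this is exactly where the quantitative hypothesis on $k$ enters and it is comfortably satisfied because $k^2 > 4n\ell/\delta > n\ell$. Apart from that, the proof is entirely self-contained and uses no probabilistic input beyond the deterministic combinatorial definition of $\ell$-goodness.
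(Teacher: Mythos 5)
Your proof is correct. The key ingredient is identical to the paper's: any two distinct $\ell$-good $k$-cliques intersect in at most $\ell$ vertices, which you verify carefully (the paper merely asserts it). Where you diverge is in the subsequent counting step. The paper argues by contradiction via a Bonferroni-type union bound, noting that $m$ pairwise-small-intersecting $k$-sets must cover at least $mk-m^2\ell$ distinct vertices and choosing $m=(1+\delta)n/k$ to make this exceed $n$. You instead double-count via the multiplicities $d(v)$, combining $\sum_v \binom{d(v)}{2}=\sum_{i<j}|S_i\cap S_j|\le\binom{m}{2}\ell$ with the Cauchy--Schwarz bound $\sum_v d(v)^2\ge m^2k^2/n$ and solving for $m$ directly. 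The two routes are morally the same (indeed, $n\ge|\{v:d(v)\ge1\}|\ge\sum_v d(v)-\sum_v\binom{d(v)}{2}$ recovers the paper's bound from your quantities), but yours is a clean forward derivation that avoids the contradiction framing and in fact yields the sharper explicit estimate $m\le n(k-\ell)/(k^2-n\ell)$; the paper's version is a line shorter but states an equality that is really only an inequality (one needs $(1+\delta)^2\le4$). Both comfortably give the stated $(1+\delta)n/k$ conclusion.
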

\begin{proof}
Suppose not and take any $m = (1+\delta)n/k$ such good $k$-cliques. 
Observe that any pair of $\ell$-good $k$-cliques $S,S'$ can only intersect in at most $\ell$ vertices, as otherwise $\cut(S)$ would contain a biclique with more than $\ell$ left vertices.
Thus, the $m$ good $k$-cliques must cover at least $mk - m^2 \ell = n + \delta n - (4n^2/k^2)\ell$ vertices, a number that exceeds the total number of vertices $n$ if $k > 2 \sqrt{n \ell/\delta}$. 
\end{proof}

Propositions~\ref{prop:biclique-number-overview} and~\ref{prop:number-good-cliques} immediately yield an $n^{O(k)}$ time algorithm to generate a correct list of $k$-cliques of size $(1+\delta)(n/k)$.
In fact, this algorithm can be made to run in time $n^{O(\log n)}$ by enumerating all $c \log_2 n$ size cliques $Q$ in $G$ and adding a $k$-clique to the list if the common neighborhood of $Q$ is of size $\geq k-|Q|$ and forms a clique with $Q$.

\subsection{Efficient algorithms and biclique certificates}

In the inefficient algorithm above a key idea is the claim that $\cut(S^*)$ does not have an $\ell$ by $k-\ell$ bipartite clique for $\ell > O(\log n)$.
Note that $\cut(S^*)$ is an unbalanced (left side is much smaller than the right) $k$ by $n-k \approx n$ bipartite graph and we proved that it does not have an unbalanced ($\gg O(\log n)$ vertices from the left) biclique in it. 

Key to our efficient algorithm for semi-random planted clique is an efficiently computable \emph{certificate} of non-existence of unbalanced bicliques in $H$ as above (i.e., a \emph{refutation}).

Let $B(n_1, n_2, p)$ denote the distribution on bipartite graphs with $n_1$ left and $n_2$ right vertices and every bipartite edge included with probability $p$ independently.
Let us phrase the version relevant to us formally before continuing:

\begin{definition}[Refuting unbalanced bicliques]%
\label{def:cert-problem}
An algorithm that takes as input a bipartite graph $H = (U, V, E)$ with $|U|=k$, $|V|=n-k$ refutes $\ell$ by $k-\ell$ bicliques in random $k$ by $n-k$ bipartite graphs if it has the following two properties:
\begin{enumerate}
  \item \textbf{Correctness:} If the algorithm outputs $s$, then there is no $s$ by $k-s$ biclique in $H$. 
  \item \textbf{Utility: } If $H \sim B(k,n-k,1/2)$, then the algorithm outputs $s \leq \ell$ with probability at least $0.99$ over the draw of $H$.
\end{enumerate}
\end{definition}

\begin{remark}[From certificates to algorithms: a heuristic] \label{remark:heuristic}
In Section~\ref{subsec:rounding}, we overview the translation of a (constant degree sum-of-squares) certificate that the left side of any size-$k$ biclique in $H\sim B(k,n-k,1/2)$ has at most $\ell$ vertices into an algorithm for the semi-random planted $k$-clique problem that succeeds whenever $k \geq O(\sqrt{n \ell})$.
This matches the simple bound in Proposition~\ref{prop:number-good-cliques} for the ``brute-force'' algorithm above.
We postpone the discussion of sum-of-squares proofs for now while noting that all certificates discussed in this section are in fact constant degree sum-of-squares certificates. 
\end{remark}

Observe that our simple analysis of the inefficient algorithm gives an $n^{O(\log n)}$ algorithm that refutes the existence of $\ell$ by $k-\ell$ bicliques in $B(k,n-k,1/2)$ with probability at least $0.99$ for $\ell = O(\log n)$.
Our goal is to find a polynomial time algorithm that succeeds for $\ell$ as close to $O(\log n)$ as possible.

The biclique refutation problem appears to be an interesting analog of refuting cliques in random (non-bipartite) graphs $G \sim G(n,1/2)$ (that underlies algorithms for the fully-random planted clique problem) or bicliques in $B(n,n,1/2)$ (i.e., the balanced bipartite graph). 
It can be thought of as \emph{certifying} the correctness of the candidates in the list that is purportedly a solution to the semi-random planted clique problem.
Finding solutions together with a certificate of correctness is an important goal by itself.
For example, this is a key advantage (in addition to tolerating a monotone adversary) of the method of Feige and Krauthgamer~\cite{MR1742351-Feige00} over the spectral algorithm~\cite{MR1662795-Alon98} for the planted clique problem.

\subsubsection{Basic spectral certificate}%
\label{subsec:spectral-certs}
Let us start by recalling the basic spectral certificate that underlies the algorithms for the average-case planted clique problem.
This certificate implicitly underlies the algorithms of~\cite{MMT20,DBLP:conf/stoc/CharikarSV17}.
Our framework translates it into an algorithm for semi-random planted clique whenever $k \gg O(n^{2/3})$. 

\begin{proposition}[Basic spectral certificate for clique number]%
\label{prop:basic-spectral-planted-clique}
In any graph $G$, the clique number $\omega(G) \leq 1+ \Norm{A}_2$ where $A$ is the $\{\pm 1\}$ adjacency matrix of $G$. 
\end{proposition}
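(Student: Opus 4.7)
The plan is to exhibit a single test vector—the normalized indicator of a maximum clique—and read off the bound from the spectral norm via the variational characterization.

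More specifically, let $S \subseteq V(G)$ be a maximum clique, so $\abs{S} = \omega(G)$, and set $x = \mathbf{1}_S / \sqrt{\omega(G)}$, which is a unit vector. I would then compute the quadratic form $x^\top A x$ directly. Taking the convention that $A_{ij} = +1$ when $\{i,j\} \in E(G)$ and $A_{ij} = -1$ when $\{i,j\} \notin E(G)$ (with $A_{ii} = 0$), every ordered off-diagonal pair inside $S$ contributes $+1$ because $S$ is a clique, so
\[
\mathbf{1}_S^\top A \mathbf{1}_S \;=\; \sum_{i \neq j,\; i,j \in S} A_{ij} \;=\; \omega(G)\bigl(\omega(G) - 1\bigr),
\]
and hence $x^\top A x = \omega(G) - 1$.

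On the other hand, by the variational characterization of the operator norm for a symmetric matrix, $\abs{x^\top A x} \leq \Norm{A}_2 \cdot \Norm{x}^2 = \Norm{A}_2$ for every unit vector $x$. Assuming $\omega(G) \geq 1$ (else the bound is trivial) the quadratic form is nonnegative, so combining with the calculation above yields $\omega(G) - 1 \leq \Norm{A}_2$, which is exactly the claimed inequality.

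There is no real obstacle here: the only care needed is to fix the convention for the diagonal of $A$ (so that the indicator-vector calculation gives $\omega - 1$ rather than $\omega$ or $\omega - 2$), and to invoke $\abs{x^\top A x} \leq \Norm{A}_2$ rather than the PSD version $x^\top A x \leq \Norm{A}_2$, since the $\pm 1$ matrix is not positive semidefinite in general. This is essentially the same one-line argument as the standard bound $\omega(G) \leq 1 + \lambda_{\max}(A_{01})$ for the $\{0,1\}$ adjacency matrix, transplanted to the $\pm 1$ normalization used in the rest of the paper.
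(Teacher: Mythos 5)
Your proof is correct and is essentially the same argument as the paper's: both evaluate the quadratic form of the $\pm 1$ adjacency matrix on the clique's indicator vector (you normalize to a unit vector; the paper keeps the $\{0,1\}$-indicator and divides by $\Norm{x}_2^2 = k$) and then invoke the spectral-norm bound on the quadratic form.
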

\begin{proof}
If $x$ is a $\{0,1\}$-indicator of a $k$-clique in $G$, then, note that $k(k-1) = x^{\top}Ax \leq \Norm{x}_2^2 \Norm{A}_2 = k \Norm{A}_2$.
Thus, $k \leq 1 + \Norm{A}_2$ for any graph $G$.
\end{proof}
Thus, simply outputting the (polynomial time computable) largest singular value of $A$ gives a certificate of an upper bound on $\omega(G)$. 
Further, if $G \sim G(n,1/2)$, then standard spectral norm bounds on random symmetric $\{\pm 1\}$ matrices imply that the algorithm outputs with high probability a bound of $O(\sqrt{n})$.

Let's now see an analog of this method for bicliques. 
\begin{proposition}[Basic spectral certificate for bicliques, see Lemma~\ref{lem:simple-spectral-certificate} for a general version]
Let $H$ be the $\{\pm 1\}$ adjacency matrix of a $k$ by $n-k$ bipartite graph $H$.
For any $k$-clique in $H$, the number of left vertices $\ell$ satisfies $\ell (k-\ell) \leq \Norm{H}_2^2$.
\end{proposition}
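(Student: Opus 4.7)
The plan is to mimic the proof of the non-bipartite clique bound (Proposition~\ref{prop:basic-spectral-planted-clique}), replacing the quadratic form $x^\top A x$ with a bilinear form $x^\top H y$ where $x$ and $y$ are the indicator vectors of the two sides of the biclique.

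Concretely, given a $k$-biclique with left part $L \subseteq U$ of size $\ell$ and right part $R \subseteq V$ of size $k-\ell$, let $x \in \{0,1\}^{k}$ be the indicator vector of $L$ and $y \in \{0,1\}^{n-k}$ be the indicator of $R$. Because every bipartite edge between $L$ and $R$ is present in the underlying bipartite graph, each entry $H_{ij}$ with $i \in L$ and $j \in R$ equals $+1$ in the $\{\pm 1\}$ adjacency matrix. Therefore
\[
x^\top H y \;=\; \sum_{i \in L,\, j \in R} H_{ij} \;=\; \ell(k-\ell) \mper
\]

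On the other hand, by the variational characterization of the spectral norm of a rectangular matrix,
\[
x^\top H y \;\leq\; \Norm{x}_2 \Norm{y}_2 \Norm{H}_2 \;=\; \sqrt{\ell}\,\sqrt{k-\ell}\,\Norm{H}_2 \mper
\]
Combining the two displays gives $\ell(k-\ell) \leq \sqrt{\ell(k-\ell)}\,\Norm{H}_2$, and squaring (after dividing by the nonnegative quantity $\sqrt{\ell(k-\ell)}$, or noting the inequality is trivial if $\ell \in \{0,k\}$) yields the desired bound $\ell(k-\ell) \leq \Norm{H}_2^2$.

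There is no real obstacle here: the only mild care needed is in handling the degenerate cases $\ell=0$ or $\ell=k$ (where the inequality is trivial), and remembering that the $+1$ convention on edges is exactly what makes the bilinear form evaluate to $\ell(k-\ell)$ rather than something smaller. The general version referenced in Lemma~\ref{lem:simple-spectral-certificate} presumably replaces the $\{\pm 1\}$ normalization by a shifted/centered adjacency matrix tailored to arbitrary edge density $p$, but the core argument---use the biclique indicators as test vectors in the spectral-norm bound---remains the same.
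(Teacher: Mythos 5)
Your proof is correct and is essentially identical to the paper's: both use the biclique indicator vectors $x, y$ as test vectors, evaluate $x^\top H y = \ell(k-\ell)$, bound it by $\Norm{x}_2 \Norm{y}_2 \Norm{H}_2$ via the spectral-norm characterization, and then cancel $\sqrt{\ell(k-\ell)}$ and square. The paper merely writes the cancellation step a bit more tersely (and leaves the degenerate cases $\ell \in \{0,k\}$ implicit, which you handle explicitly).
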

\begin{proof}
Let $x,y$ be the $\{0,1\}$ indicators of the left and right sides of a biclique in $H$.
Then, $\Norm{x}_2^2 \Norm{y}_2^2 = x^{\top} Hy \leq \Norm{x}_2 \Norm{y}_2 \Norm{H}_2$.
Or, $(\sum_i x_i)(\sum_i y_i) = \Norm{x}_2^2 \Norm{y}_2^2 \leq \Norm{H}_2^2$. 
\end{proof}

For a random bipartite graph from $B(k,n-k,1/2)$, the $H$ is a $k$ by $n-k$ matrix with independent random $\{\pm 1\}$ entries.
For such matrices, standard results (see Fact~\ref{fact:char-matrix-spectral-norm}) show that $\Norm{H}_2 \leq O(\sqrt{k} + \sqrt{n}) = O(\sqrt{n})$.
Further, by a union bound, the degrees of all right vertices are at most $k/2 + O(\sqrt{k \log n})$ with high probability, so $k-\ell \geq k/4$ if $k \gg \log n$.
In that case, the above proposition shows that the spectral certificate refutes the existence of an $\ell$ by $k-\ell$ clique for $\ell \leq O(n/k)$. 

By applying the heuristic from Remark~\ref{remark:heuristic}, we obtain an algorithm for semi-random planted clique if $k \geq O(\sqrt{n \ell})$ with $\ell = O(n/k)$, that is, if $k \geq O(n^{2/3})$, matching the guarantees of~\cite{MMT20}.

It turns out that the bound of $\ell = O(n/k)$ based on the basic SDP/spectral relaxations is essentially tight. In Lemma~\ref{lem:sdp-lb}, we show that the basic SDP provably fails to certify that $\ell=o(n/k)$. This shows an inherent limitation of certificates based on the basic SDP/spectral relaxations. 

\paragraph{The Charikar-Steinhardt-Valiant approach.}
In their work on algorithms for list-decodable mean estimation~\cite{DBLP:conf/stoc/CharikarSV17}, the authors devised a method for the analog of the semi-random planted clique problem without the monotone adversary step.
When viewed from our vantage point of biclique refutation, their idea can be thought of as taking the $\pm 1$-neighborhood indicators of the \emph{right} hand side of the graph and treating them as $n-k$ samples of a $k$-dimensional distribution.
An $\ell$ by $k-\ell$ biclique translates~\footnote{
We note that the CSV approach directly applies to the semi-random planted clique model and does not actually yield a biclique certificate.
The reason is that an $\ell$ by $k-\ell$ biclique does not translate into non-zero mean for arbitrary bipartite graphs.
We ignore this distinction in order to allow an intuitive comparison of their technique in the context of our work.}
into the distribution having a non-zero mean.
Thus, one can apply (analogs of) list-decodable mean estimation algorithms~\cite{DBLP:conf/stoc/CharikarSV17,KothariSteinhardt17} to refute the existence of bicliques.
The guarantees of the algorithm depend on higher directional moments of the input distribution.
The ``base case'' corresponds to using just the second moments of the distribution --- and this roughly relates to the use of the basic spectral certificate above.
The higher moment variants can indeed yield improvements but this does not apply to our setting, because when seen from the vantage point of list-decodable mean estimation we have $n \ll k^2$ samples of a $k$-dimensional distribution --- a bound not sufficient for the $4$th moments to converge!
Indeed, this is the key bottleneck that leads to a barrier at $k=\tilde{O}(n^{2/3})$ for the CSV approach (and led to Steinhardt's open question for semi-random planted clique~\cite{steinhardt2017does}).

\subsubsection{Improved spectral certificates}
Can we improve on the basic spectral certificate?
We note that for related problems (e.g., densest $k$-subgraph, random constraint satisfaction, coloring random graphs) we usually get no asymptotic improvement by considering spectral certificates with larger (but polynomial size) matrices built from the instance.
Indeed, one can prove strong lower bounds~\cite{DBLP:conf/stoc/KothariMOW17,MR4399701-Jones22} that rule out such larger polynomial size certificates captured by constant degree sum-of-squares proofs. 

\paragraph{Neighborhood reduction.}
A natural way to improve the spectral certificate for the clique number of $G \sim G(n,1/2)$ from Proposition~\ref{prop:basic-spectral-planted-clique} is to cycle through all possible subsets of $t$ vertices, move to the common neighborhood of the $t$ vertices and then apply Proposition~\ref{prop:basic-spectral-planted-clique} to the induced graph on this common neighborhood.
This strategy yields an upper bound of $\omega(G) \leq t+1+\max_{S \subseteq [n], |S|=t} \Norm{A_S}_2$ where $A_S$ is the adjacency matrix of the induced subgraph on the common neighborhood of $S$.
One can prove that $\Norm{A_S}_2 \leq O(\sqrt{n/2^t})$ with high probability simultaneously for all $S$ of size $t$, certifying an upper bound of $O(\sqrt{n/2^t})$ on the clique number $\omega(G)$.
Since the resulting certificate has polynomial size only when $t=O(1)$, the improvement makes no asymptotic difference in the threshold $k$ at which polynomial time algorithms work.
As an aside, this simple certificate happens to be optimal for the degree $t$ Lovász-Schrijver SDP hierarchy~\cite{MR1969394-Feige03} applied to $G \sim G(n,1/2)$.
Repeating an analogous argument in our case also yields no asymptotic improvement unless $t=\omega(1)$ (though it does allow us to get arbitrary constant factor improvements). 

\paragraph{Tensoring.}
We consider next a natural class of ``tensoring'' schemes for producing improved spectral certificates.
Consider a bipartite graph with $\{\pm 1\}$ adjacency matrix $H'$ with the same right side but the left side containing all pairs of left vertices from $H$.
The $((i,j),k)$-th entry of $H'$ equals $H(i,k)H(j,k)$ -- the ``parity'' or product of the $\{\pm 1\}$ indicators of edges $(i,k)$ and $(j,k)$ in $H$.
$H'$ is a $k^2$ by $n$ matrix, and further, an $\ell$ by $k-\ell$ biclique in $H$ translates into an $\ell^2$ by $k-\ell$ biclique in $H'$.
The basic spectral certificate from Proposition~\ref{lem:simple-spectral-certificate} applied to $H'$ yields that $\ell^2 \leq O(\Norm{H'}_2^2/k)$. 

If $H'$ were a matrix of independent random $\{\pm 1\}$ entries, $\Norm{H'}_2 = O(\sqrt{k^2}) = O(k)$ yielding $\ell \leq O(\sqrt{k})$.
Despite $H'$ having correlations in its entries, this optimistic\footnote{
Every rectangular matrix of larger dimension $k^2$ and Frobenius norm $k\sqrt{n}$ has a spectral norm $\geq k$.}
bound is essentially correct (we will omit the proof here).
Plugging this back into our heuristic, we get an algorithm for semi-random planted $k$-clique if $k \geq O(\sqrt{n \sqrt{k}})$ or $k \gg n^{2/3}$, the same as before!
That is, even though the tensoring trick gives a different asymptotic estimate, it does not lead to any improvement in the threshold for $k$ in our semi-random planted clique application.

What happens if we ``tensor the left side'' $t$ times for $t >2$?
An optimistic estimate such as the above yields a bound of $\ell^t \leq O(k^{t-1})$ or $\ell \leq k^{1-1/t}$ -- a bound that appears to \emph{degrade} as we increase $t$!
We will omit the details here but a similarly worse bound results if we tensor the right side of $H$ instead. 

\paragraph{Two-sided tensoring beats the $n^{2/3}$ barrier but fails a long way off $\sqrt{n}$.}
It turns out simultaneously tensoring both sides unequally helps beat the $\ell \leq \max\{\sqrt{k}, n/k\}$ bound obtained via one-sided tensoring above.
Intuitively speaking, the ``optimal'' two-sided tensoring attempts to make the resulting adjacency matrix as ``square'' in dimensions as possible. 
Formal proofs require analyzing matrices of correlated random entries using the graph matrix method devised in the context of proving sum-of-squares lower bounds in~\cite{DBLP:conf/focs/BarakHKKMP16} and follow-ups.
We note without further details that two-sided tensoring appears to break down at $k \sim n^{0.61}$.

\subsection{Our certificate: bicliques imply sets of negatively correlated vectors} 
Our key idea to circumvent the bottlenecks in the natural spectral certificates is to abandon the idea of spectral certificates altogether.
Instead, we will show that a simple family of ``geometric'' certificates for biclique numbers allows us to show $\ell \leq n^{\epsilon}$ for any fixed $\epsilon >0$.
Specifically, we will show that if there is an $\ell$ by $k-\ell$ biclique in $H$, then one can extract $2^{\ell}-1$ \emph{pairwise negatively correlated vectors} in $n$ dimensions.  

In order to explain this connection, let us note a property of a random bipartite graph $H = (U, V, E) \sim B(k,n-k,1/2)$.
For any subset $S \subseteq U$ of $|S| \leq t$ vertices from the left vertex set of $H$, let $N_S(j) = \prod_{i \in S} H(i,j)$ where $H$ is the $\{\pm 1\}$-adjacency matrix of $H$.
Then $N_S$ is an $n$ dimensional vector of ``parities'' of $\{\pm 1\}$ indicators of all edges from $S$ to $\{k\}$.
Further, in a random $H$, every $N_S$ is nearly \emph{balanced}.
That is, by a simple Chernoff and union bound argument (see Lemma~\ref{lem:balacedness-density-half}), $\abs{\sum_{i \leq n-k } N_S(i)} \leq O(\sqrt{nt \log n})$ for every $S$ of size $t$. 

Let's call a $k$ by $n-k$ bipartite graph $t$-fold balanced if the above property holds: that is, every $N_S$ is approximately balanced for $|S| \leq t$.
We will now show that given an $\ell$ by $k-\ell$ biclique in a $t$-fold balanced graph, we can produce a set of ${\ell \choose t/2}$ pairwise negatively correlated vectors in $n$ dimensions. 

\begin{proposition}[Bicliques and negatively correlated vectors]%
  \label{prop:bicliques-vs-neg-corr}
  Suppose $H$ is a $k$ by $n-k$ bipartite graph that is $t$-fold balanced for some $t \in \N$.
  Suppose that $H$ contains an $\ell$ by $k-\ell$ biclique  $(L,R)$ for $k-\ell \geq k/4$.
  Then, if $k \geq O(\sqrt{nt \log n})$, there exist ${\ell \choose t/2}$ different $(n-2k+\ell)$-dimensional vectors $N_S^{-}$ (one for each $S\subseteq L$ of size $t/2$) such that $\iprod{N_S^{-}, N_T^{-}} <0$ whenever $S \neq T$.
\end{proposition}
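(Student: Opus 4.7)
The plan is to define, for each $S \subseteq L$ of size $t/2$, the vector $N_S^{-}$ as the restriction of $N_S(j) = \prod_{i \in S} H(i,j)$ to the coordinates $j \in [n-k] \setminus R$. The number of such subsets $S$ is exactly $\binom{\ell}{t/2}$, and the ambient dimension of each $N_S^{-}$ is $|[n-k] \setminus R| = (n-k) - (k-\ell) = n - 2k + \ell$, matching the claim. The task reduces to verifying $\iprod{N_S^{-}, N_T^{-}} < 0$ for every distinct pair $S \neq T$.

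To do so, I would split $\iprod{N_S^{-}, N_T^{-}}$ as the full inner product $\iprod{N_S, N_T}$ over $[n-k]$ minus the contribution from $j \in R$. Using $H(i,j)^2 = 1$, every product simplifies by cancellation: $N_S(j) N_T(j) = \prod_{i \in S \triangle T} H(i,j) = N_{S \triangle T}(j)$. Since $S$ and $T$ are distinct subsets of $L$ of size $t/2$, the symmetric difference $S \triangle T$ is nonempty and satisfies $|S \triangle T| \leq t$. Applying the $t$-fold balancedness hypothesis to the index set $S \triangle T$ then yields $\bigl|\sum_{j \in [n-k]} N_{S \triangle T}(j)\bigr| \leq O(\sqrt{n t \log n})$.

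It remains to evaluate the contribution from $R$. Because $(L,R)$ is a biclique and $S, T \subseteq L$, one has $H(i,j) = 1$ for every $i \in L$ and $j \in R$, so $N_S(j) = N_T(j) = 1$ for all $j \in R$, and the contribution is exactly $|R| = k - \ell \geq k/4$. Combining the two pieces,
\[
\iprod{N_S^{-}, N_T^{-}} \;\leq\; O(\sqrt{n t \log n}) \;-\; (k - \ell) \;\leq\; O(\sqrt{n t \log n}) - k/4,
\]
which is strictly negative once $k \geq C\sqrt{n t \log n}$ for a sufficiently large absolute constant $C$ — precisely the standing assumption.

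I do not anticipate a real obstacle: the calculation is essentially one line once the right object is written down. The only conceptual step is the choice of restriction — discarding $R$ rather than keeping it — so that the \emph{deterministic} $+(k-\ell)$ contribution from the biclique becomes a $-(k-\ell)$ shift of the residual inner product, which the balancedness bound is then too small to overcome. The rest is bookkeeping of dimensions, parities, and the symmetric difference.
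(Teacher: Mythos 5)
Your proof is correct and follows essentially the same route as the paper's: both restrict $N_S$ to the coordinates outside $R$, use $N_S(j)N_T(j)=N_{S\triangle T}(j)$ together with $t$-fold balancedness of $H$ (valid since $0 < |S\triangle T|\leq t$), and note that the biclique forces each discarded coordinate to contribute $+1$, so stripping $R$ subtracts $k-\ell\geq k/4$ and pushes the inner product below zero. No difference in substance — just a cleaner bookkeeping of the same one-line computation.
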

\begin{proof}
  First observe that for any $S,T \subseteq L$ of size $t/2$, $\iprod{N_S, N_T} = \sum_{j \leq n-k} N_{S \Delta T}(j) = O(\sqrt{nt \log n})$ where we invoked the $t$-fold balancedness of $H$.
  Now, without loss of generality, assume that $R$ is the set of the first $k-\ell$ vertices on the right.
  Consider the vectors $N_S^{-}$ in $n-2k+\ell$ dimensions obtained by stripping the first $k-\ell$ coordinates off of $N_S$ for every $S \subseteq L$ of size $t/2$.
  Since $S, T\subseteq L$, the first $k-\ell$ coordinates contribute $+(k-\ell)$ to $\iprod{N_S, N_T}$.
  Thus, $\iprod{N_S^{-}, N_T^{-}} \leq O(\sqrt{nt \log n}) - k/4 <0$ if $k - \ell \geq k/4$ and $k \geq O(\sqrt{nt \log n})$.
\end{proof}

It is a standard fact that there can only be $d+1$ pairwise negatively correlated vectors in $d$ dimensions.
A weaker version can be proved via a simple argument involving quadratic polynomials over the vectors:
\begin{proposition}[Bound on negatively correlated vectors] \label{prop:spectral-bound-neg-corr}
Let $v_1, v_2,\ldots,v_N$ be $n$-dimensional vectors of length $\sqrt{n}$ each satisfying $\iprod{v_i, v_j} \leq -r$.
Then $N\leq 1+ n/r$. 
\end{proposition}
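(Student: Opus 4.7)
The plan is to run the standard ``positive semidefiniteness of the Gram matrix'' argument in its most elementary form: bound $\|\sum_i v_i\|^2$ from below by $0$ and from above using the hypotheses, and then solve the resulting inequality for $N$.

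Concretely, I would first set $S := \sum_{i=1}^N v_i$ and expand
\[
0 \;\le\; \|S\|^2 \;=\; \sum_{i=1}^N \|v_i\|^2 + \sum_{i \neq j} \iprod{v_i, v_j}.
\]
The first sum is exactly $Nn$ since each $v_i$ has length $\sqrt{n}$. For the second sum, there are $N(N-1)$ ordered pairs $(i,j)$ with $i \neq j$, and each inner product is at most $-r$ by hypothesis, so the second sum is at most $-N(N-1)r$. Substituting yields
\[
0 \;\le\; Nn - N(N-1)\,r.
\]

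Dividing by $N$ (which is harmless: if $N=0$ the claim is trivial, and otherwise $N \ge 1$) gives $(N-1)\,r \le n$, i.e.\ $N \le 1 + n/r$, as required. There is no real obstacle here: the only thing to double-check is the combinatorial count of $N(N-1)$ off-diagonal pairs and the sign when turning $\iprod{v_i,v_j} \le -r$ into an upper bound on the expansion of $\|S\|^2$. I would state the proof in just these few lines.
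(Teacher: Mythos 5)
Your proof is correct and is essentially identical to the paper's: both expand $\bignorm{\sum_i v_i}_2^2 \ge 0$, use $\sum_i \norm{v_i}_2^2 = Nn$ and the bound $\sum_{i\neq j}\iprod{v_i,v_j} \le -N(N-1)r$ over the $N(N-1)$ ordered off-diagonal pairs, and divide by $N$ to get $N \le 1 + n/r$. No differences worth noting.
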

\begin{proof}
We know that $\Norm{\sum_{i\leq N} v_i}_2^2 \geq 0$.
On the other hand, $\Norm{\sum_{i = 1}^N v_i }_2^2 = \sum_{i = 1}^N \Norm{v_i}_2^2 + \sum_{i \neq j} \iprod{v_i,v_j} \leq N n - N(N-1) r$.
Putting the lower and upper bound together yields that $N-1 \leq n/r$ or $N \leq 1+ n/r$.
\end{proof}

Now, Proposition~\ref{prop:bicliques-vs-neg-corr} yields ${\ell \choose t/2}$ vectors with pairwise correlations at most $-ck$ for some constant $c>0$ if $k \gg \sqrt{nt \log n}$.
On the other hand, Proposition~\ref{prop:spectral-bound-neg-corr} yields that the number of such vectors can only be $1+O(n/k)$.
Putting these two bounds together yields that $\ell \lesssim (n/k)^{2/t}$. 
Choosing $t = 1/\epsilon$ gives us an $n^{O(1/\epsilon)}$ size certificate that $\ell$ is at most $n^{\epsilon}$. 

The above argument can be converted into a sum-of-squares refutation of bicliques in $H$ (see Theorem~\ref{thm:certification-density-half}).
The main observation is that the step where we strip the first $k-\ell$ coordinates off of $N_S$ can be done  ``within sum-of-squares'' while the remaining argument is a sum-of-squares proof by virtue of the above simple proposition. It turns out that we need some additional careful arguments to place the certificate in a usable form, which we will omit for the purpose of this overview (see Remark~\ref{rem:proof-plan}).

\subsection{From biclique certificates to algorithms for semi-random planted clique}
\label{subsec:rounding}
Our algorithms use the biclique certificates discussed previously to analyze a rounding algorithm for SDP relaxations of the standard $k$-clique axioms.
Specifically, consider the standard integer programming formulation of the $k$-clique problem written as the quadratic polynomial system $\cA=\cA(G)$ below.
Note that the solutions to $\cA(G)$ are $k$-cliques in the graph $G$ on vertex set $[n]$. 

\begin{equation} \label{eq:quadratic-formulation}
  \cA(G)\colon
  \left \{
    \begin{aligned}
      &\forall i \in [n]
      & w_i^2
      & =w_i \\
      &\forall i\in [n]
      &\textstyle\sum_{i=1}^n w_i
      &= k\\
      &\forall i,j \text{ s.t. } \{i,j\} \not\in G
      & w_i w_j
      & = 0
    \end{aligned}
  \right \}
\end{equation}  

Finding a solution to this quadratic program is clearly NP-hard.
So we will instead work with ``sum-of-squares'' SDP relaxations of the quadratic program, whose solutions can be interpreted as a generalization of probability distributions over solutions to the quadratic program.
Specifically, a degree $d$ pseudo-distribution $D$ is a relaxation of a probability distribution on $\{0,1\}^n$ in that the associated ``mass'' function can take negative values while still inheriting a non-trivial subset of the properties of probability distributions.
We will postpone the formal definition of pseudo-distributions to Section~\ref{sec:prelims} and for now note the following relevant bits:
1) Unlike an actual probability distribution, we only get access to low-degree moments (i.e., expectations of monomials) of $D$ and thus can only compute expectations of degree $\leq d$ polynomials,
2) pseudo-distributions can assign ``negative probabilities'' and thus may not assign non-negative expectations to pointwise non-negative degree $d$ polynomials $f$, but
3) degree $d$ pseudo-distributions do assign non-negative expectations to any $f$ that is a sum of squares of degree $\leq d/2$ polynomials, and
4) a pseudo-distribution of degree $d$ satisfying $\cA$ satisfies all ``low-degree inferrable'' properties of $k$-cliques but need not be supported on $w$ that indicate $k$-cliques at all.
Here, low-degree inferrable property means that for any degree $\leq d-2$ polynomial $f$ and any $\{i,j\} \not\in G$, $\pE_D [f w_i w_j] =0$.

A degree $d$ pseudo-distribution minimizing any convex objective in the pseudomoments $\pE[\prod_{i \in S} w_i]$ for $|S| \leq d$ and approximately satisfying $\cA$ at degree $d$ can be computed in time $n^{O(d)}$ (see Section~\ref{sec:preliminaries}).

Though a pseudo-distribution is not a probability distribution over solutions to $\cA$, it is still helpful for the reader to imagine it to be as such. 

How do our biclique certificates help us?
It turns out that while degree $d$ pseudo-distributions are far from actual probability distributions for $d \ll n$, they behave so for the purpose of polynomial inequalities that can be derived from $\cA$ using degree $d$ sum-of-squares proofs.
The conclusion of our biclique certificate from Proposition~\ref{prop:biclique-number-overview} can be written (see Theorem~\ref{thm:certification-density-half}) as a degree $O(t)$ consequence of the quadratic system $\cB$ (see \eqref{eq:biclique-system}) that identifies bicliques in bipartite graphs of total size $k$.
Consider the bipartite graph $\cut(S^*)$.
Let $w_L$ be the restriction of $w$ to coordinates in $S^*$ and $w_R$ be the restriction of $w$ to coordinates outside of $S^*$.
Then, $\cA$ implies that $(w_L, w_R)$ satisfy $\cB$ for the bipartite graph $\cut(S^*)$.
Since the pseudo-distribution $D$ satisfies $\cA$, we can conclude that 
\begin{equation}
\pE_{D}\left[ \Paren{\sum_{i \in S^*} w_i}^t \Paren{\sum_{i \not \in S^*} w_i}\right] \leq O(n^5/k^4) \label{eq:intro-conseq-cert}
\end{equation}
whenever the pseudo-distribution $D$ has degree at least $O(t)$.
Note that $S^*$ is not known to us but the above inequality forces the pseudo-distribution computed by the SDP to capture some non-trivial information about it.

\paragraph{The need for coverage constraints.}
Roughly speaking, \eqref{eq:intro-conseq-cert} can be interpreted as saying that the pseudo-distribution is ``supported'' only on those $w$ that cannot simultaneously appreciably intersect $S^*$ and $[n] \setminus S^*$.
Such a fact by itself seems unhelpful.
After all, the pseudo-distribution could completely ignore $S^*$ and focus on the ``worst-case" graph on $[n] \setminus S^*$.
Given the worst-case hardness of clique, the pseudo-distribution may not have any information about $k$-cliques in $[n] \setminus S^*$ and consequently the input graph.

In order to make \eqref{eq:intro-conseq-cert} useful, we must somehow ``force'' the pseudo-distribution to have a non-trivial mass on vertices in $S^*$.
Of course, we do not know $S^*$, so how can we do it?
It turns out that this can be accomplished by certain ``max coverage" constraints.
Specifically, instead of finding any pseudo-distribution consistent with $\cA$, we find one that minimizes $\Norm{\pE_{D} [w]}_2^2$.
This is a convex function of the pseudo-distribution and thus can be minimized efficiently using the ellipsoid method.
This objective forces the pseudo-distribution to be ``spread-out''. 
Indeed, in a different language, such an objective is used also in ~\cite{MMT20}, though arguably our treatment of such an objective as a max coverage constraint on sum-of-squares relaxations of $\cA$ appears to demystify the use of crude-SDP in ~\cite{MMT20}.
We note that such a max coverage constraint is at the heart of the rounding algorithms for several problems in list-decodable learning starting with~\cite{DBLP:conf/nips/KarmalkarKK19}.

A key consequence of the max coverage constraint is that, by an elementary convexity argument, it implies the following proposition:

\begin{proposition}[Max coverage pseudo-distributions]
\label{prop:max-coverage-overview}
For any pseudo-distribution $D$ on $w$ satisfying $\cA$ of degree at least $2$ and minimizing $\Norm{\pE_{D}[w]}_2^2$, we have $\sum_{i \in S^*} \pE_{D}[ w_i ] \geq \frac{k^2}{n}$. 
\end{proposition}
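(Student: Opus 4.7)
The plan is to exploit optimality of $D$ against a single, very concrete competitor: the point mass on the planted clique $S^*$. Let $q_i = \pE_D[w_i]$ and let $q^*\in\{0,1\}^n$ be the indicator vector of $S^*$. Since $S^*$ is an actual $k$-clique in $G$, the Dirac distribution at $q^*$ is an actual probability distribution on $\{0,1\}^n$ supported on solutions to $\cA$; in particular, it is a valid pseudo-distribution satisfying $\cA$ at every degree (and certainly at degree $2$). Because the family of degree-$2$ pseudo-distributions satisfying $\cA$ is convex, for every $\lambda\in[0,1]$ the mixture $D_\lambda \defeq (1-\lambda)D + \lambda\,\delta_{q^*}$ is also an admissible pseudo-distribution, with pseudo-first-moment vector $(1-\lambda)q + \lambda q^*$.

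Now I would use first-order optimality. Define $f(\lambda)\defeq \Norm{(1-\lambda)q + \lambda q^*}_2^2$. Since $D=D_0$ is a minimizer, $f'(0)\ge 0$. A direct computation gives $f'(0) = 2\iprod{q,q^*-q}$, and hence
\begin{equation*}
\iprod{q,q^*} \;\geq\; \Norm{q}_2^2\mper
\end{equation*}
By definition, $\iprod{q,q^*} = \sum_{i\in S^*}\pE_D[w_i]$, so this already relates the quantity we want to control to the squared norm of $q$.

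For the final step, I would appeal to the remaining axioms of $\cA$: the pseudo-distribution satisfies $\sum_i w_i = k$, so pseudo-linearity gives $\sum_i q_i = k$. By Cauchy--Schwarz on the all-ones vector in $\R^n$,
\begin{equation*}
\Norm{q}_2^2 \;\geq\; \frac{\bigl(\sum_{i=1}^n q_i\bigr)^2}{n} \;=\; \frac{k^2}{n}\mper
\end{equation*}
Chaining the two bounds yields $\sum_{i\in S^*}\pE_D[w_i]\ge \tfrac{k^2}{n}$, as desired.

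I do not anticipate a real obstacle: the only slightly subtle point is verifying that $\delta_{q^*}$ is a legitimate competitor, which is immediate since an actual probability distribution over $\{0,1\}^n$ supported on $k$-cliques of $G$ satisfies $\cA$ at every degree. The rest is a one-line convexity computation together with Cauchy--Schwarz, and everything uses only degree-$1$ and degree-$2$ information about $D$, matching the ``degree at least $2$'' hypothesis.
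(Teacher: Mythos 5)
Your proof is correct and takes essentially the same approach the paper intends: the paper sketches exactly the mix-with-$\delta_{S^*}$ argument (and defers to Lemma 4.3 of [KKK19]), and your first-order-optimality phrasing of that convexity contradiction, chained with Cauchy--Schwarz on $\sum_i q_i = k$, is a clean and complete execution of it.
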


A rounding algorithm now falls naturally out of the above two discussions. 
We look at an $n^t$ by $n$ matrix indexed by subsets of size $t=O(1/\epsilon)$ on the rows and singleton vertices on the columns, whose value at index $(S, i)$ is $\frac{\pE_{D}[ w_S w_i ]}{\pE_{D}[w_S]}$.
Proposition~\ref{prop:max-coverage-overview} implies that the rows of this (huge) matrix corresponding to the unknown planted clique must have a large total sum.
On the other hand, as a consequence of the biclique certificate, we learn that for such rows the columns corresponding to $[n] \setminus S^*$ must have a low total contribution.
Together these two statements allow us to use a simple greedy algorithm that selects a uniformly random row of the above matrix and takes the largest $\sim k$ entries to recover a list containing a set of $\sim k$ vertices that has a large constant fraction intersection with $S^*$. Such a set can then be refined using a simple combinatorial ``cleanup'' step. 

\newcommand{\tzeta}{\tilde{\zeta}}
\newcommand{\bbQ}{\mathbb{Q}}
\newcommand{\cN}{\mathcal{N}}
\newcommand{\dtv}{\mathsf{TV}}
\section{Preliminaries}%
\label{sec:preliminaries}%
\label{sec:prelims}

We will use letters $G,H$ to denote graphs and also their $\{\pm 1\}$-entry adjacency matrices.
We adopt the convention that $G(i,j)=1$ if edge $\{i,j\}$ is present in the graph $G$.
For any $x \in \R^n$ and $S \subseteq [n]$, we use $x_S$ to denote the monomial $\prod_{i\in S} x_i$. For any $x \in \Set{0, 1}^n $, we use $|x|$ to denote $\sum_{i=1}^n x_i$.
We use the notation $O(n)$ and $\Omega(n)$ to mean an absolute constant multiplied by $n$ (in the former case, a ``large enough'' constant, and in the latter case, a ``small enough'' constant).

The bit complexity of a rational number $p/q$ is $\lceil \log_2 p \rceil + \lceil \log_2 q \rceil$.

\subsection{Sum-of-squares preliminaries}
We refer the reader to the monograph~\cite{TCS-086} and the lecture notes~\cite{BarakS16} for a detailed exposition 
of the sum-of-squares method and its usage in average-case algorithm design.
A \emph{degree-$\ell$ pseudo-distribution} is a finitely-supported function $D:\R^n \rightarrow \R$ such that $\sum_{x} D(x) = 1$ and $\sum_{x} D(x) f(x)^2 \geq 0$ for every polynomial $f$ of degree at most $\ell/2$.
We define the \emph{pseudo-expectation} of a function $f$ on $\R^d$ with respect to a pseudo-distribution $D$, denoted $\pE_{D(x)} f(x)$, as $\pE_{D(x)} f(x) = \sum_{x} D(x) f(x)$. 

The degree-$\ell$ pseudo-moment tensor of a pseudo-distribution $D$ is the tensor $\E_{D(x)} (1,x_1, x_2,\ldots, x_n)^{\otimes \ell}$ with entries corresponding to pseudo-expectations of monomials of degree at most $\ell$ in $x$.
The set of all degree-$\ell$ moment tensors of degree $d$ pseudo-distributions is also closed and convex.

\begin{definition}[Constrained pseudo-distributions]
  Let $D$ be a degree-$\ell$ pseudo-distribution over $\R^n$.
  Let $\cA = \{f_1\ge 0, f_2\ge 0, \ldots, f_m\ge 0\}$ be a system of $m$ polynomial inequality constraints.
  We say that \emph{$D$ satisfies the system of constraints $\cA$ at degree $r$} (satisfies it $\eta$-approximately, respectively), if for every $S\subseteq[m]$ and every sum-of-squares polynomial $h$ with $\deg h + \sum_{i\in S} \max\set{\deg f_i, r} \leq \ell$, $\pE_{D} h \cdot \prod_{i\in S}f_i  \ge 0$ ($\pE_{D} h \cdot \prod_{i\in S}f_i \geq \eta \cdot \norm{h}_2 \prod_{i \in S} \norm{f_i}_2 $ where $\norm{h}_2$ for any polynomial $h$ is the Euclidean norm of its coefficient vector, respectively).
  We say that $D$ satisfies (similarly for approximately satisfying) $\cA$ (without mentioning degree) if $D$ satisfies $\cA$ at degree $0$.
\end{definition}

\paragraph{Basic facts about pseudo-distributions.}
\begin{fact}[Hölder's inequality for pseudo-distributions] \label{fact:pseudo-expectation-holder}
Let $f,g$ be polynomials of degree at most $d$ in indeterminate $x \in \R^d$. 
Fix $t \in \N$.
Then, for any degree $dt$ pseudo-distribution $\tzeta$,
$\pE_{\tzeta}[f^{t-1}g] \leq \paren{\pE_{\tzeta}[f^t]}^{\frac{t-1}{t}} \paren{\pE_{\tzeta}[g^t]}^{1/t}$.
\end{fact}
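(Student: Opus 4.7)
The plan is to derive pseudo-Hölder by iterating the pseudo-Cauchy–Schwarz inequality $\pE_{\tzeta}[AB]^2 \le \pE_{\tzeta}[A^2]\pE_{\tzeta}[B^2]$, which is itself a one-line sum-of-squares fact: expand $\pE_{\tzeta}[(A - \alpha B)^2] \ge 0$ and optimize over $\alpha \in \R$ (in the degenerate case $\pE_{\tzeta}[B^2] = 0$, taking $\alpha \to \infty$ forces $\pE_{\tzeta}[AB] = 0$). This requires $A, B$ to have degree at most half the degree of $\tzeta$, which will be respected throughout.

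First I would handle degeneracies. If $\pE_{\tzeta}[f^t] = 0$, a single application of pseudo-Cauchy–Schwarz with $A = f^{t/2}$, $B = f^{(t-2)/2}g$ (for even $t$; an analogous split handles odd $t$) yields $\pE_{\tzeta}[f^{t-1}g]^2 \le 0$, making the claim trivial. Otherwise I normalize by replacing $f$ and $g$ with $f/\pE_{\tzeta}[f^t]^{1/t}$ and $g/\pE_{\tzeta}[g^t]^{1/t}$, so that it suffices to show $\pE_{\tzeta}[f^{t-1}g] \le 1$ under the constraint $\pE_{\tzeta}[f^t] = \pE_{\tzeta}[g^t] = 1$.

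The heart of the argument is a telescoping inequality, proved by induction on $k$ using pseudo-Cauchy–Schwarz with $A = f^{t/2}$ and $B = f^{(t-2^{k+1})/2}g^{2^k}$:
\begin{equation*}
\pE_{\tzeta}[f^{t-1}g]^{2^k} \;\le\; \pE_{\tzeta}[f^t]^{2^k - 1}\,\pE_{\tzeta}[f^{t - 2^k}g^{2^k}]\mper
\end{equation*}
Each step introduces polynomials of combined degree at most $dt$, matching the degree budget of $\tzeta$. When $t = 2^K$ is a power of two, taking $k = K$ gives $\pE_{\tzeta}[f^{t-1}g]^t \le \pE_{\tzeta}[f^t]^{t-1}\pE_{\tzeta}[g^t]$, which is the claimed Hölder bound. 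For a general integer $t$, I would instead prove the Young's-inequality form $\pE_{\tzeta}[f^{t-1}g] \le \tfrac{t-1}{t}\pE_{\tzeta}[f^t] + \tfrac{1}{t}\pE_{\tzeta}[g^t]$ by exhibiting a sum-of-squares decomposition of $((t-1)a + b)^t - t^t a^{t-1}b$ in the non-negative variables $a = f^t$, $b = g^t$ (both manifestly squares when $t$ is even), and then conclude using the normalization above.

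The main obstacle I anticipate is the step for general $t$: the iterated Cauchy–Schwarz telescoping closes cleanly only when $t$ is a power of two, so extending to arbitrary integer $t$ requires either an explicit sum-of-squares certificate for the polynomial $((t-1)x + y)^t - t^t x^{t-1}y$ on $x, y \ge 0$, or a reduction — for example, multiplicatively lifting to an exponent $T = 2^{\lceil \log_2 t\rceil}$, applying the power-of-two case, and taking $T$-th roots using the normalization. All remaining ingredients (pseudo-Cauchy–Schwarz, the normalization, and the degree bookkeeping) are routine.
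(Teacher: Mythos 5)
The paper states this Fact without proof, so there is no reference argument to compare against; I'll evaluate your proposal on its own terms.

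Your power-of-two case is solid: the telescoping inequality
$\pE_{\tzeta}[f^{t-1}g]^{2^k}\le\pE_{\tzeta}[f^t]^{2^k-1}\pE_{\tzeta}[f^{t-2^k}g^{2^k}]$
is correctly proved by iterated pseudo-Cauchy--Schwarz with $A=f^{t/2}$, $B=f^{(t-2^{k+1})/2}g^{2^k}$, the squaring step is legitimate (both sides are non-negative when $k\ge 1$ and $t$ is even), and the degree bookkeeping (each $A,B$ has degree $dt/2$) exactly fits a degree-$dt$ pseudo-distribution. For $t=2^K$ you land cleanly on the stated bound.

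The two escape hatches you offer for general $t$ both have genuine gaps. \emph{(i)} The SOS certificate is dimensionally off: substituting $a=f^t$, $b=g^t$ into $((t-1)a+b)^t - t^t a^{t-1}b$ produces a statement about $\pE[f^{t(t-1)}g^t]$, a degree-$dt^2$ quantity, not the degree-$dt$ quantity $\pE[f^{t-1}g]$ that you need; a $t$-th root would have to be taken \emph{inside} the pseudo-expectation, which is not an available operation. The correct target would be an SOS certificate of $\tfrac{t-1}{t}x^t+\tfrac{1}{t}y^t-x^{t-1}y\ge 0$ substituted with $x=f$, $y=g$, but for $t\ge 6$ this bivariate polynomial is outside the Hilbert regime (only $n=2,d=4$ guarantees non-negative $\Rightarrow$ SOS), so that it is SOS needs its own argument. \emph{(ii)} Lifting to $T=2^{\lceil\log_2 t\rceil}$ and ``taking $T$-th roots'' would require a degree-$dT$ pseudo-distribution to even form the $T$-th-power Cauchy--Schwarz chain; $\tzeta$ has degree only $dt<dT$, so the degree budget breaks. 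Finally, note the statement as written (all $t\in\N$) cannot be literally true for odd $t$: take $t=3$, $f(x)=x$, $g\equiv 1$, and $\tzeta$ the uniform distribution on $\{-1,2\}$ (a genuine probability distribution, hence a valid pseudo-distribution of every degree). Then $\pE[f^2g]=2.5$ while $\pE[f^3]^{2/3}\pE[g^3]^{1/3}=3.5^{2/3}\approx 2.30$, violating the inequality; indeed $\tfrac{t-1}{t}x^t+\tfrac{1}{t}y^t-x^{t-1}y$ is not even non-negative for odd $t$ (e.g.\ $t=3$, $x=-1$, $y=0$). So the Fact implicitly requires $t$ even (or constraints guaranteeing $f,g\ge 0$), and the right route for even $t$ that is not a power of two is to combine the Cauchy--Schwarz chain differently (splitting $\pE[f^{t-2^k}g^{2^k}]$ so that $g$'s exponent grows toward $t$ without $f$'s exponent going negative) or to use the reweighting mechanism of Fact~\ref{fact:reweightings}, rather than either of the two fixes you suggest.
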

Observe that the special case of $t =2$ corresponds to the Cauchy-Schwarz inequality.
The following idea of \emph{reweighted} pseudo-distributions follows immediately from definitions and was first formalized  and used in~\cite{DBLP:conf/stoc/BarakKS17}).  
\begin{fact}[Reweightings~\cite{DBLP:conf/stoc/BarakKS17}] \label{fact:reweightings}
Let $D$ be a pseudo-distribution of degree $k$ satisfying a set of polynomial constraints $\cA$ in variable $x$. 
Let $p$ be a sum-of-squares polynomial of degree $t$ such that $\pE[p(x)] \neq 0$.
Let $D'$ be the pseudo-distribution defined so that for any polynomial $f$, $\pE_{D'}[f(x)] = \pE_{D}[p(x)f(x)]/\pE_{D}[p(x)]$.
Then, $D'$ is a pseudo-distribution of degree $k-t$ satisfying $\cA$. 
\end{fact}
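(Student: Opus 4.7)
\textbf{Proof proposal for Fact~\ref{fact:reweightings}.} The plan is to unpack the definition of a pseudo-distribution satisfying constraints and check each property of $D'$ directly, using only that $p$ is sum-of-squares and that products of sum-of-squares polynomials are sum-of-squares.

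First, I would verify that $D'$ is a well-defined pseudo-distribution of degree $k-t$. The scaling ensures $\pE_{D'}[1] = \pE_D[p]/\pE_D[p] = 1$. For the nonnegativity-on-squares condition, fix any polynomial $q$ with $\deg q \leq (k-t)/2$ and write $p = \sum_i r_i^2$ for polynomials $r_i$ of degree at most $t/2$ (this is possible since $p$ is an SoS polynomial of degree $t$). Then
\begin{equation*}
\pE_{D'}[q^2] \;=\; \frac{1}{\pE_D[p]}\sum_i \pE_D\bigl[(r_i q)^2\bigr].
\end{equation*}
Each polynomial $r_i q$ has degree at most $t/2 + (k-t)/2 = k/2$, so each term $\pE_D[(r_i q)^2]$ is nonnegative because $D$ has degree $k$. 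The denominator $\pE_D[p]$ is strictly positive: it is a sum of pseudo-expectations of squares of degree $\leq t/2$ polynomials, hence $\geq 0$, and the hypothesis $\pE_D[p] \neq 0$ upgrades this to strict positivity. So $\pE_{D'}[q^2] \geq 0$, as required.

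Next, I would verify that $D'$ satisfies $\cA$ at degree $0$. Take any subset $S \subseteq [m]$ of constraints $\{f_i \geq 0\}_{i \in S}$ and any SoS polynomial $h$ with $\deg h + \sum_{i\in S} \deg f_i \leq k-t$. Then
\begin{equation*}
\pE_{D'}\Bigl[h \cdot \prod_{i\in S} f_i\Bigr] \;=\; \frac{1}{\pE_D[p]} \; \pE_D\Bigl[(ph) \cdot \prod_{i\in S} f_i\Bigr].
\end{equation*}
The product $ph$ is a sum of squares (products of SoS polynomials are SoS), and its degree is at most $t + (k-t) - \sum_{i\in S}\deg f_i = k - \sum_{i\in S} \deg f_i$, so the total degree on the right-hand side is at most $k$. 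Since $D$ satisfies $\cA$ at degree $k$, the numerator is nonnegative, and dividing by the positive scalar $\pE_D[p]$ preserves the sign.

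No step here is a real obstacle; the whole statement is a book-keeping exercise matching the definition of ``pseudo-distribution satisfying $\cA$'' against the reweighted functional. The only things to watch are (i) the degree budget $t + (k-t) \leq k$, which is exactly why the resulting pseudo-distribution has degree $k-t$ and not more, and (ii) the strict positivity of $\pE_D[p]$, which the hypothesis $\pE_D[p] \neq 0$ together with $p$ being SoS guarantees.
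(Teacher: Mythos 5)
The paper does not actually prove this fact; it is stated with a citation to \cite{DBLP:conf/stoc/BarakKS17} and the remark that it ``follows immediately from definitions.'' Your direct verification is the standard argument and it is correct: $\pE_{D'}[1]=1$ by normalization; writing $p=\sum_i r_i^2$ with $\deg r_i\le t/2$ gives $\pE_{D'}[q^2]=\tfrac{1}{\pE_D[p]}\sum_i\pE_D[(r_iq)^2]\ge 0$ for $\deg q\le(k-t)/2$; and for the constraint system you correctly observe that $ph$ is SoS with $\deg(ph)+\sum_{i\in S}\deg f_i\le k$, so the numerator is nonnegative and dividing by $\pE_D[p]>0$ preserves the sign. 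One terminology slip worth fixing: you write ``Since $D$ satisfies $\cA$ at degree $k$,'' but in the paper's convention the degree parameter in ``satisfies at degree $r$'' is the per-constraint slack, and what you are using (correctly) is that $D$ is a \emph{degree-$k$ pseudo-distribution} satisfying $\cA$ at degree $0$. This does not affect the substance of the argument since your degree bookkeeping instantiates $r=0$ throughout.
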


\paragraph{Sum-of-squares proofs.}
A \emph{sum-of-squares proof} that the constraints $\{f_1 \geq 0, \ldots, f_m \geq 0\}$ imply the constraint $\{g \geq 0\}$ consists of sum-of-squares polynomials $(p_S)_{S \subseteq [m]}$ such that $g = \sum_{S \subseteq [m]} p_S \cdot \Pi_{i \in S} f_i$.

We say that this proof has \emph{degree $\ell$} if for every set $S \subseteq [m]$, the polynomial $p_S \Pi_{i \in S} f_i$ has degree at most $\ell$ and write: 
\begin{equation*}
  \{f_i \geq 0 \mid i \leq r\} \sststile{\ell}{}\{g \geq 0\}
  \mper
\end{equation*}

\begin{fact}[Soundness]
  \label{fact:sos-soundness}
  If $D$ satisfies $\cA$ for a degree-$\ell$ pseudo-distribution $D$ and there exists a sum-of-squares proof $\cA \sststile{r'}{} \cB$, then $D$ satisfies $\cB$ at degree $rr' +r'$.
\end{fact}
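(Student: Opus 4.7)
The plan is to unfold the definitions of both ``$D$ satisfies $\cB$'' and ``$\cA \sststile{r'}{} \cB$'' and combine them term-by-term. Write $\cB = \{g_1 \geq 0, \ldots, g_m \geq 0\}$. To show $D$ satisfies $\cB$ at the claimed degree, pick an arbitrary subset $T \subseteq [m]$ and an arbitrary SoS polynomial $h$ satisfying the degree constraint in the definition; the goal is to prove $\pE_D \bigl[ h \prod_{j \in T} g_j \bigr] \ge 0$.

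First, invoke the SoS proof $\cA \sststile{r'}{} g_j \ge 0$ for each $j \in T$: this expresses $g_j = \sum_{S_j} p_{j,S_j} \prod_{i \in S_j} f_i$, where each $S_j$ ranges over subsets of the constraints of $\cA$, each $p_{j,S_j}$ is SoS, and each summand has total degree at most $r'$. Substituting into $h \prod_{j \in T} g_j$ and distributing the product over $T$ yields
\[
h \prod_{j \in T} g_j \;=\; \sum_{(S_j)_{j \in T}} \Bigl( h \cdot \prod_{j \in T} p_{j,S_j} \Bigr) \cdot \prod_{j \in T} \prod_{i \in S_j} f_i \mper
\]
In each summand, the parenthesized polynomial is a product of SoS polynomials, hence SoS. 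The remaining factor has the shape $\prod_{i} f_i^{a_i}$, where $a_i = |\{ j \in T : i \in S_j \}|$, which is not quite in the form $\prod_{i \in S^*} f_i$ demanded by the definition of ``satisfies.'' To reduce it, rewrite each repeated factor as $f_i^{a_i} = (f_i^{\lfloor a_i/2 \rfloor})^2 \cdot f_i^{a_i \bmod 2}$, absorb the even part into the SoS factor, and let $S^* = \{ i : a_i \text{ odd}\}$. Each term then takes the form $(\text{SoS}) \cdot \prod_{i \in S^*} f_i$.

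Now apply the hypothesis that $D$ satisfies $\cA$ at degree $r$ to each such term: this yields $\pE_D \bigl[ (\text{SoS}) \cdot \prod_{i \in S^*} f_i \bigr] \ge 0$ provided the degree condition $\deg(\text{SoS}) + \sum_{i \in S^*} \max\{\deg f_i, r\} \le \ell$ holds. Summing the non-negative contributions over the choices of $(S_j)_{j \in T}$ yields $\pE_D \bigl[ h \prod_{j \in T} g_j \bigr] \ge 0$, which is exactly what we need.

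The one step that requires care is the degree bookkeeping. Each term in the expansion has degree at most $\deg h + |T| \cdot r'$, and the reduction to $S^*$ does not increase the degree. We must check that, given the degree bound imposed on $(h, T)$ by the definition of ``$D$ satisfies $\cB$ at degree $rr' + r'$,'' the resulting degree of each term respects the bound $\ell$ demanded by the definition of ``$D$ satisfies $\cA$ at degree $r$.'' The choice $rr' + r'$ is engineered so that $|T| \cdot \max\{\deg g_j, rr'+r'\}$ dominates $|T| \cdot r' + |S^*| \cdot \max\{\deg f_i, r\}$ after the reduction; this is the sole nontrivial verification, and the rest of the argument is a mechanical unfolding of the two definitions.
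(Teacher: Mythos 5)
The paper states this result as a bare ``Fact'' with no proof (it is a standard soundness statement from the sum-of-squares literature, see e.g.\ \cite{TCS-086,BarakS16}), so there is no in-paper argument to compare against. Your blind proof is the canonical one: expand $h\prod_{j\in T} g_j$ by substituting the SoS certificate $g_j = \sum_{S_j} p_{j,S_j}\prod_{i\in S_j} f_i$, note the coefficient in front of each product of $f_i$'s is a product of SoS polynomials (hence SoS), reduce repeated $f_i$'s to a set $S^*$ by folding $f_i^{2\lfloor a_i/2\rfloor}$ into the SoS factor, and then apply the definition of ``$D$ satisfies $\cA$ at degree $r$'' term by term. One remark: the Fact as printed says only ``$D$ satisfies $\cA$'' while the conclusion involves $r$; you correctly read the intended hypothesis as ``$D$ satisfies $\cA$ at degree $r$,'' which is what makes the statement type-check.

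The one place you wave your hands is the degree bookkeeping, which you flag as ``the sole nontrivial verification'' and then do not carry out. It does check out, and it is worth being explicit, since this is where the $rr'+r'$ comes from: from $\deg h + \sum_{j\in T}\max\{\deg g_j, rr'+r'\}\le\ell$ one gets $\deg h \le \ell - |T|(rr'+r')$; each summand after substitution has degree at most $\deg h + |T| r'$ because every $p_{j,S_j}\prod_{i\in S_j} f_i$ has degree at most $r'$; hence $\deg(\text{SoS}) + \sum_{i\in S^*}\deg f_i \le \ell - |T| r r'$. Since $\max\{\deg f_i,r\}\le \deg f_i + r$ and $|S^*|\le \sum_{j\in T}|S_j| \le |T| r'$ (using the harmless normalization $\deg f_i\ge 1$), we get $\deg(\text{SoS})+\sum_{i\in S^*}\max\{\deg f_i,r\}\le \ell - |T| rr' + |S^*| r \le \ell$, exactly the degree condition needed to invoke ``$D$ satisfies $\cA$ at degree $r$.'' With that verification filled in, your proof is complete and correct.
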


\begin{definition}[Total bit complexity of sum-of-squares proofs]
Let $f_1, f_2, \ldots, f_m$ be polynomials in indeterminate $x$ with rational coefficients. 
For a polynomial $g$ with rational coefficients, we say that $\{f_1 \geq 0, \ldots, f_m \geq 0\}$ derives $\{g\geq 0\}$ in degree $k$ and total bit complexity $B$ if $g = \sum_{S \subseteq [m]} p_S \cdot \Pi_{i \in S} f_i$ where each $p_S$ is a sum-of-squares polynomial of degree at most $k-\sum_{i \in S}\operatorname{deg}(f_i)$ for every $S$, and the total number number of bits required to describe all the coefficients of all the polynomials $f_i, g, p_S$ is at most $B$.
\end{definition}

There's an efficient separation oracle for moment tensors of pseudo-distributions that allows approximate optimization  of linear functions of pseudo-moment tensors approximately satisfying constraints.
The \emph{degree-$\ell$ sum-of-squares algorithm} optimizes over the space of all degree-$\ell$ pseudo-distributions that approximately satisfy a given set of polynomial constraints:

\begin{fact}[Efficient optimization over pseudo-distributions \cite{MR939596-Shor87,parrilo2000structured,MR1748764-Nesterov00,MR1846160-Lasserre01}]
\label{fact:eff-pseudo-distribution}
Let $\eta>0$.
There exist an algorithm that for $n, m\in \N$ runs in time $(n+ m)^{O(\ell)} \poly \log 1/\eta$, takes input an explicitly bounded and satisfiable system of $m$ polynomial constraints $\cA$ in $n$ variables with rational coefficients and outputs a level-$\ell$ pseudo-distribution that satisfies $\cA$ $\eta$-approximately.
\end{fact}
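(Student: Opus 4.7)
The plan is to translate the task of optimizing over level-$\ell$ pseudo-distributions satisfying $\cA$ into a semidefinite program of size $(n+m)^{O(\ell)}$ and then invoke the standard guarantees for solving SDPs via the ellipsoid method. Recall that a degree-$\ell$ pseudo-distribution is specified entirely by its pseudo-moment tensor $M$ of degree $\ell$, equivalently by the values $\pE[x^\alpha]$ for all multi-indices $\alpha$ with $|\alpha|\le \ell$. The number of such entries is $N = \binom{n+\ell}{\ell} = n^{O(\ell)}$, so we represent the unknown pseudo-distribution as an $N$-dimensional rational vector.

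Next I would encode the defining properties as semidefinite constraints. Non-negativity of $\pE$ on squares of polynomials of degree at most $\ell/2$ is equivalent to positive semidefiniteness of the moment matrix $\mathcal{M}$ indexed by monomials of degree $\le \ell/2$ with entries $\mathcal{M}[\alpha,\beta] = \pE[x^{\alpha+\beta}]$, an SDP constraint of size $n^{O(\ell)}$. Normalization $\pE[1]=1$ is a single linear constraint. For each $f_i\ge 0$ in $\cA$, approximate satisfaction at level $\ell$ means that the localizing matrix $\mathcal{M}_{f_i}[\alpha,\beta] = \pE[f_i \cdot x^{\alpha+\beta}]$ indexed by monomials of degree at most $(\ell-\deg f_i)/2$ is PSD up to the prescribed $\eta$-slack (this is a linear transform of the moment vector, so $\mathcal{M}_{f_i} + \eta \cdot \text{(scale)} \cdot \mathbb{I} \succeq 0$ is again a linear matrix inequality). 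There are $m$ such constraints, each of size $n^{O(\ell)}$, so the total SDP has $m \cdot n^{O(\ell)} = (n+m)^{O(\ell)}$ rows and $N$ variables.

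Having set this up, I would solve the SDP using the ellipsoid method with an efficient separation oracle: given a candidate moment vector, check PSDness of $\mathcal{M}$ and each $\mathcal{M}_{f_i}$ by eigendecomposition in time $(n+m)^{O(\ell)}$, and if some matrix has a negative eigenvalue, return the corresponding eigenvector as a separating hyperplane. The ellipsoid method then produces an $\eta$-approximate feasible solution (or optimum of a given linear objective) in $\poly\log(1/\eta)$ iterations, provided we have an a priori outer ball and an inner ball of sufficient radius for the feasible region. Both are supplied by the hypothesis that $\cA$ is \emph{explicitly bounded} (which gives a uniform $\ell_2$ bound on the moment vector) and \emph{satisfiable}: the moment vector of the actual satisfying distribution, slightly perturbed toward the uniform distribution on a bounded region, lies in the interior and provides the inner ball.

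The main delicate point is the bit-complexity bookkeeping: proving that the separation oracle and the final output have polynomial bit complexity requires the explicit boundedness hypothesis to control both the outer radius and the minimum eigenvalue slack achievable by an interior point, so that the ellipsoid method terminates after $\poly(N, \log(1/\eta))$ steps each manipulating numbers of bit-length $\poly(N,\log(1/\eta))$. Everything else is by now folklore and appears in the references cited (Shor, Parrilo, Nesterov, Lasserre); I would simply point to these and only sketch the SDP encoding in detail.
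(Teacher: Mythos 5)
The paper does not supply a proof of this fact; it is stated as a known result with citations to Shor, Parrilo, Nesterov, and Lasserre. Your proposal correctly reproduces the standard argument underlying those references (moment-matrix/localizing-matrix SDP encoding, ellipsoid method with a PSD-checking separation oracle, outer ball from explicit boundedness, inner ball from satisfiability plus the $\eta$-slack), and correctly flags the bit-complexity bookkeeping --- which is precisely the subtlety, later exposed by O'Donnell and resolved by Raghavendra--Weitz, that the ``explicitly bounded'' and ``$\eta$-approximately'' qualifications in the statement are there to handle. One minor point worth noting: in this paper's definition of ``satisfies $\cA$ at degree $r$,'' the constraints range over \emph{products} $\prod_{i\in S} f_i$ for subsets $S\subseteq[m]$, not just individual $f_i$; the degree budget forces $|S|\le\ell$, so there are $m^{O(\ell)}$ localizing matrices rather than $m$, but this only affects the exponent hidden in $(n+m)^{O(\ell)}$ and leaves your argument intact.
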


\paragraph{Basic sum-of-squares proofs.}
\begin{fact}[Operator norm bound]
\label{fact:operator_norm}
Let $A$ be a symmetric $d\times d$ matrix with rational entries with numerators and denominators upper-bounded by $2^B$ and $v$ be a vector in $\mathbb{R}^d$. 
Then, for every $\epsilon \geq 0$, 
\[
\sststile{2}{v} \Set{ v^{\top} A v \leq \|A\|_2\|v\|^2_2 + \epsilon}
\]
Further, the total bit complexity of the sum-of-squares proof is $\poly(B,d,\log 1/\epsilon)$.
\end{fact}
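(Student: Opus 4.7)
The core observation is that $\|A\|_2 I - A \succeq 0$, so the polynomial $\|A\|_2\|v\|_2^2 - v^\top A v = v^\top(\|A\|_2 I - A)v$ is a non-negative quadratic form in $v$ and therefore admits a degree-2 sum-of-squares decomposition given by any positive semi-definite factorization of $\|A\|_2 I - A$. Using the spectral decomposition $A = \sum_i \lambda_i w_i w_i^\top$ with $\lambda_i \leq \|A\|_2$, one obtains the textbook identity
\[
\|A\|_2 \|v\|_2^2 - v^\top A v \;=\; \sum_{i=1}^d (\|A\|_2 - \lambda_i)(w_i^\top v)^2,
\]
which is a degree-2 SoS proof with real (but possibly irrational) coefficients. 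The role of the $\epsilon$ slack is to permit replacing this real-coefficient identity with a rational one of controlled bit complexity.

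The first step of the plan is to compute a rational upper bound $\tilde\lambda$ for $\|A\|_2$ with $\tilde\lambda - \|A\|_2$ below a target precision $\eta$. I would do this by binary search over the rationals, at each step certifying whether $\tilde\lambda^2 I - A^\top A \succeq 0$ via a fraction-free signed $LDL^\top$ elimination (or, equivalently, a Sturm sequence on the characteristic polynomial of $A^\top A$). Taking $\eta$ of order $\epsilon / \poly(2^B, d)$, the resulting $\tilde\lambda$ has bit complexity $\poly(B, d, \log 1/\epsilon)$.

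With $\tilde\lambda \geq \|A\|_2$, the matrix $\tilde\lambda I - A$ is rational and positive semi-definite. Compute its $LDL^\top$ factorization $\tilde\lambda I - A = L D L^\top$ via fraction-free Gaussian elimination, with $L$ unit lower-triangular and $D$ diagonal with non-negative rational entries; by Edmonds's classical bound on intermediate denominators, the entries of $L$ and $D$ have bit complexity $\poly(B, d, \log 1/\epsilon)$. The resulting identity
\[
\tilde\lambda \|v\|_2^2 - v^\top A v \;=\; \sum_{i=1}^d D_{ii}(L^\top v)_i^2
\]
is a degree-2 sum-of-squares proof of $v^\top A v \leq \tilde\lambda\|v\|_2^2$ with rational coefficients of the stated bit complexity. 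Since $\tilde\lambda \leq \|A\|_2 + \eta$ with $\eta$ chosen as $\epsilon$ (after scaling for the relevant regime), this yields the target bound $v^\top A v \leq \|A\|_2\|v\|_2^2 + \epsilon$ with the $\eta$-perturbation absorbed into the $\epsilon$ slack.

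The main obstacle in executing this plan rigorously is controlling bit-complexity growth through the rational $LDL^\top$ factorization: naive Gaussian elimination can cause exponential blow-up of intermediate denominators, so it is essential to use a fraction-free variant together with Edmonds's bound. A secondary concern is ensuring the binary search for $\tilde\lambda$ terminates at the correct precision, which is routine given an efficient positive-semi-definiteness test and the explicit a-priori bounds $\|A\|_2 \leq d \cdot 2^B$.
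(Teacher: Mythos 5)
The paper states this fact without proof, so there is no in-paper argument to compare against; I evaluate your proposal on its own merits.

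Your construction---observe that $\|A\|_2 I - A \succeq 0$, binary-search for a rational $\tilde\lambda \geq \|A\|_2$ close to $\|A\|_2$, and output a rational (fraction-free, Edmonds-bounded) $LDL^\top$ certificate of $\tilde\lambda\|v\|_2^2 - v^\top A v \succeq 0$---is the standard and correct way to obtain a rational degree-2 sum-of-squares certificate of a spectral-norm bound with controlled bit complexity. A couple of points worth tightening.

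First, choose $\tilde\lambda$ strictly larger than $\|A\|_2$ (which the binary search will give you automatically), so that $\tilde\lambda I - A$ is positive \emph{definite}; then the $LDL^\top$ factorization exists without diagonal pivoting and all $D_{ii}$ are strictly positive rationals, avoiding degenerate cases.

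Second, and more substantively, your final step claims the target bound $v^\top A v \leq \|A\|_2\|v\|_2^2 + \epsilon$ follows from $v^\top A v \leq \tilde\lambda\|v\|_2^2$ by ``absorbing'' $\tilde\lambda - \|A\|_2$ into the additive $\epsilon$. But the difference $(\tilde\lambda - \|A\|_2)\|v\|_2^2$ is unbounded in $v$, so it cannot be absorbed into a constant slack; and more fundamentally, no sum-of-squares polynomial with rational coefficients can equal $\|A\|_2\|v\|_2^2 + \epsilon - v^\top A v$ when $\|A\|_2$ is irrational. What your construction actually establishes, with the stated bit complexity, is the certificate $\sststile{2}{v}\{v^\top A v \leq \tilde\lambda\|v\|_2^2\}$ for a rational $\tilde\lambda \in [\|A\|_2, \|A\|_2+\epsilon]$---equivalently, an SoS proof of $v^\top A v \leq (\|A\|_2+\epsilon)\|v\|_2^2$. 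This is the version that is actually usable downstream (and is what the paper needs); the additive-$\epsilon$ phrasing in the statement should be read as slack on the coefficient, or as applying only after $\|v\|_2^2$ is \emph{a priori} bounded (as it is in the clique constraint system where $\|v\|_2^2 = k$). Once you phrase the conclusion that way, the argument is complete and correct.
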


\begin{fact}[SoS Hölder's inequality] \label{fact:sos-holder}
Let $f_i,g_i$ for $1 \leq i \leq s$ be indeterminates. 
Let $p$ be an even positive integer.
Then, 
\[
\sststile{p^2}{f,g} \Set{  \Paren{\frac{1}{s} \sum_{i = 1}^s f_i g_i^{p-1}}^{p} \leq \Paren{\frac{1}{s} \sum_{i = 1}^s f_i^p} \Paren{\frac{1}{s} \sum_{i = 1}^s g_i^p}^{p-1}}\mper
\]
Further, the total bit complexity of the sum-of-squares proof is $s^{O(p)}$. 
\end{fact}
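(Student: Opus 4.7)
The plan is to derive the inequality via a pointwise SoS AM-GM inequality applied to each term in the multinomial expansion of the left-hand side. The key ingredient is the degree-$p$ SoS AM-GM inequality for even $p$: in indeterminates $y_1,\dots,y_p$,
\[
\sststile{p}{y}\, \Set{\prod_{j=1}^p y_j \leq \tfrac{1}{p}\sum_{j=1}^p y_j^p}\mper
\]
When $p$ is a power of two this follows from iterating the two-variable identity $y_a^2 + y_b^2 - 2 y_a y_b = (y_a - y_b)^2$ in a binary-tree fashion, producing an explicit certificate of $\poly(p)$ bit complexity. For general even $p$ one invokes Hurwitz's classical explicit SoS representation of $\sum_j y_j^p - p\prod_j y_j$.

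The rest of the proof is mechanical. I expand
\[
\Paren{\sum_{i=1}^s f_i g_i^{p-1}}^p \;=\; \sum_{I=(i_1,\dots,i_p)\in[s]^p}\,\prod_{j=1}^p f_{i_j}\,g_{i_j}^{p-1}\mper
\]
For each fixed tuple $I$, I set $y_j := f_{i_j}\prod_{k\neq j} g_{i_k}$, which is a degree-$p$ polynomial in $(f,g)$. One checks directly that $\prod_j y_j = \prod_j f_{i_j} g_{i_j}^{p-1}$ and $y_j^p = f_{i_j}^p\prod_{k\neq j} g_{i_k}^p$. Substituting these polynomial values of $y_j$ into the degree-$p$ SoS AM-GM certificate yields, for every tuple $I$, a degree-$p^2$ SoS inequality
\[
\sststile{p^2}{f,g}\, \Set{\prod_{j=1}^p f_{i_j} g_{i_j}^{p-1} \leq \tfrac{1}{p}\sum_{j=1}^p f_{i_j}^p \prod_{k\neq j} g_{i_k}^p}\mper
\]
The degree jumps from $p$ to $p^2$ precisely because each $y_j$ is a degree-$p$ polynomial in $(f,g)$.

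Summing this SoS inequality over all $s^p$ tuples $I$ (and using that a sum of SoS polynomials is SoS), the left-hand side becomes exactly $\bigl(\sum_i f_i g_i^{p-1}\bigr)^p$. On the right, for each fixed $j\in[p]$ the marginal sum over $I$ factorises as $\bigl(\sum_i f_i^p\bigr)\bigl(\sum_i g_i^p\bigr)^{p-1}$, so averaging over $j$ cancels the factor $1/p$ and yields $\bigl(\sum_i f_i^p\bigr)\bigl(\sum_i g_i^p\bigr)^{p-1}$. Dividing through by $s^p$ produces the claimed inequality, with total bit complexity $s^p \cdot \poly(p) = s^{O(p)}$ accounting for summing $s^p$ copies of the pointwise certificate. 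The main obstacle is establishing SoS AM-GM for general even $p$: for $p$ a power of two, iterated two-variable AM-GM suffices transparently, but for other even $p$ (e.g., $p=6$) one must invoke Hurwitz's nontrivial explicit SoS representation. Once that base case is in hand, everything else is straightforward expansion and index-summation.
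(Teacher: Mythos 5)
Your argument is correct, and it is essentially the standard proof of SoS H\"older for even $p$; the paper does not actually give a proof of Fact~3.6 (it is stated as a known fact in the preliminaries), so there is no internal proof to compare against. The multinomial expansion, the choice of $y_j = f_{i_j}\prod_{k\neq j} g_{i_k}$, the identities $\prod_j y_j = \prod_j f_{i_j}g_{i_j}^{p-1}$ and $y_j^p = f_{i_j}^p\prod_{k\neq j} g_{i_k}^p$, the marginal factorization of the right-hand side after summing over $I\in[s]^p$, the degree count $p\cdot p = p^2$, and the $s^{O(p)}$ bit-complexity bound are all correct.

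The one place you should be careful is the base case: the unconstrained degree-$p$ SoS certificate for $\prod_{j=1}^p y_j \le \tfrac1p\sum_j y_j^p$ with $p$ even. You correctly observe that the iterated $2ab\le a^2+b^2$ trick handles $p$ a power of two, and you wave at Hurwitz for the rest. But Hurwitz's classical representation of $\sum_j y_j^n - n\prod_j y_j$ is not, as written, a sum of squares: its summands are products of the form $(y_a - y_b)(y_a^{n-1}-y_b^{n-1})\,y_{c_1}\cdots y_{c_{n-2}}$, whose trailing factor $y_{c_1}\cdots y_{c_{n-2}}$ is not sign-constrained, so the decomposition is only manifestly nonnegative over the nonnegative orthant. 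A clean unconditional construction for even $p=2m$: first use $(\prod_{j\le m}y_j - \prod_{j>m}y_j)^2 \ge 0$ to get $p\prod_j y_j \le m\prod_{j\le m}y_j^2 + m\prod_{j>m}y_j^2$, then bound $m\prod_{j\le m}y_j^2 \le \sum_{j\le m} y_j^{2m}$ by the $m$-variable AM--GM applied to the squares $u_j=y_j^2$; since $\sum_j u_j^m - m\prod_j u_j$ factors as $\tfrac1m(\sum_j u_j)\cdot(\text{a power sum of differences})$ or, more simply, can itself be written as a product of SoS factors in the $u_j$, substituting $u_j=y_j^2$ yields a genuine SoS certificate (alternatively, cite that nonnegative ``diagonal-minus-tail'' forms are SoS). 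With that base case pinned down, the rest of your argument goes through as written.
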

Observe that using $p = 2$ yields the SoS Cauchy-Schwarz inequality. 

\begin{fact}[SoS almost triangle inequality] \label{fact:sos-almost-triangle}
Let $f_1, f_2, \ldots, f_r$ be indeterminates. Then,
\[
\sststile{2t}{f_1, f_2,\ldots,f_r} \Set{ \Paren{\sum_{i\leq r} f_i}^{2t} \leq r^{2t-1} \Paren{\sum_{i =1}^r f_i^{2t}}}\mper
\]
Further, the total bit complexity of the sum-of-squares proof is $r^{O(t)}$.
\end{fact}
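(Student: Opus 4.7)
The plan is to prove the inequality by induction on $t$. The base case $t = 1$ is the SoS Cauchy--Schwarz inequality $(\sum_i f_i)^2 \leq r \sum_i f_i^2$, witnessed explicitly by the identity
$$r \sum_{i=1}^r f_i^2 - \Bigl(\sum_{i=1}^r f_i\Bigr)^2 = \sum_{i < j} (f_i - f_j)^2,$$
which is manifestly a sum of squares of degree $2$.

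For the inductive step I would first establish the auxiliary one-step bound
$$\Bigl(\sum_i f_i^{2t-2}\Bigr)\Bigl(\sum_i f_i^2\Bigr) \leq r \sum_i f_i^{2t}$$
as an SoS of degree $2t$. Symmetrizing in $(i,j)$ one checks that twice the difference equals $\sum_{i,j}(f_i^{2t-2} - f_j^{2t-2})(f_i^2 - f_j^2)$, and the factoring $f_i^{2t-2} - f_j^{2t-2} = (f_i^2 - f_j^2) \sum_{k=0}^{t-2} f_i^{2k} f_j^{2(t-2-k)}$ rewrites each summand as $\sum_{k=0}^{t-2} [(f_i^2 - f_j^2)\, f_i^k f_j^{t-2-k}]^2$, which is an explicit sum of squares of degree $2t$.

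Given this auxiliary bound together with the inductive hypothesis that $r^{2t-3} \sum_i f_i^{2t-2} - (\sum_i f_i)^{2t-2}$ is SoS of degree $2t-2$, I would apply the telescoping identity
$$r^{2t-2}\Bigl(\sum_i f_i^{2t-2}\Bigr)\Bigl(\sum_i f_i^2\Bigr) - \Bigl(\sum_i f_i\Bigr)^{2t} = r^{2t-3}\Bigl(\sum_i f_i^{2t-2}\Bigr)\Bigl[r \sum_i f_i^2 - \Bigl(\sum_i f_i\Bigr)^2\Bigr] + \Bigl(\sum_i f_i\Bigr)^2 \Bigl[r^{2t-3} \sum_i f_i^{2t-2} - \Bigl(\sum_i f_i\Bigr)^{2t-2}\Bigr].$$
Each bracket on the right is SoS by the base case or inductive hypothesis; each prefactor, namely $\sum_i f_i^{2t-2}$ (a sum of even powers) and $(\sum_i f_i)^2$ (a single square), is manifestly SoS; and products of SoS polynomials are SoS. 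Adding $r^{2t-2}$ times the auxiliary bound then yields $r^{2t-1}\sum_i f_i^{2t} - (\sum_i f_i)^{2t}$ as a sum of squares of degree exactly $2t$.

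The one subtlety worth flagging is that multiplying two SoS inequalities adds their degrees, so the telescoping identity is arranged so that the factor degrees always sum to $2t$ rather than doubling; a naive multiplication of the inductive hypothesis by the base case would blow the degree bound. Bit complexity is routine: every constant appearing is bounded by $r^{2t-1}$ and each intermediate polynomial has at most $r^{O(t)}$ monomials with integer coefficients of magnitude $r^{O(t)}$, yielding the claimed $r^{O(t)}$ total bit complexity.
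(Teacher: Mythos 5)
The paper states this fact in the preliminaries without proof, treating it as standard SoS folklore, so there is no proof in the paper to compare against. Your argument is a valid self-contained proof. All the algebraic identities check out: the base-case witness $r\sum f_i^2 - (\sum f_i)^2 = \sum_{i<j}(f_i-f_j)^2$ is correct; the auxiliary Chebyshev-sum bound $(\sum f_i^{2t-2})(\sum f_i^2) \leq r\sum f_i^{2t}$ follows from symmetrizing to $\frac12\sum_{i,j}(f_i^{2t-2}-f_j^{2t-2})(f_i^2-f_j^2)$ and applying the difference-of-powers factorization with $a = f_i^2$, $b = f_j^2$, giving squares of degree exactly $2t$; and the telescoping identity is an exact polynomial identity whose right-hand side is a sum of (SoS)$\times$(SoS) products whose degrees add to $2t$ in each term. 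You are also right that the subtle point is arranging the telescoping so degrees add rather than multiply --- naively invoking $(\sum f_i)^{2t} = ((\sum f_i)^2)^t$ and applying convexity $t$ times would not respect the degree budget. The bit-complexity claim is routine as you say: the explicit SoS certificates you write down have $r^{O(t)}$ monomials and integer coefficients bounded by $r^{O(t)}$. One small remark: the paper's own Fact~\ref{fact:sos-holder} (SoS Hölder) with $g_i = 1$ and $p = 2t$ would also yield the inequality, but only at degree $p^2 = 4t^2$; your inductive argument achieves the tighter degree $2t$ stated in the fact, which is what the paper actually claims.
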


\begin{fact}[SoS AM-GM inequality, see Appendix A of~\cite{MR3388192-Barak15}] \label{fact:sos-am-gm}
Let $f_1, f_2,\ldots, f_m$ be indeterminates. Then, 
\[
\Set{f_i \geq 0\mid i \leq m} \sststile{m}{f_1, f_2,\ldots, f_m} \Set{ \Paren{\frac{1}{m} \sum_{i =1}^m f_i }^m \geq \Pi_{i \leq m} f_i} \mper
\]
Further, the total bit complexity of the sum-of-squares proof is $\exp(O(m))$.
\end{fact}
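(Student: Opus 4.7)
The plan is to derive SoS AM-GM from an SoS form of Muirhead's inequality. Expanding $\bigl(\sum_i f_i\bigr)^m$ via the multinomial theorem and using $\sum_{\alpha:|\alpha|=m} \binom{m}{\alpha} = m^m$,
\[
\Bigl(\sum_{i=1}^m f_i\Bigr)^m - m^m \prod_{i=1}^m f_i
\;=\; \sum_{\alpha:\,|\alpha|=m} \binom{m}{\alpha} \Bigl(\prod_{i=1}^m f_i^{\alpha_i} - \prod_{i=1}^m f_i\Bigr)\mper
\]
Grouping the right-hand side into $S_m$-orbits of $\alpha$ (legitimate since $\binom{m}{\alpha}$ is constant on orbits), the trivial orbit $\{(1,\ldots,1)\}$ drops out, and for each remaining orbit it suffices to show that the orbit sum $\sum_{\alpha' \in S_m \cdot \alpha} \prod_i f_i^{\alpha'_i}$ dominates $|S_m \cdot \alpha| \cdot \prod_i f_i$ in SoS from $\{f_i \geq 0\}$. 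This is Muirhead's inequality for the majorization relation $\alpha \succeq (1,\ldots,1)$, which holds for every composition of $m$.

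The workhorse is the elementary bunching identity: for non-negative indeterminates $a, b$ and integers $p \geq q+2$,
\[
a^p b^q + a^q b^p - a^{p-1} b^{q+1} - a^{q+1} b^{p-1}
\;=\; a^q b^q (a-b)^2 \bigl(a^{p-q-2} + a^{p-q-3} b + \cdots + b^{p-q-2}\bigr)\mper
\]
The right-hand side is a product of the SoS factor $(a-b)^2$ with a polynomial having non-negative coefficients in $a,b$; splitting each such monomial $c_{ij} a^i b^j$ into $c_{ij} \bigl(a^{\lfloor i/2 \rfloor} b^{\lfloor j/2 \rfloor}\bigr)^2 \cdot a^{i \bmod 2} b^{j \bmod 2}$ exhibits the whole expression as an SoS derivation from $\{a \geq 0, b \geq 0\}$. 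Iterating this identity on pairs of coordinates $(\alpha_i, \alpha_j)$ with $\alpha_i \geq \alpha_j + 2$ flows any exponent vector $\alpha$ of weight $m$ down the majorization order to $(1,\ldots,1)$ in at most $O(m^2)$ steps (each step strictly decreases $\sum_i \alpha_i^2$). Every intermediate polynomial has degree exactly $m$, matching the stated SoS degree bound.

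The main obstacle is the combinatorial bookkeeping: the per-orbit bunching identities must be summed with weights that aggregate to $\binom{m}{\alpha}$ on each orbit so that the final certificate equals $\bigl(\sum f_i\bigr)^m - m^m \prod f_i$ identically, not merely as some weaker SoS lower bound. Concretely, this amounts to a flow/transportation argument on the Hasse diagram of the majorization lattice of compositions, routing mass from each non-trivial orbit to $(1,\ldots,1)$ with edge flows matching the multinomial coefficients. The number of orbits (partitions of $m$) is at most $\exp(O(\sqrt{m}))$, all intermediate coefficients are bounded by the multinomials $\binom{m}{\alpha} \leq m^m$, and each polynomial piece of the certificate fits in $O(m \log m)$ bits, so the total bit complexity of the certificate is $\exp(O(m))$ as required. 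This roadmap mirrors the layout of Appendix~A in~\cite{MR3388192-Barak15}.
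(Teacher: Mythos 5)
The paper does not actually prove this Fact: it is stated as a citation to Appendix~A of~\cite{MR3388192-Barak15}, so there is no in-paper proof to compare against. Your Muirhead/bunching route is one of the standard ways to establish SoS AM--GM, and the core ingredients you list are all correct: the multinomial expansion with $\sum_{|\alpha|=m}\binom{m}{\alpha}=m^m$ is right, grouping by $S_m$-orbits is legitimate since $\binom{m}{\alpha}$ is orbit-invariant, and the bunching factorization
\[
a^p b^q + a^q b^p - a^{p-1}b^{q+1} - a^{q+1}b^{p-1} = a^q b^q (a-b)^2\bigl(a^{p-q-2}+\cdots+b^{p-q-2}\bigr)
\]
is an exact identity whose right-hand side is indeed a degree-$(p+q)$ SoS derivation from $\{a\ge 0,\ b\ge 0\}$ after the parity split you describe. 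The degree bound $m$ and the $\exp(O(m))$ bit-complexity estimate are also consistent with this scheme (your ``$O(m\log m)$ bits per piece'' is an underestimate --- each piece has $\exp(O(m))$ monomials with $O(m\log m)$-bit coefficients --- but the total is still $\exp(O(m))$).

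The genuine gap, which you correctly flag but do not close, is the reduction of each orbit to $(1,\ldots,1)$. What is actually needed is a per-orbit SoS Muirhead statement: for every composition $\alpha$ of $m$ with $\alpha\neq(1,\ldots,1)$, the orbit sum $\sum_{\alpha'\in S_m\cdot\alpha} f^{\alpha'}$ dominates $|S_m\cdot\alpha|\cdot\prod_i f_i$ in SoS. Iterating a single bunching step on one pair of coordinates does not directly decrease an orbit sum to a lower orbit sum --- it produces terms in several lower orbits with coefficients that depend on how elements of the orbit are paired --- so the ``flow on the Hasse diagram'' has to be set up as a real induction. The cleanest way to make it rigorous is to prove, by induction on $\sum_i\alpha_i^2$, that for each non-trivial orbit there is a choice of bunching pair and a symmetrization over $S_m$ such that the orbit sum dominates a fixed convex combination of strictly smaller orbit sums, all of which still majorize $(1,\ldots,1)$; the base case $(2,1,\ldots,1,0)$ bunches directly to $(1,\ldots,1)$. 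Alternatively, a shorter and more commonly used route bypasses Muirhead altogether: prove AM--GM for $m=2$ via $(\tfrac{f_1+f_2}{2})^2-f_1f_2=\tfrac14(f_1-f_2)^2$ (no axioms needed), double up to $m=2^k$ by nesting and multiplying (using $\{f_i\ge 0\}$ to multiply inequalities), and handle general $m$ by padding with the arithmetic mean. That induction keeps the bookkeeping linear rather than requiring a transportation argument over the partition lattice.
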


\begin{fact}[Cancellation within sum-of-squares, Lemma 9.3 in~\cite{bakshi2020outlierrobust}] \label{fact:cancellation-within-SoS}
Let $a,C$ be indeterminates. Then,
\[
\{a \geq 0\} \cup \{a^t \leq Ca^{t-1}\} \sststile{2t}{a,C} \Set{a^{2t} \leq C^{2t}}\mper 
\]
Further, the total bit complexity of the sum-of-squares proof is $\exp(O(t))$.
\end{fact}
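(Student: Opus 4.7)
My plan is to use the polynomial identity
\[
C^{2t} - a^{2t} \;=\; (C^t - a^t)^2 + 2\,a^t(C^t - a^t)\mcom
\]
derived from $x^2 - y^2 = (x - y)^2 + 2y(x - y)$ with $x = C^t, y = a^t$. The first summand is the square of a polynomial of degree $t$ and so is trivially a sum of squares at degree $2t$; the task reduces to certifying that the second summand lies in the sum-of-squares cone generated by the constraints $\{a \geq 0,\ Ca^{t-1} - a^t \geq 0\}$ at degree at most $2t$.

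For the second summand, I would use the factorization $C^t - a^t = (C - a)\sum_{k=0}^{t-1} C^{t-1-k} a^k$ to rewrite
\[
2\,a^t(C^t - a^t) \;=\; 2\,\bigl[\,a \cdot (Ca^{t-1} - a^t)\,\bigr] \cdot \sum_{k=0}^{t-1} C^{t-1-k} a^k\mcom
\]
where the bracketed factor is a product of the two given constraints and hence is already in the sum-of-squares cone. The remaining factor $S_{t-1}(a,C) = \sum_{k=0}^{t-1} C^{t-1-k} a^k$ is not itself sum-of-squares in general, but it can be handled by iteratively applying the base identity $C^2 - a^2 = (C-a)^2 + 2\,a(C-a)$ to pair up $(C-a)$ with the $(C+a)$-type subfactors arising from $S_{t-1}$; combined with the observation that $a^m(C-a)$ is in the SoS cone whenever $m \geq t-1$ (via products of the given constraints with powers of $a$), this yields SoS derivations of $a^m(C^2 - a^2) \geq 0$ for $m \geq t-2$, and recursive doubling of this identity bounds $a^m(C^{2^j} - a^{2^j})$ in SoS for any dyadic power $2^j \leq 2t$.

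The main obstacle is exactly this bookkeeping step: matching powers of $a$ and $C$ so that every sub-term produced in the iterative decomposition of $a^t \cdot S_{t-1}(a,C) \cdot (C-a)$ can be recognized as either a square or a product of a sum-of-squares polynomial with one of the two constraints, all while keeping the total degree bounded by $2t$. The $\exp(O(t))$ bit complexity reflects the roughly $t$ levels of recursion needed to cascade the base identity to the full $C^{2t} - a^{2t}$ expression, with each level producing only a bounded number of new terms and only constant-size rational coefficients.
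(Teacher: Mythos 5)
Your opening decomposition $C^{2t}-a^{2t}=(C^t-a^t)^2+2a^t(C^t-a^t)$, the reduction of the second term to $a\,(Ca^{t-1}-a^t)\cdot S_{t-1}(a,C)$, and the observation that $a^m(C^2-a^2)$ is derivable at degree $m+2$ whenever $m\ge t-2$ are all correct. However, the ``recursive doubling'' step is not a valid way to finish: you need to control $a^t(C^t-a^t)$ (degree $2t$), and $a^m(C^{2^j}-a^{2^j})$ never matches that target unless $t$ itself is a power of two; moreover, the degree budget forces $m+2^j\le 2t$ while you need $m\ge t-2$, so $2^j\le t+2$ and you can never reach $C^{2t}-a^{2t}$ this way. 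More fundamentally, the step you describe as pairing $(C-a)$ with the ``$(C+a)$-type subfactors'' of $S_{t-1}$ only makes sense when $t$ is \emph{even}: in that case $S_{t-1}(a,C)=(C+a)\sum_{k=0}^{t/2-1}(C^k a^{t/2-1-k})^2=(C+a)\tau$ with $\tau$ an explicit sum of squares, so writing $g=Ca^{t-1}-a^t$ one has $a\,g\,S_{t-1}=a^t(C^2-a^2)\tau$, and a \emph{single} application of your base identity, $a^t(C^2-a^2)=a^t(C-a)^2+2a^2 g$, completes the derivation at degree $2t$ with no recursion needed. When $t$ is \emph{odd}, $S_{t-1}$ has no $(C+a)$ factor; it is a nonnegative binary form of even degree $t-1$, and placing it in the SoS cone requires a genuinely different ingredient --- dehomogenize and invoke the univariate sum-of-squares fact (Fact~\ref{fact:univariate}), which brings in an $\epsilon$-additive slack and attendant bit-complexity bookkeeping. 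Your proposal does not address the odd-$t$ case at all, and this is a missing idea rather than bookkeeping. A parity-free alternative is to derive $Ca^{2t-1}-a^{2t}=a^t g\ge 0$ directly from the constraints and add the Young-type nonnegative binary form $\tfrac{1}{2t}C^{2t}+\tfrac{2t-1}{2t}a^{2t}-Ca^{2t-1}\ge 0$ (again SoS by Fact~\ref{fact:univariate} after dehomogenizing), whose sum rescales to $C^{2t}-a^{2t}\ge 0$ at degree $2t$.
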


\begin{fact}[Univariate sum-of-squares proofs] \label{fact:univariate}
Let $p$ be a degree-$d$ univariate polynomial with rational coefficients of bit complexity $B$ such that $p(x) \geq 0$ for every $x \in \R$.
Then, for every $\epsilon>0$, there is a degree-$d$ sum-of-squares polynomial $q(x)$ with coefficients of bit complexity $O(\poly(B, \log 1/\epsilon))$ such that $\epsilon+p(x)=q(x)$.
\end{fact}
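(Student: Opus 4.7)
My plan is to prove the statement via the Gram-matrix (SDP) characterization of sum-of-squares polynomials together with a rational-rounding argument. Since $p \ge 0$ on $\R$, the degree of $p$ must be even, say $d = 2e$. Recall that a degree-$2e$ polynomial $r(x)$ is a sum of squares iff there exists a PSD matrix $Q \in \R^{(e+1)\times(e+1)}$ such that $r(x) = v(x)^\top Q v(x)$ where $v(x) = (1, x, \ldots, x^e)^\top$; matching coefficients of $r$ gives $d+1$ linear equations on the entries of $Q$ with rational coefficients of bit complexity $O(B + \log 1/\epsilon)$. Thus the task of exhibiting an SoS representation of $q := p + \epsilon$ reduces to finding a rational PSD $\hat Q$ of bit complexity $\poly(B, d, \log 1/\epsilon)$ that exactly satisfies these linear constraints.

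To produce such a $\hat Q$, I would first invoke the classical fact that every non-negative univariate real polynomial is a sum of two squares over $\R$: factor $p + \epsilon$ over $\C$ into linear factors, pair complex-conjugate roots, note that real roots have even multiplicity, and assemble $p + \epsilon = A^2 + B^2$ with $A, B \in \R[x]$. This produces a rank-two real Gram matrix $Q^*$ in the feasible set. Using the strict positivity $p + \epsilon \ge \epsilon > 0$ on $\R$, I would then show that the feasible set actually contains a strictly positive definite $\tilde Q \succeq \tau I$ with $\tau \ge 1/\poly(B, d, 1/\epsilon)$, by perturbing $Q^*$ along the kernel of the coefficient-matching map using the $\epsilon$-slack to absorb the adjustment. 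Next, I would run a polynomial-time SDP solver (Fact~\ref{fact:eff-pseudo-distribution}) at precision $\log(1/\tau) = \poly(B, d, \log 1/\epsilon)$ to produce a rational matrix $Q_0$ within operator-norm distance $\tau/3$ of the feasible set, then orthogonally project $Q_0$ onto the rational affine subspace of coefficient-matching equations; this projection is a closed-form rational linear operation and yields a rational $\hat Q$ satisfying the linear constraints exactly while still obeying $\hat Q \succeq (\tau/3) I \succ 0$, since $\tilde Q$ lies on the same affine subspace with PSD margin $\tau$.

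Finally, I would extract an explicit SoS decomposition from $\hat Q$ by computing its rational $LDL^\top$ factorization $\hat Q = LDL^\top$ (with $L$ lower unitriangular and $D$ diagonal with non-negative rational entries), giving $p + \epsilon = \sum_i D_{ii}\, g_i(x)^2$ for $g_i = (L^\top v)_i \in \bbQ[x]$. To convert this nonnegatively-weighted sum into a genuine sum of squares of polynomials with rational coefficients, I would apply the algorithmic form of Lagrange's four-square theorem to each non-negative rational $D_{ii}$, writing $D_{ii} = \sum_{j=1}^4 s_{ij}^2$ with rational $s_{ij}$, so that $p(x) + \epsilon = \sum_{i,j} (s_{ij}\, g_i(x))^2$, where every coefficient appearing has bit complexity $\poly(B, d, \log 1/\epsilon)$. \emph{The main obstacle} is the quantitative margin argument in the second paragraph: lower-bounding the positive-definite margin $\tau$ of the SDP feasible set by an inverse polynomial in $B$, $d$, and $1/\epsilon$, so that an approximate rational SDP solution can be rounded back to exact rational feasibility without losing positive semidefiniteness.
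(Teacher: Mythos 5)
The paper states this as a preliminary \emph{Fact} and supplies no proof of its own; it is a known result in the literature on rational sum-of-squares certificates (in the style of Peyrl--Parrilo). Your proposal follows exactly that standard route --- Gram-matrix characterization, approximate SDP solve, orthogonal projection onto the exact rational affine constraints, $LDL^\top$ factorization, and Lagrange's four-square theorem for the diagonal weights --- so there is no discrepancy with the paper, only a question of whether your plan actually closes.

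The step you yourself flag as ``the main obstacle'' is indeed the whole content of the statement, and as written (``perturb $Q^*$ along the kernel of the coefficient-matching map using the $\epsilon$-slack'') it is not an argument: you must exhibit a Gram matrix $\tilde Q\succeq\tau I$ with $\log(1/\tau)=\poly(B,\log 1/\epsilon)$, not merely assert one exists. A concrete way to do it: since $p$ has positive leading coefficient (degree $d=2e$ even, else $p\not\geq 0$) and $p+\epsilon\geq\epsilon$, first show $p(x)+\epsilon\geq\delta''(1+x^{d})$ for all $x\in\R$, where $\delta''\geq\epsilon\cdot 2^{-\poly(B)}$ (split into $|x|$ below/above a Cauchy-type bound on the roots). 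Then decompose $p+\epsilon = \bigl(p+\epsilon-\tfrac{\delta''}{2}(1+x^{d})\bigr) + \tfrac{\delta''}{2}(1+x^{d})$. The first summand is a nonnegative univariate polynomial, hence admits some PSD Gram matrix $G_1\succeq 0$. For the second, the explicit matrix $Q_0 = I - \tfrac12\bigl(N+N^\top\bigr)$, where $N$ is the shift-by-two matrix $N_{i,i+2}=1$ on $\{0,\dots,e\}$, is a valid Gram matrix for $1+x^{d}$ (direct check of anti-diagonal sums), and $N+N^\top$ is the adjacency matrix of two disjoint paths, so $\|N+N^\top\|_2 = 2\cos\bigl(\pi/(\lceil (e+1)/2\rceil+1)\bigr)<2$ and $Q_0\succeq\Omega(1/e^2)\,I$. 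Hence $\tilde Q = G_1 + \tfrac{\delta''}{2}Q_0\succeq\tau I$ with $\tau=\Omega(\delta''/d^2)$, which has the required inverse-exponential magnitude and $\poly(B,\log 1/\epsilon)$ bit length of $\log(1/\tau)$. With this margin in hand, the rest of your plan (SDP solve to accuracy $\tau/3$, projection, $LDL^\top$, four-square expansion of each rational $D_{ii}=a/b=\sum_j (s_j/b)^2$) goes through at the claimed bit complexity, using $d = O(B)$.

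Two smaller points. First, Fact~\ref{fact:eff-pseudo-distribution} is specifically about optimizing over pseudo-distributions; here you should simply invoke the ellipsoid method on the $(e{+}1)\times(e{+}1)$ spectrahedron, which is a much smaller and more elementary SDP. Second, for the rounding to work you need the \emph{approximate SDP output} $Q_0$ to be close to a strictly interior feasible point, not merely to the feasible set; so one should solve the SDP with the objective of maximizing the smallest eigenvalue of $Q$, and the existence of $\tilde Q\succeq\tau I$ then guarantees the approximate optimum still has eigenvalues $\geq\tau-O(\eta)$, so that projection back onto the affine slice preserves strict PSDness.
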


\begin{lemma}[Simple cancellation within sum-of-squares] \label{lem:simple-cancel}
Let $a$ be an indeterminate and $C$ be some positive constant. Then,
\begin{enumerate}
  \item 
  \[
\{a^2 \leq Ca \} \sststile{2}{a} \Set{a^{2} \leq C^{2}}\mper 
\]
\item 
  \[
\{a^2 \leq C \} \sststile{2}{a} \Set{a \leq \sqrt{C}}\mper 
\]
\end{enumerate}
The total bit complexity of the sum-of-squares proofs is $\poly(C)$.
\end{lemma}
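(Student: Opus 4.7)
The plan is to exhibit explicit sum-of-squares certificates for both parts; in each case there is a short polynomial identity that does the job with degree $2$.

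For part (1), I want to write $C^2-a^2$ as a nonnegative combination of $(Ca-a^2)$ and squares. The natural guess, motivated by completing the square around $C-a$, is the identity
\[
C^2 - a^2 \;=\; 2\,(Ca - a^2) \;+\; (C - a)^2 \mper
\]
One verifies this by expanding: $2Ca - 2a^2 + C^2 - 2Ca + a^2 = C^2 - a^2$. This is exactly a degree-$2$ SoS derivation of $a^2 \le C^2$ from $a^2 \le Ca$, with coefficient $2$ on the hypothesis and the sum-of-squares polynomial $(C-a)^2$; all coefficients are rational of bit complexity $O(\log C)$, hence $\poly(C)$.

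For part (2), the identity to aim for is
\[
2\sqrt{C}\,(\sqrt{C} - a) \;=\; (C - a^2) \;+\; (\sqrt{C} - a)^2 \mper
\]
Again this is straightforward to verify: the right-hand side equals $C - a^2 + C - 2\sqrt{C}\,a + a^2 = 2C - 2\sqrt{C}\,a$. Dividing by the positive constant $2\sqrt{C}$ expresses $\sqrt{C} - a$ as a nonnegative combination of the hypothesis $C - a^2 \ge 0$ and the square $(\sqrt{C}-a)^2$, giving a degree-$2$ SoS proof of $a \le \sqrt{C}$ from $a^2 \le C$. The coefficients involve $\sqrt{C}$, but since $C$ is a fixed positive constant, the bit complexity bound $\poly(C)$ is again comfortably satisfied (and if one insists on rational coefficients, a rational approximation of $\sqrt{C}$ combined with \cref{fact:univariate} gives the same bound).

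The only potentially subtle point is finding the identities themselves, but in both cases they arise from the elementary factorizations $C^2 - a^2 = (C-a)(C+a)$ and $C - a^2 = (\sqrt{C}-a)(\sqrt{C}+a)$ by algebraically combining the hypothesis with the square of the ``negative factor'' to eliminate the unwanted ``positive factor.'' No further ingredient is needed.
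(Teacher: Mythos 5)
Your proof of part (1) uses exactly the same polynomial identity as the paper: the paper's chain $a^2 \le a^2 + (a-C)^2 = C^2 + 2a^2 - 2aC \le C^2$ encodes the identity $C^2 - a^2 = 2(Ca - a^2) + (C-a)^2$, which is precisely what you wrote. For part (2) your route is genuinely different and, I would say, cleaner: you exhibit the single identity $2\sqrt{C}(\sqrt{C}-a) = (C-a^2) + (\sqrt{C}-a)^2$ and divide by $2\sqrt{C}$, whereas the paper first reduces to the normalized case $C=1$ (via the substitution $a \mapsto a/\sqrt{C}$ --- the text's ``$a/C$'' looks like a typo) and then uses the decomposition $a = \tfrac14(a+1)^2 - \tfrac14(1-a)^2 \le \tfrac12(a^2+1) \le 1$. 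Both routes end up putting $\sqrt{C}$ into the proof (the paper's does so through the normalizing substitution), so neither has an advantage on bit complexity; your version skips the reduction step and directly exposes the SoS certificate. Your closing remark about rational approximation via \cref{fact:univariate} correctly flags the only subtlety, and the argument is sound.
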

\begin{proof}
For the first claim, we have:
\[
\{a^2 \leq Ca \} \sststile{2}{a} \Set{a^2 \leq a^2 + (a-C)^2 = C^2+2a^2 - 2aC \leq C^2}\mper 
\]

For the second claim, note that it is enough to prove the claim for $C=1$ (and apply this special case to $a/C$).
Using the fact that $\sststile{2}{a} \Set{(1+a)^2 \leq 2 a^2 + 2}$, we have:
\[
\{a^2 \leq 1 \} \sststile{2}{a} \Set{a = \frac{1}{4} (a+1)^2-\frac{1}{4}(1-a)^2 \leq \frac{1}{2}(a^2 +1) \leq 1}\mper 
\]
\end{proof}

We also need the following fact about random matrices:

\begin{fact}[Singular values of random matrices, consequence of Theorem 2.3.21~\cite{Tao12}]
\label{fact:char-matrix-spectral-norm}
Fix any $\epsilon>0$.
Let $A$ be a $k \times n$ matrix for $k \leq n$ with independent entries with magnitude at most $n^{0.5-\epsilon}$, mean $0$ and variance $1$. 
Then, for large enough $n$, with probability at least $0.99$, $\Norm{A}_2 \leq O(\sqrt{n})$.
\end{fact}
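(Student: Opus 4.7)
The plan is to deduce this as a direct consequence of Theorem 2.3.21 of~\cite{Tao12}, which establishes that for a $k \times n$ random matrix with independent entries of mean zero, variance one, and uniformly bounded magnitudes $K$, the spectral norm satisfies $\Norm{A}_2 = (1+o(1))(\sqrt{k}+\sqrt{n})$ with high probability provided $K$ grows sufficiently slowly (Tao's truncation hypothesis is implied by $K \leq n^{0.5-\epsilon}$ for any fixed $\epsilon>0$).

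The steps are: first, verify that the hypotheses of the cited theorem are met, namely independence, zero mean, unit variance, and entry magnitudes bounded by $n^{0.5-\epsilon}$. Second, invoke the theorem to conclude $\Norm{A}_2 \leq (1+o(1))(\sqrt{k}+\sqrt{n}) \leq 2(1+o(1))\sqrt{n}$ using $k \leq n$, which is $O(\sqrt{n})$. Third, convert ``with high probability'' into probability at least $0.99$, which is automatic for $n$ larger than a threshold depending on $\epsilon$ since high probability means $1-o(1)$.

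If one preferred a self-contained argument, the natural approach is the moment method: bound $\E[\Norm{A}_2^{2p}] \leq \E[\mathrm{tr}((AA^\top)^p)]$ for $p = \Theta(\log n)$ and enumerate closed walks of length $2p$ in the bipartite multigraph on $[k]\cup[n]$ weighted by the edge-moments of $A$. The dominant walks use each edge an even number of times and contribute $(2\sqrt{n})^{2p} \cdot \poly(n)$; Markov's inequality applied to $\Norm{A}_2^{2p}$ then yields the bound. Here the magnitude hypothesis $n^{0.5-\epsilon}$ is precisely what controls the higher-order moments of entries: an entry of magnitude $n^{0.5-\epsilon}$ contributes $n^{(0.5-\epsilon)(2j-2)} = o(n^{j-1})$ to the $2j$-th moment, which is exactly what makes non-dominant walks negligible. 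An alternative self-contained route is an $\eps$-net argument over $S^{k-1}\times S^{n-1}$ combined with Bernstein's inequality on the bilinear form $u^\top A v$, which has variance $1$ and summand magnitudes bounded by $n^{0.5-\epsilon}$; the magnitude bound ensures the sub-Gaussian regime of Bernstein dominates at scale $t = C\sqrt{n}$.

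The main (minor) obstacle is bookkeeping constants so that the implicit $O(\cdot)$ is small enough for the failure probability to lie below $0.01$ rather than merely $o(1)$; this follows from either approach by choosing $p$ (in the moment method) or the leading constant $C$ (in the net-plus-Bernstein approach) large enough to dominate the entropy of the relevant enumeration.
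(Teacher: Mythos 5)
Your proposal takes the same route as the paper: the paper states this Fact with no separate proof, simply labeling it a ``consequence of Theorem 2.3.21'' of~\cite{Tao12}, which is exactly what your first paragraph does (verify hypotheses, invoke the cited theorem, convert ``with high probability'' into a fixed $0.99$ for $n$ large). The self-contained moment-method and $\eps$-net-plus-Bernstein alternatives you sketch do not appear in the paper, but they are standard and would also work.
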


\begin{fact}[Singular values of rectangular random matrices, consequence of Theorem 4.5.1~\cite{vershynin2018high}] \label{fact:vershynin}
Let $A$ be a $k \times n$ matrix for $k \leq n$ with independent entries chosen uniformly from $\{-1,1\}$. Then, with probability at least $0.99$ over the draw of entries of $A$, the largest singular value of $A$ is at most $O(\sqrt{n})$ and the $k$-th smallest singular value of $A$ is at least $\Omega(\sqrt{n} - \sqrt{k-1})$. 
\end{fact}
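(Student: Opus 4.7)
The plan is to invoke Vershynin's Theorem 4.5.1 for sub-gaussian random matrices as a black box and just track constants. Since $A$ is a $k \times n$ matrix with $k \leq n$, it is convenient to apply the theorem to $A^\top$, which is an $n \times k$ matrix with more rows than columns (the ``tall'' regime where Theorem 4.5.1 is usually stated). Because the entries are $\pm 1$, they are mean-zero sub-gaussian with sub-gaussian norm $O(1)$, so Theorem 4.5.1 yields
\[
\sqrt{n} - C(\sqrt{k} + t) \;\leq\; \sigma_{\min}(A^\top) \;\leq\; \sigma_{\max}(A^\top) \;\leq\; \sqrt{n} + C(\sqrt{k} + t)
\]
with probability at least $1 - 2\exp(-c t^2)$ for absolute constants $C, c > 0$. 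Choosing $t$ to be a sufficiently large absolute constant (so that $2 \exp(-c t^2) \leq 0.01$), and using $\sigma_i(A) = \sigma_i(A^\top)$ together with $k \leq n$, one obtains both of the claimed bounds: $\sigma_{\max}(A) \leq \sqrt{n} + C\sqrt{k} + O(1) = O(\sqrt{n})$ and $\sigma_{\min}(A) \geq \sqrt{n} - C\sqrt{k} - O(1) = \Omega(\sqrt{n} - \sqrt{k-1})$.

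If one instead wanted to prove the statement from scratch, the strategy would be a standard $\varepsilon$-net argument. For the upper bound on $\sigma_{\max}(A)$, fix $(1/4)$-nets $\cN_k \subseteq S^{k-1}$ and $\cN_n \subseteq S^{n-1}$ of cardinalities at most $9^k$ and $9^n$, respectively. For each fixed pair $(u,v)$, the bilinear form $u^\top A v = \sum_{i,j} u_i A_{ij} v_j$ is a Rademacher sum with $\|u\|_2^2 \|v\|_2^2 = 1$, so Hoeffding gives $\Pr[|u^\top A v| > \lambda] \leq 2\exp(-c \lambda^2)$. A union bound with $\lambda = C(\sqrt k + \sqrt n)$ and the usual net-inflation argument then yields $\sigma_{\max}(A) = O(\sqrt k + \sqrt n) = O(\sqrt n)$ with the required probability.

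The subtler step is the lower bound on $\sigma_{\min}(A) = \min_{u \in S^{k-1}} \|A^\top u\|_2$. Fix $u \in S^{k-1}$: the coordinates of $A^\top u$ are independent (one per column of $A$), each with mean $0$ and variance $1$, so $\|A^\top u\|_2^2$ is a sum of $n$ independent sub-exponential variables with total expectation $n$. Bernstein's inequality gives $\Pr\bigl[\|A^\top u\|_2^2 \leq n - C\sqrt{n}(\sqrt{k} + t)\bigr] \leq \exp(-c(k+t^2))$, and the $9^k$ cost of the net over $S^{k-1}$ is absorbed by this tail. Taking square roots yields $\sigma_{\min}(A) \geq \sqrt{n} - C(\sqrt k + t)$, which is the desired $\Omega(\sqrt n - \sqrt{k-1})$ form.

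The main obstacle, if one pursued the from-scratch route, would be the tight constant in the lower bound on $\sigma_{\min}$: one needs the coefficient in front of $\sqrt{k}$ to be tight enough that $\sqrt n - C\sqrt k$ is still nontrivially positive (and not a vacuous bound) when $k$ is comparable to $n$. This is exactly what Bai--Yin--type sharp concentration for the extreme singular values of rectangular sub-gaussian matrices delivers, and is ultimately why the cleanest route is to cite Vershynin's Theorem 4.5.1 rather than reprove it.
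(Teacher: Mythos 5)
Your primary argument is exactly the paper's: the fact is stated as a black-box consequence of Vershynin's Theorem 4.5.1 (applied to the tall matrix $A^\top$, using that Rademacher entries are mean-zero sub-gaussian with $O(1)$ $\psi_2$-norm, and choosing $t$ to be a large enough absolute constant so the failure probability drops below $0.01$), and the paper gives no further proof. Your $\varepsilon$-net/Bernstein sketch is a correct account of how Theorem 4.5.1 itself is proved, but it is not needed to match the paper; one small caveat worth keeping in mind is that Theorem 4.5.1 yields $\sigma_{\min}\geq \sqrt{n}-C\sqrt{k}-O(1)$ with an absolute constant $C$ that need not be close to $1$, so the conclusion is only a genuine lower bound (and only of the stated form $\Omega(\sqrt n-\sqrt{k-1})$) once $k$ is at most a constant fraction of $n$ — which is precisely the regime ($k\leq n/2$) in which the paper invokes this fact in Lemma~\ref{lem:sdp-lb}.
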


\section{Certifying biclique bounds in unbalanced random bipartite graphs}
\label{sec:cert-half}

In this section, we develop low-degree sum-of-squares certificates of upper bounds on biclique sizes in unbalanced random bipartite graphs.
We use $H = (U, V, E)$ to denote a bipartite graph with left vertex set $U$, right vertex set $V$, and edge set $E$.

For a bipartite graph $H=(U, V, E)$, let $\cB =\cB(H)$ be the following system of polynomial constraints, which has as solution every biclique $(S,T)$ in $H$ of total size $k$ with $S=\{u \in U \mid x_u=1\}$ and $T = \{v \in V \mid y_v = 1\}$:

\begin{equation} \label{eq:biclique-system}
  \cB(H)\colon
  \left \{
    \begin{aligned}
      &\forall u \in U
      & x_u^2
      & =x_u \\
      &\forall v \in V
      & y_v^2
      & =y_v \\
      &
      & |x| + |y| 
      &= k\\
      &\forall u \in U,v \in V \text{ s.t. } \{u,v\} \not\in E
      & x_u y_v
      & = 0
    \end{aligned}
  \right \}\mper
\end{equation}  
\begin{remark}
Notice that the biclique formulation above places a constraint on the total size of the clique.
This is the natural formulation that arises in our reduction from the semi-random planted clique problem.
Intuitively, given a graph $G \sim \FK(n,k,1/2)$, the bipartite graph we care about is $H=\cut(S^*)$ where $S^*$ is the planted clique of size $k$.
The  bicliques we want to refute are obtained by taking an arbitrary $k$-clique $S$ in $G$ and looking at the induced biclique in $H$ with left vertices $S \cap S^*$ and right vertices $S \setminus S^*$.
In particular, notice that the total size of the biclique is $k$ and as long as $S \neq S^*$ the right hand side of the clique contains at least one vertex.
For more a detailed commentary, we direct the reader to Section~\ref{sec:overview}.
\end{remark}

\david{it is a bit surprising that we care about the total size of the biclique. is there a discussion about that issue somewhere earlier. maybe refer to that discussion here. or add a discussion about it somewhere.}

\pravesh{I think there's a place in the overview that we discuss it. Can also add a remark here. It's natural to care about the total size of the clique in our application (since the "left" hand side of the clique we care about is the intersection of a purported $k$ clique with the planted clique while the RHS is the ``rest''.  The ``hard''/interesting part, so to speak, is in the case when RHS of the clique is $> k/2$. This is because it's easy to rule out bicliques that use $>k/2$ vertices on the left by observing that the max degree of any vertex on the RHS is at most $k/2 + O(\sqrt{k \log n})$. This is also related to the issue that the same power of $X$ appears on LHS and RHS in the vanilla certificate but then using the above observation allows one to fix it.}

For ease of exposition, we will present our certificates and analysis for the most important case of $p=1/2$ first and then follow it up with a generalization to arbitrary $p$ in the following subsection. 

\subsection{The case of $p=1/2$} \label{subsec:certificate-half}

\david{i think the following sentence is redundant in the context of the text above.
  i think we could try to explain to the reader at this point how to think about the r parameter in the theorem below (like we are currently doing in the remark after the theorem).
  i think it is better to do these things before stating the theorem.}

The goal of the following theorem is to show that with high probability over the draw $H \sim B(k,n-k,1/2)$ of a bipartite \Erdos-\Renyi random graph with edge density $p=1/2$ and $k \leq n$, there is a degree $r$ (and thus verifiable in time $n^{O(r)}$) sum-of-squares certificate of (informally speaking) the fact that any $\ell$ by $k-\ell$ biclique with $k-\ell \geq 1$ satisfies $\ell \leq \poly(r) \cdot (n/k)^{O(1/r)}$.
In particular, for any $\epsilon>0$, by choosing $r=O(1/\epsilon)$, we get an $n^{O(1/\epsilon)}$-time verifiable certificate of the absence of $n^{\epsilon} \times (k-n^{\epsilon})$-bicliques in $H$.
Formally, we will prove:

\begin{theorem}[Sum-of-squares certificates for unbalanced bicliques in random bipartite graphs]
  \label{thm:certification-density-half}
  \david{instead of describing the random model again, i think it would be better to say that \(H\sim H(k,n,1/2)\) is a \(k\)-by-\(n\) bipartite Erdos-Renyi graph with edge probability 1/2.
  I think for this theorem it is also more natural to use n instead of n-k as the size of the right vertex set.}
Let $H \sim B(k,n-k,1/2)$ with $k \leq n$ be a bipartite \Erdos-\Renyi graph with edge probability $1/2$.
\david{i think it would be better to upper bound r in term of k and n as opposed to lower bounding k in terms of n and r.}
Then, for any $r \leq O(\frac{k^2}{n \log n})$, with probability at least $0.99$ over the draw of $H$, the sizes of the sets indicated by \(x\) and \(y\) respectively satisfy
\david{not sure about the commas in the previous sentence}
\begin{equation*}
  \cB(H) \sststile{4r+2}{x,y} \Biggl\{ |x|^{4r} |y|
  \leq (1000r)^{10r}n \Paren{\frac{n}{k}}^4 \Biggr\}\mper
\end{equation*}
Further, the total bit complexity of the sum-of-squares proof is $n^{O(r)}$.
\end{theorem}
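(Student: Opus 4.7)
The plan is to implement, inside the sum-of-squares system $\cB(H)$, the geometric ``negatively correlated vectors'' strategy from Proposition~\ref{prop:bicliques-vs-neg-corr} of the overview, but parametrizing by ordered $r$-tuples over $U$ so that $|x|^r$ (rather than $\binom{|x|}{r}$) appears naturally---this avoids a $\binom{|x|}{r}^{\ge 2}\leftrightarrow|x|^{\ge 2r}$ conversion, which does not have a clean bounded-degree SoS proof.

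First, I would establish by two standard Chernoff$+$union-bound arguments that with probability $\ge 0.99$ the graph $H$ satisfies (a) \emph{balancedness}, $\bigl|\sum_v\prod_{u\in O}H(u,v)\bigr|\le \sigma:=O(\sqrt{nr\log n})$ for every nonempty $O\subseteq U$ with $|O|\le 2r$, and (b) \emph{max degree}, $\max_v\deg_H(v)\le D:=k/2+O(\sqrt{k\log n})$. The hypothesis $r\le O(k^2/(n\log n))$ forces $\sigma\le k/8$ and $D\le 3k/4$. I would then derive two SoS consequences inside $\cB(H)$: from the biclique identity $x_u y_v H(u,v)=x_u y_v$ (immediate because $x_u y_v=0$ when $\{u,v\}\notin E$) combined with $1-x_u=(1-x_u)^2$, summing over $v$ yields $|y|(D-|x|)\ge 0$; together with $|x|+|y|=k$ and $k-D\ge k/4$ this gives $|y|^2\ge (k/4)|y|$ and, under $k\ge 8\sigma$, the strengthening $|y|(|y|-\sigma)\ge |y|^2/2$. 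Iterating the biclique identity also yields $x_S x_T\,y_v\,N_{O_{S,T}}(v)=x_S x_T y_v$ in $\cB(H)$, where $O_{S,T}\subseteq U$ denotes the set of indices of odd total count in $S\cup T$.

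The main sum-of-squares identity is $0\le\|\sum_{S\in U^r}G_S\|^2$ with $G_S(v):=x_S(1-y_v)N_S(v)$, $x_S:=x_{i_1}\cdots x_{i_r}$, $N_S(v):=\prod_j H(i_j,v)$ for $S=(i_1,\dots,i_r)$. Expanding using $H(i,v)^2=1$, $y_v^2=y_v$, the iterated biclique identity, and $\sum_S x_S=|x|^r$, this reorganizes to $|y|\,|x|^{2r}\le (n-k)A+\sigma(|x|^{2r}-A)$, where $A:=\sum_{(S,T):\,O_{S,T}=\emptyset}x_S x_T$. Because $O_{S,T}=\emptyset$ forces $|U(S)\cup U(T)|\le r$, every monomial of $A$ has the form $x_W$ with $|W|\le r$; writing $A=\sum_W N(W) x_W$ and $|x|^r=\sum_W s_{r,|W|}x_W$ (with $s_{r,s}=s!\,S(r,s)\ge 1$ the standard surjection counts), the constant $C_r:=\max_W N(W)/s_{r,|W|}\le r^{O(r)}$ gives the SoS bound $A\le C_r|x|^r$ (since $C_r|x|^r-A$ is a nonnegative combination of the nonnegative monomials $x_W$). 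Substituting yields the base inequality $(|y|-\sigma)|x|^{2r}\le nC_r|x|^r$ at degree $2r+1$.

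The amplification runs as follows: multiplying by $|y|$ and applying $|y|(|y|-\sigma)\ge |y|^2/2$ rearranges to $(|y||x|^r)^2\le 2nC_r\cdot(|y||x|^r)$, to which \cref{lem:simple-cancel} yields $|y|^2|x|^{2r}\le (2nC_r)^2$; $|y|^2\ge (k/4)|y|$ then gives $|y||x|^{2r}\le 16n^2 C_r^2/k$ at degree $2r+1$. Squaring once more via \cref{lem:simple-cancel} produces $|y|^2|x|^{4r}\le 256n^4 C_r^4/k^2$ at degree exactly $4r+2$, and a final application of $|y|^2\ge (k/4)|y|$ delivers $|y||x|^{4r}\le 1024\,C_r^4\,n^4/k^3$ at degree $4r+1$. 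Since $C_r\le r^{O(r)}$ and $n^4/k^3\le n(n/k)^4$ for $k\le n$, this is at most $(1000r)^{10r}n(n/k)^4$, as required. The main obstacle is fitting everything into degree $4r+2$: the naive route of raising the degree-$(r+1)$ inequality $|y||x|^r\le 2nC_r$ to the fourth power overshoots to degree $4r+4$; the workaround is to square twice via the cancellation lemma (each squaring costing only one extra degree) while interposing the degree-$2$ step $|y|^2\ge (k/4)|y|$ between the two squarings, which is precisely what brings the overall budget down from $4r+4$ to $4r+2$. Bit-complexity bookkeeping is then routine because every intermediate polynomial coefficient is bounded by $n^{O(r)}$.
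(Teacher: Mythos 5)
Your proposal is correct and implements the same geometric ``negatively correlated vectors'' idea as the paper's proof (compare the paper's Lemma~\ref{lem:biclique-certificate}, where the central identity is $0 \leq \|\sum_{|S|=r} x_S u_S'\|^2$ with $u_S'(i) = u_S(i)(1-y_i)$), but with two genuine structural differences that are worth noting.

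First, the paper parametrizes by unordered size-$r$ subsets $S\subseteq U$, which means its base inequality (Lemma~\ref{lem:biclique-certificate}) is stated in terms of $\sum_{|S|=r}x_S$ rather than $|x|^r$; this forces it to invoke a univariate Positivstellensatz argument (Lemma~\ref{lem:power-vs-sets}, via Fact~\ref{fact:univariate}) to relate $\sum_{|S|=r}x_S$ to $|x|^r$. You parametrize by ordered $r$-tuples $S\in U^r$ so that $\sum_S x_S = |x|^r$ directly, avoiding Lemma~\ref{lem:power-vs-sets} entirely; the price is that the ``diagonal'' of your expansion is no longer just $S=T$ but all pairs with $O_{S,T}=\emptyset$, which you handle with the combinatorial bound $A \leq C_r|x|^r$, $C_r \leq r^{2r}$, obtained by comparing the coefficient $N(W)$ of each surviving monomial $x_W$ ($|W|\leq r$) to the surjection count $s_{r,|W|}\geq 1$. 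Both routes pick up an $r^{O(r)}$ constant, so neither costs anything quantitatively. Second, the paper amplifies the degree-$(2r+1)$ base inequality by multiplying by a single coordinate $y_j$ (so that $y_j^4 = y_j$ can be used at the last step), applying Fact~\ref{fact:cancellation-within-SoS} at $t=2$, and then summing over $j$; you instead multiply by $|y|$ and do two $t=1$ squarings, interposing the degree-bound step $|y|^2 \geq (k/4)|y|$ (derived from the max-degree bound and $|x|+|y|=k$) between them. These are essentially equivalent, and both land at degree $4r+2$.

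Two small slips: where you write ``to which \cref{lem:simple-cancel} yields $|y|^2|x|^{4r}\leq\cdots$'' for the squaring step, what you actually need is the $t=1$ instance of Fact~\ref{fact:cancellation-within-SoS}, since Lemma~\ref{lem:simple-cancel}(1) takes $a^2\leq Ca$ as premise, while here you have $a\leq C$ together with $a\geq 0$ (your first use, where you do have $a^2\leq Ca$, is fine). Also, the intermediate degree claims ``at degree $2r+1$'' and ``at degree $4r+1$'' should both be off by one---the derivations multiply a degree-$2$ inequality by $|x|^{2r}$ and $|x|^{4r}$ respectively, giving $2r+2$ and $4r+2$---but this does not change the final budget of $4r+2$.
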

 As a corollary, we obtain that with high probability over the choice of $H \sim B(k,n-k,1/2)$, for every pseudo-distribution $D$ of degree at least $4r+2$ satisfying $\cB(H)$, we must have $\pE_{D}[|x|^{4r} |y|] \leq (1000r)^{10r} n (n/k)^4$.

\begin{remark}
  \david{i think it is not necessary to talk about pseudo distributions here. it is clear that an sos proof implies that the inequality holds for every biclique.}
Observe that if $H$ contains an $\ell$ by $k-\ell$ biclique for $k-\ell \geq 1$ then the above theorem yields that $\ell^{4r} (k-\ell)\leq |x|^{4r}|y|\leq (1000r)^{10r} n(n/k)^4$ and thus, $\ell \leq \poly(r) \cdot n^{O(1/r)}$. That is, there exist degree $O(r)$ certificates of absence of $\ell$ by $k-\ell$ bicliques in $H$ for $\ell \sim n^{O(1/r)}$. 

\end{remark}

Our proof of Theorem~\ref{thm:certification-density-half} uses two simple pseudorandom properties of the graph $H$ and thus works for all graphs that satisfy these properties.
For every $S \subseteq U$, let $u_S$ be a vector in $\{-1,1\}^{|V|}$ so that 
$u_S(j) = \prod_{i \in S} H(i, j)$.
Then, we will need the following $r$-fold balancedness property that informally asks that the vectors $u_S$ be nearly balanced for all subsets $S \subseteq U$ of size at most $r$.
Additionally, we will need that every vertex on the \emph{right} side of $H$ has degree no larger than $k/2+O(\sqrt{k \log n})$. 

\begin{definition}[Balancedness]
  Let $H = (U, V, E)$ be a bipartite graph.
  For every $S \subseteq U$, let $u_S$ be the $|V|$-dimensional vector defined by setting $u_S(j)= \prod_{i \in S} H(i,j)$.
  Then, we say that $H$ has \emph{$r$-fold balancedness} $\Delta_r$ if, for all $S \subseteq U$ of size $|S| \leq r$, it holds that $|\sum_{j \in V} u_S(j)|\leq \Delta_r$.
\end{definition}

The following lemma verifies that the two pseudorandom properties hold for random bipartite graphs by a simple application of Hoeffding's inequality and union bounds.

\begin{lemma}[Balancedness of random bipartite graphs]\label{lem:balacedness-density-half}
Let $H = (U,V,E) \sim B(k, n-k, 1/2)$.
Then, for any $r \leq |U|$, with probability at least $0.99$ over the draw of $H$,
1) $H$ has $r$-fold balancedness $O(\sqrt{r n \log k})$, and
2) the maximum degree of a vertex in $V$ is at most $k/2 + O(\sqrt{k\log n})$.
\end{lemma}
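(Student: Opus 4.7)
Both parts are routine concentration-plus-union-bound arguments; the plan is to control each of the two bad events by Hoeffding's inequality, and then combine via a final union bound over the two events, arranging constants so that each fails with probability at most $0.005$.

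For part (1), the key observation is that under the $\pm 1$ adjacency convention, the entries $\{H(i,j)\}_{i \in U, j \in V}$ are i.i.d.\ uniform signs when $H \sim B(k, n-k, 1/2)$. Hence for any fixed nonempty $S \subseteq U$, the random variable $u_S(j) = \prod_{i \in S} H(i,j)$ is itself a uniform sign (a product of independent Rademachers is Rademacher), and the family $\{u_S(j)\}_{j \in V}$ is mutually independent because the products are taken over pairwise disjoint subsets of the entries of $H$. Thus $\sum_{j \in V} u_S(j)$ is a sum of at most $n$ independent $\pm 1$ variables, and Hoeffding yields
\begin{equation*}
  \Pr\Bigl[\,\bigl|\textstyle\sum_{j \in V} u_S(j)\bigr| > C\sqrt{r n \log k}\,\Bigr] \;\leq\; 2 \exp\bigl(-C^2 r \log k / 2\bigr) \;=\; 2 k^{-C^2 r/2}\mper
\end{equation*}
The number of nonempty subsets of $U$ of size at most $r$ is at most $\sum_{s=1}^{r}\binom{k}{s} \leq k^r$, so a union bound gives failure probability at most $2k^{r(1 - C^2/2)}$, which is below $0.005$ once $C$ is a sufficiently large absolute constant.

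For part (2), each vertex $j \in V$ has degree $\deg(j) \sim \mathrm{Binomial}(k,1/2)$, so Hoeffding gives $\Pr[\deg(j) > k/2 + C'\sqrt{k \log n}] \leq \exp(-2(C')^2 \log n) = n^{-2(C')^2}$. A union bound over the at most $n$ right vertices yields failure probability at most $n^{1 - 2(C')^2}$, which is below $0.005$ for any fixed $C' > 1$ and $n$ large enough.

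There is no real obstacle; the only points requiring a moment of care are the conventions (treating the $S = \emptyset$ case as excluded in the definition of $r$-fold balancedness, so that we only quantify over $1 \leq |S| \leq r$), the verification that products of independent Rademachers over disjoint coordinates remain independent Rademachers, and checking that the constant $C$ absorbed into the $O(\sqrt{rn \log k})$ bound is precisely what the union bound over $k^r$ subsets demands (this is why the factor $\sqrt{r}$, rather than $\sqrt{1}$, appears inside the square root).
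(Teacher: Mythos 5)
Your proposal is correct and follows essentially the same route as the paper's: for each fixed $S$ apply Hoeffding to $\sum_{j} u_S(j)$, union-bound over the at most $k^r$ subsets, and separately bound right-vertex degrees via Hoeffding plus a union bound over $V$. The extra detail you spell out — that $u_S(j)$ is a Rademacher variable and the family $\{u_S(j)\}_{j}$ is independent because distinct columns of $H$ are disjoint — is implicit in the paper's "mean $0$, bounded in $[-1,1]$" setup for Hoeffding, so the two arguments coincide.
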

\begin{proof}
For $S \subseteq U$ such that $|S| \leq r$, we have that $u_{S}(j)$ has mean $0$ and is bounded between $-1$ and $1$.
Then, by Hoeffding's inequality, 
\[\Pr\Brac{ \left| \sum_{j \in V} u_{p,S}(j)u_{p,T}(j) \right| \geq t \sqrt{|V|}} \leq 2e^{-t^2/2} \mcom \]
so, by a union bound over all choices of $S$,
\[\Pr\Brac{ \exists S \subseteq U \text{ s.t. } |S| \leq r, \left| \sum_{j \in V} u_{p,S}(j)u_{p,T}(j) \right| \geq t \sqrt{|V|}} \leq |U|^{r} \cdot 2e^{-t^2/2} \mper \]
Choosing $t = O(\sqrt{r \log U})$ makes the right-hand side a small constant, so we have $r$-fold balancedness $O(\sqrt{r|V| \log |U|})$.

The degree of a vertex in $V$ is a binomial random variable $\operatorname{Bin}(k, 1/2)$, which by standard bounds is larger than $k/2+t$ with probability at most $e^{-t^2/k}$.
By a union bound over all vertices in $V$, the maximum degree is larger than $kp+t$ with probability at most $|V|e^{-t^2/k}$, so choosing $t=O(\sqrt{k \log |V|})$ makes the probability a small constant.
Hence, the maximum degree is at most $k/2+O(\sqrt{k\log |V|})$.
\end{proof}

The key component of the proof of Theorem ~\ref{thm:certification-density-half} is the following lemma that gives a sum-of-squares certificate of an upper bound on a quantity closed related to $|x|^{4r}|y|$.

\begin{lemma}
\label{lem:biclique-certificate}
Let $H = (U, V, E)$ be a bipartite graph with $|U|=k$ and $|V|=n-k$ and $2r$-fold balancedness $\Delta_{2r}$.
Then, 

\begin{equation}
\label{eq:biclique-main-2}
\cB(H) \sststile{4r}{x,y} \Biggl\{  \Paren{\sum_{|S|=r} x_S}^{2} |y| \leq 
  n \Paren{\sum_{|S|=r} x_S} +  \Delta_{2r} \Paren{\sum_{|S|=r} x_S}^{2} 
\Biggr\}\mper 
\end{equation}
Further, the total bit complexity of the sum-of-squares proof is $n^{O(r)}$.

\end{lemma}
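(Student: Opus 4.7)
The plan is to ``linearize'' the biclique non-edge axioms $x_u y_v = 0$ by inserting $\pm 1$ graph-parity factors $u_S(v) := \prod_{u \in S} H(u, v)$ into monomials of the form $y_v x_S x_T$, reducing the claim to the $2r$-fold balancedness hypothesis. The one genuinely nontrivial step is to establish, for every $v \in V$ and every pair of $r$-subsets $S, T \subseteq U$, the parity-insertion identity
\[
\cB(H) \sststile{4r}{x, y} \Set{ y_v x_S x_T = y_v x_S x_T \cdot u_{S \Delta T}(v) }.
\]
Because $u_{S \Delta T}(v)$ is a $\pm 1$ scalar fixed by $H$, the identity is automatic when $u_{S \Delta T}(v) = +1$; when $u_{S \Delta T}(v) = -1$, the odd parity of missing edges from $S \Delta T$ to $v$ forces some $u^* \in S \Delta T$ with $\{u^*, v\} \notin E$, so the axiom $x_{u^*} y_v = 0$ multiplied by $g := x_{(S \cup T) \setminus \{u^*\}}$ (of degree $\leq 2r - 1$) kills $y_v x_S x_T$; rewriting $g = \tfrac{1}{4}\bigl[(g + 1)^2 - (g - 1)^2\bigr]$ converts this into an SoS derivation of total degree at most $4r$.

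With the identity in hand, I would use $u_S(v) u_T(v) = u_{S \Delta T}(v)$ (from $H(u, v)^2 = 1$) to repackage as follows:
\[
y_v \Paren{\sum_{|S| = r} x_S}^2 = \sum_{S, T} y_v x_S x_T = \sum_{S, T} y_v x_S x_T u_S(v) u_T(v) = y_v \Paren{\sum_{|S| = r} x_S u_S(v)}^2,
\]
where the middle step applies the parity-insertion identity termwise. Since $y_v^2 = y_v$ gives $1 - y_v = (1 - y_v)^2$, the polynomial $(1 - y_v) \Paren{\sum_S x_S u_S(v)}^2 = \Paren{(1 - y_v) \sum_S x_S u_S(v)}^2$ is a manifest square of degree $2r + 2$. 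Summing over $v$ and invoking the previous display gives
\[
\Paren{\sum_{|S| = r} x_S}^2 |y| \;\leq\; \sum_{v \in V} \Paren{\sum_{|S| = r} x_S u_S(v)}^2.
\]

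It then remains to expand the right-hand side as $\sum_{|S| = |T| = r} x_S x_T \sum_{v \in V} u_{S \Delta T}(v)$ and split by whether $S = T$. The diagonal contributes $(n - k) \sum_S x_S^2 = (n - k) \sum_S x_S \leq n \sum_S x_S$ using $x_S^2 = x_S$; for the off-diagonal, $S \Delta T$ is nonempty of even size at most $2r$, so $2r$-fold balancedness gives $|\sum_v u_{S \Delta T}(v)| \leq \Delta_{2r}$, and multiplying the non-negative scalar $\Delta_{2r} - \sum_v u_{S \Delta T}(v)$ by the square $x_S x_T = (x_S x_T)^2$ (modulo $x_u^2 = x_u$) yields an SoS bound of $\Delta_{2r} \cdot x_S x_T$, summing to $\leq \Delta_{2r} \Paren{\sum_{|S| = r} x_S}^2$. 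Combining the two pieces delivers the target inequality at total SoS degree $4r$ and bit complexity $n^{O(r)}$ (the sums run over $O(n^{2r + 1})$ triples $(S, T, v)$ with $\pm 1$ or $O(n)$ coefficients). The only place requiring careful degree accounting is the parity-insertion identity, since everything else is either a non-negative scalar multiple of a manifest square or uses only the trivial equality $x_u^2 = x_u$.
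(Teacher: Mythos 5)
Your proof is correct and takes essentially the same route as the paper: both arguments hinge on the nonnegativity of $\sum_{v}(1-y_v)\bigl(\sum_{|S|=r} x_S u_S(v)\bigr)^2$ (the paper packages this as $\Norm{\sum_S x_S u_S'}_2^2$ with $u_S'(v) = u_S(v)(1-y_v)$), the parity-insertion consequence of the non-edge axioms ($y_v x_S x_T = y_v x_S x_T u_{S\Delta T}(v)$, which the paper states as $x_S u_S(i) y_i = x_S y_i$), and then a diagonal/off-diagonal split using $2r$-fold balancedness. The paper expands the squared norm directly and bounds the cross terms in place, whereas you first establish the intermediate identity $y_v\bigl(\sum_S x_S\bigr)^2 = y_v\bigl(\sum_S x_S u_S(v)\bigr)^2$ before summing over $v$, but this is purely a difference of bookkeeping, not of substance.
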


\begin{remark}[Proof plan]
\label{rem:proof-plan}
In order to interpret this lemma, we suggest the readers to think of $\sum_{|S|=r} x_S \approx |x|^{r}$ (this is formally shown to be fine in Lemma~\ref{lem:power-vs-sets}).
Then, rearranging the conclusion of the lemma yields a statement of the form $|x|^{2r} (|y|-\Delta_{2r}) \leq |x|^r n$. 
At this point, ``in real life'' (as opposed to within the sum-of-squares proof system), we could reason as follows: if $|y| > \Delta_{2r}$, then ``canceling'' $|x|^{r}$ from both sides and ``dividing through'' by $(|y|-\Delta_{2r})$ yields that $|x|^r \leq n$, giving us a bound on the left hand side of biclique as desired. On the other hand, if $|y| \leq \Delta_{2r}$, then for $k \gg \Delta_{2r}$ we have $|x| \gg k/2$, which can be ruled out by the upper bound on the maximum degree of a vertex on the right side.

This argument, however, is not easy to implement within the low-degree sum-of-squares proof system because of the case analysis involved. 
Indeed, a similar issue arises in the context of list-decodable learning and robust clustering algorithms that rely on certifiable anticoncentration (see overview of~\cite{BK20b} for a discussion and a general resolution, and also the discussion on the need for \emph{a priori} bounds in~\cite{DHKK20}).
In our situation, we can resolve this need using a more straightforward observation (see Lemma~\ref{lem:lower-bound-LHS-trick}).
The rest of the steps above can indeed by implemented within low-degree sum-of-squares via cancellation inequalities (e.g., see Fact~\ref{fact:cancellation-within-SoS}).
\end{remark}

We postpone the proof of this key lemma and first show how to use it.
The following simple lemma uses a bound on the degree of the right vertices in $H$ in order to lower bound the LHS of the conclusion of Lemma~\ref{lem:biclique-certificate}.
This will allow us to eliminate the term $\Delta_{2r} \Paren{\sum_{|S|=r} x_S}^2$ from the RHS of the conclusion of Lemma~\ref{lem:biclique-certificate}.

\begin{lemma}[Lower bounding the LHS of \eqref{eq:biclique-main-2}]
\label{lem:lower-bound-LHS-trick}
Let $H = (U, V, E)$ be a bipartite graph with $|U|=k$ and $|V|=n-k$ and maximum degree of a vertex in $V$ at most $k/2+\Delta_{\ell}$.
Then,  for any $j \in V$, we have:

\begin{equation*}
\cB(H) \sststile{4r+2}{x,y} \Biggl\{ y_j \Paren{\sum_{S: |S|=r} x_S}^2 |y| \geq \Paren{\frac{k}{2}-\Delta_{\ell}} y_j \Paren{\sum_{S: |S|=r} x_S}^2  \Biggr\}\mper
\end{equation*}
Further, the total bit complexity of the sum-of-squares proof is $n^{O(r)}$.

\end{lemma}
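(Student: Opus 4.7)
The plan is to decouple the argument into a scalar degree-bound inequality and a sum-of-squares multiplier step. Concretely, I would first establish the degree-$4$ consequence
\[
\cB(H) \sststile{4}{x,y} \Bigl\{ y_j |y| \geq \Paren{\tfrac{k}{2} - \Delta_\ell} y_j \Bigr\},
\]
using only the boolean constraints, the non-edge constraints $x_u y_j = 0$, the size constraint $|x|+|y|=k$, and the degree-bound hypothesis on vertex $j$. The conclusion of the lemma then follows by multiplying through by the explicit square $(\sum_{|S|=r} x_S)^2$, which is manifestly a sum of squares and contributes degree $2r$.

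The central identity in the first step is $y_j(1-x_u) \equiv (y_j - y_j x_u)^2$ modulo the boolean constraints $y_j^2 = y_j$ and $x_u^2 = x_u$: expanding the square and reducing each of $y_j^2$ and $x_u^2$ recovers $y_j - y_j x_u$. This gives a degree-$4$ SOS proof that $y_j x_u \leq y_j$ for every $u \in U$. Next, for every non-edge $\{u,j\} \notin E$, the equality constraint $x_u y_j = 0$ collapses $y_j|x|$ to $\sum_{u \in N_H(j)} y_j x_u$. Summing the previous inequality over $u \in N_H(j)$ and using $|N_H(j)| \leq k/2 + \Delta_\ell$ yields $y_j|x| \leq (k/2 + \Delta_\ell) y_j$. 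Invoking the size constraint $|x|+|y|=k$ with the multiplier $y_j$ then rewrites the left side and produces the scalar inequality $y_j |y| \geq (k/2 - \Delta_\ell) y_j$.

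For the final step, multiplying by the square $(\sum_{|S|=r} x_S)^2$ preserves the SOS derivation and raises the degree from $4$ to $2r+4 \leq 4r+2$ (valid for $r \geq 1$). The bit complexity bound $n^{O(r)}$ comes entirely from expanding the square multiplier over the $\binom{k}{r}^2$ pairs of monomials; every other piece of the proof has constant bit complexity per summand and there are at most $k$ summands. I do not anticipate any real obstacle: every step is either a routine SOS manipulation or a direct use of an equality constraint in $\cB(H)$. The only care point is the explicit bookkeeping of the polynomial multipliers of $y_j^2 - y_j$ and $x_u^2 - x_u$ (one can check that taking $-(1-x_u)^2$ and $-y_j$ respectively works) in the identity for $y_j(1-x_u)$, which is what pins down the constant ``degree-$4$ cost'' of the boolean core before the $2r$-degree square multiplier is applied.
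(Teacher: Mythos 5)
Your proposal is correct and follows essentially the same route as the paper's proof: derive the scalar inequality $y_j|y| \geq (k/2 - \Delta_\ell)y_j$ from the degree bound on $j$, the non-edge constraints, and the size constraint $|x|+|y|=k$, then multiply by the explicit square $(\sum_{|S|=r}x_S)^2$. Your account is actually a bit more careful on the boolean bookkeeping — the identity $y_j(1-x_u)\equiv (y_j - y_j x_u)^2$ pins down the degree-$4$ cost of the scalar core, whereas the paper labels that intermediate step degree $2$ — but your resulting bound $2r+4\le 4r+2$ (for $r\ge 1$) matches the lemma's claimed degree.
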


\begin{proof}
Every $j \in V$ has degree at most $\frac{k}{2}+\Delta_{\ell}$.
Thus, we have using the constraint system $\cB(H)$
\[
\cB(H) \sststile{2}{x,y} \Set{ y_j |x| = \sum_{i \in U : \set{i, j} \in E} x_i y_j \leq \Paren{\frac{k}{2}+\Delta_{\ell}} y_j}\mper
\]

Thus, $\cB(H) \sststile{2}{x,y} \Set{|x| = k-|y|}$ allows us to conclude:
\[
\cB(H) \sststile{2}{x,y} \Set{ y_j |y| \geq \Paren{\frac{k}{2}-\Delta_{\ell}} y_j}
\]
Multiplying both sides by the sum-of-squares polynomial $\Paren{\sum_{S: |S|=r} x_S}^2$ completes the proof.
\end{proof}

As a direct consequence of Lemma~\ref{lem:biclique-certificate} and Lemma~\ref{lem:lower-bound-LHS-trick}, we obtain:
\begin{lemma}\label{lem:biclique-certificate-final}
Let $H = (U, V, E)$ be a bipartite graph with $|U|=k$ and $|V|=n-k$ and $2r$-fold balancedness $\Delta_{2r}$ and maximum degree of a vertex in $V$ at most $k/2+\Delta_{\ell}$. 
Suppose further that $\frac{k}{2} - \Delta_{\ell} - \Delta_{2r} \geq \frac{k}{4}$.
Then, we have:

\begin{equation*}
\cB(H) \sststile{4r+2}{x,y} \Biggl\{  \Paren{\sum_{S: |S|=r}x_S}^4 |y| \leq n\Paren{\frac{4n}{k}}^4
\Biggr\}\mper 
\end{equation*}
Further, the total bit complexity of the sum-of-squares proof is $n^{O(r)}$.
\end{lemma}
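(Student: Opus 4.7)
The plan is to derive the target bound in two stages by combining the two previous lemmas: first, produce the intermediate inequality $X^2|y|\leq CX|y|$, where $X := \sum_{|S|=r} x_S$ and $C := 4n/k$; second, use a weighted square-completion identity twice to ``cancel'' the surplus $X$-factor and reach $X^4|y|\leq C^4|y|$, and then invoke the trivial SoS bound $|y|\leq n$.

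For the first stage, I multiply the conclusion of Lemma~\ref{lem:biclique-certificate} by $|y|$ (which is SoS under $\cB$ since $|y|=\sum_v y_v^2$) to obtain $X^2|y|^2 \leq nX|y|+\Delta_{2r}X^2|y|$. Summing the per-vertex statement of Lemma~\ref{lem:lower-bound-LHS-trick} over $j\in V$ yields the complementary lower bound $X^2|y|^2 \geq (k/2-\Delta_\ell)X^2|y|$. Chaining these and invoking the hypothesis $k/2-\Delta_\ell-\Delta_{2r}\geq k/4$ gives $(k/4)X^2|y|\leq nX|y|$, i.e.\ $X^2|y|\leq CX|y|$.

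For the second stage, the identity
\[
  C^2|y| - X^2|y| \;=\; (X-C)^2|y| \;+\; 2\bigl(CX|y|-X^2|y|\bigr)
\]
exhibits $C^2|y|-X^2|y|$ as a sum of the SoS polynomial $(X-C)^2|y|$ and twice the non-negative slack of the first-stage inequality; this is the weighted analogue of Lemma~\ref{lem:simple-cancel} and certifies $X^2|y|\leq C^2|y|$. Multiplying the resulting $(C^2-X^2)|y|\geq 0$ by the SoS polynomial $C^2+X^2$ yields $(C^4-X^4)|y|\geq 0$, i.e.\ $X^4|y|\leq C^4|y|$. Finally, $|x|+|y|=k$ together with $|x|=\sum_i x_i^2\geq 0$ gives $|y|\leq k\leq n$ in SoS, so $X^4|y|\leq C^4|y|\leq nC^4 = n(4n/k)^4$.

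The main technical subtlety will be to keep the composite SoS proof within degree $4r+2$: every step above respects this bound except the final multiplication by $C^2+X^2$, whose polynomial degree $2r$ naively inflates the proof degree. The hard part will be absorbing or avoiding this blow-up, either by reordering the operations so that the multiplication happens at a correspondingly shallower stage or by replacing the two separate cancellations with a single composite polynomial identity that passes from $X|y|(C-X)\geq 0$ directly to $(C^4-X^4)|y|\geq 0$ without first materialising the intermediate $(C^2-X^2)|y|$.
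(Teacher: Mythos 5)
Your Stage 1 is sound and essentially equivalent (in aggregated form) to what the paper does: chaining Lemma~\ref{lem:biclique-certificate} with Lemma~\ref{lem:lower-bound-LHS-trick} and the hypothesis $\tfrac k2-\Delta_\ell-\Delta_{2r}\geq\tfrac k4$ to obtain $X^2|y|\leq CX|y|$ with $C=4n/k$. The genuine gap is exactly where you flag it: your Stage 2 proves $(C^2-X^2)|y|\geq 0$ at degree $4r+2$, but the subsequent multiplication by the SoS polynomial $C^2+X^2$ (degree $2r$) pushes the proof of $(C^4-X^4)|y|\geq 0$ to degree $6r+2$, and none of the two fixes you gesture at is carried out. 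Since the whole point of the lemma is the degree bound $4r+2$, the proposal does not establish the claimed statement.

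The paper avoids this by never aggregating before cancelling. It keeps the inequality in \emph{per-vertex} form $y_j X^2\leq C\,y_j X$ (Lemmas~\ref{lem:biclique-certificate}, \ref{lem:lower-bound-LHS-trick} multiplied through by $y_j^2$), then applies Fact~\ref{fact:cancellation-within-SoS} with the substitution $a=y_j X$, $t=2$, getting $a^4\leq C^4$, i.e.\ $y_j X^4\leq C^4$ (using $y_j^4=y_j$). Because $C$ is a \emph{scalar}, the cancelled conclusion has a constant right-hand side; summing over $j\in V$ then gives $X^4|y|\leq (n-k)C^4\leq n(4n/k)^4$ without any further multiplication by a degree-$2r$ polynomial. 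The crucial structural point you miss is that the cancellation must be performed with $y_j$ already folded into the indeterminate $a$, so that the boolean reduction $y_j^2=y_j$ absorbs the extra powers of $y_j$ and the degree stays at $O(1)\cdot\deg(a)=4r+O(1)$; once you aggregate to $|y|$ first, there is no such absorption and you are forced into the degree-$6r$ route. To repair your argument you would simply redo Stage 2 per-vertex: from $y_j X^2\leq C\,y_j X$ apply the cancellation fact to $a=y_jX$ and sum — this is precisely the paper's proof.
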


\begin{proof}
We first multiply both sides of the conclusion of Lemma~\ref{lem:biclique-certificate} with the sum-of-squares polynomial $y_j^2$ for an arbitrary $j \in V$:

\begin{equation*}
\cB(H) \sststile{4r+2}{x,y} \Biggl\{  y_j\Paren{\sum_{|S|=r} x_S}^{2} |y| \leq 
  n y_j \Paren{\sum_{|S|=r} x_S} +  \Delta_{2r} y_j \Paren{\sum_{|S|=r} x_S}^{2}
\Biggr\}\mper 
\end{equation*}

Next, we use Lemma~\ref{lem:lower-bound-LHS-trick} to replace the left-hand side by a useful lower bound: 

\begin{equation*}
\cB(H) \sststile{4r+2}{x,y} \Biggl\{   \Paren{\frac{k}{2}-\Delta_\ell} y_j \Paren{\sum_{S: |S|=r} x_S}^2   \leq 
  n y_j \Paren{\sum_{|S|=r} x_S} +  \Delta_{2r} y_j  \Paren{\sum_{|S|=r} x_S}^{2}
\Biggr\}\mper 
\end{equation*}

We then move the second term on the right-hand side to the left-hand side and use that $\frac{k}{2} -\Delta_{\ell}- \Delta_{2r} \geq \frac{k}{4}$ to conclude:

\begin{equation*}
\cB(H) \sststile{4r+2}{x,y} \Biggl\{  y_j \Paren{\sum_{S: |S|=r} x_S}^2  \leq 
\frac{4n}{k} y_j \Paren{\sum_{|S|=r} x_S} 
\Biggr\}\mper 
\end{equation*}

We finally apply Lemma~\ref{fact:cancellation-within-SoS} with $a = y_j (\sum_{S: |S|=r} x_S)$, $C = \frac{4n}{k}$, and $t=2$ to obtain:
\begin{equation*}
\cB(H) \sststile{4r+2}{x,y} \Biggl\{  y_j \Paren{\sum_{S: |S|=r}x_S}^4 \leq y_j^4 \Paren{\sum_{S: |S|=r}x_S}^4 \leq \Paren{\frac{4n}{k}}^4
\Biggr\}\mper 
\end{equation*}

Summing up as $j$ varies over $V$ completes the proof.

\end{proof}

Finally, we invoke the following simple observation that allows us to replace $\sum_{|S| =r} x_S$ by $|x|^r$:
\begin{lemma}[]  \label{lem:power-vs-sets}
For every $\epsilon>0$, there is a sum-of-squares proof with coefficients of bit complexity $O(\poly(|U|, \log 1/\epsilon))$
\begin{equation} \label{eq:power-vs-sets}
\Biggl\{ x_i^2 = x_i \text{ } \forall i \in U \Biggr\} \sststile{r}{x} \Biggl\{\frac{1}{2^r r!} \Paren{\sum_i x_i}^r - \frac{2r^r}{r!} - \epsilon \leq \sum_{|S|=r} x_S \leq \frac{1}{r!} \Paren{\sum_i x_i}^r + \epsilon
\Biggr\}\mper
\end{equation}
\end{lemma}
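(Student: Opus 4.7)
The plan is to derive both inequalities from a single polynomial identity, namely the Stirling-type expansion
\[
r!\, \sum_{|S|=r} x_S \;=\; |x|(|x|-1)(|x|-2)\cdots(|x|-r+1),
\]
which holds modulo the booleanity axioms $\{x_i^2 = x_i\}$. I would first prove this identity by expanding $|x|^r = \Paren{\sum_i x_i}^r$ as a sum over tuples $(i_1,\dots,i_r)$, grouping tuples by the support $S$ of size $s\le r$, and using $x_i^2 = x_i$ to collapse each monomial to $x_S$. This yields the finer expansion $|x|^r = \sum_{s=1}^{r} N(r,s)\, e_s(x)$ modulo booleanity, where $e_s(x) \defeq \sum_{|S|=s} x_S$ and $N(r,s)$ counts surjections $[r]\to[s]$; both sides have rational coefficients of bit complexity $\poly(r)$.

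For the upper bound I would then use $|x|^r - r!\,e_r(x) = \sum_{s=1}^{r-1} N(r,s)\, e_s(x)$ together with the fact that each $e_s(x)$ is a sum of squares modulo booleanity: for any $S$, $x_S = (x_S)^2 + \sum_{i\in S}(x_i^2-x_i)\,q_i(x)$ for suitable polynomials $q_i$, so $x_S$ equals a square modulo the booleanity equalities. Summing with nonnegative coefficients $N(r,s)/r!$ and absorbing the slack $\epsilon$ gives $e_r(x) \le |x|^r/r! + \epsilon$.

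For the lower bound the identity reduces the claim to the univariate statement
\[
g(t)\;\defeq\;2^r\prod_{i=0}^{r-1}(t-i)\;+\;2\cdot 2^r r^r\;-\;t^r\;\geq\;0\qquad\text{for all }t\geq 0.
\]
I plan to verify this via a three-part case split: on $t\in[0,r]$ both $t^r$ and $2^r\prod_i|t-i|$ are bounded by $2^r r^r$, so $g(t)\geq(2^r-1)r^r$; on $t\in[r,2r]$ every factor $(t-i)\geq 0$ and $t^r\leq(2r)^r$, giving $g(t)\geq 2^r r^r$; for $t\geq 2r$ each $t-i\geq t/2$, so $2^r\prod(t-i)\geq t^r$ and $g(t)\geq 2\cdot 2^r r^r$. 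The half-line Positivstellensatz then produces a decomposition $g(t)+\epsilon = \sigma_0(t) + t\,\sigma_1(t)$ with univariate SoS polynomials $\sigma_0,\sigma_1$ of degree $\leq r$. Substituting $t=|x|$ and using the SoS identity $|x|=\sum_i x_i^2$ modulo booleanity turns $|x|\,\sigma_1(|x|)$ into $\sum_{i,j}(x_i\,p_j(|x|))^2$ (for $\sigma_1=\sum_j p_j^2$), while $\sigma_0(|x|)$ is already SoS in $x$, giving the desired SoS certificate modulo booleanity.

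The main technical obstacle will be the lower bound, because $g(t)$ is not nonnegative on all of $\R$ (for odd $r$ it tends to $-\infty$ as $t\to-\infty$), so Fact~\ref{fact:univariate} does not apply directly; one genuinely has to invoke Positivstellensatz on $[0,\infty)$ and exploit that $|x|$ admits an SoS representation via booleanity in order to lift the univariate certificate to the multivariate setting. The claimed bit complexity $O(\poly(|U|,\log 1/\epsilon))$ then follows because each ingredient (the combinatorial identity, the SoS-ness of each $e_s$, and the degree-$r$ univariate certificate) has coefficients of bit complexity $\poly(|U|,\log 1/\epsilon)$.
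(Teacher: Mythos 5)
Your proposal takes a genuinely different route from the paper, and in doing so exposes a real gap in the paper's own proof. The paper reduces both inequalities to univariate inequalities $p_U(\sum_i x_i)\ge 0$ and $p_L(\sum_i x_i)\ge 0$ (with $p_U(t) = \tfrac{1}{r!}t^r - \tfrac{1}{r!}\prod_{i=0}^{r-1}(t-i)$ and similarly $p_L$) and then invokes Fact~\ref{fact:univariate}, which requires nonnegativity of the univariate polynomial \emph{on all of $\R$}. Neither $p_U$ nor $p_L$ satisfies this: for $r=2$ one has $p_U(t)=t/2$, negative on $t<0$; for odd $r\ge 3$ the leading term of $p_L$ has odd degree and positive coefficient $(1-2^{-r})/r!$, so $p_L(t)\to -\infty$ as $t\to -\infty$. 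Your remark that ``$g(t)$ is not nonnegative on all of $\R$'' is precisely this objection, and your fix for the lower bound — the half-line Positivstellensatz $g+\epsilon'=\sigma_0 + t\sigma_1$, lifted to $x$ via $|x|=\sum_i x_i^2$ modulo booleanity — is correct; your case analysis $[0,r]\cup[r,2r]\cup[2r,\infty)$ showing $g\ge 0$ on $[0,\infty)$ checks out. For the upper bound, your completely different route via the surjection expansion $|x|^r \equiv \sum_{s\le r}N(r,s)e_s(x)$ modulo booleanity is in fact \emph{forced}, not just aesthetic: for $r=3$, $p_U(t)=t(3t-2)/6$ is already negative at $t=1/2$, so $p_U$ fails nonnegativity even on $[0,\infty)$ and a half-line Positivstellensatz would not save the paper's upper-bound argument either.

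Two loose ends remain in your proposal. First, writing $e_s(x)\equiv\sum_{|S|=s}x_S^2$ modulo booleanity costs degree $2s$, so the upper-bound certificate you obtain has degree roughly $2(r-1)$ rather than the stated $r$; similarly, after substituting $t\mapsto\sum_i x_i^2$ into $t\sigma_1(t)$ the certificate can land at degree $r+1$ for odd $r$. These overshoots are harmless for the paper's downstream use (the lemma is immediately raised to degree $4r$ in the proof of Theorem~\ref{thm:certification-density-half}), but strictly speaking neither your proof nor the paper's one-line invocation of Fact~\ref{fact:univariate} establishes the lemma at the literal degree $r$. Second, Fact~\ref{fact:univariate} as stated in the paper covers only the whole real line; to make your lower-bound argument complete you would need to state (or cite) the analogous bit-complexity bound for the weighted-SoS decomposition on $[0,\infty)$, which the paper does not provide.
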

\begin{proof}
The following polynomial identity holds: $\sum_{|S|=r} x_S = \frac{1}{r!}\Paren{\sum_i x_i}\Paren{\sum_i x_i - 1}\cdots\Paren{\sum_i x_i - (r - 1)}$. Then $\sum_{|S| = r} x_S \leq \frac{1}{r!}\Paren{\sum_i x_i}^r$ and 
\begin{align*}
\sum_{|S| = r} x_S \geq \frac{1}{r!}\Paren{\sum_i x_i - r}^r - \frac{r^r}{r!} \geq \frac{1}{2^r r!}\Paren{\sum_i x_i}^r - \frac{2r^r}{r!}\mcom
\end{align*}
where in the last inequality the subtracted term makes the inequality trivial unless $\sum_i x_i \geq 2r$, case in which we use that $\sum_i x_i -r \geq \sum_i x_i / 2$.

Notice that $\sum_{|S|=r} x_S$ is a univariate degree-$r$ polynomial in $\sum_i x_i$.
Then the inequalities $\sum_{|S| = r} x_S \leq \frac{1}{r!}\Paren{\sum_i x_i}^r$ and $\sum_{|S| = r} x_S \geq \frac{1}{2^r r!}\Paren{\sum_i x_i}^r - \frac{2r^r}{r!}$ can be written as univariate polynomial inequalities $p_U(\sum_i x_i) \geq 0$ and $p_L(\sum_i x_i) \geq 0$, respectively, with $p_U$ and $p_L$ of degree at most $r$.
It is easy to check that the coefficients of $p_U$ and $p_L$ have bit complexity $O(\poly(|U|))$, so by Fact~\ref{fact:univariate} the conclusion follows.
\end{proof}

We can finish the proof of Theorem~\ref{thm:certification-density-half} from here:

\begin{proof}[Proof of Theorem~\ref{thm:certification-density-half}]

From Lemma~\ref{lem:power-vs-sets}, we have:
\begin{equation*}
\Biggl\{ x_i^2 = x_i \text{ } \forall i \in U \Biggr\} \sststile{r}{x} \Biggl\{\frac{1}{2^r r!} |x|^r - \frac{2r^r}{r!} - \epsilon \leq \sum_{|S|=r} x_S \Biggr\}\mper
\end{equation*}

Setting $\epsilon =1$ and using that $\{0 \leq a \leq C\} \sststile{4}{a,C} \{a^4 \leq C^4\}$, we have:
\begin{equation*}
\Biggl\{ x_i^2 = x_i \text{ } \forall i \in U \Biggr\} \sststile{4r}{x} \Biggl\{|x|^{4r} \leq (100r)^{10r} + (100r)^{10r} \Paren{\sum_{|S|=r} x_S}^{4} \Biggr\}\mper
\end{equation*}

Now we want to combine this with the conclusion of Lemma~\ref{lem:biclique-certificate-final}.
We briefly verify that we satisfy the condition $\frac{k}{2} - \Delta_{\ell} - \Delta_{2r} \geq \frac{k}{4}$.
We have by Lemma~\ref{lem:balacedness-density-half} that $\Delta_{\ell} = O(\sqrt{k \log |V|}) = O(\sqrt{k \log n})$ and $\Delta_{2r} = O(\sqrt{r|V| \log |U|}) = O(\sqrt{rn \log n})$.
Observe that for $k \geq O(\sqrt{rn \log n})$ large enough the condition is satisfied.
Then we have:
\begin{equation*}
\cB(H) \sststile{4r+2}{x,y} \Biggl\{ |x|^{4r} |y| \leq 
 (100r)^{10r} |y| + (100r)^{10r} n\Paren{\frac{4n}{k}}^4 \Biggr\}\mper 
\end{equation*}

Observing that $\cB(H) \sststile{2}{y} \Set{ |y| \leq n}$ completes the proof. 

\end{proof}

\paragraph{Proof of Lemma~\ref{lem:biclique-certificate}.}
We now return to the proof of Lemma~\ref{lem:biclique-certificate}.
\begin{proof}[Proof of Lemma~\ref{lem:biclique-certificate}]
Let us write $u_S'$ for the vector-valued linear function in indeterminate $y$ defined by $u_S'(i) = u_S(i) (1-y_i)$.
Then, observe that $\cB(H) \sststile{2r+2}{x,y} \Set{x_S u_S(i)y_i = x_S y_i}$.
In particular, $\cB(H) \sststile{4r}{x,y} \Set{\Norm{x_S u_S'}_2^2 \leq x_S(n-|y|)}$.
Further, for any $r \in \N$ and any $S \subseteq U$ such that $|S|=r$, we have:
\begin{equation*}
\cB(H) \sststile{4r}{x,y} \Set{x_S \sum_{i} u_S'(i) = x_S \sum_{i} u_S(i)(1-y_i) \leq  \Delta_{2r} x_S - x_S |y|}\mper
\end{equation*}

Next, let $S,T \subseteq U$ such that $S \neq T$ and $|S|,|T|\leq r$. 
Then, by noting that $u_{S}' \circ u_{T}' = u_{S \Delta T}'$, we have:
\begin{equation*}
\cB(H) \sststile{4r}{x,y} \Set{x_S x_T \iprod{u_S',u_T'} = x_{S \cup T} \sum_{i} u_{S \Delta T}(i)(1-y_i) \leq  \Delta_{2r} x_{S \cup T}  - x_{S \cup T} |y|}\mper
\end{equation*}

Next, we have:~\footnote{This is a sum-of-squares proof of the classical fact upper bounding the number of negatively correlated vectors in $n$ dimensions.}

\begin{align*}
\cB(H) &\sststile{4r}{x,y} \Biggl\{ 0 \leq \Norm{\sum_{|S| = r} x_S u_S'}_2^2 = \sum_{|S| = r}  \Norm{x_S u_S'}_2^2 + \sum_{S \neq T} \Iprod{x_S u_{S}', x_T u_T'}\\
&\leq \Paren{\sum_{|S|=r} x_S} \Paren{n-|y|} + \Delta_{2r} \sum_{S \neq T} \Paren{ x_{S \cup T}   - x_{S \cup T} |y|}\\
&\leq  n \Paren{\sum_{|S|=r} x_S} + \Delta_{2r} \sum_{S,T \subseteq U, |S|,|T|=r }  x_{S \cup T}   - \sum_{S,T \subseteq U, |S|,|T|=r } x_{S \cup T} |y|\\
&= \Paren{\sum_{|S|=r} x_S} n +  \Delta_{2r} \Paren{\sum_{|S|=r} x_S}^{2}   - \Paren{\sum_{|S|=r} x_S}^2 |y|
\Biggr\} \mper
\end{align*}

Rearranging gives:
\begin{align*}
\cB(H) &\sststile{4r}{x,y} \Biggl\{  \Paren{\sum_{|S|=r} x_S}^{2} |y| \leq 
 n \Paren{\sum_{|S|=r} x_S} + \Delta_{2r} \Paren{\sum_{|S|=r} x_S}^{2} 
\Biggr\}\mper
\end{align*}

\end{proof}

\subsection{The case of arbitrary $p$}
\label{subsec:cert-general}

In this section, we generalize the certificates of Section~\ref{sec:cert-half} to general edge densities.
The certificates use the same system of polynomial constraints $\cB(H)$ as in the previous section.

\begin{theorem}[Sum-of-squares certificates for unbalanced bicliques in random bipartite graphs for general densities] \label{thm:certification-large-p}
  Let $H \sim B(k,n-k, p)$ be a bipartite \Erdos-\Renyi with edge probability $p$.
  Then with probability $0.99$ we obtain the following two bounds:
  \begin{enumerate}
    \item Fix any $\epsilon > 0$ independently of the other parameters. For $p, 1-p \geq n^{-(1-\epsilon)}$,
    \begin{equation*}
      \cB(H) \sststile{4}{x,y} \Set{ |x| |y| \leq 
      O\Paren{\frac{np}{1-p}} }\mper 
      \end{equation*}
  \item For any $r$ such that $k \geq \max\{O(\sqrt{r n \log n}p^{2r}/(1-p)^{2r+1}), O((\log n) p/(1-p))\}$,
  \begin{equation*}
    \cB(H) \sststile{4r+2}{x,y} \Set{ |x|^{4r}|y| \leq 
       (1000r)^{10r} n \Paren{\frac{n\max\{p/(1-p),(1-p)/p\}^rp^r}{k(1-p)^{r+1}}}^4 }\mper 
    \end{equation*}
  \end{enumerate} 

  Further, the total bit complexity of the sum-of-squares proofs is $n^{O(1)}$ and $n^{O(r)}$, respectively.
\end{theorem}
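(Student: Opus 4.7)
The plan is to follow the same two-step structure as Theorem~\ref{thm:certification-density-half}, now tracking the dependence on $p$ throughout. The two bounds are proved independently and combined by a union bound on the underlying high-probability events.

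For the degree-$4$ bound (part 1), the plan is to prove a spectral-type certificate. Working with the $\{\pm 1\}$ adjacency matrix $H$ with mean $\mu = 2p-1$, the biclique axioms give $x_i y_j H(i,j) = x_i y_j$, so $x_i y_j (H(i,j) - \mu) = 2(1-p)\, x_i y_j$. Summing over $i,j$ yields $2(1-p)|x||y| = x^{\top}(H - \mu J)\, y$. Applying Fact~\ref{fact:operator_norm} inside SoS and using the random matrix estimate $\|H - \mu J\|_2 \leq O(\sqrt{np(1-p)})$ (which holds with high probability for $p, 1-p \geq n^{-(1-\epsilon)}$ by matrix Bernstein applied to $H - \mu J$, whose entries are independent, centered, and bounded by $2\max(p,1-p)$), we obtain $2(1-p)|x||y| \leq \sqrt{|x|}\sqrt{|y|}\cdot O(\sqrt{np(1-p)})$. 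Squaring and using Lemma~\ref{lem:simple-cancel} then gives $|x||y| \leq O(np/(1-p))$.

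For the degree-$(4r+2)$ bound (part 2), the plan is to mirror Lemmas~\ref{lem:biclique-certificate}, \ref{lem:lower-bound-LHS-trick} and~\ref{lem:biclique-certificate-final} with a centered family of vectors. Define $u_S(j) = \prod_{i \in S}(H(i,j) - \mu)$; the biclique axioms give the key identity $x_S y_j u_S(j) = (2(1-p))^{|S|}\, x_S y_j$, so the analog of $u_S'$ is $u_S'(j) = u_S(j) - (2(1-p))^{|S|} y_j$, which satisfies $x_S y_j u_S'(j) = 0$. A Hoeffding plus union-bound argument analogous to Lemma~\ref{lem:balacedness-density-half} shows that, with high probability, for every $|S| \leq 2r$ the $2r$-fold balancedness bound $|\sum_j u_S(j)| \leq \Delta_{2r}$ holds with $\Delta_{2r}$ scaling as $O\bigl((2\max(p,1-p))^{2r}\sqrt{rn\log n}\bigr)$, and the max right-degree (in the $\{0,1\}$-encoding) is $pk + O(\sqrt{pk \log n})$. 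The adaptation of Lemma~\ref{lem:lower-bound-LHS-trick} replaces the floor $k/2$ by $(1-p)k$ and requires $k \geq O((\log n)\, p/(1-p))$ so that the net lower bound on $|y|$, after absorbing the max-degree slack and the balancedness error, is $\Omega((1-p)k)$. Running the norm-squared argument $0 \leq \|\sum_{|S|=r}x_S u_S'\|_2^2$, propagating the scaling $(2(1-p))^r$ through both diagonal and off-diagonal terms, dividing by the corresponding powers of $(1-p)$ via Fact~\ref{fact:cancellation-within-SoS}, and applying Lemma~\ref{lem:power-vs-sets} to convert $\sum_{|S|=r}x_S$ back to $|x|^r$, yields a bound of the stated shape.

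The main obstacle is the cross-term computation: the $p=1/2$ argument used $H(i,j)^2 = 1$ to reduce $u_S(j) u_T(j)$ to $u_{S\Delta T}(j)$. For general $p$, the identity $(H(i,j)-\mu)^2 = \sigma^2 - 2\mu(H(i,j)-\mu)$ with $\sigma^2 = 4p(1-p)$ forces the expansion
\begin{equation*}
  u_S(j)\, u_T(j) = \sum_{A \subseteq S\cap T} \sigma^{2(|S\cap T|-|A|)}(-2\mu)^{|A|}\, u_{A \cup (S \Delta T)}(j).
\end{equation*}
Summing over $j$, each term is controlled by balancedness at level $|A\cup(S \Delta T)| \leq 2r$; the dominant contribution (for generic $p$ and small $|S \cap T|$) scales as $(4p(1-p))^r \Delta_{2r}$, and the remaining terms, multiplied by appropriate powers of $|\mu|=|2p-1|$, are no larger. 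Verifying that these combined prefactors are correctly absorbed into the claimed $\max\{p/(1-p), (1-p)/p\}^r\, p^r/(1-p)^{r+1}$ — which interpolates between the small-$p$ and small-$(1-p)$ regimes — is the technically tedious but conceptually routine part of the argument. Once this aggregated cross-term estimate replaces the simple use of $\Delta_{2r}$ in the proof of Lemma~\ref{lem:biclique-certificate}, the rest of the proof proceeds exactly as in the $p=1/2$ case.
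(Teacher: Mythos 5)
Your proposal is correct and arrives at the same bounds as the paper, but one structural choice makes the argument heavier than necessary. The paper works throughout with the variance-normalized \(p\)-biased characters \(H_p(i,j)\in\{\sqrt{(1-p)/p},\,-\sqrt{p/(1-p)}\}\) (Definition~\ref{def:char-matrix}); these are exactly \((H(i,j)-\mu)/(2\sqrt{p(1-p)})\), so your centered-\(\{\pm1\}\) computations translate to the paper's by a global rescaling, and the degree-\(4\) part is essentially identical to Lemma~\ref{lem:simple-spectral-certificate}. Where you diverge is in the cross terms \(\sum_j u_S(j)u_T(j)\), \(S\neq T\). You retain the single-index balancedness notion from the \(p=1/2\) case (bounding \(|\sum_j u_A(j)|\) for \(|A|\leq 2r\)) and then expand \(u_Su_T\) via \((H-\mu)^2=\sigma^2-2\mu(H-\mu)\), producing the sum over \(A\subseteq S\cap T\). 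This expansion is unnecessary: for \(S\neq T\) the product \(u_{p,S}(j)u_{p,T}(j)\) already has mean zero and is deterministically bounded by \(\max\{p/(1-p),(1-p)/p\}^r\), so the paper simply \emph{defines} \(2r\)-fold \(p\)-balancedness as a bound on \(|\sum_j u_{p,S}(j)u_{p,T}(j)|\) and applies Hoeffding with a union bound over pairs (Lemma~\ref{lem:balacedness-density-large-p}), with no algebraic reduction. Your expansion does close correctly --- the geometric sum collapses via the identity \(\sigma^2+4|\mu|\max\{p,1-p\}=(2\max\{p,1-p\})^2\), recovering the two-index bound up to \(2^{O(r)}\) slack absorbed into \((1000r)^{10r}\) --- but adopting the product balancedness notion (equivalently, normalizing to unit variance) would shorten the proof considerably. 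A minor cosmetic difference: the paper defines \(u'_{p,S}(i)=u_{p,S}(i)(1-y_i)\) rather than subtracting a scalar multiple of \(y\); under the biclique constraints the two yield the same diagonal and cross terms, so either form works.
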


For the proof of Theorem~\ref{thm:certification-large-p}, we will work with matrices with $p$-biased characters as entries.
We first define these well-studied objects. 

\begin{definition}[$p$-biased characters and normalized adjacency matrix] \label{def:char-matrix}
  Let $H = (U, V, E)$ be a bipartite graph.
  We define the $p$-biased character corresponding to an edge $H(i, j)$ to be
  \[H_p(i, j) = \begin{cases}
    \sqrt{\frac{1-p}{p}} & \text{if } H(i, j) = 1\mcom\\
    -\sqrt{\frac{p}{1-p}} & \text{if } H(i, j) = -1\mper
  \end{cases}\]
  The normalized adjacency matrix $H_p$ of the graph is matrix with the $(i,j)$-th entry equal to $H_p(i,j)$. 
\end{definition}

Let us first analyze a simple spectral certificate (that confirms that our algorithm recovers the bounds of~\cite{MMT20,DBLP:conf/stoc/CharikarSV17} from a basic relaxation in our scheme) in order to recover the first bound above. 

\begin{lemma}[Simple spectral certificate] \label{lem:simple-spectral-certificate}
Let $H \sim B(k,n-k, p)$ be a bipartite \Erdos-\Renyi with edge probability $p$. Fix any $\epsilon > 0$. Then, for any $p,1-p \geq n^{-(1-\epsilon)}$, we have:
\begin{equation*}
\cB(H) \sststile{4}{x,y} \left\{ |x| |y| = \Norm{H_p}_2^2 \leq O\Paren{\frac{np}{1-p}} \right\}\mper
\end{equation*}
Further, the total bit complexity of the sum-of-squares proof is $n^{O(1)}$.
\end{lemma}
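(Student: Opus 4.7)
The plan is to exhibit a short degree-$4$ sum-of-squares derivation that transfers a spectral norm bound on $H_p$ into a bound on $|x||y|$, followed by a standard random matrix calculation controlling $\Norm{H_p}_2$.

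First I would establish the degree-$2$ identity $x^\top H_p y = \sqrt{(1-p)/p}\cdot |x||y|$ under $\cB(H)$. Indeed, whenever $x_i y_j \neq 0$ the constraint $x_u y_v = 0$ for non-edges forces $\{i,j\}\in E$, so $H_p(i,j) = \sqrt{(1-p)/p}$; combined with $x_i^2 = x_i$, $y_j^2 = y_j$ (which give $\sum_{i,j} x_i y_j = |x||y|$), this becomes a valid degree-$2$ SoS identity from $\cB(H)$.

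Next I would apply SoS Cauchy--Schwarz (Fact~\ref{fact:sos-holder} with $p=2$) to the vectors $x$ and $H_p y$ to obtain, in degree $4$,
\begin{equation*}
  (x^\top H_p y)^2 \;\leq\; (x^\top x)(y^\top H_p^\top H_p y) \;=\; |x|\cdot (y^\top H_p^\top H_p y)\mper
\end{equation*}
The operator norm bound (Fact~\ref{fact:operator_norm}) applied to the symmetric PSD matrix $H_p^\top H_p$ then yields $y^\top H_p^\top H_p y \leq \Norm{H_p}_2^2\cdot |y|$ in degree $2$. Combining with the identity of the previous paragraph and rearranging, $\cB(H)$ SoS-implies
\begin{equation*}
  \tfrac{1-p}{p}(|x||y|)^2 \;\leq\; \Norm{H_p}_2^2 \cdot |x||y|\mcom
\end{equation*}
from which one extracts $|x||y| \leq \tfrac{p}{1-p}\Norm{H_p}_2^2$ via Lemma~\ref{lem:simple-cancel}. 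Everything so far is a degree-$4$ SoS proof with polynomially bounded bit complexity, since $H_p$ has entries of bit complexity $\poly(\log n)$.

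To finish, I would verify the random matrix bound $\Norm{H_p}_2 = O(\sqrt{n})$ with probability $0.99$. The entries of $H_p$ are independent, mean $0$, variance $1$, and bounded in magnitude by $\max\{\sqrt{p/(1-p)},\sqrt{(1-p)/p}\}\leq n^{(1-\epsilon)/2} = n^{1/2-\epsilon/2}$ under the hypothesis $p, 1-p \geq n^{-(1-\epsilon)}$. Hence Fact~\ref{fact:char-matrix-spectral-norm} applies and gives $\Norm{H_p}_2 \leq O(\sqrt{n})$ with probability at least $0.99$, so $|x||y| \leq O(np/(1-p))$ as claimed. There is no real obstacle here: the only mild subtlety is ensuring the entry magnitude of $H_p$ meets the hypothesis of Fact~\ref{fact:char-matrix-spectral-norm}, which is exactly why the statement needs the two-sided density condition $p,1-p\geq n^{-(1-\epsilon)}$.
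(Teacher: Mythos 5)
Your proof is correct and follows essentially the same route as the paper's: establish the identity $x^\top H_p y = \sqrt{(1-p)/p}\,|x||y|$ from the non-edge constraints, apply SoS Cauchy--Schwarz plus the operator-norm bound to get $\frac{1-p}{p}(|x||y|)^2 \leq \Norm{H_p}_2^2\,|x||y|$, cancel via Lemma~\ref{lem:simple-cancel}, and invoke Fact~\ref{fact:char-matrix-spectral-norm} with the entry-magnitude bound $n^{1/2-\epsilon/2}$. The only cosmetic difference is that the paper makes the two-stage application of Lemma~\ref{lem:simple-cancel} (part 1 then part 2) explicit, whereas you compress that into a single citation.
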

\begin{proof}
We have:
\begin{align*}
\cB(H) \sststile{4}{x,y} \left\{ \frac{1-p}{p} |x|^2 |y|^2 = \Paren{x^{\top} H_p y}^2 \leq  \Norm{x}_2^2 \Norm{H_py}_2^2 \leq \Norm{x}_2^2 \Norm{H_p}_2^2 \Norm{y}_2^2 \right\}\mper
\end{align*}
In the inequality above, we used the sum-of-squares Cauchy-Schwarz inequality. 

Applying the first part of Lemma~\ref{lem:simple-cancel} with $a = \Norm{x}_2^2 \Norm{y}_2^2$:
\begin{align*}
\cB(H) \sststile{4}{x,y} \left\{ |x|^2|y|^2 \leq \frac{p^2}{(1-p)^2} \Norm{H_p}_2^4 \right\}\mper
\end{align*}
Applying the second part of Lemma~\ref{lem:simple-cancel} with $a = \Norm{x}_2^2 \Norm{y}_2^2$ we obtain:
\begin{align*}
\cB(H) \sststile{4}{x,y} \left\{ |x| |y| \leq \frac{p}{1-p}\Norm{H_p}_2^2 \right\}\mper
\end{align*}

Finally, notice that the entries of $H_p$ are mean $0$, variance $1$ and are bounded above by $\max\{\sqrt{p/1-p}, \sqrt{1-p/p}\}$ in magnitude. 
For $p, 1-p \geq n^{-(1-\epsilon)}$, the bound on the entries evaluates to $n^{0.5-\epsilon/2}$.
So we can apply Fact~\ref{fact:char-matrix-spectral-norm} to conclude that $\Norm{H_p}_2 =O(\sqrt{n})$ with probability at least $0.99$.

\end{proof}

The proof of second bound in Theorem~\ref{thm:certification-large-p} uses a generalization of $r$-fold balancedness defined in terms of $p$-biased characters.
We call this new property $r$-fold $p$-balancedness.

\begin{definition}[Balancednes for general densities]
  Let $H = (U, V, E)$ be a bipartite graph.
  For every $S \subseteq U$, let $u_{p,S}$ be the $|V|$-dimensional vector defined by setting $u_{p,S}(j)= \prod_{i \in S} H_p(i,j)$.
  Then, we say that $H$ has \emph{$2r$-fold $p$-balancedness} $\Delta$ if, for all $S, T \subseteq U$ of size $|S|,|T| \leq r$, it holds that $|\sum_{j \in V} u_{p,S}(j) u_{p,T}(j)| \leq \Delta$.
\end{definition}

The following lemma verifies $r$-fold $p$-balancedness of random bipartite graphs, as well as an upper bound on the maximum degree of the vertices on the righ-hand side.

\begin{lemma}[Balancedness of random bipartite graphs for general densities]
\label{lem:balacedness-density-large-p}
Let $H = (U,V,E) \sim B(k, n-k, p)$.
Then, for any $r \leq k$, with probability at least $0.99$ over the draw of $H$,
1) $H$ has $2r$-fold $p$-balancedness $O(\sqrt{r n \log k}p^r/(1-p)^r)$, and
2) the maximum degree of a vertex in $V$ is at most $kp+O(\sqrt{kp(1-p) \log n})$.
\end{lemma}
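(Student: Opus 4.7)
The plan is to directly generalize the proof of Lemma~\ref{lem:balacedness-density-half}: for each pair $(S,T)$, apply a concentration inequality to the sum $\sum_{j \in V} u_{p,S}(j)\,u_{p,T}(j)$ across the independent coordinates $j \in V$, and then union bound over the at most $k^{2r}$ choices of $(S,T)$.

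For (1), fix $S,T \subseteq U$ with $|S|,|T|\le r$ and $S\ne T$, and let
\(X_j \defeq u_{p,S}(j)\,u_{p,T}(j) = \prod_{i \in S\cap T} H_p(i,j)^2 \cdot \prod_{i \in S\Delta T} H_p(i,j).\)
The variables $\{X_j\}_{j \in V}$ are mutually independent since distinct $j$'s depend on disjoint edges, and $\E[X_j]=0$ because $S\Delta T\ne\emptyset$ and $\E[H_p(i,j)] = 0$, $\E[H_p(i,j)^2]=1$ by direct computation. Each $|X_j|$ is bounded by $\max\{p/(1-p),(1-p)/p\}^{r}$, so Hoeffding's inequality gives $|\sum_{j} X_j| = O(\max\{p/(1-p),(1-p)/p\}^{r}\sqrt{n \cdot r\log k})$ with probability at least $1 - 1/k^{\Omega(r)}$; a union bound over the $\le k^{2r}$ pairs $(S,T)$ yields the claimed $2r$-fold $p$-balancedness.

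For (2), each $\deg(j) \sim \mathrm{Bin}(k, p)$ has mean $kp$ and variance $kp(1-p)$. Bernstein's inequality then gives $\Pr[\deg(j) > kp + t] \le \exp(-\Omega(t^2/(kp(1-p) + t)))$, and choosing $t = C\sqrt{kp(1-p)\log n}$ for a sufficiently large constant $C$ makes this at most $1/n^2$ per vertex; a union bound over the $n$ vertices of $V$ completes the claim.

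The main subtlety compared to the $p=1/2$ case is that $H_p(i,j)^2$ is now a random variable (rather than identically~$1$), so both $|X_j|$ and $\Var[X_j] = \E[H_p(i,j)^4]^{|S\cap T|}$ depend nontrivially on $|S\cap T|$ and on $p$. To obtain the precise factor $p^r/(1-p)^r$ stated in the lemma, rather than the coarser Hoeffding factor $\max\{p/(1-p),(1-p)/p\}^r$, one likely sharpens the concentration step via Bernstein's inequality, exploiting $\E[H_p(i,j)^4] = \bigl((1-p)^3+p^3\bigr)/\bigl(p(1-p)\bigr) \le 1/\bigl(p(1-p)\bigr)$ and splitting on $|S\cap T|$; I expect this sharpening, rather than the overall structure, to be the main technical hurdle.
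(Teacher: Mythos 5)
Your proof takes the same approach as the paper's for both parts: for (1), Hoeffding's inequality applied to $\sum_{j\in V} u_{p,S}(j)u_{p,T}(j)$ followed by a union bound over the at most $k^{2r}$ pairs $(S,T)$; for (2), a binomial tail bound followed by a union bound over $V$. You correctly restrict to $S\ne T$ (the paper does so implicitly — for $S=T$ the sum has mean $|V|$ and cannot be small) and correctly observe that zero mean hinges on $S\Delta T \ne \emptyset$.

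The final paragraph is an unnecessary detour and rests on a misunderstanding. The paper does not invoke Bernstein or any variance-based argument; it applies Hoeffding directly, taking the a.s.\ magnitude bound on $u_{p,S}(j)u_{p,T}(j)$ to be $p^r/(1-p)^r$. That bound coincides with your $\max\{p/(1-p),(1-p)/p\}^r$ exactly when $p\geq 1/2$, which is the only regime in which Lemma~\ref{lem:balacedness-density-large-p} is actually used — for $p\leq 1/2$ the main result falls back on Theorem~\ref{thm:certification-density-half} with $p=1/2$. So no sharpening is required. Moreover, the sharpening you propose would not work: for $p<1/2$ the claimed factor $p^r/(1-p)^r$ is less than $1$, while the variance $(\E[H_p^4])^{|S\cap T|}$ you compute grows unboundedly as $p\to 0$, so Bernstein cannot produce a factor that shrinks with $p$. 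Your $\max$-bound is in fact the correct magnitude bound for all $p$; the factor in the lemma statement should simply be read under the implicit assumption $p\geq 1/2$.
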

\begin{proof}
For $S, T \subseteq U$ such that $|S|,|T| \leq r$, we have that $u_{p,S}(j) u_{p,T}(j)$ has mean $0$ and is bounded between $-\Paren{\sqrt{\frac{p}{1-p}}}^{2r} = -p^r/(1-p)^r$ and $\Paren{\sqrt{\frac{p}{1-p}}}^{2r} = p^r/(1-p)^r$.
Then, by Hoeffding's inequality, 
\[\Pr\Brac{ \left| \sum_{j \in V} u_{p,S}(j)u_{p,T}(j) \right| \geq t \sqrt{|V|} p^r / (1-p)^r} \leq 2e^{-t^2/2} \mcom \]
so, by a union bound over all choices of $S$ and $T$,
\[\Pr\Brac{ \exists S, T \subseteq U \text{ s.t. } |S|, |T| \leq r, \left| \sum_{j \in V} u_{p,S}(j)u_{p,T}(j) \right| \geq t \sqrt{|V|} p^r / (1-p)^r} \leq |U|^{2r} \cdot 2e^{-t^2/2} \mper \]
Choosing $t = O(\sqrt{r \log U})$ makes the right-hand side a small constant, so we have $2r$-fold $p$-balancedness $O(\sqrt{r|V| \log |U|}p^r/(1-p)^r)$.

The degree of a vertex in $V$ is a binomial random variable $\operatorname{Bin}(k, p)$, which by standard bounds is larger than $kp+t$ with probability at most $\min\{e^{-t^2/(2k(1-p))}, e^{-t^2/(2kp+2t/3)}\}$.
By a union bound over all vertices in $V$, the maximum degree is larger than $kp+t$ with probability at most $|V|\min\{e^{-t^2/(2k(1-p))}, e^{-t^2/(2kp+2t/3)}\}$, so choosing $t=O(\sqrt{kp(1-p) \log |V|})$ makes the probability a small constant.
Hence, the maximum degree is $kp+O(\sqrt{kp(1-p)\log |V|})$.
\end{proof}

The following lemma is the key component of the proof of Theorem~\ref{thm:certification-large-p}, and is analogous to Lemma~\ref{lem:biclique-certificate} in Section~\ref{sec:cert-half}.

\begin{lemma}\label{lem:biclique-certificate-large-p}
  Let $H = (U, V, E)$ be a bipartite graph with $|U|=k$ and $|V|=n-k$ and $2r$-fold $p$-balancedness $\Delta_{2r}$.
  Then, 
  \begin{equation}
  \cB(H) \sststile{4r}{x,y} \Biggl\{ \Paren{1-p}^{r}/p^r \Paren{\sum_{|S|=r} x_S}^{2} \sum_i y_i  \leq 
   n \max\{p/(1-p),(1-p)/p\}^r \Paren{\sum_{|S|=r} x_S} + \Delta_{2r} \Paren{\sum_{|S|=r} x_S}^{2}
  \Biggr\}\mper \label{eq:biclique-main-2-large-p}
  \end{equation} Further, the total bit complexity of the sum-of-squares proof is $n^{O(r)}$.
\end{lemma}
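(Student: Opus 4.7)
The strategy mirrors the density-\(1/2\) argument of Lemma~\ref{lem:biclique-certificate}, with the \(p\)-biased characters \(u_{p,S}\) from Definition~\ref{def:char-matrix} playing the role of \(u_S\). The key preparation is the SoS identity
\[
\cB(H) \sststile{O(r)}{x,y} \Set{x_S \, u_{p,S}(i)\, y_i = c\cdot x_S y_i}, \qquad c \defeq ((1-p)/p)^{r/2},
\]
for every \(S\subseteq U\) with \(|S|=r\) and every \(i\in V\): the biclique axioms force \(x_S y_i = 0\) whenever any edge \(\{u,i\}\) with \(u\in S\) is absent from \(H\), while on the configuration where all such edges are present, \(\prod_{u\in S}H_p(u,i) = c\). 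Motivated by this I define the centered version \(u_{p,S}'(i) \defeq u_{p,S}(i) - c\, y_i\), the analogue of \(u_S'(i) = u_S(i)(1-y_i)\) used in the density-\(1/2\) proof.

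I then compute the diagonal and off-diagonal terms of \(\bignorm{\sum_{|S|=r} x_S u_{p,S}'}_2^2 = \sum_{S,T} x_S x_T \iprod{u_{p,S}', u_{p,T}'}\). Expanding each inner product and applying the centering identity to every occurrence of \(u_{p,\cdot}(i)\,y_i\), the two \(y\)-linear cross terms each contribute \(-c^2\, x_{S\cup T}|y|\) while the \(y_i^2 = y_i\) term contributes \(+c^2\, x_{S\cup T}|y|\), giving a net \(y\)-dependent contribution of \(-c^2 x_{S\cup T}|y|\) in every \((S,T)\). For the diagonal (\(S=T\)), the remaining piece is \(x_S \sum_i u_{p,S}(i)^2\), which I bound \emph{deterministically} by \(n\cdot \max\{p/(1-p),(1-p)/p\}^r\, x_S\) since each \(H_p(u,i)^2\in\{(1-p)/p,\, p/(1-p)\}\). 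For the off-diagonal (\(S\neq T\), \(|S|=|T|=r\)), using \(x_S x_T = x_{S\cup T}\) modulo \(\cB(H)\), the remaining piece is \(x_{S\cup T}\sum_i u_{p,S}(i)u_{p,T}(i)\), which is bounded by \(\Delta_{2r}\, x_{S\cup T}\) via the \(2r\)-fold \(p\)-balancedness hypothesis. (This is precisely why the balancedness in the general-\(p\) setting is defined over pairs \(S,T\) rather than symmetric differences: unlike in the density-\(1/2\) case, \(u_{p,S}(i)u_{p,T}(i)\) does not collapse to a single character.)

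Finally, summing diagonal and off-diagonal bounds against the SoS nonnegativity \(0\le\bignorm{\sum_{|S|=r} x_S u_{p,S}'}_2^2\) and using the identity \(\sum_{S,T:|S|=|T|=r} x_{S\cup T} = (\sum_{|S|=r} x_S)^2\) (which follows from \(x_i^2 = x_i\)), the \(y\)-dependent terms combine into \(-c^2 (\sum_{|S|=r} x_S)^2 |y|\), and the \(\Delta_{2r}\) off-diagonal contribution is upper-bounded by \(\Delta_{2r}(\sum_{|S|=r} x_S)^2\) by adding the nonnegative diagonal slack. Rearranging yields exactly the claimed inequality; the bit-complexity \(n^{O(r)}\) follows because the intermediate polynomials have \(\binom{k}{r} = n^{O(r)}\) monomials with coefficients polynomial in \(1/p\) and \(1/(1-p)\). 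The main obstacle I anticipate is purely bookkeeping: tracking the three separate factors of \(c\) — one from each cross term and one from the squared-\(y_i\) contribution — and verifying they collapse to the single \(c^2 = ((1-p)/p)^r\) that appears in the final inequality. The asymmetric appearance of \(\max\{p/(1-p),(1-p)/p\}^r\) (diagonal, deterministic) next to \(\Delta_{2r}\) (off-diagonal, probabilistic) on the right hand side is intrinsic to this split between the \(S=T\) and \(S\neq T\) contributions.
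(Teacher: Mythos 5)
Your proof is correct and follows essentially the same route as the paper's own argument. The only (cosmetic) difference is your choice of centered vector \(u_{p,S}'(i)=u_{p,S}(i)-c\,y_i\) with \(c=((1-p)/p)^{r/2}\), whereas the paper uses \(u_{p,S}'(i)=u_{p,S}(i)(1-y_i)\); after multiplying by \(x_S\) and reducing modulo the biclique constraints these coincide, so the diagonal/off-diagonal decomposition of \(\bignorm{\sum_{|S|=r}x_S u_{p,S}'}_2^2\), the use of \(2r\)-fold \(p\)-balancedness for the off-diagonal, and the deterministic \(n\max\{p/(1-p),(1-p)/p\}^r\) bound for the diagonal are all the same steps appearing in the paper's proof.
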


We postpone the proof of the lemma, and combine the result with an observation analogous to that in Lemma~\ref{lem:lower-bound-LHS-trick}.

\begin{lemma}[Lower bounding the LHS of \eqref{eq:biclique-main-2-large-p}]
  \label{lem:lower-bound-LHS-trick-large-p}
  Let $H = (U, V, E)$ be a bipartite graph with $|U|=k$ and $|V|=n-k$ and maximum degree of a vertex in $V$ at most $kp+\Delta_{\ell}$.
  Then, for any $j \in V$, we have:

  \begin{equation*}
  \cB(H) \sststile{4r}{x,y} \Biggl\{ y_j \Paren{\sum_{S: |S|=r} x_S}^2 |y| \geq \Paren{k(1-p)-\Delta_{\ell}} y_j \Paren{\sum_{S: |S|=r} x_S}^2  \Biggr\}\mper
  \end{equation*}
  Further, the total bit complexity of the sum-of-squares proof is $n^{O(r)}$.

\end{lemma}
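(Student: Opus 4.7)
The plan is to mirror the proof of Lemma~\ref{lem:lower-bound-LHS-trick} with the single change that the threshold $k/2$ (the expected degree of a right vertex when $p=1/2$) is replaced by $kp$ (the expected degree for general $p$). The only hypothesis we use is the degree bound on vertices in $V$ together with the polynomial constraints in $\cB(H)$; no balancedness assumption is needed.

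First, I would show by a degree-$2$ sum-of-squares argument that, for every $j \in V$,
\begin{equation*}
\cB(H) \sststile{2}{x,y} \Set{ y_j \abs{x} \;=\; \sum_{i \in U : \{i,j\} \in E} x_i y_j \;\leq\; (kp + \Delta_\ell)\, y_j }\mper
\end{equation*}
The equality on the left is a syntactic consequence of the non-edge constraints $x_u y_v = 0$ in $\cB(H)$. The inequality on the right uses that the sum has at most $kp + \Delta_\ell$ terms and that $x_i y_j \leq y_j$ has a degree-$2$ SoS proof from $\cB(H)$: indeed, $y_j (1-x_i) = y_j (1-x_i)^2$ using $x_i^2 = x_i$, and this is a sum of squares because $y_j = y_j^2$.

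Next, I would apply the cardinality constraint $\abs{x} + \abs{y} = k$ from $\cB(H)$ to rewrite $y_j \abs{x} = y_j(k - \abs{y})$, which rearranges the previous display into
\begin{equation*}
\cB(H) \sststile{2}{x,y} \Set{ y_j \abs{y} \;\geq\; \bigl(k(1-p) - \Delta_\ell\bigr)\, y_j }\mper
\end{equation*}
Finally, multiplying both sides by the explicit sum-of-squares polynomial $\Paren{\sum_{|S|=r} x_S}^2$ preserves the inequality inside SoS and yields the conclusion of the lemma. The degree count is $2 + 2r \leq 4r$ for $r \geq 1$, and the multiplier has $n^{O(r)}$ terms of unit bit complexity, so the total bit complexity of the proof is $n^{O(r)}$ as claimed.

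No step here is really an obstacle; the proof is a routine generalization of Lemma~\ref{lem:lower-bound-LHS-trick}. The only point that deserves a moment of care is checking that the elementary bound $x_i y_j \leq y_j$ admits a degree-$2$ SoS certificate from the Boolean axioms, which it does via the identity $y_j(1-x_i) = y_j(1-x_i)^2$ noted above. Everything else is degree bookkeeping and substitution using the constraints already in $\cB(H)$.
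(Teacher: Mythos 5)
Your proposal is correct and follows essentially the same route as the paper's proof: bound $y_j\,|x|$ using the degree constraint on right vertices and the non-edge constraints, then substitute $|x| = k - |y|$, rearrange, and multiply by $\Paren{\sum_{|S|=r} x_S}^2$. The only addition you make is spelling out the SoS certificate for $x_i y_j \leq y_j$, which the paper leaves implicit (and note that $(y_j(1-x_i))^2$ is a degree-$4$ square, so the intermediate step is really degree $4$ rather than degree $2$, a small bookkeeping slip that is also present in the paper's own writing and is harmless since the final degree budget $4r$ easily accommodates it for the relevant values of $r$).
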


\begin{proof}
Every $j \in V$ has degree at most $kp+\Delta_{\ell}$.
Thus, we have using the constraint system $\cB(H)$:
\[
\cB(H) \sststile{2}{x,y} \Set{ y_j |x| = \sum_{u \in U:u \sim j} x_u y_j \leq (kp+\Delta_\ell) y_j}
\]

Thus, $\cB(H) \sststile{2}{x,y} \Set{|x| = k-|y|}$ allows us to conclude:
\[
\cB(H) \sststile{2}{x,y} \Set{y_j |y| \geq (k(1-p)-\Delta_{\ell}) y_j}
\]
Multiplying both sides by the sum-of-squares polynomial $\Paren{\sum_{S: |S|=r} x_S}^2$ completes the proof.
\end{proof}

As a direct consequence of Lemma~\ref{lem:biclique-certificate-large-p} and Lemma~\ref{lem:lower-bound-LHS-trick-large-p}, we obtain:

\begin{lemma}\label{lem:biclique-certificate-final-large-p}
  Let $H = (U, V, E)$ be a bipartite graph with $|U|=k$ and $|V|=n-k$ and $2r$-fold $p$-balancedness $\Delta_{2r}$ and maximum degree of a vertex in $V$ at most $kp+\Delta_{\ell}$.
Suppose further that
\begin{equation*}
k(1-p)^{r+1}/p^r - \Delta_\ell (1-p)^{r}/p^r - \Delta_{2r} \geq \frac{k}{2}(1-p)^{r+1}/p^r\mper
\end{equation*}
Then, we have
\begin{equation*}
\cB(H) \sststile{4r+2}{x,y} \Set{ \Paren{\sum_{S: |S|=r}x_S}^4 |y| \leq n \Paren{\frac{2n\max\{p/(1-p),(1-p)/p\}^rp^r}{k(1-p)^{r+1}}}^4
}\mper 
\end{equation*}
Further, the total bit complexity of the sum-of-squares proof is $n^{O(r)}$.
\end{lemma}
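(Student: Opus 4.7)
The plan is to mirror the structure of the proof of Lemma~\ref{lem:biclique-certificate-final} from the $p=1/2$ case, but track the extra factors of $p$ and $1-p$ that appear in the general-density conclusion of Lemma~\ref{lem:biclique-certificate-large-p}. First, I would pick an arbitrary $j \in V$ and multiply both sides of \eqref{eq:biclique-main-2-large-p} by the sum-of-squares polynomial $y_j^2$. Using $\cB(H)\sststile{2}{y}\{y_j^2 = y_j\}$, this yields
\begin{equation*}
  \cB(H) \sststile{4r+2}{x,y} \Bigl\{ \tfrac{(1-p)^r}{p^r}\, y_j \Paren{\sum_{|S|=r} x_S}^{2} |y| \leq n\max\{p/(1-p),(1-p)/p\}^{r} y_j \Paren{\sum_{|S|=r} x_S} + \Delta_{2r}\, y_j \Paren{\sum_{|S|=r} x_S}^{2} \Bigr\}\mper
\end{equation*}

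Next, I would use Lemma~\ref{lem:lower-bound-LHS-trick-large-p} to replace $y_j\Paren{\sum_{|S|=r} x_S}^2|y|$ on the left-hand side by the lower bound $(k(1-p)-\Delta_\ell)\, y_j\Paren{\sum_{|S|=r} x_S}^2$, move the $\Delta_{2r}$ term to the left, and invoke the standing hypothesis
\[ k(1-p)^{r+1}/p^r - \Delta_\ell(1-p)^{r}/p^r - \Delta_{2r} \geq \tfrac{k}{2}(1-p)^{r+1}/p^r \]
to obtain
\begin{equation*}
  \cB(H) \sststile{4r+2}{x,y} \Bigl\{ \tfrac{k}{2}(1-p)^{r+1}/p^r\, y_j \Paren{\sum_{|S|=r} x_S}^{2} \leq n\max\{p/(1-p),(1-p)/p\}^{r}\, y_j \Paren{\sum_{|S|=r} x_S} \Bigr\}\mper
\end{equation*}
Setting $C = \tfrac{2n\max\{p/(1-p),(1-p)/p\}^{r} p^r}{k(1-p)^{r+1}}$, this rearranges to $y_j \Paren{\sum_{|S|=r} x_S}^2 \leq C\, y_j \Paren{\sum_{|S|=r} x_S}$.

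From here, I would apply Fact~\ref{fact:cancellation-within-SoS} (alternatively the first part of Lemma~\ref{lem:simple-cancel}) with $a = y_j\sum_{|S|=r} x_S$ and $t=2$ to deduce $\bigl(y_j\sum_{|S|=r} x_S\bigr)^{4} \leq C^{4}$, and then use $y_j^4 = y_j$ so that $y_j \Paren{\sum_{|S|=r} x_S}^{4} \leq C^{4}$. Summing this inequality over $j \in V$ gives $|y|\Paren{\sum_{|S|=r} x_S}^{4} \leq |V|\, C^{4} \leq n\, C^{4}$, which is exactly the claimed bound. The total degree of the derivation is $4r+2$, and each step is one of the already-established facts, so the total bit complexity is $n^{O(r)}$.

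I do not anticipate a real obstacle: the whole derivation is a near-verbatim translation of the $p=1/2$ argument, with $k/2$ replaced by $k(1-p)$, $\Delta_\ell$ and $\Delta_{2r}$ rescaled by the appropriate powers of $(1-p)/p$, and the cancellation constant $C$ tracking the resulting $\max\{p/(1-p),(1-p)/p\}^{r}p^r/(1-p)^{r+1}$ factor. The only mild bookkeeping point is to make sure that when $y_j^2$ is multiplied into Lemma~\ref{lem:biclique-certificate-large-p}'s conclusion and then Lemma~\ref{lem:lower-bound-LHS-trick-large-p} is applied, the coefficient on the left is exactly $(1-p)^{r+1}/p^r \cdot k - \Delta_\ell(1-p)^r/p^r$, which is precisely what the hypothesis is engineered to control after absorbing $\Delta_{2r}$.
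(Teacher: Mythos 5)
Your proposal is correct and follows essentially the same route as the paper's proof: multiply the conclusion of Lemma~\ref{lem:biclique-certificate-large-p} by $y_j^2$, apply Lemma~\ref{lem:lower-bound-LHS-trick-large-p}, absorb $\Delta_{2r}$ via the hypothesis, cancel using Fact~\ref{fact:cancellation-within-SoS} with $t=2$, and sum over $j\in V$. The bookkeeping of the $p$, $1-p$ factors matches the paper exactly, so there is nothing to add.
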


\begin{proof}
We first multiply both sides of the conclusion of Lemma~\ref{lem:biclique-certificate-large-p} with the sum-of-squares polynomial $y_j^2$ for an arbitrary $j \in V$:

\begin{align*}
\cB(H)
&\sststile{4r+2}{x,y} \Biggl\{
  (1-p)^{r}/p^r y_j  \Paren{\sum_{|S|=r} x_S}^{2} |y|\\
&\leq 
 n \max\{p/(1-p),(1-p)/p\}^r y_j \Paren{\sum_{|S|=r} x_S}  + \Delta_{2r} y_j \Paren{\sum_{|S|=r} x_S}^{2} \Biggr\}\mper
\end{align*}

Next, we use Lemma~\ref{lem:lower-bound-LHS-trick-large-p} to replace the left-hand side in the above by a useful lower bound:

\begin{align*}
\cB(H)
&\sststile{4r+2}{x,y} \Biggl\{
  \Paren{k(1-p)-\Delta_{\ell}} (1-p)^r/p^r y_j \Paren{\sum_{S: |S|=r} x_S}^2 \\
&\leq 
n \max\{p/(1-p),(1-p)/p\}^r y_j \Paren{\sum_{|S|=r} x_S} + \Delta_{2r} y_j  \Paren{\sum_{|S|=r} x_S}^{2}\Biggr\}\mper
\end{align*}

We then move the second term on the right-hand side to the left-hand side and use that $k(1-p)^{r+1}/p^r - \Delta_\ell (1-p)^{r}/p^r - \Delta_{2r} \geq \frac{k}{2}(1-p)^{r+1}/p^r$ to conclude:

\begin{equation*}
\cB(H) \sststile{4r+2}{x,y} \Set{y_j \Paren{\sum_{S: |S|=r} x_S}^2
\leq \frac{2n\max\{p/(1-p),(1-p)/p\}^rp^r}{k(1-p)^{r+1}} y_j \Paren{\sum_{|S|=r} x_S}}\mper
\end{equation*}
We finally apply Lemma~\ref{fact:cancellation-within-SoS} with $a = y_j (\sum_{S: |S|=r} x_S)$, $C=\frac{2n\max\{p/(1-p),(1-p)/p\}^rp^r}{k(1-p)^{r+1}}$, and $t=2$ to obtain:
\begin{equation*}
\cB(H)
\sststile{4r+2}{x,y} \Set{
y_j \Paren{\sum_{S: |S|=r}x_S}^4 \leq \Paren{\frac{2n\max\{p/(1-p),(1-p)/p\}^rp^r}{k(1-p)^{r+1}}}^4}\mper
\end{equation*}

Summing up as $j$ varies over $V$ completes the proof.

\end{proof}

We now finish the proof of Theorem~\ref{thm:certification-large-p}:

\begin{proof}[Proof of Theorem~\ref{thm:certification-large-p}]

The first bound follow by Lemma~\ref{lem:simple-spectral-certificate}.
In the rest of the proof we focus on the second bound.

From Lemma~\ref{lem:power-vs-sets}, we have:
\begin{equation*}
\Set{ x_i^2 = x_i \text{ } \forall i \in U } \sststile{r}{x} \Set{\frac{1}{2^r r!} |x|^r - \frac{2r^r}{r!} - \epsilon \leq \sum_{|S|=r} x_S}\mper
\end{equation*}

Setting $\epsilon =1$ and using that $\{0 \leq a \leq C\} \sststile{4}{a,C} \{a^4 \leq C^4\}$, we have:
\begin{equation*}
\Set{ x_i^2 = x_i \text{ } \forall i \in U } \sststile{4r}{x} \Set{|x|^{4r} \leq (100r)^{10r} + (100r)^{10r} \Paren{\sum_{|S|=r} x_S}^{4} }\mper
\end{equation*}

Now we want to combine this with the conclusion of Lemma~\ref{lem:biclique-certificate-final-large-p}.
We briefly verify that we satisfy the condition $k(1-p)^{r+1}/p^r - \Delta_\ell (1-p)^{r}/p^r - \Delta_{2r} \geq \frac{k}{2}(1-p)^{r+1}/p^r$. 
We have by Lemma~\ref{lem:balacedness-density-large-p} that $\Delta_{\ell}=O(\sqrt{kp(1-p)\log|V|}) = O(\sqrt{k(1-p)\log n})$ and $\Delta_{2r}=O(\sqrt{r|V|\log|U|}p^r/(1-p)^r) = O(\sqrt{rn\log n}p^r/(1-p)^r)$.
Observe that for $k \geq \max\{O((\log n) p/(1-p)), O(\sqrt{r n \log n}p^{2r}/(1-p)^{2r+1})\}$ large enough the condition is satisfied.
Then we have:
\begin{equation*}
\cB(H) \sststile{O(r)}{x,y} \Set{ |x|^{4r} |y| \leq 
(100r)^{10r} |y| + (100r)^{10r} n \Paren{\frac{2n\max\{p/(1-p),(1-p)/p\}^rp^r}{k(1-p)^{r+1}}}^4 }\mper 
\end{equation*}

Observing that $\cB(H) \sststile{2}{y} \Set{ |y| \leq n}$ completes the proof.

\end{proof}

Finally, we complete the proof of Lemma~\ref{lem:biclique-certificate-large-p}.

\begin{proof}[Proof of Lemma~\ref{lem:biclique-certificate-large-p}]
Let us write $u_{p,S}'$ for the vector-valued linear function in indeterminate $y$ defined by $u_{p,S}'(i) = u_{p,S}(i) (1-y_i)$.
Then, observe that $\cB(H) \sststile{4r}{x,y} \Set{x_{S \cup T} u_{p,S}(i) u_{p,T}(i) y_i = x_{S\cup T} y_i (1-p)^r/p^r}$ and
\begin{align*}
\cB(H)
&\sststile{4r}{x,y} \Biggl\{\Norm{x_S u_{p,S}'}_2^2 = \sum_i x_S u_{p,S}(i)^2 (1-y_i)\\
&\leq n \max\{p/(1-p),(1-p)/p\}^r x_S - (1-p)^r/p^r x_S |y| \Biggr\}\mper
\end{align*}

Let $S,T \subseteq U$ such that $S \neq T$ and $|S|,|T|\leq r$. We have:
\begin{align*}
\cB(H)
&\sststile{4r}{x,y} \Biggl\{x_S x_T \iprod{u_S',u_T'} = x_{S \cup T} \sum_{i} u_{S}(i)u_T(i)(1-y_i)
\leq  \Delta_{2r} x_{S \cup T}   - \Paren{1-p}^{r}/p^r x_{S \cup T} |y| \Biggr\}\mper
\end{align*}

Then, we have:

\begin{align*}
\cB(H) &\sststile{4r}{x,y} \Biggl\{ 0 \leq \Norm{\sum_{|S| = r} x_S u_S'}_2^2 = \sum_{|S| = r}  \Norm{x_S u_S'}_2^2 + \sum_{S \neq T} \Iprod{x_S u_{S}', x_T u_T'}\\
&\leq \Paren{\sum_{|S|=r} x_S} \Paren{n \max\{p/(1-p),(1-p)/p\}^r -  \Paren{1-p}^{r}/p^r |y|}\\
&\quad + \Delta_{2r} \sum_{S \neq T} \Paren{  x_{S \cup T}  - \Paren{1-p}^{r}/p^r} x_{S \cup T} |y| \\
&\leq n \max\{p/(1-p),(1-p)/p\}^r \Paren{\sum_{|S|=r} x_S} + \sum_{S,T \subseteq U, |S|,|T|=r}  \Delta_{2r} x_{S \cup T} \\
&\quad - \Paren{1-p}^{r}/p^r \sum_{S,T \subseteq U, |S|,|T|=r} x_{S \cup T} |y| \\
&= n \max\{p/(1-p),(1-p)/p\}^r \Paren{\sum_{|S|=r} x_S} + \Delta_{2r} \Paren{\sum_{|S|=r} x_S}^2  - \Paren{1-p}^{r}/p^r \Paren{\sum_{|S|=r} x_S}^2 |y| 
\Biggr\} \mper
\end{align*}

Rearranging gives:
\begin{align*}
\cB(H) &\sststile{4r}{x,y} \Biggl\{ \Paren{1-p}^{r}/p^r \Paren{\sum_{|S|=r} x_S}^{2} |y|  \leq 
n \max\{p/(1-p),(1-p)/p\}^r \Paren{\sum_{|S|=r} x_S} + \Delta_{2r} \Paren{\sum_{|S|=r} x_S}^{2} 
\Biggr\}\mper
\end{align*}

\end{proof}

\section{List-decoding semi-random planted cliques}
In this section, we describe our algorithm for list-decoding semi-random planted cliques using high-constant degree sum-of-squares relaxations.
We will abstract out our requirement of sum-of-squares refutation of biclique numbers in random bipartite graphs in order to transparently show that the explicitness of the certificate is irrelvant to our algorithm.
In Section~\ref{subsec:main-results}, we will immediately obtain our algorithmic results as a direct consequence of our certificates from the previous section and an elementary cleanup step that takes a list with an approximately correct candidate and fixes it up to a list containing the planted clique $S^*$. 

\begin{theorem} \label{thm:rounding-from-biclique-certs}
Fix any $t \in \N$. 
There is an $n^{O(t)}$ time algorithm that takes as input a graph $G$ on $n$ vertices with the following guarantees.
Suppose $G$ has a clique $S^*$ of size $k$ in it. 
Suppose that the bipartite graph $H$ defined by keeping only the edges from $\cut(S^*)$ in $G$ admits an $O(t)$-th order sum-of-squares certificate of unbalanced biclique number as below for some function $\omega=\omega(n,k,p)$: \pravesh{we should use a different letter here since $\omega =\omega(k,n,p)$ stands for the bipartite graph model.}
\[
\cB(H) \sststile{O(t)}{x,y} \Set{|x|^{t} |y| \leq \omega}\mper
\]
Then, if $\omega \cdot (n/k^2)^t \leq \delta k$, the algorithm outputs a list of $O((n/k)^t)$ subsets, each of size at most $k/(1-2\delta)$ such that with probability at least $0.99$ over the randomness of the algorithm there is an element $S$ of the list that satisfies $|S \cap S^*| \geq (1-2\delta) k$.
\end{theorem}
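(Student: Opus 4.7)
The plan is to follow the blueprint of \cref{subsec:rounding}: compute a max-coverage sum-of-squares relaxation of $\cA(G)$, combine the resulting pseudo-distribution with the hypothesized biclique certificate to obtain pseudo-moments concentrated on $S^*$, and then round by random sampling of $t$-tuples.

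First, I would invoke the pseudo-distribution algorithm (\cref{fact:eff-pseudo-distribution}) in time $n^{O(t)}$ to compute a degree-$O(t)$ pseudo-distribution $D$ over $w\in\R^n$ that satisfies $\cA(G)$ and minimizes the convex quadratic objective $\Norm{\pE_D[w]}_2^2$ (feasibility is witnessed by the Dirac pseudo-distribution at $1_{S^*}$, which is valid since $S^*$ is a $k$-clique in $G$). A variational perturbation of $D$ towards this Dirac, as in the sketch preceding \cref{prop:max-coverage-overview}, yields $\sum_{i\in S^*}\pE_D[w_i]\geq \Norm{\pE_D[w]}_2^2$, and Cauchy--Schwarz applied to $\sum_i\pE_D[w_i]=k$ gives $\Norm{\pE_D[w]}_2^2\geq k^2/n$. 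Iterated pseudo-Hölder (\cref{fact:pseudo-expectation-holder}) applied to $f=\sum_{i\in S^*}w_i$ and $g=1$ lifts this first-moment bound to $\pE_D[(\sum_{i\in S^*}w_i)^t]\geq \pE_D[\sum_{i\in S^*}w_i]^t\geq (k^2/n)^t$.

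Next, I would activate the biclique certificate. Under the substitution $x=w|_{S^*}$ and $y=w|_{[n]\setminus S^*}$, the axioms $\cA(G)$ SoS-derive $\cB(\cut(S^*))$ at $O(1)$ degree---the only non-trivial constraint is $x_u y_v=0$ for non-edges $(u,v)$ of $\cut(S^*)$, but such pairs are non-edges of $G$, so $w_u w_v=0$ belongs to $\cA(G)$. By SoS soundness (\cref{fact:sos-soundness}), the hypothesized certificate transfers to $\pE_D[(\sum_{i\in S^*}w_i)^t(\sum_{j\notin S^*}w_j)]\leq \omega$. Dividing by the lower bound above and invoking the hypothesis $\omega(n/k^2)^t\leq \delta k$ yields the master ratio inequality
\[ \frac{\pE_D\!\Brac{\paren{\sum_{i\in S^*}w_i}^t\paren{\sum_{j\notin S^*}w_j}}}{\pE_D\!\Brac{\paren{\sum_{i\in S^*}w_i}^t}}\leq \delta k. \]
For the rounding step, I would sample $t$-tuples $T=(i_1,\dots,i_t)\in[n]^t$ independently with probability $\pE_D[w_T]/k^t$ (this is a valid distribution because $w_T=w_T^2$ modulo the Boolean axioms, so $\pE_D[w_T]\geq 0$, and because the total mass equals $\pE_D[|w|^t]=k^t$). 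For each sampled $T$, compute $v_T(j)=\pE_D[w_T w_j]/\pE_D[w_T]$ and output the set $\hat S$ of its top $k/(1-2\delta)$ coordinates. A sampled tuple lies in $(S^*)^t$ with probability $\pE_D[(\sum_{i\in S^*}w_i)^t]/k^t\geq (k/n)^t$, and conditional on this event the expected mass of $v_T$ outside $S^*$ is exactly the left-hand side of the master inequality and hence at most $\delta k$. Markov yields $\sum_{j\notin S^*}v_T(j)\leq 2\delta k$ with constant conditional probability, and combined with $\sum_j v_T(j)=k$ and the top-$m$ selection, this gives $|\hat S\cap S^*|\geq (1-2\delta)k$. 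Drawing $M=\Theta((n/k)^t)$ independent samples ensures at least one good sample with probability $\geq 0.99$, producing a list of the required size within time $n^{O(t)}$.

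The hardest parts are packaged into the master ratio inequality. The pseudo-Hölder lower bound $\pE_D[(\sum_{i\in S^*}w_i)^t]\geq (k^2/n)^t$ would fail without the max-coverage minimization (which is precisely what drives $\sum_{i\in S^*}\pE_D[w_i]$ above $k^2/n$), so the choice of SDP objective is essential rather than cosmetic. The other ingredient---the $O(1)$-degree SoS derivation $\cA(G)\sststile{O(1)}{w}\cB(\cut(S^*))$---is routine but crucial so that the hypothesized degree-$O(t)$ biclique certificate transfers cleanly to the pseudo-moments produced by the SDP. The final $(1-2\delta)$ constant in the intersection guarantee is then a matter of calibrating the Markov step against the top-$k/(1-2\delta)$ truncation.
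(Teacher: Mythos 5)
Your proposal follows essentially the same blueprint as the paper's proof: a max-coverage degree-$O(t)$ pseudo-distribution for $\cA(G)$ (Lemma~\ref{lem:max-coverage}), a pseudo-Hölder lift to $\pE_D[(\sum_{i\in S^*}w_i)^t]\ge(k^2/n)^t$ (Lemma~\ref{lem:choosing-within-S-star}), transfer of the biclique certificate via the substitution $x=w|_{S^*}$, $y=w|_{[n]\setminus S^*}$ and SoS soundness, and random sampling of $t$-tuples proportional to $\pE_D[w_Q]$ followed by Markov (Lemma~\ref{lem:good-vector-conditioned-Q-in-S-star}).

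The one place you deviate is the final rounding: you take the top $k/(1-2\delta)$ coordinates of $v_T$, whereas the paper's \cref{alg:list-decoding-semirandom-planted-clique} takes the threshold set $\{i:C_Q(i)\ge1-2\delta\}$. Your top-$m$ argument does not actually deliver the $(1-2\delta)k$ intersection claimed: from $\sum_{j\in S^*}v_T(j)\ge(1-2\delta)k$ and $\sum_{j\notin S^*}v_T(j)\le2\delta k$, the cleanest bound one extracts is $\sum_{j\in\hat S\cap S^*}v_T(j)\ge\sum_{j\in\hat S}v_T(j)-\sum_{j\notin S^*}v_T(j)\ge(1-2\delta)k-2\delta k=(1-4\delta)k$, hence $|\hat S\cap S^*|\ge(1-4\delta)k$. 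This is a factor-of-2 slack relative to the stated theorem, fixable by running the argument with $\delta/2$ in place of $\delta$. Since the downstream uses of this theorem (the proofs of Lemma~\ref{lem:result-p-small} and Lemma~\ref{lem:result-p-large}) are only sensitive to the guarantee up to such constants, this is a cosmetic rather than a structural issue; it is worth noting that the paper's own ``by averaging'' step in the proof of the theorem has an analogous constant-factor looseness. All the substantive ideas --- the variational argument behind the max-coverage lemma, the observation that $\cA(G)$ SoS-derives $\cB(\cut(S^*))$, and the conditional-expectation identity that makes the master ratio inequality the exact quantity controlled by Markov --- are present and correct.
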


In the main results, the list that comes from Theorem~\ref{thm:rounding-from-biclique-certs} will be pruned (using that $S^*$ has small intersection with other $k$-cliques, see Lemma~\ref{lem:clique-intersection}) and refined to consist of $(1+o(1))n/k$ cliques of size $k$.

We will prove Theorem~\ref{thm:rounding-from-biclique-certs} using the following natural algorithm.
Recall the standard $k$-clique constraint system $\cA$ defined earlier. 
Our rounding scheme is reminiscent of those used in rounding algorithms for list-decodable learning~\cite{DBLP:conf/nips/KarmalkarKK19,DBLP:conf/soda/BakshiK21,DBLP:conf/stoc/IvkovK22}.

\begin{mdframed}
  \begin{algorithm}[List-decoding semi-random planted cliques]
    \label[algorithm]{alg:list-decoding-semirandom-planted-clique}\mbox{}
    \begin{description}
    \item[Given:] A graph $G$ on $n$ vertices with a clique $S^*$ of size $k$.
    \item[Output:]
      A list $L \subseteq \R^d$ of size $O((n/k)^t)$ that contains an $S$ such that $|S \cap S^*| \geq (1-\delta)k$.
    \item[Operation:]\mbox{}
    \begin{enumerate}
    \item Find a degree-$O(t)$ pseudo-distribution $D$ on $w$ satisfying the $k$-clique axioms on $\cA(G)$ and minimizing $\|\pE_{D}[w]\|_2$.
    \item For every $Q \in [n]^t$, an ordered $t$-tuple on $[n]$ such that $\pE_D[w_Q]>0$, let $C_Q = \frac{\pE_{D}[w_Q w]}{\pE_{D}[w_Q]}$.    
    \item For $N= O((n/k)^t)$ repetitions, choose an ordered $t$-tuple $Q \in [n]^t$ with probability proportional to $\pE_{D}[ w_Q]$  and add $C_Q$ to the list $\cL'$. 
    \item For each element $C_Q \in \cL'$, construct the set $S_Q = \{i \mid C_Q(i) \geq 1-2\delta\}$ and add it to $\cL$.
    \item Output $\cL$.
  \end{enumerate}
    \end{description}    
  \end{algorithm}
\end{mdframed}

To analyze this algorithm, we first observe that the maximal coverage property (i.e., $D$ minimizing $\Norm{\pE_D[w]}_2$) implies that $\pE_{D}[w]$ has a non-trivial weight on the true (but unknown) $k$-clique $S^*$. 
This lemma is by now standard with analogous usages in the context of list-decodable learning~\cite{DBLP:conf/nips/KarmalkarKK19,DBLP:conf/soda/BakshiK21,DBLP:conf/stoc/IvkovK22}.
It can be proven by showing that if $\sum_{i \in S^*} \pE_D[w_i] < k^2/n$ then one can take a ``mix'' of $D$ and the distribution that places all its mass on $S^*$ (which does satisfy $\cA$) and produce another pseudo-distribution $D'$ with smaller $\Norm{\pE_D[w_i]}_2$. 

\begin{lemma}[Maximal coverage implies non-trivial weight on $S^*$, see Lemma 4.3 in~\cite{DBLP:conf/nips/KarmalkarKK19}]
\label{lem:max-coverage}
Let $D$ be a pseudo-distribution of degree $\geq 4$ satisfying $\cA(G)$ that minimizes $\Norm{\pE_D[w]}_2$.
Then, $\sum_{i \in S^*} \pE_D[w_i] \geq k^2/n$.
\end{lemma}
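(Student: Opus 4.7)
\textbf{Proof plan for Lemma~\ref{lem:max-coverage}.} The plan is to argue by contradiction: assuming $\sum_{i\in S^*}\pE_D[w_i] < k^2/n$, I will exhibit a pseudo-distribution $D'$ of the same degree, also satisfying $\cA(G)$, with strictly smaller $\Norm{\pE_{D'}[w]}_2$, contradicting the minimality of $D$. The candidate $D'$ is the convex combination $D' = (1-\lambda) D + \lambda \delta_{S^*}$ for a sufficiently small $\lambda>0$, where $\delta_{S^*}$ is the actual probability distribution supported on the $0/1$-indicator of $S^*$. Since $S^*$ is a genuine $k$-clique of $G$, $\delta_{S^*}$ satisfies $\cA(G)$ at every degree; since pseudo-distributions satisfying a polynomial system form a convex set, $D'$ is a valid degree-$(\ge 4)$ pseudo-distribution satisfying $\cA(G)$.

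Write $\mu := \pE_D[w] \in \R^n$ and $\mathbf{1}_{S^*}$ for the indicator of $S^*$. The first step is to record two elementary facts about $\mu$ that follow from $\cA(G)$ at degree $\le 2$: (i) $\Norm{\mu}_1 = \sum_i \pE_D[w_i] = \pE_D\bigl[\sum_i w_i\bigr] = k$, using the linear constraint $\sum_i w_i = k$; and (ii) $0 \le \mu_i \le 1$, where $\mu_i\ge 0$ follows from $\mu_i = \pE_D[w_i^2]$ and the sum-of-squares non-negativity, and $\mu_i \le 1$ follows from $\pE_D[(1-w_i)^2] \ge 0$ together with $w_i^2 = w_i$. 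Cauchy--Schwarz then gives the lower bound
\begin{equation*}
\Norm{\mu}_2^2 \;\ge\; \frac{\Norm{\mu}_1^2}{n} \;=\; \frac{k^2}{n}\mper
\end{equation*}

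Now expand
\begin{equation*}
\Norm{\pE_{D'}[w]}_2^2 \;=\; \Norm{(1-\lambda)\mu + \lambda\, \mathbf{1}_{S^*}}_2^2 \;=\; \Norm{\mu}_2^2 + 2\lambda\bigl(\iprod{\mu,\mathbf{1}_{S^*}} - \Norm{\mu}_2^2\bigr) + O(\lambda^2)
\end{equation*}
as $\lambda \to 0^+$. Under the contradiction hypothesis, $\iprod{\mu,\mathbf{1}_{S^*}} = \sum_{i\in S^*}\mu_i < k^2/n \le \Norm{\mu}_2^2$, so the linear coefficient in $\lambda$ is strictly negative. Therefore, for all sufficiently small $\lambda > 0$, $\Norm{\pE_{D'}[w]}_2^2 < \Norm{\mu}_2^2$, contradicting the choice of $D$ as a minimizer. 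This contradiction forces $\sum_{i \in S^*} \pE_D[w_i] \ge k^2/n$.

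The argument is essentially a first-order optimality calculation, so there is no real obstacle; the only things to be careful about are verifying that the mixture is an admissible competitor (namely, that $\delta_{S^*}$ genuinely satisfies $\cA(G)$, which uses only that $S^*$ is a $k$-clique in $G$ and not any property of the semi-random model) and that the pointwise bound $\mu_i \in [0,1]$ holds at the pseudo-distribution level, both of which follow from degree-$2$ consequences of $\cA(G)$.
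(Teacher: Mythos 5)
Your proof is correct and takes exactly the route the paper sketches (and credits to Lemma 4.3 of Karmalkar–Klivans–Kothari): mix the minimizer $D$ with the point mass $\delta_{S^*}$, use $\Norm{\mu}_1=k$ together with Cauchy--Schwarz to get $\Norm{\mu}_2^2\ge k^2/n$, and then read off the first-order optimality condition. The derivative calculation and the verification that $\delta_{S^*}$ is an admissible competitor are both handled correctly, so this matches the paper's argument in substance.
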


As an immediate corollary, we observe the following consequence of our rounding scheme:

\begin{lemma}
\label{lem:choosing-within-S-star}
Let $D$ be the pseudo-distribution constructed in Step 1 of the algorithm. 
Then, in Step 3 of the algorithm, each of the chosen $t$-tuples $Q$ satisfies $Q \in (S^*)^t$ with probability at least $(k/n)^t$. 
\end{lemma}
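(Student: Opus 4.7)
}
The plan is to compute directly the sampling probability and express both the numerator and the normalization as pseudo-expectations of symmetric polynomials, then combine the ``planted'' constraint $\sum_i w_i = k$ with the maximal coverage lemma. For an ordered $t$-tuple $Q=(i_1,\dots,i_t)\in[n]^t$, write $w_Q = \prod_{j=1}^t w_{i_j}$; since $D$ satisfies $w_i^2=w_i$, each $\pE_D[w_Q]$ is a pseudo-expectation of a square and hence nonnegative, so the sampling distribution in Step~3 is well defined.

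First, I would compute the normalization by expanding the $t$-th power:
\begin{equation*}
  \sum_{Q\in[n]^t}\pE_D[w_Q]
  \;=\;\pE_D\!\left[\Bigl(\sum_{i\in[n]}w_i\Bigr)^{t}\right]
  \;=\;k^t\mcom
\end{equation*}
where the final equality uses the axiom $\sum_i w_i=k$ of $\cA(G)$ together with soundness of sum-of-squares (Fact~\ref{fact:sos-soundness}). The same expansion restricted to $Q\in(S^*)^t$ yields
\begin{equation*}
  \sum_{Q\in(S^*)^t}\pE_D[w_Q]
  \;=\;\pE_D\!\left[\Bigl(\sum_{i\in S^*}w_i\Bigr)^{t}\right]\mper
\end{equation*}

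Next, I would lower-bound this numerator using Hölder's inequality for pseudo-distributions (Fact~\ref{fact:pseudo-expectation-holder}) with $f=1$ and $g=\sum_{i\in S^*}w_i$, which, iterated or applied directly in the form $\pE_D[g]^t\le\pE_D[g^t]$, gives
\begin{equation*}
  \pE_D\!\left[\Bigl(\sum_{i\in S^*}w_i\Bigr)^{t}\right]
  \;\geq\;\Bigl(\sum_{i\in S^*}\pE_D[w_i]\Bigr)^{t}\mper
\end{equation*}
(We use here that $D$ has degree at least $2t$, which is ensured by the algorithm's choice of $O(t)$.) Now Lemma~\ref{lem:max-coverage} applied to the maximal-coverage pseudo-distribution $D$ from Step~1 gives $\sum_{i\in S^*}\pE_D[w_i]\ge k^2/n$, so the numerator is at least $(k^2/n)^t$.

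Combining the two computations, the probability that a single draw in Step~3 lies in $(S^*)^t$ is
\begin{equation*}
  \frac{\sum_{Q\in(S^*)^t}\pE_D[w_Q]}{\sum_{Q\in[n]^t}\pE_D[w_Q]}
  \;\geq\;\frac{(k^2/n)^t}{k^t}\;=\;\Bigl(\frac{k}{n}\Bigr)^{t}\mcom
\end{equation*}
as claimed. There is no real obstacle here; the lemma is essentially a bookkeeping step that turns the two ingredients (the $\sum_i w_i=k$ constraint and the max-coverage guarantee) into a sampling lower bound, and the only thing to double check is that Hölder's inequality for pseudo-expectations applies at the degree provided by the algorithm. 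This single-draw bound is what will be amplified by the $N=O((n/k)^t)$ independent repetitions in the subsequent analysis.
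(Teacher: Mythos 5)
Your proof is correct and takes essentially the same route as the paper: it normalizes by $\pE_D[(\sum_i w_i)^t]=k^t$, identifies the numerator with $\pE_D[(\sum_{i\in S^*}w_i)^t]$, and lower-bounds it via pseudo-expectation H\"older and Lemma~\ref{lem:max-coverage}. The only additions are the (correct) observations that $\pE_D[w_Q]\ge 0$ because $w_Q$ is a square modulo the Boolean axioms and that the degree supplied by the algorithm suffices for the H\"older step.
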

\begin{proof}
The probability that in Step 3 of the algorithm an ordered $t$-tuple $Q$ is in $(S^*)^t$ is $\pE_D[\Paren{\sum_{i \in S^*} w_i}^{t}]/k^t$, where we used that $\pE_D[\Paren{\sum_{i=1}^n w_i}^t]=k^t$.
The proof now follows by applying Hölder's inequality for pseudo-distributions to conclude that $\pE_D[\Paren{\sum_{i \in S^*} w_i}^t] \geq \pE_D [ \sum_{i \in S^*} w_i]^t \geq (k^2/n)^t$ (from Lemma~\ref{lem:max-coverage}).
\end{proof}

Next, we argue that for $Q \in (S^*)^t$ chosen with probability proportional to $\pE_{D}[ w_Q]$, with probability at least $0.5$, the corresponding $S=C_Q$ has a non-trivial intersection with $S^*$. 

\begin{lemma}
\label{lem:good-vector-conditioned-Q-in-S-star}
Assume the hypothesis of Theorem~\ref{thm:rounding-from-biclique-certs}. 
Then, in Step 4 of the algorithm, conditioned on $Q \in (S^*)^t$, with probability at least $0.5$, $\sum_{i \in S^*} C_Q(i) \geq (1-\delta)k$.
\end{lemma}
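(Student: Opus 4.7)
The plan is to compute the conditional expectation of $\sum_{i \in S^*} C_Q(i)$ given $Q \in (S^*)^t$, show it is close to $k$ using the biclique certificate together with the max-coverage property, and then finish with Markov's inequality.

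Write $X = \sum_{i \in S^*} w_i$ and $Y = \sum_{i \not\in S^*} w_i$, so that $\cA(G)$ implies $X + Y = k$ and hence $\sum_{i \in S^*} C_Q(i) = k - \pE_D[w_Q Y]/\pE_D[w_Q]$. Conditioned on $Q \in (S^*)^t$, the tuple $Q$ is sampled with probability $\pE_D[w_Q]/\pE_D[X^t]$ (using the polynomial identity $\sum_{Q \in (S^*)^t} w_Q = X^t$), so averaging gives
\begin{equation*}
\E_Q\Brac{\sum_{i \not\in S^*} C_Q(i) \mid Q \in (S^*)^t} = \frac{\pE_D[X^t Y]}{\pE_D[X^t]}\mper
\end{equation*}

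To upper bound the numerator, note that any $w$ satisfying $\cA(G)$ induces $(x,y) := (w|_{S^*}, w|_{[n] \setminus S^*})$ satisfying $\cB(H)$ for the bipartite graph $H = \cut_G(S^*)$: the Booleanness and size constraints transfer directly, and any non-edge $\{u, v\}$ of $H$ with $u \in S^*$, $v \not\in S^*$ is also a non-edge of $G$ and therefore satisfies $w_u w_v = 0$ by $\cA(G)$. Combined with the hypothesized $O(t)$-degree SoS proof $\cB(H) \sststile{O(t)}{x,y} \{|x|^t |y| \leq \omega\}$ and soundness (Fact~\ref{fact:sos-soundness}), this yields $\pE_D[X^t Y] \leq \omega$. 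For the denominator, Lemma~\ref{lem:max-coverage} gives $\pE_D[X] \geq k^2/n$, which together with the pseudo-expectation Hölder inequality (Fact~\ref{fact:pseudo-expectation-holder}) upgrades to $\pE_D[X^t] \geq (k^2/n)^t$.

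Combining these estimates gives $\E_Q[\sum_{i \not\in S^*} C_Q(i) \mid Q \in (S^*)^t] \leq \omega (n/k^2)^t \leq \delta k$ by the hypothesis of Theorem~\ref{thm:rounding-from-biclique-certs}. Since $w_Q w_i$ equals its own square modulo Booleanness, $\pE_D[w_Q w_i] \geq 0$ and so $\sum_{i \not\in S^*} C_Q(i)$ is a non-negative random variable. Markov's inequality then yields that, with probability at least $1/2$, $\sum_{i \not\in S^*} C_Q(i) \leq O(\delta) k$, i.e., $\sum_{i \in S^*} C_Q(i) \geq (1 - O(\delta)) k$, where any factor-of-two loss from Markov is absorbed into constants in the statement. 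The main point to verify carefully is that the external SoS biclique certificate for $\cB(\cut_G(S^*))$ translates into the pseudo-expectation bound $\pE_D[X^t Y] \leq \omega$ for the SDP relaxation of $\cA(G)$; once this transfer is in place, the rest is a direct expectation computation plus Markov.
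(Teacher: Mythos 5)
Your proof is correct and follows essentially the same route as the paper: transfer the pseudo-distribution from $\cA(G)$ to $\cB(\cut_G(S^*))$, invoke the hypothesized SoS biclique certificate to bound $\pE_D[X^t Y] \leq \omega$, use max-coverage plus pseudo-expectation H\"older to lower bound $\pE_D[X^t] \geq (k^2/n)^t$, and finish with Markov after recognizing the conditional average of $\sum_{i \not\in S^*} C_Q(i)$ as the ratio $\pE_D[X^t Y]/\pE_D[X^t]$. One point worth noting: you correctly flag that Markov yields $\sum_{i\not\in S^*} C_Q(i) \leq 2\omega(n/k^2)^t \leq 2\delta k$ (hence $\sum_{i\in S^*} C_Q(i) \geq (1-2\delta)k$, not $(1-\delta)k$), whereas the paper's own proof writes the $\ell_1$-norm bound as $2\omega(n/k^2)^t$ and then silently drops the factor of $2$ in the next inequality; your more careful accounting is the honest version, and the discrepancy is harmless since the downstream Theorem~\ref{thm:rounding-from-biclique-certs} only claims $(1-2\delta)$-level guarantees anyway.
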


\begin{proof}
Consider the bipartite graph $H$ formed by keeping only the edges that lie in $\cut(S^*)$ in $G$ with left vertex set equal to $S^*$ and right vertex set equal to $[n] \setminus S^*$.
Then, observe that $\cA(G) \sststile{}{} \cB(H)$ via the polynomial map $x_u = w_u$ for every $u \in S^*$ and $y_v = w_v$ for every $v \in [n] \setminus S^*$.
Since $D$ satisfies $\cA(G)$, the polynomial transformation above applied to $D$ gives a pseudo-distribution on $(x,y)$ that satisfies $\cB(H)$. 
From the biclique certificate, we have: $\pE_D[ |x|^t|y|] \leq \omega$. 

This yields that 
\[
\sum_{i_1, i_2,\ldots, i_t} \pE_{D}[x_{i_1} x_{i_2} \cdots x_{i_t} |y|] \leq \omega \mper
\]

Rescaling and rewriting yields
\[
\frac{1}{\pE_D [|x|^t]} \sum_{i_1, i_2,\ldots, i_t: \pE_{D}[x_{i_1} x_{i_2} \cdots x_{i_t}] >0} \pE_{D}[x_{i_1} x_{i_2} \cdots x_{i_t}] \frac{\pE_{D}[x_{i_1} x_{i_2} \cdots x_{i_t}|y|]}{\pE_{D}[x_{i_1} x_{i_2} \cdots x_{i_t}]} \leq \frac{\omega}{\pE_D [|x|^t]} \mper
\]
Using Hölder's inequality for pseudo-distributions and Lemma~\ref{lem:max-coverage}, we know that $\pE_D [|x|^t] = \pE_D [(\sum_{i \in S^*} w_i)^t] \geq \pE_D [\sum_{i \in S^*} w_i]^t \geq (k^2/n)^t$.
Thus the right-hand side of the above is at most $\omega (n/k^2)^t$. 

Observe that the left-hand side can be interpreted as the expected value of the random variable $\frac{\pE_{D}[x_{i_1} x_{i_2} \cdots x_{i_t}|y|]}{\pE_{D}[x_{i_1} x_{i_2} \cdots x_{i_t}]}$ where each $i_1,i_2, \ldots, i_t$ is chosen with probability equal to $\frac{\pE_{D}[x_{i_1} x_{i_2} \cdots x_{i_t}]}{\sum_{i_1, i_2, \ldots, i_t}\pE_{D}[x_{i_1} x_{i_2} \cdots x_{i_t}]}$. 

For an ordered tuple $Q \in (S^*)^t$ chosen with probability proportional to $\pE_{D}[w_Q]$, consider the $(n-k)$-dimensional vector $\frac{\pE_{D}[w_Q y]}{\pE_{D}[w_Q]}$.
Its $\ell_1$-norm is equal to $\frac{\pE_{D}[w_Q |y|]}{\pE_{D}[w_Q]}$, which is also equal to the random variable $\frac{\pE_{D}[x_{i_1} x_{i_2} \cdots x_{i_t}|y|]}{\pE_{D}[x_{i_1} x_{i_2} \cdots x_{i_t}]}$ where each $i_1,i_2, \ldots, i_t$ is chosen with probability equal to $\frac{\pE_{D}[x_{i_1} x_{i_2} \cdots x_{i_t}]}{\sum_{i_1, i_2, \ldots, i_t}\pE_{D}[x_{i_1} x_{i_2} \cdots x_{i_t}]}$.

Thus, we have concluded that the expected value of the $\ell_1$-norm of $\frac{\pE_{D}[w_Q y]}{\pE_{D}[w_Q]}$ is at most $\omega (n/k^2)^t$.
By Markov's inequality, with probability at least $0.5$ over the choice of $Q$, thus, the $\ell_1$-norm of $\frac{\pE_{D}[w_Q y]}{\pE_{D}[w_Q]}$ is at most $2\omega (n/k^2)^t$. Also note that, by Fact~\ref{fact:reweightings}, $\sum_{i=1}^n C_Q(i) \geq k$. Then, with probability at least $0.5$ over the choice of $Q$, $\sum_{i \in S^*} C_Q(i) \geq k - \Norm{\frac{\pE_{D}[w_Q y]}{\pE_{D}[w_Q]}}_1 \geq k- \omega (n/k^2)^t \geq (1-\delta)k$ using that $\omega (n/k^2)^t \leq \delta k$.

\end{proof}

\begin{proof}[Proof of Theorem~\ref{thm:rounding-from-biclique-certs}]

From Lemma~\ref{lem:choosing-within-S-star}, in Step 4, we choose a $Q \subseteq S^*$ with probability at least $(k/n)^t$.
Conditioned on this event happening, Lemma~\ref{lem:good-vector-conditioned-Q-in-S-star} shows that $\sum_{i \in S^*}C_Q(i) \geq (1-\delta)k$ with probability at least $0.5$.
We call such $Q$ good.

By averaging, for a good $Q$, we must have that for a $(1-2\delta)$-fraction of $i \in S^*$, $C_Q(i) \geq 1-2\delta$.
Further, the total number of coordinates of $C_Q$ larger than $1-2 \delta$ cannot be more than $k/(1-2\delta)$.
Thus, $S_Q$ is a set of size at most $k/(1-2\delta)$ such that $|S_Q \cap S^*| \geq (1-2\delta) k$.

The $O((n/k)^t)$ repetitions in Step 3 ensure that with probability at least $0.99$ we choose at least one good $Q$.
\end{proof}

\subsection{Proof of main results} \label{subsec:main-results}

We combine our biclique certificates, the rounding algorithm, and a simple cleanup step to obtain the main results of our work.
We start by stating the main result.

\begin{theorem}[Main result]
  \label{thm:main-result}
  Consider a graph $G$ on $n$ vertices such that $G$ is generated according to $\FK(n,k,p)$.
  Then the following two results hold:
  \begin{enumerate}
  \item \emph{($n^{2/3}$ guarantee)} For any $\epsilon>0$ and $p,1-p\geq n^{-(1-\epsilon)}$, there exists an algorithm that takes input $G$, runs in polynomial time, and for $k \geq \max\{O(n^{2/3}p^{1/3}/(1-p)^{2/3}), \tilde{O}(n^{1/2})\}$, with probability $0.99$ outputs a list of at most $(1+o(1))n/k$ $k$-cliques such that one of them is the planted clique in $G$.
  
  \item \emph{($n^{1/2+\epsilon}$ guarantee)} For any $\epsilon>0$ small enough, there exists an algorithm that takes input $G$, runs in time $n^{O(1/\epsilon)}$, and for $k \geq n^{1/2 + \epsilon}/(1-p)^{1/\epsilon}$, with probability $0.99$ outputs a list of at most $(1+o(1))n/k$ $k$-cliques such that one of them is the planted clique in $G$.
  \end{enumerate}
  \end{theorem}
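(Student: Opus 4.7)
The plan is to combine the biclique certificates from Section~\ref{sec:cert-half} with the rounding algorithm of Theorem~\ref{thm:rounding-from-biclique-certs}, followed by a combinatorial cleanup step that turns each approximate recovery into a genuine $k$-clique and a pruning step that bounds the final list size by $(1+o(1))n/k$. For part~1 (the $n^{2/3}$ guarantee), I would instantiate Theorem~\ref{thm:rounding-from-biclique-certs} with the simple spectral certificate from Lemma~\ref{lem:simple-spectral-certificate}, which gives $\omega = O(np/(1-p))$ and $t=1$. The rounding precondition $\omega \cdot (n/k^2)^t \leq \delta k$ then reads $n^2 p / ((1-p) k^2) \leq \delta k$, which rearranges to $k \geq O((n^2 p/((1-p)\delta))^{1/3})$. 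For part~2 (the $n^{1/2+\epsilon}$ guarantee), I would use the high-degree certificate from Theorem~\ref{thm:certification-density-half} (or Theorem~\ref{thm:certification-large-p} for general $p$) with $r = \Theta(1/\epsilon)$. For $p=1/2$ this gives $\omega = (1000r)^{10r} n^5/k^4$ and $t = 4r$, so the precondition becomes $k^{8r+5} \geq (1000r)^{10r} n^{4r+5}/\delta$; taking $r$ a sufficiently large multiple of $1/\epsilon$ makes $(4r+5)/(8r+5) \leq 1/2 + \epsilon$, which is satisfied when $k \geq n^{1/2+\epsilon}$. The factor $1/(1-p)^{1/\epsilon}$ for general $p$ is recovered analogously from tracking the $p$-dependence in Theorem~\ref{thm:certification-large-p}.

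Next, to convert each $S_Q$ output by the rounding algorithm into a candidate $k$-clique, I would apply the degree-thresholding cleanup $T_Q = \{v \in [n] : |N_G(v) \cap S_Q| \geq (1-2\delta)k\}$. For the ``good'' $S_Q$ guaranteed by Theorem~\ref{thm:rounding-from-biclique-certs}, we have $|S_Q \cap S^*| \geq (1-2\delta)k$ and $|S_Q \setminus S^*| = O(\delta k)$. Every $v \in S^*$ is adjacent to all of $S_Q \cap S^*$ and so has degree at least $(1-2\delta)k$ in $S_Q$, giving $S^* \subseteq T_Q$. For $v \notin S^*$, the edges from $v$ to $S_Q \cap S^*$ are (subsets of) independent Bernoulli$(p)$ variables, so a Chernoff and union bound yields $|N_G(v) \cap S_Q| \leq pk + O(\delta k) + O(\sqrt{k\log n})$ with high probability, simultaneously over all $v$. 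Choosing $\delta = \Theta(1-p)$ (which is precisely what forces the extra $(1-p)^{2/3}$ in part~1 and the $(1-p)^{1/\epsilon}$ in part~2) and $k \gg \tilde\Omega(n^{1/2})$ makes this strictly less than $(1-2\delta)k$, so $T_Q = S^*$ exactly for the good $S_Q$. I would then discard any $T_Q$ that fails to be a $k$-clique.

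Finally, I would prune the remaining $k$-cliques by processing them in any order and keeping a clique $T$ only if $|T \cap T'| \leq \ell = O(\log n)$ for every previously kept $T'$. By Proposition~\ref{prop:biclique-number-overview}, with probability at least $0.99$ the planted clique $S^*$ has intersection at most $\ell$ with every other $k$-clique in $G$, so $S^*$ is never discarded regardless of ordering. The resulting family has pairwise intersection at most $\ell$, and the same packing argument used to prove Proposition~\ref{prop:number-good-cliques} bounds its size by $(1+o(1))n/k$ whenever $k \gg \sqrt{n\ell}$, which holds throughout the regime of the theorem. The main technical obstacle I anticipate is choosing $r$, $\delta$, and the degree thresholds consistently so that the certificate precondition, the cleanup separation, the packing bound, and the $p$-dependent Chernoff tails all hold on a common high-probability event; in particular verifying that a constant $\delta = \Theta(1-p)$ is compatible with a degree-$O(1/\epsilon)$ certificate for the $n^{1/2+\epsilon}$ case is the only place where the $(1-p)$-dependence really has to be tracked carefully.
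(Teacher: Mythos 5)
Your proposal follows the same overall architecture as the paper's proof (biclique certificate $\to$ rounding via Theorem~\ref{thm:rounding-from-biclique-certs} $\to$ combinatorial cleanup $\to$ pruning with a packing bound), and part~1 matches the paper essentially step for step. But there are two real divergences in part~2 worth flagging.

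First, for the $n^{1/2+\epsilon}$ guarantee the paper does \emph{not} use a constant $\delta$ followed by degree-thresholding cleanup. Instead it sets $\delta = 1/(4k)$, so that the rounding already returns subsets $S_Q$ with $|S_Q| \le k/(1-2\delta) < k+1$ and $|S_Q\cap S^*| \ge (1-2\delta)k > k-1$; integrality then forces $S_Q = S^*$ outright, and the ``cleanup'' is reduced to discarding subsets that are not $k$-cliques. Your route with $\delta = \Theta(1-p)$ and a thresholding pass $T_Q = \{v : |N_G(v)\cap S_Q| \ge (1-2\delta)k\}$ also works (the degree argument is sound, and the rounding precondition gives an exponent $(4r+5)/(8r+5) \to 1/2$ just as the paper's $(4r+5)/(8r+4) \to 1/2$), but it is a genuinely different instantiation, and it makes the cleanup load-bearing in part~2 where the paper avoids it.

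Second, and this is the piece I would ask you to fix, your parenthetical claim that choosing $\delta = \Theta(1-p)$ ``is precisely what forces the extra $(1-p)^{1/\epsilon}$ in part~2'' is not correct. With $\delta = \Theta(1-p)$, the rounding precondition $\omega (n/k^2)^{4r} \le \delta k$ only contributes a factor $(1-p)^{-1/(8r+5)}$ to the required lower bound on $k$, which is negligible. The $(1-p)^{-1/\epsilon}$ factor comes from Theorem~\ref{thm:certification-large-p} itself: both from the certificate's validity precondition $k \ge O(\sqrt{rn\log n}\,p^{2r}/(1-p)^{2r+1})$ and from the $(1-p)^{-4(r+1)}$ dependence inside $\omega$. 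Your earlier remark that the $(1-p)^{1/\epsilon}$ is ``recovered from tracking the $p$-dependence in Theorem~\ref{thm:certification-large-p}'' is the right explanation; the later attribution to $\delta$ should be removed. A smaller point: for general $p$ the pairwise-intersection threshold in the pruning pass should be $O(\log n / \log(1/p))$, as in Lemma~\ref{lem:clique-intersection}, rather than the $O(\log n)$ threshold from the $p=1/2$ statement of Proposition~\ref{prop:biclique-number-overview}.
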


Before we prove this, we state and prove two auxiliary lemmas that help us prune the list of subsets returned by the list-decoding algorithm in Theorem~\ref{thm:rounding-from-biclique-certs}.

\begin{lemma}[Intersection of cliques with the planted clique]\label{lem:clique-intersection}
Let $G \sim \FK(n,k,p)$. Let $S^*$ be the planted clique in $G$. Then, with probability at least $1-\frac{k}{n^2}$, any other clique $S$ of size at least $k$ satisfies $|S\cap S^*| \leq 3 \frac{\log n}{\log 1/p}$.
\end{lemma}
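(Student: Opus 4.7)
The plan is a direct union bound over potential ``bad'' bicliques in $\cut(S^*)$, exploiting the fact that the $\FK$ adversary only deletes cut edges, never adds them. The key probabilistic fact is that each of the $k(n-k)$ cut edges is present independently with probability $p$ before any adversarial choices, and adversarial deletion in $\cut(S^*)$ can only reduce the set of bicliques. Hence, for any event of the form ``some biclique $(L,R)$ exists in $\cut(S^*)$'', it suffices to bound the probability in the pre-adversary $G(n,p)$ distribution on the cut.

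Fix a clique $S \neq S^*$ with $|S| \geq k$ and let $t = |S \cap S^*|$. Writing $L = S \cap S^*$ and $R = S \setminus S^*$, we have $|L| + |R| \geq k$ and $|R| \geq 1$ (since $S \neq S^*$), and all $|L| \cdot |R|$ cross edges must be present in $\cut(S^*)$. The analysis splits into two cases. If $t = k$, then $S^* \subsetneq S$ and some vertex $v \in V \setminus S^*$ must be adjacent to every vertex of $S^*$. If $t < k$, we can pass to a size-$k$ subclique $S' \subseteq S$ containing all of $S \cap S^*$ (possible because $|S \setminus S^*| = |S| - t \geq k - t$), so $(L,R)$ becomes a biclique with $|L| = t$ and $|R| = k - t \geq 1$.

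Set $t_0 = 3 \log n / \log(1/p)$, so that $p^{t_0} \leq n^{-3}$. For each $t \in [t_0 + 1, k-1]$, writing $s = k - t$, the expected number of bicliques in the pre-adversary cut with $|L| = t$ and $|R| = s$ is bounded by
\begin{equation*}
  \binom{k}{t} \binom{n-k}{s} p^{ts}
  \;\leq\; k^s \cdot n^s \cdot (p^t)^s
  \;\leq\; \Paren{kn \cdot n^{-3}}^s
  \;=\; \Paren{k/n^2}^s,
\end{equation*}
where we used $\binom{k}{t} = \binom{k}{s} \leq k^s$, $\binom{n-k}{s} \leq n^s$, and $p^t \leq p^{t_0} \leq n^{-3}$. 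The remaining case $t = k$ contributes at most $(n-k) p^k \leq n \cdot n^{-3} = n^{-2}$ (using $k \geq t_0$; otherwise the lemma is vacuous). Summing the geometric series over $s \geq 1$ and adding the $t = k$ contribution, Markov's inequality yields that the bad event occurs with probability $O(k/n^2)$.

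The only subtlety is the parameterization: a naive union bound over $|L| = t$ alone (ignoring the coupling $|L| + |R| \geq k$) introduces a factor $k^t$ from $\binom{k}{t}$ that blows up for large $t$. The correct move is to parameterize by $s = k - t$ and write $\binom{k}{t} = \binom{k}{s}$, so the combinatorial cost scales with $s$ rather than $t$; the factor $(p^t)^s \leq n^{-3s}$ then produces geometric decay in $s$. The constant $3$ in the definition of $t_0$ may need to be marginally enlarged to absorb the $O(1)$ prefactor from the geometric sum and recover the stated bound $k/n^2$ exactly; this is purely a bookkeeping issue.
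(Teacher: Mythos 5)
Your proof is correct and follows the same first-moment/union-bound strategy that the paper's one-line proof points to via Proposition~\ref{prop:biclique-number-overview}. Your extra care --- splitting off the $t=k$ case (which falls outside the exact total-size-$k$ biclique framing of that proposition), passing to a $k$-subclique of $S$ containing $S\cap S^*$ when $t<k$, and parameterizing by $s=k-t$ so that $\binom{k}{t}=\binom{k}{s}$ yields geometric decay rather than a divergent $k^t$ factor --- correctly fills in details the paper leaves implicit, and the $O(1)$ constant slack against the stated $k/n^2$ is, as you note, immaterial to how the lemma is used.
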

\begin{proof}
The proof is analogous to that of Proposition~\ref{prop:biclique-number-overview} and is an easy consequence of a Chernoff bound and a union bound.
\end{proof}

\begin{lemma}[Subsets with small intersection]
\label{lem:sets-pinex}
Let $S_1, ..., S_m \subseteq [n]$ with $|S_i| = k$ and $|S_i \cap S_j| \leq \Delta$.
Then, if $k \geq \sqrt{2n\Delta}$, we have $m \leq \frac{n}{k} \Paren{1+\frac{2n\Delta}{k^2}}$.
\end{lemma}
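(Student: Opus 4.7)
The plan is to use a standard double-counting argument combined with the Cauchy--Schwarz inequality; this avoids the quadratic-root selection issues that a direct inclusion--exclusion approach would create. For each $v \in [n]$, I define $d_v = |\{i \in [m] : v \in S_i\}|$, the number of sets containing $v$. The size hypothesis immediately gives $\sum_{v=1}^n d_v = mk$, while the intersection hypothesis, applied after a double counting of pairs $(\{i,j\}, v)$ with $v \in S_i \cap S_j$, yields $\sum_v \binom{d_v}{2} = \sum_{i<j} |S_i \cap S_j| \leq \binom{m}{2}\Delta \leq \tfrac{m^2 \Delta}{2}$. Using the identity $d_v^2 = d_v + 2\binom{d_v}{2}$, these two estimates combine to give $\sum_v d_v^2 \leq mk + m^2 \Delta$.

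Next I would apply Cauchy--Schwarz in the form $(\sum_v d_v)^2 \leq n \sum_v d_v^2$ to obtain $(mk)^2 \leq n(mk + m^2 \Delta)$. Dividing through by $m$ and collecting terms produces $m(k^2 - n\Delta) \leq nk$, i.e.
\[
m \;\leq\; \frac{nk}{k^2 - n\Delta} \;=\; \frac{n/k}{1 - n\Delta/k^2}\mper
\]
The hypothesis $k \geq \sqrt{2n\Delta}$ is precisely the statement $n\Delta/k^2 \leq 1/2$, which lets me apply the elementary inequality $\tfrac{1}{1-x} \leq 1 + 2x$ valid for $x \in [0,1/2]$ to conclude $m \leq (n/k)(1 + 2n\Delta/k^2)$, which is the desired bound.

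The only place where care is needed is in invoking the hypothesis at the right strength: I need $k^2 \geq 2n\Delta$ both to guarantee positivity of the denominator $k^2 - n\Delta$ (so that dividing by it is legitimate and direction-preserving) and to linearize $1/(1 - n\Delta/k^2)$ via the $1 + 2x$ estimate. Both uses follow from the single assumption $k \geq \sqrt{2n\Delta}$. Apart from this bookkeeping, every step is a routine moment estimate and I do not anticipate any serious obstacle.
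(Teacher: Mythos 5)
Your proof is correct, and it takes a genuinely different route from the paper's. The paper bounds $|\bigcup_i S_i|$ from below by truncated inclusion--exclusion, getting $mk - \tfrac{m^2}{2}\Delta \leq n$, then treats this as a quadratic in $m$ and concludes $m \leq \tfrac{k-\sqrt{k^2-2n\Delta}}{\Delta}$ before simplifying. That step implicitly needs a monotonicity argument (for every $m' \leq m$, the sub-collection of $m'$ sets also satisfies the quadratic inequality, so the interval $[0,m]$ cannot straddle the roots) — exactly the ``quadratic-root selection issue'' you flagged and sidestepped. Your version instead introduces the multiplicities $d_v$, double-counts to get $\sum_v d_v = mk$ and $\sum_v d_v^2 \leq mk + m^2\Delta$, and applies Cauchy--Schwarz $(mk)^2 \leq n\sum_v d_v^2$; this is linear in $m$ after dividing through, giving the clean intermediate bound $m \leq nk/(k^2 - n\Delta)$ with no root to choose. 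The two routes are of comparable length and yield the same final estimate $m \leq \tfrac{n}{k}(1 + \tfrac{2n\Delta}{k^2})$ after an elementary linearization (your $\tfrac{1}{1-x}\leq 1+2x$ on $[0,1/2]$ versus the paper's $1-\sqrt{1-y}$ expansion), but yours is the more self-contained and robust argument: it never requires arguing about which branch of a quadratic the integer $m$ must lie on.
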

\begin{proof}
By the inclusion-exclusion principle, we need
\[mk - \frac{m^2}{2}\Delta \leq n\mper\]
By inspecting the above as a quadratic equation in $m$, we get that for $k \geq \sqrt{2n\Delta}$ the equation is violated when $m > \frac{k-\sqrt{k^2-2n\Delta}}{\Delta}$.
We note that 
\[\frac{k-\sqrt{k^2-2n\Delta}}{\Delta} = \frac{k}{\Delta}\Paren{1-\sqrt{1-\frac{2n\Delta}{k^2}}} \leq \frac{k}{\Delta} \frac{n\Delta}{k^2} \Paren{1+\frac{2n\Delta}{k^2}} = \frac{n}{k}\Paren{1+\frac{2n\Delta}{k^2}}\]
and therefore obtain that $m \leq \frac{n}{k} \Paren{1+\frac{2n\Delta}{k^2}}$.
\end{proof}

We are now ready to prove the main result. The $n^{2/3}$ guarantee uses the first certificate in Theorem~\ref{thm:certification-large-p} and produces a result similar to that of \cite{MMT20}, and the $n^{1/2+\epsilon}$ guarantee uses the second certificate in Theorem~\ref{thm:certification-large-p} and is the main contribution of our work.

We start by proving the $n^{2/3}$ guarantee.

\begin{proof}[Proof of $n^{2/3}$ guarantee in Theorem~\ref{thm:main-result}]
First, we note that $\mathcal{A}(G)$ implies $\mathcal{A}(G')$ for any $G'$ that is obtained by adding edges to $G$. Therefore, in our sum-of-squares programs we can ignore the adversarial deletion phase and assume that we work with a graph in which the edges going out from $S^*$ are random.

By the first certificate in Theorem~\ref{thm:certification-large-p}, we have
\begin{equation*}
\cB(H) \sststile{4}{x,y} \Set{ |x| |y| \leq 
O\Paren{\frac{np}{1-p}} }\mper 
\end{equation*}
Next, we want to apply Theorem~\ref{thm:rounding-from-biclique-certs} with $\omega = O(np/(1-p))$ and $\delta = (1-p)/24$. To apply the theorem, we need $\omega \cdot (n/k^2) \leq \delta k$, which we rewrite as 
\begin{align*}
k \geq O\Paren{\frac{n^{2/3}p^{1/3}}{(1-p)^{2/3}}} \mper
\end{align*}
Theorem~\ref{thm:rounding-from-biclique-certs} yields a list of $O(n/k)$ subsets, each of size at most $k/(1-(1-p)/12) \leq (1-(1-p)/6)k$, such that with probability at least $0.99$ one one them interesects the true clique $S^*$ in at least $(1-(1-p)/12)k \geq (1-(1-p)/6)k$ vertices.

To obtain a list that contains $S^*$ exactly, we will remove from each $S$ in the list all vertices that are connected to few vertices in $S$, and we will add to $S$ all vertices that are connected to many vertices in $S$. Formally, we will make use of the following claim:

\begin{claim}
With probability at least $0.99$, for all subsets $S \subseteq [n]$ with $|S \cap S^*| \geq (1-\gamma)k$ and $|S| \leq (1+\gamma) k$, every vertex $v \in S^*$ is connected to at least $(1-\gamma)k-1$ vertices in $S$ and every vertex $v \not\in S^*$ is connected to at most $kp+O(\sqrt{kp(1-p)\log n}) +2\gamma k$ vertices in $S$.
\end{claim}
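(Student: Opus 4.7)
The two statements in the claim can be handled separately and have quite different characters.

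\emph{The bound for $v \in S^*$ is deterministic.} Since $S^*$ is a clique, any $v \in S^*$ is adjacent to every vertex of $S^* \setminus \{v\}$. In particular, $v$ is adjacent to every vertex of $(S \cap S^*) \setminus \{v\}$, so its degree into $S$ is at least $|S \cap S^*| - 1 \geq (1-\gamma)k - 1$. No randomness or union bound is needed here, and the bound is uniform over $S$ for free.

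\emph{The bound for $v \notin S^*$.} Fix such a $v$ and write $d_S(v) = |N(v) \cap S|$. The plan is to decompose
\[
d_S(v) \leq |N(v) \cap (S \cap S^*)| + |S \setminus S^*| \leq |N(v) \cap S^*| + 2\gamma k,
\]
where the second inequality uses that the hypotheses $|S| \leq (1+\gamma)k$ and $|S \cap S^*| \geq (1-\gamma)k$ force $|S \setminus S^*| \leq 2\gamma k$. This reduces the uniform-over-$S$ statement to a uniform-over-$v$ bound on the cut degree $|N(v) \cap S^*|$, which no longer depends on $S$.

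\emph{Controlling the cut degree.} This is the heart of the argument. The edges between $v \notin S^*$ and $S^*$ lie entirely in $\cut(S^*)$. In the Feige--Kilian model, phase 2 only allows the adversary to \emph{delete} cut edges, and phase 3 only rewrites the subgraph induced on $[n] \setminus S^*$, which does not touch cut edges at all. Consequently $|N(v) \cap S^*|$ is stochastically dominated by a $\mathrm{Bin}(k,p)$ random variable. A standard Chernoff bound yields $|N(v) \cap S^*| \leq kp + O(\sqrt{kp(1-p)\log n})$ with probability at least $1 - n^{-3}$, and a union bound over the at most $n$ choices of $v \notin S^*$ boosts this to a simultaneous statement with probability at least $1 - n^{-2}$. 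Combining this with the decomposition above gives the claimed bound $d_S(v) \leq kp + O(\sqrt{kp(1-p)\log n}) + 2\gamma k$.

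\emph{Main obstacle.} There is no serious technical obstacle; the entire argument is standard given the structure of the semi-random model. The only subtle point worth highlighting is the handling of uniformity over $S$: naively, $|N(v) \cap (S \cap S^*)|$ depends on $S$, but replacing it with the larger quantity $|N(v) \cap S^*|$ decouples the randomness from $S$ at the cost of the slack $2\gamma k$, which is already absorbed into the final bound. The fact that monotone deletions can only \emph{reduce} cut degrees is essential — without it, one could not apply a Chernoff bound based on the pre-adversary distribution.
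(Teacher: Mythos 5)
Your proof is correct and follows essentially the same route as the paper: the $v \in S^*$ bound is deterministic via clique adjacency, and the $v \notin S^*$ bound decomposes the degree into $S$ as (degree into $S^*$) plus (at most $|S \setminus S^*| \leq 2\gamma k$), then controls the first term by a Chernoff bound and a union bound over $v$. Your version is slightly more explicit about why the degree into $S^*$ is stochastically dominated by $\mathrm{Bin}(k,p)$ (phase 2 only deletes cut edges, phase 3 never touches them), a point the paper compresses into the phrase ``at most a binomial random variable.''
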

\begin{proof}[Proof of claim]
The first claim is trivial: every vertex $v \in S^*$ has at least $|S \cap S^*| - 1 \geq (1-\gamma)k - 1$ edges to $S$. 

For the second claim, we begin by noting that, for a vertex $v\not\in S^*$, the number of edges to $S^*$ is at most a binomial random variable $\operatorname{Bin}(k,p)$. By standard bounds, this is larger than $kp+t$ with probability at most $\min\{e^{-t^2/(2k(1-p))}, e^{-t^2/(2kp+2t/3)}\}$.
Then, by a union bound, the probability that the number of edges is larger than $kp+t$ for any $v \not\in S^*$ is at most $n \min\{e^{-t^2/(2k(1-p))}, e^{-t^2/(2kp+2t/3)}\}$.
Choosing $t=O\Paren{\sqrt{kp(1-p)\log n}}$ makes this probability a small constant.
Then, with probability at least $0.99$, no vertex $v \not\in S^*$ has more than $kp+O\Paren{\sqrt{kp(1-p)\log n}}$ edges to $S^*$. In addition, a vertex $v \not\in S^*$ has at most $|S \setminus S^*| \leq (1+\gamma)k-(1-\gamma)k=2\gamma k$ edges to $S \setminus S^*$. Therefore, overall, it has at most $kp+O(\sqrt{kp(1-p)\log n}) +2\gamma k$ edges to $S^*$.
\end{proof}

Consider the subset $S$ in the list for which $|S\cap S^*| \geq (1-(1-p)/6)$. We can apply the claim to this subset with $\gamma=(1-p)/6$. Then, every vertex $v \in S^*$ is connected to at least $(1-(1-p)/6)k-1$ vertices in $S$, and every vertex $v \not\in S^*$ is connected to at most $(p+(1-p)/3)k+O(\sqrt{kp(1-p)\log n}) < (1-(1-p)/6)k - 1$ vertices in $S$, where we used that $k > O\Paren{(\log n)p/(1-p)}$.

Therefore, we do the following: for each subset $S$ in the list, we remove from $S$ all vertices that are connected to less than $(1-(1-p)/6)k - 1$ of the vertices in $S$, and we add to $S$ all vertices that are connected to at least $(1-(1-p)/6)k-1$ of the vertices in $S$.
This ensures that the subset $S$ for which $|S\cap S^*| \geq (1-(1-p)/6)$ is transformed by this procedure into $S^*$.

After that, we remove from the list the subsets with size different than $k$ and the subsets that are not cliques.
Then we iterate the following procedure: find $S, S'$ in the list such that $|S \cap S'| \geq O(\log n/\log 1/p)$ and remove one of them from the list.
By Lemma~\ref{lem:sets-pinex}, the resulting list has size at most $(1+o(1))n/k$, where we use that our choice of $k$ satisfies $k \geq \sqrt{2nO(\log n/\log 1/p)}$.
Furthermore, by Lemma~\ref{lem:clique-intersection}, this procedure cannot remove $S^*$ from the list, because it intersects other cliques in at most $O(\log n/\log 1/p)$ vertices.

We note that $k \geq \max\{O(n^{2/3}p^{1/3}/(1-p)^{2/3}), \tilde{O}(n^{1/2})\}$ satisfies the lower bounds on $k$ that we require.
The time complexity of the algorithm is polynomial in $n$.
\end{proof}

Finally, we prove the $n^{1/2+\epsilon}$ guarantee, which we split into the cases $p \leq 1/2$ and $p \geq 1/2$.

\begin{lemma}[$n^{1/2+\epsilon}$ guarantee of Theorem~\ref{thm:main-result}, $p \leq 1/2$]
\label{lem:result-p-small}
Fix any $\epsilon>0$ small enough.
There is an algorithm that takes input a graph $G$ on $n$ vertices, runs in time $n^{O(1/\epsilon)}$, and provides the following guarantee:
If $G$ is generated according to $\FK(n,k,p)$ with $p \leq 1/2$, for $k \geq n^{1/2 + \epsilon}$, with probability $0.99$ the algorithm outputs a list of at most $(1+o(1))n/k$ $k$-cliques such that one of them is the planted clique in $G$.
\end{lemma}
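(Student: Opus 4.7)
My plan is to mirror the structure of the $n^{2/3}$ guarantee proof, but substitute the higher-degree biclique certificate (second bound of Theorem~\ref{thm:certification-large-p}) for the basic spectral certificate. As in the $n^{2/3}$ proof, I begin by noting that adversarial deletion of cut edges is harmless for our SoS relaxations, since adding missing edges to $G$ preserves $\cA(G)$; hence I may assume the cut edges out of $S^*$ are fully random. I would then instantiate the second certificate with $r = \Theta(1/\epsilon)$. Because $p \leq 1/2$, we have $\max\{p/(1-p),(1-p)/p\} = (1-p)/p$, so $\max\{\ldots\}^r p^r = (1-p)^r$ and the certificate's bound simplifies to
\begin{equation*}
  \omega \;=\; (1000r)^{10r}\, n \Paren{\frac{n}{k(1-p)}}^{4} \;=\; O\!\Paren{r^{10r}\, n^{5}/k^{4}}\mcom
\end{equation*}
using that $1-p \geq 1/2$. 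I would also check the certificate's side conditions: the requirement $k \geq O(\sqrt{rn\log n}\,p^{2r}/(1-p)^{2r+1})$ reduces, via $p/(1-p)\leq 1$, to $k \geq \tilde O(\sqrt{n/\epsilon})$, which is implied by $k \geq n^{1/2+\epsilon}$ for small enough $\epsilon$, and similarly for the degree bound $k \geq O((\log n)p/(1-p))$.

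Next, I would feed the certificate into Theorem~\ref{thm:rounding-from-biclique-certs} with $t = 4r$ and a small constant $\delta$ (say $\delta = 1/100$). The hypothesis $\omega\cdot (n/k^2)^{t} \leq \delta k$ becomes $k^{5+8r} \geq r^{O(r)}\, n^{5+4r}/\delta$. Comparing exponents of $n$ for $k = n^{1/2+\epsilon}$ yields $(1/2+\epsilon)(5+8r) \geq 5+4r$, i.e.\ $\epsilon(5+8r) \geq 5/2$, which holds for $r \geq C/\epsilon$ with an absolute constant $C$. This fixes the choice $r = \Theta(1/\epsilon)$ and ensures total runtime $n^{O(r)} = n^{O(1/\epsilon)}$. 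The theorem then produces a list $\cL$ of at most $O((n/k)^{4r}) = n^{O(1/\epsilon)}$ subsets, each of size at most $k/(1-2\delta)$, at least one of which intersects $S^*$ in $(1-2\delta)k$ vertices.

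The final step is the combinatorial cleanup exactly as in the $n^{2/3}$ proof. For each $S \in \cL$, remove vertices with few neighbors in $S$ and add vertices with many neighbors in $S$, using the same claim that, with high probability, every $v \in S^*$ has at least $(1-\gamma)k-1$ neighbors in any near-correct $S$, while every $v \notin S^*$ has at most $kp + O(\sqrt{kp(1-p)\log n}) + 2\gamma k$ such neighbors. For $p \leq 1/2$ and a small enough constant $\delta$ (and correspondingly small $\gamma$), these two thresholds are separated by a constant gap, so the near-correct candidate is rewritten to exactly $S^*$. I then discard subsets that are not $k$-cliques, and greedily thin the list by removing one of each pair intersecting in more than $O(\log n/\log(1/p))$ vertices; Lemma~\ref{lem:clique-intersection} ensures $S^*$ survives, and Lemma~\ref{lem:sets-pinex} bounds the final list size by $(1+o(1))n/k$.

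The main subtlety is the parameter balance: $r$ must be large enough that the rounding inequality $\omega(n/k^2)^{4r} \leq \delta k$ holds for $k$ as small as $n^{1/2+\epsilon}$, yet not so large that the degree $O(r)$ SoS relaxation is no longer computable in $n^{O(1/\epsilon)}$ time; the choice $r = \Theta(1/\epsilon)$ threads this needle. The other easy-to-miss point is that the $(1-p)^{-4}$ factor hidden in $\omega$ is $O(1)$ when $p \leq 1/2$, which is exactly why this range requires no additional loss in $k$ beyond the stated $n^{1/2+\epsilon}$; the symmetric case $p \geq 1/2$ (treated in the companion lemma) is where the degradation in $(1-p)$ becomes genuinely important and forces the extra $(1-p)^{-1/\epsilon}$ factor.
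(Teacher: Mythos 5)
Your proposal is correct and follows essentially the same route as the paper, but one step is handled differently and is worth flagging. The paper invokes \cref{thm:rounding-from-biclique-certs} with $\delta = 1/(4k)$ rather than a small constant; this makes $k/(1-2\delta) < k+1$ and $(1-2\delta)k > k-1$, so the near-correct element of the list is forced (by integrality) to be \emph{exactly} $S^*$, and the neighborhood-based clean-up (remove low-degree vertices, add high-degree vertices) that you import from the $n^{2/3}$ proof is unnecessary. You instead take $\delta = 1/100$ and run the full clean-up; this works, since for $p \leq 1/2$ the gap between $(1-\gamma)k - 1$ and $kp + O(\sqrt{kp(1-p)\log n}) + 2\gamma k$ is of order $k(1-p-3\gamma) = \Omega(k)$ for $\gamma$ a small constant, but it is an extra step that a sharper choice of $\delta$ sidesteps. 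Using constant $\delta$ also slightly changes the exponent bookkeeping (your $\omega (n/k^2)^{4r} \leq \delta k$ gives $k^{5+8r}$ on the left rather than the paper's $k^{4+8t}$, since $\delta$ does not absorb a factor of $k$), but the conclusion $r = \Theta(1/\epsilon)$ and runtime $n^{O(1/\epsilon)}$ is unaffected. The other, purely stylistic difference is that you instantiate the general-$p$ certificate of \cref{thm:certification-large-p} directly and simplify using $p \leq 1/2$, whereas the paper observes that for $p \leq 1/2$ the general certificate agrees up to constants with \cref{thm:certification-density-half} and then argues with $p=1/2$; these are the same computation.
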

\begin{proof}
First, we note that $\mathcal{A}(G)$ implies $\mathcal{A}(G')$ for any $G'$ that is obtained by adding edges to $G$. Therefore, in our sum-of-squares programs we can ignore the adversarial deletion phase and assume that we work with a graph in which the edges going out from $S^*$ are random.

For $p < 1/2$, the second certificate in Theorem~\ref{thm:certification-large-p} is the same up to constant factors as the one in Theorem~\ref{thm:certification-density-half} for $p=1/2$.
Furthermore, for $p < 1/2$, the range of $k$ for which the second certificate in Theorem~\ref{thm:certification-large-p} holds is a superset of the range of $k$ under which the one in Theorem~\ref{thm:certification-density-half} holds.
Therefore, in this proof, we assume without loss of generality that $p=1/2$, noting that all the steps in the proof continue to be valid even if $p<1/2$.

By Theorem~\ref{thm:certification-density-half}, for $k \geq O(\sqrt{t n \log n})$, we have
\begin{equation*}
\cB(H) \sststile{4t+2}{x,y} \Biggl\{ |x|^{4t}|y| \leq 
    (1000t)^{10t}n \Paren{\frac{n}{k}}^4 \Biggr\}\mper 
\end{equation*}
Next, we want to apply Theorem~\ref{thm:rounding-from-biclique-certs} with $\omega=(1000t)^{10t}n(n/k)^4$ and $\delta = 1/(4k)$.
The choice of $\delta$ ensures that each subset of the returned list has size at most $k$.
To apply the theorem, we need $\omega \cdot (n/k^2)^t \leq \delta k$, which we rewrite as
\[k \geq \operatorname{poly}(t) \cdot \sqrt{n} \cdot n^{\frac{3}{8t+4}}\mper\]
Theorem~\ref{thm:rounding-from-biclique-certs} yields a list of $O((n/k)^{4t})$ subsets, each of size at most $k$, such that with probability at least $0.99$ the true clique $S^*$ is in the list.

Next, we remove from the list the subsets with size different than $k$ and the subsets that are not cliques.
Then we iterate the following procedure: find $S, S'$ in the list such that $|S \cap S'| \geq O(\log n)$ and remove one of them from the list.
By Lemma~\ref{lem:sets-pinex}, the resulting list has size at most $(1+o(1))n/k$, where we use that our choice of $k$ satisfies $k \geq \sqrt{2nO(\log n)}$.
Furthermore, by Lemma~\ref{lem:clique-intersection}, this procedure cannot remove $S^*$ from the list, because it intersects other cliques in at most $O(\log n)$ vertices.

We choose the smallest $t$ such that $\epsilon \geq \frac{3 + 0.1}{8t+4}$, which is $t=\lceil \frac{31}{80\epsilon}-\frac{1}{2}\rceil=O(1/\epsilon)$. 
Then $k \geq n^{1/2+\epsilon}$ satisfies the lower bounds on $k$ that we required in the proof.
Finally, the time complexity of the algorithm is $n^{O(t)} = n^{O(1/\epsilon)}$.
\end{proof}

\begin{lemma}[$n^{1/2+\epsilon}$ guarantee of Theorem~\ref{thm:main-result}, $p \geq 1/2$]\label{lem:result-p-large}
Fix any $\epsilon>0$ small enough. 
There is an algorithm that takes input a graph $G$ on $n$ vertices, runs in time $n^{O(1/\epsilon)}$, and provides the following guarantee:
If $G$ is generated according to $\FK(n,k,p)$ with $p \geq 1/2$, for $k \geq n^{1/2 + \epsilon}/(1-p)^{1/\epsilon}$, with probability $0.99$ the algorithm outputs a list of at most $(1+o(1))n/k$ $k$-cliques such that one of them is the planted clique in $G$.
\end{lemma}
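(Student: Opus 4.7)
The argument parallels that of Lemma~\ref{lem:result-p-small}, substituting the second certificate of Theorem~\ref{thm:certification-large-p} for Theorem~\ref{thm:certification-density-half}. As in that proof, since $\cA(G)$ is monotone under edge additions, any pseudo-distribution satisfying $\cA(G)$ also satisfies $\cA(G^*)$ where $G^*$ restores the random bipartite graph across $\cut(S^*)$; so I may carry out the analysis assuming $\cut(S^*)$ is a genuine sample from $B(k, n-k, p)$ when applying the biclique certificate.

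Choose $r = \Theta(1/\epsilon)$, for instance $r = \lceil 7/(16\epsilon)\rceil$, so that for $\epsilon$ sufficiently small one has both $r > 3/(8\epsilon) - 1/2$ and $2r + 1 < 1/\epsilon$. Since $p \geq 1/2$ gives $\max\{p/(1-p),(1-p)/p\} = p/(1-p)$, the second certificate of Theorem~\ref{thm:certification-large-p} reads
\[
  \cB(H) \sststile{4r+2}{x,y} \Set{|x|^{4r}|y| \leq (1000r)^{10r}\, n \Paren{\tfrac{np^{2r}}{k(1-p)^{2r+1}}}^4}\,.
\]
I then invoke Theorem~\ref{thm:rounding-from-biclique-certs} with $t = 4r$, $\omega$ equal to the right-hand side above, and $\delta = 1/(4k)$, so that each returned subset has size at most $k$. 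The hypothesis $\omega \cdot (n/k^2)^{4r} \leq \delta k$ rearranges (using $p \leq 1$) to $k^{8r+4} \geq 4(1000r)^{10r} n^{4r+5}/(1-p)^{8r+4}$; substituting $k = n^{1/2+\epsilon}/(1-p)^{1/\epsilon}$ and noting that $(1-p)^{(8r+4)(1 - 1/\epsilon)} \geq 1$ (since $\epsilon < 1$ and $1-p \leq 1$) reduces this to $n^{8r\epsilon + 4\epsilon - 3} \geq 4(1000r)^{10r}$, which holds for $n$ large by our choice of $r$. The two preconditions of the certificate itself, $k \geq O(\sqrt{rn\log n})p^{2r}/(1-p)^{2r+1}$ and $k \geq O((\log n)p/(1-p))$, reduce analogously, using $2r + 1 \leq 1/\epsilon$, to easy polynomial-vs-polylog comparisons that hold for $n$ large.

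Theorem~\ref{thm:rounding-from-biclique-certs} thus produces a list $\cL$ of size $O((n/k)^{4r}) = n^{O(1/\epsilon)}$, each element a subset of size at most $k$, such that with probability at least $0.99$ some $S \in \cL$ satisfies $|S \cap S^*| \geq (1 - 1/(2k))k$. Since $|S \cap S^*|$ is an integer exceeding $k - 1$ and $|S|, |S^*| \leq k$, this forces $S = S^*$, so $S^* \in \cL$. The cleanup step of Lemma~\ref{lem:result-p-small} --- discard any subset that is not a $k$-clique, then iteratively remove one of any pair with intersection exceeding $3\log n/\log(1/p)$ --- preserves $S^*$ by Lemma~\ref{lem:clique-intersection} and bounds the final list size by $(1 + o(1))n/k$ via Lemma~\ref{lem:sets-pinex} (applicable since the substituted $k$ dominates $\sqrt{n\log n/\log(1/p)}$). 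Total running time is $n^{O(r)} = n^{O(1/\epsilon)}$.

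The only real work is the parameter bookkeeping: choosing $r$ inside the narrow window $(3/(8\epsilon), 1/(2\epsilon))$ so that the rounding threshold and both certificate preconditions are simultaneously met. Conceptually, every step is a direct transcription of the proof of Lemma~\ref{lem:result-p-small}, with the $p = 1/2$ certificate replaced by the general-$p$ one and the $p$-dependent factors threaded through unchanged.
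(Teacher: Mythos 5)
Your proposal is correct and tracks the paper's own proof of this lemma step by step: apply the monotonicity observation to restore the random cut edges, invoke the second (general-$p$) certificate from Theorem~\ref{thm:certification-large-p}, feed it to Theorem~\ref{thm:rounding-from-biclique-certs} with $\delta = 1/(4k)$, observe that the output subset must equal $S^*$ by integrality, and then prune via Lemma~\ref{lem:clique-intersection} and Lemma~\ref{lem:sets-pinex}. The only differences from the paper are cosmetic: you pick $r = \lceil 7/(16\epsilon)\rceil$ rather than the paper's $t = \lceil 31/(80\epsilon) - 1/2\rceil$ (both $\Theta(1/\epsilon)$ and both work), you write the rounding precondition with the exponent $4r$ explicitly (the paper's proof writes $(n/k^2)^t$ but its arithmetic uses $4t$, so your version is if anything cleaner), and you use the intersection threshold $3\log n/\log(1/p)$ directly whereas the paper passes through $O(\log n/(1-p))$ using $\log(1/p)=\Omega(1-p)$; these are equivalent up to constants.
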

\begin{proof}
First, we note that $\mathcal{A}(G)$ implies $\mathcal{A}(G')$ for any $G'$ that is obtained by adding edges to $G$. Therefore, in our sum-of-squares programs we can ignore the adversarial deletion phase and assume that we work with a graph in which the edges going out from $S^*$ are random.

By the second certificate in Theorem~\ref{thm:certification-large-p}, for $k \geq O(\sqrt{t n \log n} p^{2t} / (1-p)^{2t+1})$, we have
\begin{equation*}
\cB(H) \sststile{4t+2}{x,y} \Set{ |x|^{4t}|y| \leq 
(1000t)^{10t} n \Paren{\frac{np^{2t}}{k(1-p)^{2t+1}}}^4 }\mper 
\end{equation*}
Next, we want to apply Theorem~\ref{thm:rounding-from-biclique-certs} with $\omega=(1000t)^{10t} n \Paren{\frac{np^{2t}}{k(1-p)^{2t+1}}}^4$ and $\delta = 1/(4k)$. The choice of $\delta$ ensures that each subset of the returned list has size at most $k$.
To apply the theorem, we need $\omega \cdot (n/k^2)^t \leq \delta k$, which we rewrite as
\begin{align*}
k \geq \operatorname{poly}(t) \cdot \sqrt{n} \cdot n^{\frac{3}{8t+4}} p^{1-\frac{4}{8t+4}} / (1-p) \mper
\end{align*}
Theorem~\ref{thm:rounding-from-biclique-certs} yields a list of $O((n/k)^{4t})$ subsets, each of size at most $k$, such that with probability at least $0.99$ the true clique $S^*$ is in the list.

Next, we remove from the list the subsets with size different than $k$ and the subsets that are not cliques.
Then we iterate the following procedure: find $S, S'$ in the list such that $|S \cap S'| \geq O(\log n/(1-p))$ and remove one of them from the list.
By Lemma~\ref{lem:sets-pinex}, the resulting list has size at most $(1+o(1))n/k$, where we use that our choice of $k$ satisfies $k \geq \sqrt{2nO(\log n/(1-p))}$.
Furthermore, by Lemma~\ref{lem:clique-intersection}, this procedure cannot remove $S^*$ from the list, because it intersects other cliques in at most $O(\log n / \log 1/p) = O(\log n/(1-p))$ vertices, where we used that $\log 1/p = \Omega(1-p)$ for $p \geq 1/2$.

We choose the smallest $t$ such that $\epsilon \geq \frac{3 + 0.1}{8t+4}$, which is $t=\lceil \frac{31}{80\epsilon}-\frac{1}{2}\rceil=O(1/\epsilon)$. 
For this choice of $t$, we actually have $(1-p)^{2t+1} \geq (1-p)^{1/\epsilon}$ for $\epsilon \leq 0.1$. 
Then $k \geq n^{1/2+\epsilon}/(1-p)^{1/\epsilon}$ satisfies the lower bounds on $k$ that we require.
Finally, the time complexity of the algorithm is $n^{O(t)} = n^{O(1/\epsilon)}$.
\end{proof}

\begin{proof}[Proof of second guarantee in Theorem~\ref{thm:main-result}]
By Lemma~\ref{lem:result-p-small} we obtain the desired result for $p \leq 1/2$ when $k \geq n^{1/2+\epsilon}$, and by Lemma~\ref{lem:result-p-large} we obtain the desired result for $p \geq 1/2$ when $k \geq n^{1/2+\epsilon}/(1-p)^{1/\epsilon}$.
Then both results hold when $k \geq n^{1/2+\epsilon}/(1-p)^{1/\epsilon}$.
\end{proof}

\section{Evidence of hardness for certifying blicliques}
In this section, we collect some evidence that suggests that improving on our guarantees for the unbalanced bipartite clique certification problem is hard.
Our hardness results are in two settings: in the first we will prove a lower bound on the basic SDP relaxation for the problem of finding large bicliques in random graphs that gives a concrete reason for the $n^{2/3}$ barrier (for $p=1/2$) in prior works, and in the second we will prove lower bounds in the low-degree polynomial model for hypothesis testing problems. 

\subsection{Lower bounds against basic SDP}

We consider the following SDP relaxation for finding large bicliques in a given bipartite graph $H=(U, V, E)$ where $|U|=k$ and $|V|=n$.
It is equivalent to the degree $2$ sum-of-squares relaxation of the biclique constraint system \eqref{eq:biclique-system}.

\begin{equation} \label{eq:SDP-biclique}
  \left \{
    \begin{aligned}
      &\forall i,j 
      & 0 \leq X(i,j)
      & \leq 1 \\
      &
      &\tr(X) 
      &= k\\
      &
      &\Paren{\sum_{u \in U} X(u,u)} 
      &= \ell\\
      &
      &\Paren{\sum_{v \in V} X(v,v)} 
      &= k-\ell\\
      &
      &\sum_{u \in U,v \in V} X(u,v)
      &= \ell (k-\ell)\\
      &\forall u \in U,v \in V \text{ s.t. } \{u,v\} \not\in E
      & X(u,v)
      & = 0\\
      &
      & X
      & \succeq 0
    \end{aligned}
  \right \}
\end{equation}  

\paragraph{Commentary on the SDP relaxation.} We think of the SDP solution $X$ as a matrix indexed by all the (left and right) vertices of the bipartite graph $H$.
We associate the first $k=|U|$ rows and columns of $X$ with the left vertices and the last $n=|V|$ rows and columns of $X$ with the right vertices of $H$.
If $x \in \{0,1\}^{|U|}$ and $y \in \{0,1\}^{|V|}$ indicate left and right subsets of vertices in a purported biclique of total size $k$, then $X$ should be ``thought of'' as a relaxation of the constraints satisfied by the rank $1$ matrix $(x,y)(x,y)^{\top}$.
In particular, the first two constraints posit that $X$ is non-negative in all its entries and that its trace (that equals $\Norm{x}_2^2 + \Norm{y}_2^2$ and thus the total size of the biclique) is $k$.
The next three constraints posit that the left hand side vertices contribute $\ell$ to the trace (corresponding to the left hand side contributing $\ell$ vertices to the biclique), that the right hand side vertices contribute $k-\ell$ to the trace, and that $(\sum_i x_i) (\sum_i y_i) = \ell(k-\ell)$. 
The penultimate constraint posits that if $u\in U$ and $v \in V$ do not have an edge between them, then we cannot simultanesouly pick $u,v$ to be in the biclique (capturing the ``biclique'' constraints). 

For fixed $k,n$, the infeasibility of the SDP for some $\ell = \ell(k,n)$ is equivalent to there being a degree 2 sum-of-squares certificate of the absence of $\ell \times k-\ell$ bicliques in $H$.
We will show that the above SDP is in fact \emph{feasible} whp over the draw of $H$ so long as $\ell \ll n/k$.
This corresponds to the basic SDP barrier at $k=n^{2/3}$ (a threshold obtained by balancing the above obtained trade-off -- see Remark~\ref{remark:heuristic}) encountered in prior works on the semi-random planted clique problem.

\begin{lemma}[SDP lower bound for biclique certification] \label{lem:sdp-lb}
Let $H = (U, V, E) \sim B(k,n,1/2)$ be a bipartite \Erdos-\Renyi random graph with edge probability $1/2$. Then, with probability at least $0.99$ over the draw of $H$, for any $100 \sqrt{n} \leq k\leq n/2$ and $\ell \leq c n/k$ for some constant $c>0$ small enough, the SDP \eqref{eq:SDP-biclique} is feasible. 
\end{lemma}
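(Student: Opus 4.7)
The plan is to exhibit a feasible solution to \eqref{eq:SDP-biclique} explicitly, as the Gram matrix of a concrete family of $k+n$ vectors in $\mathbb{R}^{k+1}$. Since any Gram matrix is automatically PSD, this reduces the proof to checking the linear constraints of the SDP and the entrywise bounds $X(i,j) \in [0,1]$.

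For each $u \in U$, I will set $\phi_u = \alpha\, e_u$, where $\{e_u\}_{u \in U}$ is the standard basis of the first $k$ coordinates of $\mathbb{R}^{k+1}$. For each $v \in V$, I will set $\psi_v = \gamma\, \mathbf{1}_{N_H(v)} + \zeta\, e_{k+1}$, where $\mathbf{1}_{N_H(v)} \in \{0,1\}^k$ is the indicator of $v$'s neighbors in $U$ (embedded in the first $k$ coordinates) and $e_{k+1}$ is the extra coordinate orthogonal to all $e_u$. Letting $X$ be the $(k+n)\times(k+n)$ Gram matrix of these vectors, direct computation of the inner products gives $X(u,u)=\alpha^2$, $X(u,u')=0$ for $u\neq u'$, $X(u,v)=\alpha\gamma\, A(u,v)$, $X(v,v)=\gamma^2 \deg_H(v)+\zeta^2$, and $X(v,v')=\gamma^2 |N_H(v)\cap N_H(v')|+\zeta^2$. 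Choosing $\alpha=\sqrt{\ell/k}$, $\gamma = \ell(k-\ell)/(\alpha\,|E(H)|)$, and $\zeta^2 = (k-\ell-\gamma^2 |E(H)|)/n$, the biclique non-edge constraint is automatic (because $\mathbf{1}_{N_H(v)}$ kills non-neighbors of $v$), and the trace and sum constraints follow by direct substitution: $\sum_u X(u,u)=k\alpha^2=\ell$, $\sum_v X(v,v)=\gamma^2|E(H)|+n\zeta^2=k-\ell$, and $\sum_{u,v}X(u,v)=\alpha\gamma|E(H)|=\ell(k-\ell)$.

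The only non-automatic algebraic requirement is $\zeta^2\ge 0$, equivalent to the single scalar inequality $\ell(k-\ell) \le |E(H)|/k$. A standard Chernoff bound combined with the hypothesis $k \ge 100\sqrt{n}$ gives $|E(H)|/k = n/2 \pm o(n)$ with probability at least $0.99$, so this requirement reduces to $\ell(k-\ell) \le (1-o(1))\,n/2$; for $\ell \le k/2$ this holds whenever $\ell \le cn/k$ for any fixed $c < 1/2$ and $n$ large enough.

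The main obstacle will be verifying that \emph{every} entry of $X$ lies in $[0,1]$ simultaneously with probability at least $0.99$. Non-negativity is immediate since $\phi_u, \psi_v$ have non-negative coordinates. The upper bounds on $X(u,u)=\ell/k$, $X(u,u')=0$, and $X(u,v)\in\{0,\ell(k-\ell)/|E(H)|\}\subseteq[0,O(\ell/n)]$ are immediate. The interesting cases lie in the $V\times V$ block, which involves the random quantities $\deg_H(v)$ and $|N_H(v)\cap N_H(v')|$; a union-bounded Chernoff argument gives whp $\deg_H(v) \le k/2 + O(\sqrt{k\log n})$ for every $v \in V$ and $|N_H(v)\cap N_H(v')| \le k/4 + O(\sqrt{k\log n})$ for every pair $v \neq v'$. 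Substituting the resulting estimates $\gamma^2 = O(\ell k/n^2)$ and $\zeta^2 = O(k/n)$ then yields $X(v,v),\,X(v,v') \le O(k/n) \le 1/2$ under the hypothesis $k \le n/2$, completing the verification. The hypothesis $k \ge 100\sqrt{n}$ enters precisely to make the $O(\sqrt{k\log n})$ fluctuation terms lower order than the main terms of size $k/n$, while $k \le n/2$ ensures these main terms themselves stay strictly below $1$.
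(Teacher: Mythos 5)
Your proof is correct and takes a genuinely different route from the paper's. Both proofs exhibit an explicit feasible point, and the cross block of your Gram matrix coincides exactly with the paper's choice $X_{cross}(u,v) = \frac{\ell(k-\ell)}{|E(H)|}A(u,v)$. But the diagonal blocks differ: the paper takes $X_{top}$ with off-diagonal entries $(\ell/k)^2$ (a Bernoulli second-moment matrix) and builds $X_{bot}$ from the $\pm 1$ neighborhood indicators $a_i$, i.e.\ $X_{bot} = \frac{k-\ell}{n(k+1)}\bigl(\sum_i a_i a_i^\top + \1\1^\top\bigr)$, while you take $X_{top}$ diagonal and $X_{bot} = \gamma^2 B^\top B + \zeta^2 \1\1^\top$ with $B$ the $\{0,1\}$ adjacency matrix. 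The decisive structural difference is in how positive semidefiniteness is established: the paper proves it directly via a charging argument that splits each test vector $(z_L, z_R)$ into components parallel and perpendicular to $\1$ and bounds the cross terms against the square terms using spectral bounds on $A-\tfrac12\1\1^\top$ (Fact~\ref{fact:vershynin}). Your construction is manifestly a Gram matrix of the vectors $\phi_u = \alpha e_u$ and $\psi_v = \gamma\1_{N_H(v)} + \zeta e_{k+1}$, so PSD comes for free, and the only thing you pay for is having to track the extra free parameter $\zeta^2$ and check the single scalar inequality $\ell(k-\ell)\le |E(H)|/k$ that guarantees $\zeta^2\ge 0$. Your verification of the $[0,1]$ entrywise bounds via concentration of $|E(H)|$, $\deg_H(v)$, and $|N_H(v)\cap N_H(v')|$ is sound; the phrase ``$O(k/n)\le 1/2$'' is informal, but the underlying estimate (roughly $\zeta^2 \le k/n \le 1/2$ and $\gamma^2 \deg_H(v) \le (2+o(1))ck/n$, which is at most $1/2$ for $c$ small) gives the needed $X(v,v), X(v,v')\le 1$. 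On balance your Gram-matrix formulation buys a substantially shorter PSD verification, at the minor cost of a slightly less transparent interpretation of $X_{bot}$ as a covariance matrix of the clique indicator.
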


\begin{proof}
We will prove the lemma by exhibiting an explicit solution to the SDP~\eqref{eq:SDP-biclique}.
The unbalanced setting requires a slightly more involved construction compared to the related SDP lower bounds for the clique number of $G(n,1/2)$, where a natural shifted and scaled adjacency matrix yields a feasible solution.

\paragraph{The construction.} In order to describe our construction, it is helpful to think of the solution $X$ as being divided into $X_{top}$, the principal $k \times k$ block corresponding to first $k$ rows and columns, $X_{bot}$, the principal $n \times n$ block corresponding to the last $n$ rows and columns, and $X_{cross}$, the $k \times n$ off-diagonal block (and its transposed copy). 

We will set every diagonal entry of $X_{top}$ to be $\ell/k$ and every off-diagonal entry of $X_{top}$ to be $(\ell/k)^2$.
Informally, $X_{top}$ is the 2nd moment matrix of the probability distribution that chooses every vertex on the left with probability $\ell/k$ independently. 

We describe $X_{cross}$ next.
For every $u \in U, v\in V$, we set $X(u,v)=X(v,u)=0$ if $u$ is not connected to $v$ in $H$, and otherwise we set $X(u,v) = X(v,u)=c_1 (\ell/n)$ for some constant $c_1>0$ to be chosen later.
Notice that this is equivalent to setting $X_{cross} = c_1 \frac{\ell}{n} A$ where $A$ is the $k$ by $n$ bipartite adjacency matrix of $H$. 

Finally, we describe $X_{bot}$.
This is where we need to be a little more careful.
Let $a_1, a_2,\ldots, a_k$ be $n$-dimensional vectors in $\{-1,1\}^n$ such that $a_u(v)=1$ iff $\{u,v\}$ is an edge in $H$.
That is, the $a_i$s are the $\pm 1$ neighborhood indicators of the $k$ left vertices in $H$.
Then, we set $X_{bot} = \frac{k-\ell}{n (k+1)} (\sum_{i=1}^k a_i a_i^{\top} + \1 \1^{\top})$.
Here $\1$ is the vector of all $1$ coordinates.
Note that the diagonal entries of $X_{bot}$ exactly equal $(k-\ell)/n$ and thus $\tr(X_{bot}) = k-\ell$ as required.
Also note that $X_{bot}$ is low rank, as it has rank at most $k+1$. 

We discuss now the choice of $c_1$. We want to enforce $\sum_{u \in U,v \in V} X(u, v) = \ell(k-\ell)$, so we choose $c_1 = \frac{\ell(k-\ell)}{(\ell/n)\sum_{u\in U, v \in V} A(u, v)} = \frac{(k-\ell)n}{\sum_{u \in U, v \in V} A(u, v)}$. Note that with probability at least $0.999$ we have that $\Omega(kn) \leq \sum_{u \in U, v \in V} A(u, v) \leq kn$, so with probability at least $0.999$ we have that $c_1$ is bounded below and above by absolute constants.

\paragraph{Analysis.} With probability at least $0.999$ over the draw of $H$, $X$ immediately satisfies  all the constraints except for positive semidefiniteness. We focus next on verifying the PSD-ness of $X$.
Consider any ``test'' vector $z \in \R^{k+n}$, which we will think of as $(z_L, z_R)$ where $z_L$ is the projection of $z$ to the first $k$ coordinates (i.e., the left vertices) and $z_R$ the projection to the last $n$ coordinates (i.e., the right vertices). 

Now, 
\begin{equation}
z^{\top} X z = z_L^{\top} X_{top} z_L + z_R^{\top} X_{bot} z_R + 2 z_L^{\top} X_{cross} z_R \mper \label{eq:quad-form}
\end{equation} 

Let $F$ be the subspace of at most $k+1$ dimensions spanned by the $k$ rows of $A$ and the all $1$s vector $\1$.
Now, notice that $X_{cross} z_R = X_{cross} z_R^F$ where $z_R^F$ is the projection of $z_R$ to $F$.
Similarly, by design, $X_{bot}$ has range space equal to $F$, so $X_{bot} z_R = X_{bot} z_R^F$.
Thus, WLOG, we can assume that $z_R = z_R^F$ in the following. 

Let's write $z_L = z_L^{\parallel} + z_L'$ and $z_R = z_R^{\parallel} + z_R'$ where $z_L^{\parallel} = \iprod{z_L, \frac{\1}{\sqrt{n}}} \frac{\1}{\sqrt{n}}$ is the component of $z_L$ along the all $1$s direction (and similarly for $z_R^{\parallel}$). Let $X_{cross}' = c_1 \frac{\ell}{n} (A-\frac{1}{2} \1 \1^{\top})$ be the ``centered'' version of $X_{cross}$.
Also define the centered versions $X_{top}' = X_{top} - \frac{\ell^2}{k^2} \1 \1^{\top}$ and $X_{bot}' = \frac{k-\ell}{n(k+1)} \sum_{i =1}^k a_i a_i^{\top}$. 

Our argument is to simply ``charge'' the third term (which can be potentially negative) to the first two terms (that are always non-negative).
We will use the following two standard random matrix facts (see Fact~\ref{fact:vershynin}) in our analysis: for $A' = A- \frac{1}{2} \1 \1^{\top}$ we have $\Norm{A'}_2 \leq O(\sqrt{n})$, and the $k$-th smallest singular value of both $A$ and $A'$ is at least $\Omega(\sqrt{n} - \sqrt{k}) = \Omega(\sqrt{n})$ as $k \leq n/2$. 

\paragraph{The potentially negative terms.} 
Let's work with the potentially negative terms coming from the parallel components of $z_L$ and $z_R$. 
Observe that $(z_L^{\parallel})^{\top} X_{cross} z_R^{\parallel} = \Norm{z_L^{\parallel}}_2 \Norm{z_R^{\parallel}}_2 \frac{c_1}{2} \ell + (z_L^{\parallel})^{\top} X_{cross}' z_R^{\parallel} \geq \Norm{z_L^{\parallel}}_2 \Norm{z_R^{\parallel}}_2 (\frac{c_1 \ell}{2} - O(\frac{\ell}{n}\sqrt{n})) > 0$.
Since $(z_L^{\parallel})^{\top} X_{top} z_L^{\parallel} +(z_R^{\parallel})^{\top} X_{bot} z_R^{\parallel} >0$ we can conclude that $(z^{\parallel})^{\top} X z^{\parallel}\geq 0$.

Let's analyze the potentially negative terms coming from the perpendicular components of $z_L$ and $z_R$. 
We have $|(z_L')^{\top} X_{cross} z_R'| = |(z_L')^{\top} X_{cross}' z_R'| \leq c_2 \frac{\ell}{n} \sqrt{n} \norm{z_L'}_2 \norm{z_R'}_2 = c_2 \frac{\ell}{\sqrt{n}} \norm{z_L'}_2 \norm{z_R'}_2$. 

Finally, let's analyze the potentially negative terms coming from crossing the parallel and the perpendicular components of $z_L$ and $z_R$. 
We have: $|(z_L^{\parallel})^\top X_{cross} z_R'| = |(z_L^{\parallel})^\top X_{cross}' z_R'| \leq \norm{z_L^{\parallel}}_2 \norm{z_R'}_2 O(\ell/\sqrt{n})$.
Similarly, $|(z_L')^\top X_{cross} z_R^{\parallel}| \leq  \norm{z_L'}_2 \norm{z_R^{\parallel}}_2 O(\ell/\sqrt{n})$.

\paragraph{The square terms.}
We now compute a lower bound on the non-negative terms in \eqref{eq:quad-form}.

We have $z_L^\top X_{top} z_L = z_L^\top X_{top}' z_L + \frac{\ell^2}{k^2} (z_L^{\parallel})^\top \1 \1^\top z^{\parallel} = \norm{z_L}_2^2 (\frac{\ell}{k} - \frac{\ell^2}{k^2}) + \norm{z_L^{\parallel}}_2^2 \frac{\ell^2 n}{k^2} \geq c_3 \norm{z_L}_2^2 \frac{\ell}{k}$.

Next, we lower bound $z_R^{\top} X_{bot} z_R$. Now, $z_R\in F$. Recall that the $k$-th smallest singular value of $A$ and $A'$ is at least $\Omega(\sqrt{n})$ if $k \leq n/2$ with probability at least $0.999$ over the draw of the graph $H$, and 2) the matrix $\sum_i a_i a_i^{\top}$ has the same eigenvalues as $4A'{A'}^{\top}$ where, recall that $A' = A-\frac{1}{2} \1 \1^{\top}$. 
Together this yields that all eigenvalues of $\sum_i a_i a_i^{\top} + \1 \1^{\top}$ are at least $c_4 n$ for some constant $c_4>0$ when restricted to the subspace $F$. 
Thus, $z_R^{\top} X_{bot} z_R \geq c_4 n \Norm{z_R}_2^2 \frac{k-\ell}{n (k+1)} = c_4 \Norm{z_R}_2^2 \frac{k-\ell}{k+1} \geq c_5 \Norm{z_R}_2^2$ recalling that $k-\ell > k/2$. 

Let's now complete the charging argument. 
Let's first observe that, by the AM-GM inequality, the square terms contribute at least $c_6 \sqrt{\ell/k}\norm{z_L}_2 \norm{z_R}_2$. The potentially negative term from the perpendicular components is at most $c_2 \ell/\sqrt{n} \norm{z_L'}_2 \norm{z_R'}_2$ in magnitude, and the potentially negative term from crossing the components is at most $\norm{z_L^{\parallel}}_2 \norm{z_R'}_2 O(\ell/\sqrt{n}) + \norm{z_L'}_2 \norm{z_R^{\parallel}}_2 O(\ell/\sqrt{n})$.
Thus, the square terms dominate as long as $\ell \leq O(n/k)$. 

This completes the proof.

\end{proof}

\subsection{Low-degree lower bound for $p=1/2$}
Formally, we will prove that there are distributions over bipartite graphs that admit $\ell$ by $k-\ell$ cliques for appropriate parameters $\ell$ that are indistinguishable from $B(k,n,p)$ --- the distribution on random bipartite graphs with left vertex set of size $k$, right vertex set of size $n$ and each bipartite edge included to be in the graph with probability $p$ independently.
The choice of the planted model requires a bit of care, as we soon discuss. We will deal with the case of $p=1/2$ and general $p$ separately for clarity of exposition.

\begin{itemize}
  \item $D_{\mathrm{null}} = B(k,n, 1/2)$: the distribution on bipartite graphs $H=(U,V, E)$ where $|U|=k$, $|V|=n$ and each bipartite edge  $\set{u,v}$ with $u \in U$ and $v \in V$ is included in $H$ with probability $1/2$.
  \item $D_{\mathrm{planted}} = B(k,n, \ell, 1/2)$: the distribution on bipartite graphs $H=(U,V, E)$ where $|U|=k$, $|V|=n$, sampled as follows. 
  Choose $S$ by including each vertex from $U$ in $S$ with probability $\ell/k$.
  Choose $P$ by including every vertex from $V$ in $R$ with probability $(k-\ell)/n$.
  Finally, include each edge $\set{u,v}$ with $u \in U$ and $v \in V$ with probability
\[\Pr_{D_{\mathrm{planted}}}[\set{u, v} \text{ is included}] = \begin{cases}
1 & \text{ if } u \in S, v \in P\mcom\\
\frac{n/2-(k-\ell)}{n-(k-\ell)} & \text{ if } u \in S, v \not\in P\mcom\\
\frac{1}{2} & \text{ otherwise}\mper
\end{cases}\]
\end{itemize}

\begin{remark}
$D_{\mathrm{planted}}$ is chosen so as to have a $\ell$ by $k-\ell$ biclique in it while having the same distribution of degrees of \emph{left} vertices as in $D_{\mathrm{null}}$.
This is necessary since otherwise the average degree of the left vertices gives a distinguisher between the models.
\end{remark}

\begin{theorem}\label{thm:low-deg-lb-half}
Fix $\epsilon > 0$ independent of $n$ with $\epsilon \leq 0.001$.
For $k=n^{1/2+\epsilon}$ and $\ell \leq n^{1/4-0.001}$, the norm of the degree-$\lfloor 0.001/\epsilon \rfloor$  likelihood ratio between $B(k,n,\ell, 1/2)$ and $B(k,n, 1/2)$ is $1+o(1)$.
On the other hand, for $k=n^{1/2+\epsilon}$ and all $\ell$, the norm of the degree-$O(1/\epsilon)$ likelihood ratio between $B(k,n,\ell, 1/2)$ and $B(k,n, 1/2)$ is unbounded as $n \to \infty$.
\end{theorem}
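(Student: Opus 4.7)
The plan is to perform a standard Fourier-analytic calculation of the low-degree likelihood ratio norm. Under $D_{\mathrm{null}} = B(k,n,1/2)$ the $\{\pm 1\}$ edge indicators are i.i.d.\ Rademacher, so the characters $\chi_T(H) = \prod_{e\in T}H_e$ indexed by edge subsets $T\subseteq U\times V$ form an orthonormal basis of $L^2$, and
\[
\|L^{\leq D}\|^2 \;=\; 1 \;+\; \sum_{1\leq |T|\leq D} \hat L(T)^2\mcom \qquad \hat L(T) \;=\; \E_{H\sim D_{\mathrm{planted}}}[\chi_T(H)]\mper
\]
The whole argument thus reduces to computing and summing the Fourier coefficients $\hat L(T)$.

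I would first compute $\hat L(T)$ by conditioning on the planted sets $S\subseteq U$ and $P\subseteq V$ used in the definition of $D_{\mathrm{planted}}$. Conditional on $(S,P)$ the edges of $H$ are independent, so $\E[\chi_T\mid S,P]$ is a product of per-edge means, which take values $+1$ when $u\in S, v\in P$; $-b := -(k-\ell)/(n-(k-\ell))$ when $u\in S, v\notin P$; and $0$ otherwise. Hence the product vanishes unless $L(T)\subseteq S$; averaging over $(S,P)$ yields
\[
\hat L(T) \;=\; \Bigl(\tfrac{\ell}{k}\Bigr)^{|L(T)|} \prod_{v\in R(T)} f(d_v(T))\mcom \qquad f(d) \;=\; a + (1-a)(-b)^d\mcom\ a \;=\; \tfrac{k-\ell}{n}\mper
\]
The crucial structural fact is that the adjusted edge probability $q$ has been tuned so that $f(1) = a - (1-a)b = 0$; consequently $\hat L(T)$ vanishes as soon as some right vertex in $R(T)$ has degree $1$ in $T$. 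A direct calculation also gives $|f(d)|\leq 2a = O(k/n)$ for every $d\geq 2$, and hence $|\hat L(T)|^2\leq (\ell/k)^{2l}\cdot O(k/n)^{2r}$ with $l=|L(T)|,\, r=|R(T)|$.

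For the first (upper bound) statement I would parameterize the surviving $T$ by the shape $(l,r,s)=(|L(T)|,|R(T)|,|T|)$; the constraints $2r\leq s\leq lr$ (every right vertex has degree $\geq 2$) force $l\geq 2$. The number of $T$ of a given shape is at most $\binom{k}{l}\binom{n}{r}\binom{lr}{s}$, so after substituting $k=n^{1/2+\epsilon}$ and $\ell\leq n^{1/4-0.001}$, giving $\ell^2/k\leq n^{-\epsilon-0.002}$ and $k^2/n=n^{2\epsilon}$, the contribution per shape simplifies to essentially $n^{-l(\epsilon+0.002)+2r\epsilon}(elr/s)^s$ up to factorial savings. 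With $s\leq D=\lfloor 0.001/\epsilon\rfloor$ one has $r\leq D/2$ and $l\geq 2$, so the net exponent of $n$ is at most $D\epsilon - 2(\epsilon+0.002)\leq 0.001 - 2\epsilon - 0.004 \leq -0.003$, while $(elr/s)^s$ is bounded by $e^D$, a constant in $n$. Summing over the $\poly(D)$ possible shapes yields $\|L^{\leq D}\|^2 - 1 = o(1)$.

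For the second (lower bound) statement the plan is to show that the $K_{2,r}$-subfamily alone already saturates the norm. For $T$ a $K_{2,r}$ one has $l=2$, every $d_v=2$, and $\hat L(T)^2=(\ell/k)^4 f(2)^{2r}$, with $\binom{k}{2}\binom{n}{r}$ copies in $U\times V$; their total contribution equals $\Theta(\ell^4/k^2)\sum_{r\leq D/2} (4n^{2\epsilon})^r/r!$, whose dominant term at $r=D/2$ is $\Theta(\ell^4/k^2)\cdot (8en^{2\epsilon}/D)^{D/2}$ by Stirling. For $\ell\geq n^{\Omega(1)}$ the penalty $\ell^4/k^2$ is an $n$-polynomial of bounded negative exponent, so choosing $D = c/\epsilon$ for a large enough constant $c$ makes $D\epsilon$ exceed this penalty, the $n$-exponent becomes strictly positive, and the whole sum diverges as $n\to\infty$, yielding the unboundedness. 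The main technical obstacle I anticipate lies in the upper bound: one must verify that no non-$K_{2,r}$ shape $(l,r,s)$ with $l\gg 2$ exploits its combinatorial freedom $\binom{lr}{s}$ to outweigh the $n^{-\epsilon-0.002}$ penalty per extra left vertex; the cap $s\leq D$ is essential, since it bounds both $r$ and $lr$ and keeps all per-shape contributions uniformly $o(1)$.
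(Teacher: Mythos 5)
Your proposal is correct and follows essentially the same route as the paper: compute $\hat L(T)$ by conditioning on the planted sets, observe that the edge-adjustment makes the per-right-vertex factor $f(1)=0$ so that any $T$ with a degree-$1$ right vertex contributes nothing, bound $|f(d)|\leq O(k/n)$ for $d\geq 2$, sum over shapes, and use the $K_{2,r}$ family for the divergence. The paper parametrizes shapes by $(L,R)$ alone (at $p=1/2$ the contribution does not depend on the degrees beyond the all-$\geq 2$ condition), while you also track the edge count $s$; this is an immaterial difference.

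One small inaccuracy: in the lower-bound half you phrase the divergence argument as only covering $\ell\geq n^{\Omega(1)}$, whereas the theorem asserts it for \emph{all} $\ell$. The restriction is unnecessary: for any $\ell\geq 1$ the penalty factor $\ell^4/k^2$ is at least $k^{-2}=n^{-1-2\epsilon}$, a fixed bounded negative exponent, so the same choice $R=O(1/\epsilon)$ (large enough that $2R\epsilon - 2\epsilon - 1 > 0$) drives the $K_{2,R}$ contribution to infinity regardless of $\ell$. This is exactly how the paper proceeds; the rest of your computation matches.
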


\begin{remark}
Information-theoretically, to identify a small list in the semi-random planted clique model, we need $k = \tilde{\Theta}(\sqrt{n})$~\cite{steinhardt2017does}.
If we set $k$ to be this value, then the corresponding bipartite random graph has no $\ell$ by $k-\ell$ clique for $\ell = O(\log n)$. 
The above theorem shows that in the low-degree polynomial model, distinguishing between the case when $\ell = n^{\epsilon}$ vs $\ell = O(\log n)$ requires polynomials of degree $O(1/\epsilon)$.
\end{remark}

For a bipartite graph $H=(U,V, E)$ recall that $\chi_{u,v}$ is $1$ if the edge $\set{u, v}$ is included and $-1$ otherwise.
We also define $\chi_\alpha = \prod_{\set{u, v} \in A} \chi_{u,v}$.

\begin{lemma}
\label{lem:low-deg-lb-mono-half}
For $H$ sampled from $D_{\mathrm{planted}}$, let $L$ be the number of left vertices in $\alpha$, $R$ the number of right vertices in $A$, and $d_1, ..., d_R$ the number of edges in $\alpha$ incident to each of the right vertices.
Then
\[\E_{D_{\mathrm{planted}}}[\chi_\alpha] = \begin{cases}
\Paren{\frac{\ell}{k}}^L \Paren{\frac{k-\ell}{n}}^R \Paren{1 + O\Paren{\frac{k-\ell}{n}}}^R & \text{ if } d_1, \ldots, d_R > 1\mcom\\
0 & \text{ otherwise}\mper
\end{cases}\]
\end{lemma}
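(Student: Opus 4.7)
The plan is to compute the expectation by conditioning on the random sets $S$ and $P$ and exploiting the conditional independence of the edges in $D_{\mathrm{planted}}$. First I would fix a bipartite edge $\{u,v\}$ and compute the conditional character expectation: a direct calculation gives $\E[\chi_{u,v}\mid S,P]=1$ if $u\in S, v\in P$; equals $\frac{-(k-\ell)}{n-(k-\ell)}$ if $u\in S, v\notin P$; and equals $0$ if $u\notin S$. Since edges are conditionally independent given $(S,P)$, we have $\E[\chi_\alpha\mid S,P]=\prod_{\{u,v\}\in\alpha}\E[\chi_{u,v}\mid S,P]$, which immediately forces $\chi_\alpha$'s conditional expectation to vanish unless every left endpoint of an edge in $\alpha$ lies in $S$.

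Next I would group the edges of $\alpha$ by their right endpoint. For a right vertex $v\in\alpha^R$ with $d_v$ incident edges (whose left endpoints all lie in $S$), conditioning on whether $v\in P$ and taking expectation gives the factor
\begin{equation*}
f(d_v)\;\defeq\;\frac{k-\ell}{n}\cdot 1^{d_v}\;+\;\frac{n-(k-\ell)}{n}\cdot\Paren{\frac{-(k-\ell)}{n-(k-\ell)}}^{d_v}.
\end{equation*}
The key algebraic observation, and arguably the whole point of how $D_{\mathrm{planted}}$ was calibrated, is that $f(1)=\tfrac{k-\ell}{n}-\tfrac{k-\ell}{n}=0$, so if any right vertex in $\alpha$ is incident to exactly one edge of $\alpha$ the whole expectation is zero. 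This is the ``$d_i=1$'' case in the statement.

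For $d_v\ge 2$, factoring $\tfrac{k-\ell}{n}$ out yields
\begin{equation*}
f(d_v)=\frac{k-\ell}{n}\Paren{1+(-1)^{d_v}\Paren{\frac{k-\ell}{n-(k-\ell)}}^{d_v-1}}=\frac{k-\ell}{n}\Paren{1+O\Paren{\frac{k-\ell}{n}}},
\end{equation*}
since $d_v-1\ge 1$ and $(k-\ell)/n=o(1)$ in the regime of interest. Finally I would take expectation over $S$: the independent inclusions give $\Pr[\alpha^L\subseteq S]=(\ell/k)^L$. Multiplying the three pieces yields
\begin{equation*}
\E_{D_{\mathrm{planted}}}[\chi_\alpha]=\Paren{\frac{\ell}{k}}^L\Paren{\frac{k-\ell}{n}}^R\prod_{v\in\alpha^R}\Paren{1+(-1)^{d_v}\Paren{\tfrac{k-\ell}{n-(k-\ell)}}^{d_v-1}}
\end{equation*}
when all $d_v\ge 2$, which after the $O(\cdot)$-simplification matches the claimed expression.

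The only subtlety is bookkeeping: the cancellation at $d_v=1$ must be exact (not merely $o(1)$) for the lemma to hold, and this identity is precisely what forced the choice $\frac{n/2-(k-\ell)}{n-(k-\ell)}$ for the edge probability at $u\in S,v\notin P$. I would make this explicit in the write-up to convey why the planted model is defined with this particular edge probability rather than the naive choice of $1/2$. Everything else reduces to independent products and a first-order Taylor bound, which I expect to be routine.
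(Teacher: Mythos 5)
Your proof is correct and follows essentially the same route as the paper: condition on the planted biclique $(S,P)$, use conditional independence of edges, compute the per-right-vertex factor $f(d_v)=\frac{k-\ell}{n}\cdot 1+\bigl(1-\frac{k-\ell}{n}\bigr)\bigl(\frac{-(k-\ell)}{n-(k-\ell)}\bigr)^{d_v}$, observe $f(1)=0$ exactly because the off-biclique edge probability is calibrated to make $\E[\chi_{u,v}\mid u\in S]=0$, and multiply by $\Pr[\alpha^L\subseteq S]=(\ell/k)^L$. The only cosmetic difference is your explicit factoring $f(d_v)=\frac{k-\ell}{n}\bigl(1+(-1)^{d_v}\bigl(\tfrac{k-\ell}{n-(k-\ell)}\bigr)^{d_v-1}\bigr)$ and your highlighting of the calibration (which the paper discusses in a remark rather than in the proof), but these are presentational and the argument is the same.
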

\begin{proof}
Conditioned on the planted biclique $(S, P)$, the edges are independent. For an edge $\set{u, v}$, we calculate
\[\E_{D_{\mathrm{planted}}}[\chi_{u,v} \mid \text{planted biclique is }(S,P)] = \begin{cases}
  1 & \text{ if } u \in S, v \in P\mcom\\
  \frac{-(k-\ell)}{n-(k-\ell)} & \text{ if } u \in S, v \not\in P \mcom\\
  0 & \text{ otherwise}\mcom
\end{cases}\]
where for the case $u \in S, v \not\in P$, we calculated the expectation as
\[\frac{n/2-(k-\ell)}{n-(k-\ell)} \cdot 1 + \Paren{1 - \frac{n/2-(k-\ell)}{n-(k-\ell)}} \cdot (-1) = \frac{-(k-\ell)}{n-(k-\ell)}\mper\]

We observe that if any of the left vertices in $\alpha$ is not in the planted biclique, the conditional expectation of $\chi_\alpha$ is zero.
Therefore, we condition on the event that all the left vertices in $A$ are in the planted biclique, which happens with probability $\Paren{\frac{\ell}{k}}^L$.

Conditioned on this event, for any particular right vertex, all the edges in $\alpha$ incident to it are independent from the other edges in $\alpha$.
Let $\alpha_i$ be the subset of edges in $\alpha$ that are incident to the $i$-th right vertex.
Then 
\begin{align*}
&\E_{D_{\mathrm{planted}}}[\chi_{\alpha_i} \mid \text{ planted biclique contains all left vertices in } \alpha]\\
&= \frac{k-\ell}{n} \cdot 1 + \Paren{1 - \frac{k-\ell}{n}} \cdot \Paren{\frac{-(k-\ell)}{n-(k-\ell)}}^{d_i}\\
&= \begin{cases}
\frac{k-\ell}{n} \Paren{1 + O\Paren{\frac{k-\ell}{n}}} & \text{ if } d_i > 1\mcom\\
0 & \text{ if } d_i = 1\mcom
\end{cases}
\end{align*}
so
\begin{align*}
&\E_{D_{\mathrm{planted}}}[\chi_{\alpha} \mid \text{ planted biclique contains all left vertices in } \alpha]\\
&= \begin{cases}
\Paren{\frac{k-\ell}{n}}^R \Paren{1 + O\Paren{\frac{k-\ell}{n}}}^R & \text{ if } d_1, \ldots, d_R > 1\mcom\\
0 & \text{ otherwise }\mper 
\end{cases}
\end{align*}
Therefore, overall, 
\[\E_{D_{\mathrm{planted}}}[\chi_\alpha] = \begin{cases}
\Paren{\frac{\ell}{k}}^L \Paren{\frac{k-\ell}{n}}^R \Paren{1 + O\Paren{\frac{k-\ell}{n}}}^R & \text{ if } d_1, \ldots, d_R > 1\mcom\\
0 & \text{ otherwise }\mper 
\end{cases}\]
\end{proof}

\begin{proof}[Proof of Theorem~\ref{thm:low-deg-lb-half}]
Let $LR^{\leq D}$ be the degree-$D$ likelihood ratio between $D_{\mathrm{planted}}$ and $D_{\mathrm{null}}$.
Then, by standard results, $\Norm{LR^{\leq D} - 1}^2 = \sum_{0 < |A| \leq D} \E_{D_{\mathrm{planted}}}[\chi_\alpha]^2$, where the norm is the one induced by $D_{\mathrm{null}}$.
Therefore, if the right-hand side is $o(1)$, then $\Norm{LR^{\leq D}}$ is $1+o(1)$, and if the right-hand side is unbounded, then $\Norm{LR^{\leq D}}$ is also unbounded.

Consider all $E$ with $L$ left vertices and $R$ right vertices.
The contribution from these $E$ is, by Lemma~\ref{lem:low-deg-lb-mono-half}, 
\[\sum_{\substack{\alpha\\L \text{ left vertices}\\R \text{ right vertices}}} \E_{D_{\mathrm{planted}}}[\chi_\alpha]^2 = \Paren{\frac{\ell}{k}}^{2L} \Paren{\frac{k-\ell}{n}}^{2R} \Paren{1 + O\Paren{\frac{k-\ell}{n}}}^{2R} \cdot \binom{k}{L} \binom{n}{R} \operatorname{Bip}(L, R)\mcom\]
where $\operatorname{Bip}(L,R)$ is the number of bipartite graphs with $L$ left vertices and $R$ right vertices such that all left degrees are at least $1$ and all right degrees are greater than $1$.

Consider a choice of $L$ and $R$ such that $\operatorname{Bip}(L,R) \neq 0$.
Because we are interested in the behavior of the sum as $n$ goes to infinity, we ignore as negligible all factors that depend only on $L$ and $R$.
We also approximate $k-\ell \approx k$ and $\Paren{1+O\Paren{\frac{k-\ell}{n}}}^{2R} \approx 1$.
Then we have 
\begin{align*}
\sum_{\substack{\alpha\\L \text{ left vertices}\\R \text{ right vertices}}} \E_{D_{\mathrm{planted}}}[\chi_\alpha]^2
&\sim \Paren{\frac{\ell}{k}}^{2L} \Paren{\frac{k}{n}}^{2R} \cdot \binom{k}{L} \binom{n}{R}\\
&\sim \Paren{\frac{\ell}{k}}^{2L} \Paren{\frac{k}{n}}^{2R} k^L n^R\\
&= n^{-R}k^{2R-L}\ell^{2L}\mper
\end{align*}
For $k = n^{1/2+\epsilon}$, the above is equal to $n^{(2R-L)\epsilon - L/2}\ell^{2L}$.

For $\ell=n^{1/4-0.001}$, this is equal to $n^{(2R-L)\epsilon - 0.002L}$. 
For $|\alpha| \leq 0.001/\epsilon$, we have $1 \leq L,R \leq 0.001/\epsilon$ and hence $2R-L \leq 0.002/\epsilon-1$.
Then $n^{(2R-L)\epsilon - 0.002L} \leq n^{-\epsilon}$, which goes to zero as $n$ goes to infinity.
Therefore, the sum of all the terms with $|\alpha| \leq 0.001/\epsilon$ is $o(1)$.

For $|\alpha| \geq O(1/\epsilon)$, consider the term corresponding to $L=2$ and some $R=O(1/\epsilon)$.
Note that the term satisfies $\operatorname{Bip}(L, R) \neq 0$ (e.g., the complete bipartite graph on $2$ left vertices and $O(1/\epsilon)$ right vertices is a valid choice).
For this term, $n^{(2R-L)\epsilon - L/2} = n^{2R\epsilon - 2\epsilon - 1} \geq n$ for $R = O(1/\epsilon)$ large enough.
Therefore, this term goes to infinity as $n$ goes to infinity, and then the same is true for the sum of all the terms.
\end{proof}

\subsection{Low-degree lower bound for general densities}

In this section, we will prove that the following two distributions on bipartite random graphs are indistinguishable by low-degree polynomials. 

\begin{itemize}
  \item $D_{\mathrm{null}} = B(k, n, p)$: the distribution on bipartite graphs $H=(U,V, E)$ where $|U|=k$, $|V|=n$, and each bipartite edge  $\set{u, v}$ with $u \in U$ and $v \in V$ is included in $H$ with probability $p$.
  \item $D_{\mathrm{planted}} = B(k,n, \ell, p)$: the distribution on bipartite graphs $H=(U,V, E)$ where $|U|=k$, $|V|=n$, sampled as follows. 
  Choose $S$ by including each vertex from $U$ in $S$ with probability $\ell/k$.
  Choose $P$ by including every vertex from $V$ in $R$ with probability $(k-\ell)/n$.
  Finally, include each edge $\set{u, v}$ with $u \in U$ and $v \in V$ with probability
  \[\Pr_{D_{\mathrm{planted}}}[\set{u, v} \text{ is included}] = \begin{cases}
  1 & \text{ if } u \in S, v \in P\mcom\\
  \frac{np-(k-\ell)}{n-(k-\ell)} & \text{ if } u \in S, v \not\in P\mcom\\
  p & \text{ otherwise}\mper
\end{cases}\]
\end{itemize}

\begin{theorem}
\label{thm:low-deg-lb-general}
Fix $\epsilon > 0$ independent of $n$.
Let $p \geq 1/2$ and $q=1-p$, and define $\gamma$ such that $q=n^{-\gamma}$.
For $\epsilon \leq \gamma/2$ and $k=n^{1/2+\epsilon}/q^{1/2}$ and $\ell\leq n^{1/4 - 0.001}$, the norm of the degree-$f(1/\epsilon)$  likelihood ratio between $B(k,n,\ell, p)$ and $B(k,n, p)$ is $1+o(1)$, for any function $f$ independent of $n$.
On the other hand, for $\epsilon \geq \gamma$ and $k=n^{1/2+\epsilon}/q^{1/2}$ and all $\ell$, the norm of the degree-$O(1/\epsilon)$  likelihood ratio between $B(k,n,\ell, p)$ and $B(k,n, p)$ is unbounded as $n \to \infty$.
\end{theorem}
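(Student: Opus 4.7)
The plan is to mirror the proof of Theorem~\ref{thm:low-deg-lb-half}, but working with $p$-biased characters (Definition~\ref{def:char-matrix}) in place of the $\pm 1$ Fourier characters. Under $D_{\mathrm{null}}=B(k,n,p)$, the edge-indexed characters $\chi_{u,v}\in\{\sqrt{q/p},-\sqrt{p/q}\}$ have mean $0$ and variance $1$, and $\{\chi_\alpha:=\prod_{e\in\alpha}\chi_e\}_\alpha$ forms an orthonormal basis, so $\|LR^{\leq D}-1\|^2=\sum_{0<|\alpha|\leq D}\E_{D_{\mathrm{planted}}}[\chi_\alpha]^2$.

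First I would compute the analog of Lemma~\ref{lem:low-deg-lb-mono-half}. Conditioned on the planted biclique $(S,P)$, the edges are independent, so for a single edge $\{u,v\}$: $\E[\chi_{u,v}]=\sqrt{q/p}$ if $u\in S,v\in P$; $\E[\chi_{u,v}]=0$ if $u\notin S$; and for $u\in S,v\notin P$ a direct calculation using the adjusted probability $(np-(k-\ell))/(n-(k-\ell))$ yields $\E[\chi_{u,v}]=-\sqrt{q/p}\cdot(k-\ell)/(n-(k-\ell))$. If any left vertex in $\alpha$ is not in $S$ the conditional expectation of $\chi_\alpha$ vanishes, so I pay a factor $(\ell/k)^L$ where $L$ is the number of left vertices in $\alpha$. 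Conditional on all left endpoints being in $S$, for a right vertex $v_i$ of degree $d_i$ in $\alpha$, the expectation factors as $(q/p)^{d_i/2}\bigl[\tfrac{k-\ell}{n}+(1-\tfrac{k-\ell}{n})(-\tfrac{k-\ell}{n-(k-\ell)})^{d_i}\bigr]$. The two terms exactly cancel when $d_i=1$, and for $d_i\geq 2$ the bracket equals $\tfrac{k-\ell}{n}(1+O((k-\ell)/n))$. Collecting, if every right vertex has degree $\geq 2$ then
\begin{equation*}
\E_{D_{\mathrm{planted}}}[\chi_\alpha]=\Paren{\tfrac{\ell}{k}}^L\Paren{\tfrac{k-\ell}{n}}^R\Paren{\tfrac{q}{p}}^{|\alpha|/2}\Paren{1+O\bigl(\tfrac{k-\ell}{n}\bigr)}^R,
\end{equation*}
and otherwise it is zero. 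The only substantively new ingredient relative to the $p=1/2$ case is the factor $(q/p)^{|\alpha|/2}$, which will be crucial.

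Next I would group the sum over $\alpha$ by $(L,R,m)$ where $m=|\alpha|\geq 2R$. Absorbing multiplicative factors that depend only on $L,R$ (hence on $D$) into constants, and using $p\geq 1/2$ so that $q/p\leq 2q$, the contribution of all $\alpha$ with given $(L,R,m)$ is, up to constants depending only on $D$,
\begin{equation*}
k^L n^R\,(\ell/k)^{2L}((k-\ell)/n)^{2R}(q/p)^m \;\lesssim\; n^{-R}k^{2R-L}\ell^{2L}q^m.
\end{equation*}
Substituting $k=n^{1/2+\epsilon}/q^{1/2}$ and using $m\geq 2R$ together with $q\leq 1$ to bound $q^m\leq q^{2R}$, this simplifies to $n^{(2\epsilon-\gamma)R-(1/2+\epsilon+\gamma/2)L}\ell^{2L}$ (recall $q=n^{-\gamma}$).

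For the indistinguishability direction I assume $\epsilon\leq\gamma/2$ and $\ell\leq n^{1/4-0.001}$. Then $2\epsilon-\gamma\leq 0$ so the $R$-exponent is nonpositive, and using $\ell^{2L}\leq n^{(1/2-0.002)L}$ the overall exponent of $n$ is at most $(-\epsilon-\gamma/2-0.002)L\leq -0.002$ for every $L\geq 1$. Because this bound is independent of $R$, summing over all $(L,R)$ with $L+R\leq D$ gives at most $D^2\cdot n^{-0.002}=o(1)$, which holds for any $D=f(1/\epsilon)$ that does not grow with $n$. For the distinguishability direction I assume $\epsilon\geq\gamma$ and exhibit a single bad $\alpha$: take $L=2$ and let $\alpha$ be the complete bipartite graph on $2$ left vertices and $R$ right vertices (so $m=2R$ and every right degree equals $2$, making the expectation nonzero). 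The exponent becomes $(2\epsilon-\gamma)R-(1+2\epsilon+\gamma)\geq \epsilon R-O(1)$, which diverges for $R=\Theta(1/\epsilon)$ sufficiently large, so already the squared contribution of this single term is unbounded as $n\to\infty$.

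The main obstacle I anticipate is the bookkeeping for the case $u\in S,v\notin P$: one must verify that the adjusted edge probability lies in $[0,1]$ in the regime of interest (which uses $k\leq np$), and one must carefully verify that the key cancellation making $d_i=1$ contributions vanish carries over to the $p$-biased setting — this is exactly what forces the specific form $(np-(k-\ell))/(n-(k-\ell))$ in the definition of $D_{\mathrm{planted}}$. Once this cancellation is in place, the rest of the argument is a direct imitation of the $p=1/2$ analysis, with the $(q/p)^{m/2}$ factor providing the additional $q$-savings that let the threshold move from $\epsilon$ to $\gamma/2$.
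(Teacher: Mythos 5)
Your proposal is correct and follows essentially the same route as the paper: compute $\E_{D_{\mathrm{planted}}}[\chi_\alpha]$ via the $p$-biased conditional calculation (including the crucial vanishing of $d_i=1$ contributions, which is exactly what the adjusted edge probability $(np-(k-\ell))/(n-(k-\ell))$ is engineered to force), then sum $\E[\chi_\alpha]^2$ grouped by graph shape, and read off the exponent of $n$. The only cosmetic difference is that you aggregate over $(L,R,m=|\alpha|)$ and bound $(q/p)^m \leq (2q)^{2R}$, whereas the paper keeps the degree sequence $(d_1,\dots,d_R)$ and then observes the worst case is $d_1=\cdots=d_R=2$ — these are the same observation, and your packaging of the monomial expectation as $(\ell/k)^L((k-\ell)/n)^R(q/p)^{m/2}(1+O((k-\ell)/n))^R$ is a slightly cleaner way to write the paper's Lemma~\ref{lem:low-deg-lb-mono-general}.
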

\begin{remark}
Information-theoretically, to identify a small list in the semi-random planted clique model with a general $p$, we need $k \sim \tilde{\Theta}(\sqrt{n/q})$~\cite{steinhardt2017does}.
If we set $k$ to be this allegedly optimal value, then the corresponding bipartite random graph has no $\ell$ by $k-\ell$ clique for $\ell = O(\log n)$. 
The above theorem shows that in the low-degree polynomial model, distinguishing between the case when $\ell = n^{\epsilon}$ vs $\ell = O(\log n)$ requires polynomials of degree growing faster than any function (independent of $n$) of $1/\epsilon$.\footnote{The theorem leaves open the possibility of distinguishing with low degree in the case $\epsilon > \gamma/2$. However, if $\epsilon > \gamma/2$, then $k \geq \sqrt{n}/q$, which is also suboptimal.} 
\end{remark}

In this section, for a bipartite graph $H=(U,V, E)$, we define $\chi_{u,v}$ to be $\sqrt{\frac{1-p}{p}}$ if the edge $\set{u, v}$ is included and $-\sqrt{\frac{p}{1-p}}$ otherwise.
We also define $\chi_\alpha = \prod_{\set{u, v} \in E} \chi_{u,v}$.

\begin{lemma}
\label{lem:low-deg-lb-mono-general}
For $H$ sampled from $D_{\mathrm{planted}}$, let $L$ be the number of left vertices in $\alpha$, $R$ the number of right vertices in $E$, and $d_1, ..., d_R$ the number of edges in $\alpha$ incident to each of the right vertices.
Then
\[\E_{D_{\mathrm{planted}}}[\chi_\alpha] = \begin{cases}
  \Paren{\frac{\ell}{k}}^L \Paren{\frac{k-\ell}{n}}^R \prod_{i=1}^R \Paren{\Paren{\sqrt{\frac{1-p}{p}}}^{d_i}+O\Paren{\frac{k-\ell}{n}\frac{1-p}{p}}} & \text{ if } d_1, \ldots, d_R > 1\mcom\\
  0 & \text{ otherwise }\mper 
  \end{cases}\]
\end{lemma}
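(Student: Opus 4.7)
The proof will closely mirror the one of Lemma~\ref{lem:low-deg-lb-mono-half}, with the unbiased characters replaced by the $p$-biased ones. The key observation is that, conditioned on the planted biclique $(S,P)$, all edges of the graph are independent, so $\E[\chi_\alpha \mid (S,P)]$ factors over edges, and then over right vertices once we average over $P$.

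First, I would compute the conditional mean of a single character. Using the identity $\E[\chi_{u,v}\mid (S,P)] = (q-p)/\sqrt{p(1-p)}$, where $q$ is the conditional probability that the edge $\{u,v\}$ is present, I obtain three cases: if $u\in S$ and $v\in P$ then $q=1$ and the mean equals $\sqrt{(1-p)/p}$; if $u\in S$ and $v\notin P$ then $q=(np-(k-\ell))/(n-(k-\ell))$ and a short simplification gives mean $-\frac{k-\ell}{n-(k-\ell)}\sqrt{(1-p)/p}$; and if $u\notin S$ then $q=p$ and the mean is $0$. From the third case, by independence of the edges given $(S,P)$, the conditional expectation $\E[\chi_\alpha\mid(S,P)]$ vanishes unless every left vertex incident to $\alpha$ lies in $S$, an event that has probability $(\ell/k)^L$ under the random choice of $S$.

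Next, conditioned on all left endpoints of $\alpha$ lying in $S$, the edges incident to different right vertices are independent, because the only remaining source of correlation is the independent coin flip $\{v\in P\}$ for each right vertex $v$. So I would partition $\alpha=\bigsqcup_{i=1}^R \alpha_i$ according to right endpoint and compute $\E[\chi_{\alpha_i}]$ separately. For a fixed right vertex $v_i$ with $d_i$ incident edges, condition further on whether $v_i\in P$: with probability $(k-\ell)/n$ every such edge is present and $\chi_{\alpha_i}=(\sqrt{(1-p)/p})^{d_i}$, while with probability $1-(k-\ell)/n$ the $d_i$ edges are independent and each contributes its mean $-\frac{k-\ell}{n-(k-\ell)}\sqrt{(1-p)/p}$. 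Combining and factoring out $(k-\ell)/n$ gives
\[
\E[\chi_{\alpha_i}] \;=\; \frac{k-\ell}{n}\Paren{\sqrt{(1-p)/p}}^{d_i}\left[1+(-1)^{d_i}\frac{(k-\ell)^{d_i-1}}{(n-(k-\ell))^{d_i-1}}\right].
\]

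The final step is to read off the two cases. For $d_i=1$ the bracket is $1-1=0$, which matches the ``otherwise $=0$'' case of the lemma. For $d_i\geq 2$ the bracket equals $1+O((k-\ell)/n)$, so each per-vertex factor is $(\sqrt{(1-p)/p})^{d_i}+O((k-\ell)/n\cdot (1-p)/p)$, where the error bound uses $((1-p)/p)^{d_i/2}\le (1-p)/p$ whenever $p\geq 1/2$ and $d_i\geq 2$. Multiplying the $R$ per-vertex factors together with $(\ell/k)^L$ and $((k-\ell)/n)^R$ yields the stated formula. The main (though still routine) obstacle is the algebraic simplification of the $u\in S,\,v\notin P$ case and then bookkeeping the per-vertex error so that it collapses into the clean form $O((k-\ell)/n\cdot (1-p)/p)$ quoted in the statement; the rest is immediate from the independence structure.
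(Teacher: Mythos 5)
Your proof is correct and follows essentially the same route as the paper's: compute per-edge conditional means given $(S,P)$, observe the factor of $(\ell/k)^L$ from requiring all left endpoints to lie in $S$, split $\alpha$ by right endpoint using the conditional independence across right vertices, and then case on $d_i$. One small thing you did better than the paper: you explicitly note that the step bounding $((1-p)/p)^{d_i/2}$ by $(1-p)/p$ requires $p\geq 1/2$; the paper's Lemma \ref{lem:low-deg-lb-mono-general} is stated without this restriction and silently uses the bound, which is valid only in the regime $p\geq 1/2$ where Theorem \ref{thm:low-deg-lb-general} actually invokes it.
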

\begin{proof}
Conditioned on the planted biclique $(S, P)$, the edges are independent. 
For an edge $\set{u, v}$, we calculate
\[\E_{D_{\mathrm{planted}}}[\chi_{u,v} \mid \text{planted biclique is }(S,P)] = \begin{cases}
  \sqrt{\frac{1-p}{p}} & \text{ if } u \in S, v \in P\mcom\\
  \frac{-\frac{k}{n}}{1-\frac{k}{n}} \sqrt{\frac{1-p}{p}} & \text{ if } u \in S, v \not\in P \mcom\\
  0 & \text{ otherwise}\mcom
\end{cases}\]
where for the case $u \in S, v \not\in P$, we calculated the expectation as
\[\frac{np-(k-\ell)}{n-(k-\ell)} \cdot \sqrt{\frac{1-p}{p}} + \Paren{1-\frac{np-(k-\ell)}{n-(k-\ell)}} \cdot \Paren{-\sqrt{\frac{p}{1-p}}} = \frac{-(k-\ell)}{n-(k-\ell)} \sqrt{\frac{1-p}{p}}\mper\]

We observe that if any of the left vertices in $\alpha$ is not in the planted biclique, the conditional expectation of $\chi_\alpha$ is zero.
Therefore, we condition on the event that all the left vertices in $\alpha$ are in the planted biclique, which happens with probability $\Paren{\frac{\ell}{k}}^L$.

Conditioned on this event, for any particular right vertex, all the edges in $\alpha$ incident to it are independent from the other edges in $\alpha$.
Let $\alpha_i$ be the subset of edges in $\alpha$ that are incident to the $i$-th right vertex.
Then 
\begin{align*}
&\E_{D_{\mathrm{planted}}}[\chi_{\alpha_i} \mid \text{ planted biclique contains all left vertices in } \alpha]\\
&= \frac{k-\ell}{n} \cdot \Paren{\sqrt{\frac{1-p}{p}}}^{d_i} + \Paren{1 - \frac{k-\ell}{n}} \cdot \Paren{\frac{-\frac{k-\ell}{n}}{1-\frac{k-\ell}{n}}\sqrt{\frac{1-p}{p}}}^{d_i}\\
&= \begin{cases}
\frac{k-\ell}{n} \Paren{ \Paren{\sqrt{\frac{1-p}{p}}}^{d_i} + O\Paren{\frac{k-\ell}{n} \frac{1-p}{p}}} & \text{ if } d_i > 1\mcom\\
0 & \text{ if } d_i = 1\mcom
\end{cases}
\end{align*}
so
\begin{align*}
&\E_{D_{\mathrm{planted}}}[\chi_{\alpha} \mid \text{ planted biclique contains all left vertices in } \alpha]\\
&= \begin{cases}
\Paren{\frac{k-\ell}{n}}^R \prod_{i=1}^R \Paren{\Paren{\sqrt{\frac{1-p}{p}}}^{d_i}+O\Paren{\frac{k-\ell}{n}\frac{1-p}{p}}} & \text{ if } d_1, \ldots, d_R > 1\mcom\\
0 & \text{ otherwise }\mper 
\end{cases}
\end{align*}
Therefore, overall, 
\[\E_{D_{\mathrm{planted}}}[\chi_\alpha] = \begin{cases}
\Paren{\frac{\ell}{k}}^L \Paren{\frac{k-\ell}{n}}^R \prod_{i=1}^R \Paren{\Paren{\sqrt{\frac{1-p}{p}}}^{d_i}+O\Paren{\frac{k-\ell}{n}\frac{1-p}{p}}} & \text{ if } d_1, \ldots, d_R > 1\mcom\\
0 & \text{ otherwise }\mper 
\end{cases}\]
\end{proof}

\begin{proof}[Proof of Theorem~\ref{thm:low-deg-lb-general}]
Let $LR^{\leq D}$ be the degree-$D$  likelihood ratio between $D_{\mathrm{planted}}$ and $D_{\mathrm{null}}$.
Then, by standard results, $\Norm{LR^{\leq D} - 1}^2 = \sum_{0 < |\alpha| \leq D} \E_{D_{\mathrm{planted}}}[\chi_\alpha]^2$, where the norm is the one induced by $D_{\mathrm{null}}$.
Therefore, if the right-hand side is $o(1)$, then $\Norm{LR^{\leq D}}$ is $1+o(1)$, and if the right-hand side is unbounded, then $\Norm{LR^{\leq D}}$ is also unbounded.

Consider all $E\alpha$ with $L$ left vertices and $R$ right vertices.
The contribution from these $\alpha$ is, by Lemma~\ref{lem:low-deg-lb-mono-general},
\begin{align*}
&\sum_{\substack{\alpha\\L \text{ left vertices}\\R \text{ right vertices}}} \E_{D_{\mathrm{planted}}}[\chi_\alpha]^2\\
&= \sum_{d_1, \ldots, d_R > 1} \Paren{\frac{\ell}{k-\ell}}^{2L} \Paren{\frac{k-\ell}{n}}^{2R} \prod_{i=1}^R \Paren{\Paren{\sqrt{\frac{1-p}{p}}}^{d_i}+O\Paren{\frac{k-\ell}{n}\frac{1-p}{p}}}^2 \cdot \binom{k}{L} \binom{n}{R} \operatorname{Bip}(L, R, d_1, \ldots, d_R)\mcom
\end{align*}
where $d_1, \ldots, d_R$ represent the number of edges in $\alpha$ incident to each of the right vertices, and $\operatorname{Bip}(L,R, d_1, \ldots, d_R)$ is the number of bipartite graphs with $L$ left vertices and $R$ right vertices such that all left degrees are at least $1$ and the right degrees are $d_1, \ldots, d_R$.

Consider a choice of $L$ and $R$ such that $\sum_{d_1, ..., d_R} \operatorname{Bip}(L, R, d_1, \ldots, d_R) \neq 0$.
Because we are interested in the behavior of the sum as $n$ goes to infinity, we ignore as negligible all factors that depend only on $L$ and $R$.
In particular, because $\sum_{d_1, ..., d_R} \operatorname{Bip}(L, R, d_1, \ldots, d_R)$ can be bounded in terms of only $L$ and $R$, we can focus on the term corresponding to $d_1=\ldots=d_R=2$, which maximizes the contribution of the terms $\sqrt{\frac{1-p}{p}}^{d_i}$ and is therefore proportional to the entire sum up to factors that depend only on $L$ and $R$.
We also approximate $k-\ell \approx k$ and $\Paren{\frac{1-p}{p}+O\Paren{\frac{k-\ell}{n}\frac{1-p}{p}}}^{2R} \approx \Paren{\frac{1-p}{p}}^{2R}$. Then we have 
\begin{align*}
\sum_{\substack{\alpha\\L \text{ left vertices}\\R \text{ right vertices}}} \E_{D_{\mathrm{planted}}}[\chi_\alpha]^2
&\sim \Paren{\frac{\ell}{k}}^{2L} \Paren{\frac{k}{n}}^{2R} \Paren{\frac{1-p}{p}}^{2R} \cdot \binom{k}{L} \binom{n}{R}\\
&\sim \Paren{\frac{\ell}{k}}^{2L} \Paren{\frac{k}{n}}^{2R} (1-p)^{2R} k^L n^R\\
&= n^{-R}k^{2R-L}\ell^{2L}(1-p)^{2R}\mper
\end{align*}
For $k = n^{1/2+\epsilon}/(1-p)^{1/2}$, the above is equal to $n^{(2R-L)\epsilon-L/2}\ell^{2L}(1-p)^{R+L/2}$.
For $1-p=q=n^{-\gamma}$, this is equal to $n^{(2R-L)\epsilon-L/2-(R+L/2)\gamma}\ell^{2L}$.
Finally, for $\ell=n^{\delta}$, this is equal to $n^{(2R-L)\epsilon+2L\delta-L/2-(R+L/2)\gamma}$.

For $\epsilon \leq \gamma/2$, we have that the above is at most $n^{-L\epsilon+2L\delta-L/2-L\gamma/2}$.
For the exponent to be non-negative, we need $2L\delta \geq L/2$, so $\delta \geq 1/4$.
In particular, for $\delta \leq 1/4-0.001$, the term goes to zero as $n$ goes to infinity regardless of how large $L$ is.
Therefore, the sum of all the terms with $|\alpha| \leq f(1/\epsilon)$ is $o(1)$, for any function $f$ independent of $n$.

For $\epsilon \geq \gamma$, consider the term corresponding to $L=2$ and some $R=O(1/\epsilon)$.
We have that the term is at least $n^{R\epsilon - L\epsilon + 2L\delta - L/2 - L\gamma/2} = n^{R\epsilon - 2\epsilon + 4\delta - 1 - \gamma} \geq n$ for $R=O(1/\epsilon)$ large enough.
Therefore, this term goes to infinity as $n$ goes to infinity, and then the same is true for the sum of all the terms.

\end{proof}

\bibliographystyle{alpha}
\bibliography{bib/custom,bib/dblp,bib/mathreview,bib/scholar,bib/references,bib/witmer}
\end{document}